\tikzstyle{rect} = [rectangle, rounded corners, minimum width=3cm, minimum height=1cm,text centered, draw=black, fill=black!10]
\tikzstyle{medrect} = [rectangle, rounded corners, minimum height=.7cm,text centered, draw=black, fill=black!10]
\tikzstyle{smallrect} = [rectangle, rounded corners,text centered, draw=black, fill=black!10]
\tikzstyle{circ} = [circle, minimum height=0.55cm,text centered, draw=black, fill=black!10]
\tikzstyle{smallcirc} = [circle,text centered, draw=black, fill=black!10]
\tikzstyle{arrow} = [thick,-{Latex[length=2mm, width=1.5mm]},>=stealth]
\tikzstyle{arrow2} = [thick,{Latex[length=2mm, width=1.5mm]}-{Latex[length=2mm, width=1.5mm]},>=stealth]
\tikzstyle{garrow} = [thick,-{Latex[length=2mm, width=1.5mm]},>=stealth,draw=black!40]
\tikzstyle{garrow2} = [thick,{Latex[length=2mm, width=1.5mm]}-{Latex[length=2mm, width=1.5mm]},>=stealth,draw=black!40]
\tikzstyle{venn} = [preaction={fill, #1},opacity=0.6,anchor=south,rounded corners=2pt,draw=black]
\tikzstyle{exnode} = [circle,draw=black,inner sep=1pt,fill=gray!20]
\newcommand{\proofreference}[1]{(\text{Proof in {\color{blue}#1}})}
\newif\iffull\fullfalse
\begin{document}
\title{Top-Down or Bottom-Up? ({\color{blue} Full Version})}
\subtitle{Complexity Analyses of Synchronous Multiparty Session Types}

\author{Thien Udomsrirungruang}
\email{thien.udomsrirungruang@keble.ox.ac.uk}
\affiliation{%
    \institution{University of Oxford}
    \city{Oxford}
    \country{UK}
}
\orcid{0009-0005-2320-9178}
\author{Nobuko Yoshida}
\email{nobuko.yoshida@cs.ox.ac.uk}
\affiliation{%
    \institution{University of Oxford}
    \city{Oxford}
    \country{UK}
}
\orcid{0000-0002-3925-8557}



\begin{CCSXML}
<ccs2012>
    <concept>
        <concept_id>10003752.10003753.10003761.10003764</concept_id>
        <concept_desc>Theory of computation~Process calculi</concept_desc>
        <concept_significance>500</concept_significance>
        </concept>
    <concept>
        <concept_id>10003752.10003790.10011740</concept_id>
        <concept_desc>Theory of computation~Type theory</concept_desc>
        <concept_significance>500</concept_significance>
        </concept>
    <concept>
        <concept_id>10003752.10003777.10003779</concept_id>
        <concept_desc>Theory of computation~Problems, reductions and completeness</concept_desc>
        <concept_significance>500</concept_significance>
        </concept>
    </ccs2012>
\end{CCSXML}

\ccsdesc[500]{Theory of computation~Process calculi}
\ccsdesc[500]{Theory of computation~Type theory}
\ccsdesc[500]{Theory of computation~Problems, reductions and completeness}

\keywords{Complexity, Multiparty session types, End-point projection, Type inference, Subtyping, Safety, Deadlock-freedom, Liveness, Correctness-by-construction, property verification}



\begin{abstract}
Multiparty session types (MPST) provide a type discipline for 
ensuring communication safety, deadlock-freedom and liveness 
for multiple concurrently running participants. 
The original formulation of MPST takes 
the \emph{top-down} approach, where a
\emph{global type} specifies a bird's eye view of 
the intended interactions between participants,  
and each distributed process is locally type-checked
against its end-point projection. 
A more recent one takes 
the \emph{bottom-up} approach,  
where a desired property $\varphi$ of 
a set of participants is ensured if 
the same property $\varphi$ is true for
an ensemble of end-point types (a typing context) 
inferred from each participant.

This paper compares these two main procedures of MPST, 
giving their detailed complexity analyses.
To this aim, we build 
several new algorithms missing from the
bottom-up or top-down workflows by using 
graph representation of session types
(\emph{type graphs}). 
We first propose a subtyping system
based on type graphs, 
offering more efficient (quadratic) subtype-checking 
than the existing  (exponential) inductive algorithm 
\cite{Ghilezan2019}.
Next for the top-down,
we measure complexity of the four end-point projections 
from the literature \cite{Honda2008,YH2024,Tirore2023,Ghilezan2019}.  
For the coinductive projection in \cite{Ghilezan2019},  
we build a new sound and complete \PSPACE-algorithm using type graphs. 
For bottom-up, 
we develop a novel \emph{type inference system} from MPST processes
which generates \emph{minimum type graphs}, succinctly capturing
covariance of internal choice and contravariance of external choice. 
For property-checking of typing contexts,  
we achieve \PSPACE-hardness by 
reducing it from the quantified Boolean formula (QBF) problem,
and build nondeterministic algorithms that search for counterexamples
to prove membership in \PSPACE.
We also present deterministic analogues of these algorithms that run in exponential time.
Finally, we calculate the total complexity of the top-down 
and the bottom-up approaches. 
Our analyses reveal that the top-down based on 
global types is more efficient than the bottom-up in many realistic cases;  
liveness checking for typing contexts in the bottom-up 
has the highest complexity; and  
the type inference costs exponential against the size of a process,
which impacts the total complexity. 
\end{abstract}

\maketitle

\section{Introduction}
\label{sec:introduction}
With the growing complexity of communication behaviours 
of distributed and concurrent systems, 
their protocol specification has become of central 
importance \cite{KuhnMT23,Montesi_2023}. 
\emph{Multiparty session types} (MPST)~\cite{Honda2008} 
offer a type discipline 
for ensuring desirable properties such as deadlock-freedom 
and liveness for multiple interacting participants  
with guidance by a protocol specification, called \emph{global types}. 
A global type gives a system designer 
an entire interaction scenario. 
It is projected into a collection of \emph{end-point} or \emph{local  types}, and   
each participant is type-checked locally and independently 
against its local type. 
This can ensure that the execution among participants 
does not get stuck, following a protocol, offering \emph{correctness by construction}. 
This approach, called \emph{top-down}, has 
been implemented and integrated into diverse tools and 
programming languages. The theory of the end-point 
projections has been advanced over the years based on 
a variety of
techniques from e.g., event structures, 
communicating 
automata, linear and separation logics, 
message sequence charts and algebraic rewriting 
\cite{CastellaniDG23,DBLP:conf/concur/MajumdarMSZ21,Li2023,Li2024,Stutz2023,JY2020,CY2020,BarbaneraLT23}, and 
is mechanised by several tools \cite{Tirore2023,JacobsBK22a,CFGY2021}.  

\newcommand{\introTextboxWidth}{4.5cm}
\newcommand{\introTextboxShift}{1.7cm}
\newcommand{\introBraceShift}{-2.6cm}

\begin{figure}[t]
  \begin{center}
\hspace{1cm}
\begin{subfigure}[T]{.27\textwidth}
\begin{center}
     \small
      \begin{tikzpicture}[>=Latex, node distance=1.2cm]
        \node (gt)[smallrect] {Global type $\G$};
        \node (t1) [below of=gt, xshift=-0.9cm] {$\T_1$};
        \node (t2) [below of=gt, xshift=-0.2cm] {$\T_2$};
        \node (tn) [below of=gt, xshift=0.9cm] {$\T_n$};
        \node (u1) [below of=t1, yshift=0.5cm] {$\T_1'$};
        \node (u2) [below of=t2, yshift=0.5cm] {$\T_2'$};
        \node (un) [below of=tn, yshift=0.5cm] {$\T_n'$};
        \node (p1) [below of=u1, yshift=0.3cm] {$\PP_1$};
        \node (p2) [below of=u2, yshift=0.3cm] {$\PP_2$};
        \node (pn) [below of=un, yshift=0.3cm] {$\PP_n$};
        
        \node[rotate=90] (s1) at ($(t1)!0.5!(u1)$) {{\large $\subt$}};
        \node[rotate=90] (s2) at ($(t2)!0.5!(u2)$) {{\large $\subt$}};
        \node[rotate=90] (sn) at ($(tn)!0.5!(un)$) {{\large $\subt$}};

        \draw[->] (gt) -- (t1);
        \draw[->] (gt) -- (t2);
        \draw[->] (gt) -- (tn);
    
        \draw[->] (u1) -- (p1);
        \draw[->] (u2) -- (p2);
        \draw[->] (un) -- (pn);

        \node (td) at ($(t2)!0.5!(tn)$) {$\cdots$};
        \node (ud) at ($(u2)!0.5!(un)$) {$\cdots$};
        \node (pd) at ($(p2)!0.5!(pn)$) {$\cdots$};

        annotations
        \node[text width=\introTextboxWidth] (text1) [right of=tn, xshift=\introTextboxShift, yshift=0.60cm]
        {};
        \node[text width=\introTextboxWidth] (text1') [yshift=0.2cm] at (text1)
        {(i)};
       \node[text width=\introTextboxWidth] (text2) [right of=un, xshift=\introTextboxShift, yshift=0.4cm]
        {(ii)};
        \node[text width=\introTextboxWidth] (text3) [right of=pn, xshift=\introTextboxShift, yshift=0.40cm]
        {(iii-a)};

        \newcommand{\drawBrace}[2]{%
          \node (brace1-#1) at (#1) [xshift=\introBraceShift, yshift=-#2] {};
          \node (brace2-#1) at (#1) [xshift=\introBraceShift, yshift=#2] {};
          \draw[decorate,decoration={calligraphic brace,amplitude=5pt,mirror,raise=3pt},yshift=0pt,line width=1.25pt] (brace1-#1)--(brace2-#1);
        }
        \drawBrace{text1}{0.5cm}
        \drawBrace{text2}{0.5cm}
        \drawBrace{text3}{0.55cm}
      \end{tikzpicture}
    \caption{
      Top-down\\ 
      (i) Projection 
      (\Sec\ref{sec:projections})\\ 
      (ii) Subtype checking (\Sec\ref{section:subtyping})\\
      (iii-a) Type checking
    }
    \label{fig:top_down}
\end{center}
  \end{subfigure}%
\begin{subfigure}[T]{.30\textwidth}
\begin{center}
     \small  
      \begin{tikzpicture}[>=Latex, node distance=1.2cm]
        \node (gt)[smallrect] {Global type $\G$};
        \node (t1) [below of=gt, xshift=-0.9cm] {$\T_1$};
        \node (t2) [below of=gt, xshift=-0.2cm] {$\T_2$};
        \node (tn) [below of=gt, xshift=0.9cm] {$\T_n$};
        \node (u1) [below of=t1, yshift=0.5cm] {$\Tmin_1$};
        \node (u2) [below of=t2, yshift=0.5cm] {$\Tmin_2$};
        \node (un) [below of=tn, yshift=0.5cm] {$\Tmin_n$};
        \node (p1) [below of=u1, yshift=0.3cm] {$\PP_1$};
        \node (p2) [below of=u2, yshift=0.3cm] {$\PP_2$};
        \node (pn) [below of=un, yshift=0.3cm] {$\PP_n$};
        
        \node[rotate=90] (s1) at ($(t1)!0.5!(u1)$) {{\large $\subt$}};
        \node[rotate=90] (s2) at ($(t2)!0.5!(u2)$) {{\large $\subt$}};
        \node[rotate=90] (sn) at ($(tn)!0.5!(un)$) {{\large $\subt$}};

        \draw[->] (gt) -- (t1);
        \draw[->] (gt) -- (t2);
        \draw[->] (gt) -- (tn);
    
        \draw[->] (p1) -- (u1);
        \draw[->] (p2) -- (u2);
        \draw[->] (pn) -- (un);

        \node (td) at ($(t2)!0.5!(tn)$) {$\cdots$};
        \node (ud) at ($(u2)!0.5!(un)$) {$\cdots$};
        \node (pd) at ($(p2)!0.5!(pn)$) {$\cdots$};

        annotations
        \node[text width=\introTextboxWidth] (text1) [right of=tn, xshift=\introTextboxShift, yshift=0.60cm]
        {};
        \node[text width=\introTextboxWidth] (text1') [yshift=0.2cm] at (text1)
        {(i)};
       \node[text width=\introTextboxWidth] (text2) [right of=un, xshift=\introTextboxShift, yshift=0.4cm]
        {(iii)};
        \node[text width=\introTextboxWidth] (text3) [right of=pn, xshift=\introTextboxShift, yshift=0.40cm]
        {(ii-b)};

        \newcommand{\drawBrace}[2]{%
          \node (brace1-#1) at (#1) [xshift=\introBraceShift, yshift=-#2] {};
          \node (brace2-#1) at (#1) [xshift=\introBraceShift, yshift=#2] {};
          \draw[decorate,decoration={calligraphic brace,amplitude=5pt,mirror,raise=3pt},yshift=0pt,line width=1.25pt] (brace1-#1)--(brace2-#1);
        }
        \drawBrace{text1}{0.5cm}
        \drawBrace{text2}{0.5cm}
        \drawBrace{text3}{0.55cm}
      \end{tikzpicture}
    \caption{
      Top-down\\ 
      (i) Projection 
      (\Sec\ref{sec:projections})\\
      (ii-b) Type inference (\Sec~\ref{section:minimum_typing})\\
      (iii) Subtype checking\\
    }
        \label{fig:top_downtwo}
    \end{center}
  \end{subfigure}%
\begin{subfigure}[T]{.33\textwidth}
\begin{center}
  \small
      \begin{tikzpicture}[>=Latex, node distance=1.2cm]
        \node (gt)[smallrect] {Property $\varphi$};
        \node (u1) [below of=t1, yshift=1.2cm] {$\Tmin_1$};
        \node (u2) [below of=t2, yshift=1.2cm] {$\Tmin_2$};
        \node (un) [below of=tn, yshift=1.2cm] {$\Tmin_n$};
        \node (p1) [below of=u1, yshift=-0.4cm] {$\PP_1$};
        \node (p2) [below of=u2, yshift=-0.4cm] {$\PP_2$};
        \node (pn) [below of=un, yshift=-0.4cm] {$\PP_n$};
        

        \draw[->] (u1) -- (gt);
        \draw[->] (u2) -- (gt);
        \draw[->] (un) -- (gt);
    
        \draw[->] (p1) -- (u1);
        \draw[->] (p2) -- (u2);
        \draw[->] (pn) -- (un);

        \node (ud) at ($(u2)!0.5!(un)$) {$\cdots$};
        \node (pd) at ($(p2)!0.5!(pn)$) {$\cdots$};

        annotations
       \node[text width=\introTextboxWidth] (text2) [right of=un, xshift=\introTextboxShift, yshift=0.6cm]
        {(ii)};
        \node[text width=\introTextboxWidth] (text3) [right of=pn, xshift=\introTextboxShift, yshift=0.75cm]
        {(i)};

        \newcommand{\drawBrace}[2]{%
          \node (brace1-#1) at (#1) [xshift=\introBraceShift, yshift=-#2] {};
          \node (brace2-#1) at (#1) [xshift=\introBraceShift, yshift=#2] {};
          \draw[decorate,decoration={calligraphic brace,amplitude=5pt,mirror,raise=3pt},yshift=0pt,line width=1.25pt] (brace1-#1)--(brace2-#1);
        }
        \drawBrace{text2}{0.6cm}
        \drawBrace{text3}{1.0cm}
      \end{tikzpicture}
    \caption{
      Bottom-up\\ 
     (i) Type inference (\Sec\ref{section:minimum_typing})\\
      (ii) Property checking 
      (\Sec\ref{section:modelchecking})\\
      safety, deadlock-freedom, liveness 
      \label{fig:bottom_up}
    }
 \end{center}
 \end{subfigure}%
\end{center}
\vspace{-0.6cm}
\caption{Overview of the top-down and bottom-up procedures.
}
\label{fig:overview}
\end{figure}

Motivated by a growth of model checking tools, 
Scalas and Yoshida \cite{Scalas2019} proposed a general MPST 
typing system which does \emph{not} require global types:  
it can ensure desired properties of processes 
by checking the same properties of typing contexts (a collection of 
local types). 
They give specifications for these properties in modal
$\mu$-calculus formulae checkable by the mCRL2 checker \cite{bunte2019mcrl2}. 
Their typability is not limited by 
projectability or 
implementability of 
global types, 
giving a \emph{complete} set of processes 
which satisfy the desired properties. 
We call their framework, \emph{bottom-up}. 

\myparagraph{This paper} gives 
detailed and comprehensive complexity analyses of the \emph{top-down} and 
\emph{bottom-up} procedures developed in 
various MPST literature. 
Our motivation is very simple: 
the top-down approach 
gives often less typable processes    
than the bottom-up, but it is
useful in practice and believed to be more tractable.
Can we measure and compare the two approaches precisely? 

Figure~\ref{fig:top_down} gives a workflow of the top-down approach: 
first, we project a global type $\G$ to a set of 
local types $\{\T_i\}_{i\in I}$ (Step (i)).
Each process $P_i$ is then type-checked against a subtype $\T_i'$ of 
$\T_i$ such that $\T_i\subt \T_i'$ (Steps (ii) and (iii-a))
\cite{YH2024}. 
The bottom-up approach \cite{Scalas2019}  
is given in Figure~\ref{fig:bottom_up}. 
Here no global type is used;
instead, we first infer  
the \emph{minimum local type} $\Tmin$
from each process $\PP_i$ 
by the \emph{type inference system} (Step (i)). 
The minimum type $\Tmin$ of $\PP$ characterises, in a succinct way,
the tightest behaviour of $\PP$.
In Step (ii), 
we directly check whether a property $\varphi$ is satisfied or not
against a collection of the minimum types.  This guarantees
a set of processes satisfy property $\varphi$. 


\myparagraph{Road map.\ }
For Step (i) of the top-down, we consider four kinds of projections, 
and give analyses of their complexity:
\emph{inductive plain merge}, introduced in the first MPST paper
\cite{Honda2008,HYC2016}
(Definition~\ref{def:inductive_plain_merging}, \Sec\ref{subsec:plainmerge}), 
\emph{inductive full merge}, refined from plain merge  
to obtain a larger set of
valid global types (Definition~\ref{def:inductive_full_merging}, \Sec\ref{subsec:fullmerge}), 
a recent projection algorithm proposed 
by Tirore \etal~\cite{Tirore2023}, called \TBC, which is 
sound and complete against the coinductive
projection (\Sec\ref{section:coinductive_projection}) 
and a new projection algorithm which is sound and complete 
against the coinductive projection with full merge \cite{Ghilezan2019}
(\Sec\ref{subsec:coinductivefull}). 
For Step (ii), we develop a subtyping system
based on a graph representation of types (called \emph{type graphs}), 
which is more efficient than the existing inductive algorithm 
in \cite{Ghilezan2019} (\Sec\ref{section:subtyping}).
This type graph representation turns out
useful for building algorithms for other complexity analyses. 

Step (iii-a) of the top-down is type-checking each process by 
some subtype $\T_i'$ of its projection $\T_i$.
This problem is reducible to 
subtype-checking 
with the minimum type $\Tmin_i$ inferred from the process 
($\Tmin_i\subt \T_i$) since the projected type $\T_i$ must be a
supertype of $\Tmin_i$. 
Hence the total complexity of the top-down 
is analysed as described in Figure~\ref{fig:top_downtwo}. 

In \Sec\ref{section:minimum_typing},
we give a sound and complete algorithm for finding a minimum type
by building \textit{type constraints} (Definition \ref{def:constraint_rules})
and a \textit{minimum type graph} (Definition
\ref{def:minimum_type_graph_of_process}) which yields the minimum type
of a given process. We then analyse its complexity. 
This corresponds to Step (ii-b) of the top-down
(Figure~\ref{fig:top_downtwo}) 
and Step (i) in the bottom-up (Figure~\ref{fig:bottom_up}).
In \Sec\ref{section:modelchecking}, we analyse the complexity to check
whether a set of local types satisfy 
\emph{safety} (no type nor communication error),
\emph{deadlock-freedom} (no stuck)  
and \emph{liveness} (all pending actions can always 
eventually communicate with their dual actions under fair paths). 
In \Sec\ref{sec:related} we measure 
the total of complexities of the top-down 
(Steps (i,iii,ii-b) in Figure~\ref{fig:top_downtwo})
and the bottom-up 
(Steps (i,ii) in Figure~\ref{fig:bottom_up}).    

To examine the complexity analyses  
covering a wide range of the literature of session types, 
we focus on the simplest calculus from \cite{Ghilezan2019}, namely  
a {\em synchronous single multiparty session without type
 annotations}.
This choice is made because (1) the bottom-up approach is 
\emph{undecidable} if the calculus is asynchronous (communications use 
infinite FIFO queues); and 
(2) most of the binary and multiparty session calculi studied in the literature
and a textbook \cite{GV2025} are synchronous (both in theory and practice). 
See \Sec~\ref{sec:related} for more discussions on synchronous and asynchronous MPST.


\myparagraph{Contributions. }
To our best knowledge, all the complexity results proven in this
paper were unknown. 

To calculate the total complexity of the top-down and the bottom-up
approaches, we build the three new systems currently missing from the
workflows: \textbf{(1)} a type inference system for MPST 
(\Sec\ref{section:minimum_typing}); \textbf{(2)} a quadratic subtyping
algorithm (\Sec\ref{section:subtyping}): and \textbf{(3)} a sound and
complete \PSPACE-algorithm for a coinductive projection with full
merging (\Sec\ref{subsec:coinductivefull}).

Throughout this paper, we use different formats of \emph{local and
global type graphs} to construct the new systems and measure the
complexity. 
Specifically, the type inference system (1) 
uses a new form of type graphs (Definition~\ref{def:minimum_type_graph_of_process}) and a relationship between them (Lemma~\ref{def:subsets_are_subtypes}) 
for finding the minimum 
type $\Tmin$ of $\PP$.     
The algorithm generates the \emph{minimum
type graph}, taking care of contravariance of
branching (external choice) and covariance of selections (internal
choice).
 
The subtyping system (2) (Definition~\ref{def:type_simulation}) uses 
a product graph of two local type graphs 
ensuring more efficiency (quadratic) (Theorem~\ref{thm:quadratic_subtyping})
than the existing  (exponential) inductive algorithm 
\cite{Ghilezan2019} (Theorem~\ref{thm:complexity_ghilezan_subtyping}).
(3) is constructed by a \emph{projection graph} of $\G$ whose node is 
a set of subglobal types of $\G$.  
Apart from (1)-(3) listed above, we also use 
a product of global and local type graphs for complexity analysis 
of a projection algorithm by Tirore \etal~\cite{Tirore2023} (Theorem~\ref{thm:coinductiveprojection}). 

For the complexity analysis of the projection algorithm with 
\emph{inductive full merging}, if we use 
a na\"ive approach, 
the complexity stays as quadratic as 
\emph{inductive plain merging} (Theorem~\ref{thm:inductive_plain_merge}). 
This is improved 
to subquadratic time (Theorem~\ref{thm:naive_full_merging_algorithm_is_quadratic}) by  
applying the \emph{balanced binary tree search technique} to 
the choices (branching and selection) in the session types. 
The proof requires a detailed analysis. 

Another main contribution is the complexity analysis of the
type context property-checking which was (as far as we have known)
never tackled before our work (Figure~\ref{fig:bottom_up}(ii)).
One of the most non-trivial proofs in the paper is \PSPACE-completeness 
for the property checking. 

For \PSPACE-membership (\Sec\ref{subsec:pspacecomplete}), 
safety and deadlock-freedom are proven as a reachability problem 
of checking unsafe and deadlock states, respectively. 
For liveness, 
we give the equivalent characterisation 
by constructing the \emph{counterwitness for liveness} using 
observational and barbed predicates on the typing contexts. 
With this, we construct a nondeterministic polynomial-space algorithm that aspects if and only if the input action is not live. 
We further give deterministic algorithms for safety, deadlock-freedom and 
liveness that run in exponential time. 
For liveness, we use a similar construction to the proof of Savitch's theorem
\cite[Theorem 8.5]{Sipser2012}. 

For \PSPACE-hardness (\Sec\ref{subsec:pspacehard}), 
we reduce it from the \emph{quantified Boolean formula (QBF) problem}. 
We construct a very large typing context that reaches an undesirable state if 
a quantified boolean formula is true. 
The complexity analysis of 
the liveness checking is most involved, 
resulting the liveness incurring the highest complexity among the three properties. 

Our proof methods for \PSPACE-completeness are uniformly applicable 
to analyse other properties such as termination and other forms of liveness 
studied in \cite{Scalas2019}. 

The detailed complexity results of the subtyping checking,  
type-level and total complexity of the top-down
and the bottom-up approaches are summarised in
Table~\ref{tab:subtyping_complexity_summary} (subtyping), 
Table~\ref{table:projection} (top-down) 
and Table~\ref{tab:bot_complexity} (bottom-up).

\iffull{Appendices include\ }%
\else{The full version \cite{UY2024} includes\ }\fi
detailed full proofs, omitted algorithms, examples and definitions.

\section{Multiparty Session Calculus}
\label{sec:calculus}
This section introduces the syntax and semantics of a 
multiparty synchronous session calculus (MPST-calculus)
\cite{Ghilezan2019} and define its properties.   

\myparagraph{Syntax}
\emph{Expressions} ($\e, \e', \dots$) are variables ($x,y,z,\dots$), 
a value $\val$, which is either a natural number $\valn$, an integer
$\valr$,
or a boolean $\true$/$\false$, 
or a term built from expressions by applying operators such as 
${\tt neg}$, $\neg$, $\oplus$ or $+$.
The operator $\oplus$ models non-determinism: %
$\e_1 \oplus \e_2$ is an expression %
that might yield either $\e_1$ or $\e_2$.\\[1mm]%
\centerline{
$
\e ::= \true \SEP \false \SEP \valn \SEP \vali \SEP x 
     \SEP \e \vee \e \SEP \neg \e \SEP \e + \e \SEP \e \oplus \e \SEP \fsqrt{\e}
$}\\[1mm]
\emph{Processes} ($\PP, \Q, \dots$) and \emph{multiparty sessions} ($\N, \N', \dots$) are defined by the following syntax:\\[1mm] 
\centerline{
$\begin{array}{rll}
\PP   ::= & \inact  \SEP \procout \pp{\e} \PP \SEP \procin \pp{x} \PP \SEP 
\procsel \pp{l}{\PP} \SEP \procbrasub \pp{l_i: \PP_i}{i\in \I}& \text{\footnotesize (nil, output, input, selection, branch)} \\
     \SEP & \cond{\e}{\PP}{\PP} \SEP \mu X. \PP \SEP X& \text{\footnotesize (conditional, recursion,  proc variable)}\\
\N  ::=   & \pa \pp \PP  \SEP \N \pc\N & \text{\footnotesize (role, parallel composition)}
\end{array}
$}\\[1mm]
$\pp, \pq, \pr, \ps, \dots$ are \emph{participants} from a countably infinite set. The nil $\inact$ denotes termination; 
the output process $\procout{\pp}{\e}{\Q}$ sends the value of expression $\e$
to participant $\pp$; the input process $\procin \pp{x} \PP$ 
inputs a value from $\pp$; 
the selection $\procsel \pp{l}{\PP}$ selects a label $l$ at participant $\pp$ (internal choice); the branching 
$\procbrasub \pp{l_i: \PP_i}{i\in \I}$ 
(often denoted by $\procbra \pp{l_1: \PP_1, \dots, l_n: \PP_n}$)
is a process that can accept label $ l_i $ from participant $ \pp$  for any $ i\in I\neq \emptyset$ (external choice).
Conditional process $\cond{\e}{\PP}{Q}$ 
and a recursive process $\mu X.\PP$ are standard. 
We assume that the recursive processes are \emph{guarded}, i.e., 
$\mu X.X$ is not a valid process, but 
$\mu X.\procout{\pp}{\e}{X}$ is valid. 
Process $\PP$ plays the role of participant $\pp$ (denoted: $\pa\pp\PP$),
and can interact with other processes playing other roles.
We assume $\pc$ is commutative and associative, 
and write  $\prod_{i \in \I}\pa{\pp_i}{\PP_i}$ for 
$\pa{\pp_1}{\PP_1}\pc\cdots \pc \pa{\pp_n}{\PP_n}$ with $I=\set{1,\dots,n}$. 

\myparagraph{{Reductions}}
The value $\val$ of expression $\e$ (notation $\eval\e\val$) is
computed as expected
\iffull{(Table~\ref{tab:evaluation} in
  Appendix~\ref{app:calculus}).}\else{\cite[Table 1]{Ghilezan2019}.}\fi
The internal choice $\e_1\oplus\e_2$ evaluates either to the value of $\e_1$ or to the value of $\e_2$.  The {\em reduction rules of multiparty sessions} are given in Figure~\ref{fig:reduction_sessions}.
\begin{figure}
  \begin{center}
\small
$\begin{array}{c}
\begin{array}{@{}rll}
\rulename{r-comm} & 
\pa\pp \procin \q{x} \PP \ | \ \pa \q\procout \pp{\e} \Q  \; \pc\;\N \red 
\pa\pp{\PP}\sub{\val}{\x}\;\pc\;\pa\q\Q \; \pc\;\N
& (\eval{\e}{\val})\\
\rulename{r-bra} & 
\pa\pp\procsel \q{l}{\PP} \pc \pa\q\procbrasub \pp{l_i: \PP_i}{i\in \I}
\; \pc\;\N
\red 
\pa\pp{\PP} \pc \pa\q \PP_k
\; \pc\;\N
& (k\in \I)\\
\rulename{t-cond}
&{
    \pa\pp{\cond{\e}{\PP}{\Q}} \; \pc \;  \N \red \pa\pp\PP\; \pc \;  \N
   }
& (\eval{\e}{\true}) \\
\rulename{f-cond}
&{
    \pa\pp{\cond{\e}{\PP}{\Q}} \; \pc \;  \N \red \pa\pp\Q\; \pc \;  \N
   }
& (\eval{\e}{\false}) \\
\rulename{r-str} 
& 
\N'_1\prestruct \N_1 \quad \N_1\red \N_2 \quad \N_2 \prestruct \N'_2
 \ \Longrightarrow \ 
\N'_1 \red \N'_2\\
\rulename{v-err} 
& 
\pa\pp{\cond{\e}{\PP}{\Q}} \; \pc \; \N \red \error 
& (\eval{\e}{\val} \text{ and } \val\not\in \{\true,\false\}) \\
\rulename{c-err} 
& 
\pa\pp\procsel \q{l}{\PP} \pc \pa\q\procbrasub \pp{l_i: \PP_i}{i\in \I}
\; \pc\;\N
\red \error & (\forall i\in \I, l\not=l_k)\\
\end{array}
\end{array}
$
\vspace{-3mm}
\end{center}
\caption{Reductions of sessions}
\label{fig:reduction_sessions}
\end{figure}

Rule \rulename{t-comm} defines a communication between 
the input and output processes; in \rulename{t-bra}, the selection 
process chooses label $l$ at the branching; 
rules \rulename{t,f-cond} are standard; and 
rule \rulename{t-str} defines that a reduction is 
closed under \emph{structure rules} $\prestruct$ (preorder) defined by 
$\pa\pp\mu X.\PP \pc \N \prestruct \pa\pp\PP\sub{\mu X.\PP}{X} \pc \N$, 
and $\prod_{i \in \I}\pa{\pp_i}{\PP_i} \prestruct 
\prod_{j \in \J}\pa{\pp_j}{\PP_j}$ where $I$ is a permutation of $J$. 
Rules \rulename{v-err} and \rulename{s-err} define 
value and session errors, respectively.   
We use $\red^\ast$ as the reflexive and transitive closure of $\red$. 

\myparagraph{Properties} 
We recall various desirable properties of a session $\M$  
from \cite[Definition 5.1]{Scalas2019}. 

\begin{definition}[Properties]\label{def:properties}
  A session $\M$ is:
  \begin{enumerate}
    \item \emph{communication safe} iff 
$\M \red^* \M'$, then $\M'\not\prestruct\error$. 
    \item \emph{deadlock-free} iff $\M \red^* \M' \not\red$ implies 
$\M' \prestruct \prod_{i \in I}\pp_i :: \inact$.
    \item \emph{live} iff $\M \red^* \M' \prestruct \M_1 \pc \pp :: \PP$ implies:
    \begin{enumerate}
      \item If $\PP = \procin{\pq}{x} \PP'$ then there exists $\M_2, \val$ such that $\M' \red^* \M_2 \pc \pp :: \PP'\sub{\val}{x}$.
      \item If $\PP = \procout{\pq}{\e} \PP'$ then there exists $\M_2$ such that $\M' \red^* \M_2 \pc \pp :: \PP'$.
      \item If $\PP = \procsel{\pq}{l} \PP'$ then there exists $\M_2$ such that $\M' \red^* \M_2 \pc \pp :: \PP'$.
      \item If $\PP = \procbrasub{\pq}{l_i: \PP_i}{i \in \I}$ then there exists $\M_2, i$ such that $\M' \red^* \M_2 \pc \pp :: \PP_i$.
    \end{enumerate}
  \end{enumerate}
\end{definition}
\noindent A session $\M$ is 
\emph{communication safe} 
if there is neither type and label errors; 
\emph{deadlock-free} when it only stops 
reducing to nil; 
$\M$ is \emph{live} when all
the pending actions can always 
eventually communicate.

\section{Complexity Analysis of Multiparty Session Subtyping}
\label{section:subtyping}
We start from complexity analysis of the two subtyping algorithms,
introducing \emph{type graphs}. 
The first algorithm is the inductive algorithm of 
multiparty synchronous session subtyping given in 
\cite{Ghilezan2019}; the second
one is a new coinductive algorithm for multiparty, extending 
the algorithm for binary session types in \cite{Udomsrirungruang2024a}.

\myparagraph{Local Multiparty Session Types} ($\T, \T', \dots$) give a
local specification 
of each participant, which are defined with \emph{sorts} (base types) ($\S, \S',
\dots$). \\[1mm]
\centerline{
$\begin{array}{c}
\S ::=\: \tbool \SEP \tnat \SEP \tint \qquad 
\T ::=  \tend \SEP \tout \pp{S} \T \SEP \tin \pp{S} \T 
         \SEP \tselsub{\pp}{l_i: \T_i}{i\in \I}
         \SEP \tbrasub{\pp}{l_i: \T_i}{i\in \I}
         \SEP \mu \ty. \T 
         \SEP \ty 
\end{array}
$}\\[1mm]
Sorts include base (atomic) types; in general, the choice
of sorts is unimportant and our system will work with arbitrary atomic
sorts. 
Type $\tend$ represents a termination; $\tout \pp{S} \T$ is the \emph{output} 
to participant $\pp$ with 
its value typed by $S$; $\tin \pp{S} \T$ is its dual \emph{input};
$\tselsub{\pp}{l_i: \T_i}{i\in \I}$ (with $I \neq \emptyset$) 
is a \emph{selection type} which selects one of the labels $l_i$
and $\tbrasub{\pp}{l_i: \T_i}{i\in \I}$ (with $I \neq \emptyset$) is
a \emph{branching type} which offers branches with label $l_i$.
We assume all labels are distinct, and often write 
$\tsel \pp{l_1: \T_1, \dots, l_n: \T_n}$ or 
$\tbra \pp{l_1: \T_1, \dots, l_n: \T_n}$.   
$\mu \ty. \T$ and $\ty$ are recursive types and type variables.  
We often omit $\tend$ and write $\tselset{\pp}{l_1:\T_1}$ 
for $\tsel{\pp}{l_1:\T_1}$. 
We assume a countably infinite set of recursive type variables and, 
as with processes, all recursive types to be guarded. 
We denote $\ftv{\T}$ as a set of \emph{free type variables} in $\T$,
and call $\T$ \emph{closed} if $\ftv{\T}=\emptyset$. 
We denote $\Sort$ (resp. $\Type$) as the set of all sorts
(resp. closed local types).

We calculate the complexity against the size of a local type;   
and use the set of \emph{subformulas} of $\T$.



\begin{definition}[Size of Local Types]\label{def:size_of_local_type}
The \emph{size} of a local type $\T$, denoted $|\T|$, is defined as:
$|\tend| = |\ty| = 1$, 
$|\mu \ty. \T| = |\T| + 1$, 
$|\tin\pp\S\T| = |\tout\pp\S\T| = |\T| + 1$ and  
$|\tselsub{\pp}{l_i: \T_i}{i\in \I}| = |\tbrasub{\pp}{l_i: \T_i}{i\in
  \I}| = \sum_{i\in \I} |\T_i| + 1$.    
\end{definition}
\begin{definition}[Subformulas of Local Types]\label{def:subformulas_of_local_type} The set of \emph{subformulas}
  of a local type $\T$, denoted $\Sub(\T)$, is defined as: 
   $\Sub(\tend) = \{\tend\}$, 
$\Sub(\tout\pp{\S}\T) = \{\tout\pp{\S}\T\} \cup \Sub(\T)$, 
$\Sub(\tin\pp{\S}\T) = \{\tin\pp{\S}\T\} \cup \Sub(\T)$, 
$\Sub(\tselsub{\pp}{l_i: \T_i}{i\in \I}) = \{\tselsub{\pp}{l_i: \T_i}{i\in \I}\}
  \cup \bigcup_{i\in \I} \Sub(\T_i)$, 
$\Sub(\tbrasub{\pp}{l_i: \T_i}{i\in \I}) = \{\tbrasub{\pp}{l_i: \T_i}{i\in \I}\}
  \cup \bigcup_{i\in \I} \Sub(\T_i)$, 
   $\Sub(\ty) = \{\ty\}$ and  
   $\Sub(\mu \ty. \T) = \{\mu \ty. \T\} \cup \{\T'\sub{\mu \ty.\T}{\ty}
  \mid \T' \in \Sub(\T)\}$.  

\end{definition}

\myparagraph{Subtyping.}
We use the (coinductive) synchronous multiparty subtyping relation
(denoted by $\T \subt \T'$) following
\cite{DemangeonH11,Ghilezan2019,Chen2017}
(called \emph{process subtyping} 
in \cite{Gay16}):
when a type $\T$ is ``smaller'' than $\T'$, %
it is allowed to use a process typed by the former %
whenever a process typed by the latter %
is required.
The original subtyping in \cite{Gay2005} 
is called \emph{channel subtyping} in \cite{Gay16}, and  
the complexity results in this paper do not change
whether it is process- or channel-subtyping. 




We use the following unfolding function from \cite{Gay2005}.

\begin{definition}[Unfolding function]\label{def:unfold}
Define the \emph{unfold} function by: $\unfold{\mu \ty. \T} = \unfold{\T\sub{\mu \ty. \T }{ \ty}}$, and $\unfold{\T} = \T$ otherwise.
\end{definition}
\begin{example}[Unfolding]
\label{ex:unfold}
Let $\T_1 = \mu\ty. \tsel\pp{l_1: \tsel\pp{l_1: \ty}, l_2: \tend}$. 
Then we have:\\
\centerline{
$\begin{array}{rl}
\unfold{\T_1} = & 
\unfold{\tsel\pp{l_1: \tsel\pp{l_1: \ty}, l_2: \tend}\sub{\T_1}{\ty}}
= \unfold{\tsel\pp{l_1: \tsel\pp{l_1: \T_1}, l_2: \tend}}\\
= & \tsel\pp{l_1: \tsel\pp{l_1: \T_1}, l_2: \tend}
\end{array}$} 
\end{example}

We recall the standard definition of coinductive synchronous subtyping 
from \cite{Gay2005}. 

\begin{definition}[Subtyping]\label{def:subtyping_local_types}
The \emph{subtyping relation} $\subt$ is the largest relation between session types such that, for all $\T_1 \subt \T_2$:
\begin{itemize}[leftmargin=*]
  \item If $\unfold{\T_1} = \tend$ then $\unfold{\T_2} = \tend$.
  \item If $\unfold{\T_1} = \tout{\pp}{\S}{\T_1'}$ then $\unfold{\T_2} = \tout{\pp}{\S}{\T_2'}$ and $\T_1' \subt \T_2'$.
  \item If $\unfold{\T_1} = \tin{\pp}{\S}{\T_1'}$ then $\unfold{\T_2} = \tin{\pp}{\S}{\T_2'}$ and $\T_1' \subt \T_2'$.
  \item If $\unfold{\T_1} = \tselsub{\pp}{l_i: \T_{1i}}{i\in \I}$ then $\unfold{\T_2} = \tselsub{\pp}{l_j: \T_{2j}}{j\in \J}$ and $\I \subseteq \J$ and $\forall i\in \I.\:\T_{1i} \subt \T_{2i}$.
  \item If $\unfold{\T_1} = \tbrasub{\pp}{l_i: \T_{1i}}{i\in \I}$ then $\unfold{\T_2} = \tbrasub{\pp}{l_j: \T_{2j}}{j\in \J}$ and $\J \subseteq \I$ and $\forall j\in \J.\:\T_{1j} \subt \T_{2j}$.
\end{itemize}
\end{definition}
Notice that the selection subtyping is covariant and 
the branching subtyping is contravariant 
with respect to the indices of choices.
The relation $\subt$ coincides with the coinductive 
synchronous subtyping formulation in 
\cite[Definition~3.15]{Ghilezan2019}.

We represent types syntactically as labelled transition systems
where each node is a subformula of the type. Type graphs are
used throughout the paper.

\begin{definition}[Type graph]\label{def:typegraph}
Define \emph{actions}: 
${\gell, \gell',\dots ::= \ \trin{\pp}{\S} 
\SEP \trout{\pp}{\S} \SEP \trbra{\pp}{l} \SEP
\trsel{\pp}{l} \SEP \trend}$ 
and the \emph{labelled transition relation} (LTS), $\T \trans{\gell}
\T'$, 
where $\gell$ is an action and $\T'$ is extended with the additional
type $\Skip$:\\[2mm]
\centerline{\small
$\begin{array}{c}
\mbox{
\begin{prooftree}
    \hypo{\unfold{\T} = \tend}
    \infer1{\T \trans{\trend} \Skip}
 \end{prooftree}\qquad
  \begin{prooftree}
   \hypo{\unfold{\T} = \tdag\pp {\S} \T' \quad \dagger\in \{ !,?\}}
   \infer1{\T \trans{\trdag{\pp}{\S}} \T'}
 \end{prooftree}\qquad
   \begin{prooftree}
 \hypo{\unfold{\T} = 
\tbradag{{\pp}}{{l_i: {\T_i}}}{i\in \I}  \quad \dagger \in \{ \oplus, \&\} \quad j \in \I}
 \infer1{\T \trans{\trbradag{\pp}{l_j}}{\T_j}}
   \end{prooftree}
}
     \end{array}$}\\[2mm]
The type graph is a directed graph whose edges are local types or 
$\Skip$, and an edge is given by $\T\trans{\gell} \T'$
(from $\T$ to $\T'$ labelled by $\gell$).
The type graph for $\T$, denoted by $\GG(\T)$, 
is the graph reachable by the transitions from $\T$.
We often write $\T$ for $\GG(\T)$ when it is clear from the context. 
\end{definition}
\begin{example}[Transitions]
\label{ex:typegraph}
Recall $\T_1$ in Example~\ref{ex:unfold}. 
Then we have:
$\T_1 \trans{\trsel{\pp}{l_1}} 
\tsel\pp{l_1: \T_1}
\trans{\trsel{\pp}{l_1}} 
\T_1$ and 
$\T_1 \trans{\trsel{\pp}{l_2}} 
\tend
\trans{\trend} \Skip
$. 
Its type graph $\GG(\T)$ is in Figure \ref{fig:transitions_type_graph}.
\end{example}
\begin{figure}[t]
\begin{center}
\begin{subfigure}[B]{0.40\textwidth}  
\centering
\small
  \tikzstyle{arrow} = [thick,-{Latex[length=2mm, width=1.5mm]},>=stealth]
  \tikzstyle{arrow2} = [thick,{Latex[length=2mm, width=1.5mm]}-{Latex[length=2mm, width=1.5mm]},>=stealth]
  \begin{tikzpicture}[>=Latex, node distance=1.2cm]
    \node (start) [smallrect] {$\T_1$};
    \node (n1) [smallrect, right of=start, xshift=2cm] {$\tsel{\pp}{l_1: \T_1}$};
    \node (n2) [smallrect, below of=start] {$\tend$};
    \node (n3) [smallrect, right of=n2, xshift=2cm] {$\Skip$};
    \draw[arrow] (start) to[out=10, in=170] node[midway, auto, xshift=6mm] {$\trsel{\pp}{l_1}$} (n1);
    \draw[arrow] (n1) to[out=-170, in=-10] node[midway, auto, xshift=6mm] {\ $\trsel{\pp}{l_1}$} (start);
    \draw[arrow] (start) to node[midway, auto] {$\trsel{\pp}{l_2}$} (n2);
    \draw[arrow] (n2) to node[midway, auto] {$\trend$} (n3);
  \end{tikzpicture}
  \caption{$\GG(\T)$ for Example~\ref{ex:typegraph}.}
  \label{fig:transitions_type_graph}
\end{subfigure}
\begin{subfigure}[B]{.45\textwidth}
\begin{center}
    \small
    \begin{tikzpicture}[>=Latex, node distance=1cm]
      \node (start) [smallrect] {$(\T_1, \T_2)$};
      \node (n1) [smallrect, right of=start, xshift=2cm] {$(\tsel\pp{l_1: \T_1}, \T_2)$};
      \node (n2) [smallrect, below of=start] {$(\tend, \tend)$};
      \node (n3) [smallrect, right of=n2, xshift=2cm] {$(\kf{Skip}, \kf{Skip})$};
      \draw[arrow2] (start) -- (n1);
      \draw[arrow] (start) -- (n2);
      \draw[arrow] (n2) -- (n3);
    \end{tikzpicture}
  \end{center}
\caption{$\T_1\subtsim\T_2$ for Example~\ref{ex:typesimulation}.\label{fig:typesimulation}}
\end{subfigure}
\vspace{-3mm}
\caption{Local type graphs}
\label{fig:typegraph}
\vspace{-3mm}
\end{center}
\end{figure}
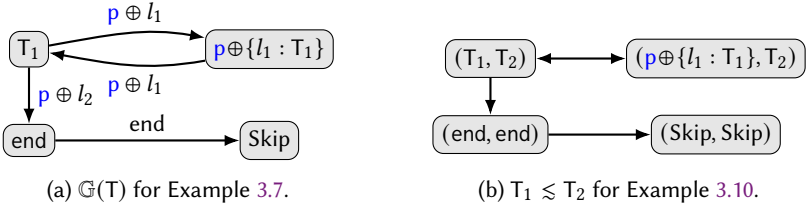

We redefine the subtyping relation equivalently in terms of a \emph{simulation}
between the type graphs.
\begin{definition}[Type simulations]\label{def:type_simulation}
A \emph{type simulation} $\cR$ is a relation on 
type graphs $\T_1$ and $\T_2$ such that 
$\T_1 \cR \T_2$ implies: 
\begin{itemize}
  \item If $\T_1 \trans{\gell} \T_1'$ then $\T_2 \trans{\gell} \T_2'$,
    and $\T_1' \cR \T_2'$, for
    $\gell \in \{\trsel\pp l, \tend, \trin{\pp}{S}, \trout{\pp}{S}\}$.
  \item If $\T_2 \trans{\gell} \T_2'$ then $\T_1 \trans{\gell} \T_1'$,
    and $\T_1' \cR \T_2'$, for
    $\gell \in \{\trbra\pp l, \tend\}$.
\end{itemize}
The subtyping simulation $\subtsim$ is defined by $\T_1 \subtsim \T_2$ 
if $(\T_1, \T_2) \in \cR$ 
in some type simulation $\cR$, i.e.~subtyping is the largest type
simulation.
We write $\subteq$
for the type graph equivalence.
\end{definition}
\begin{restatable}[Subtyping as a type
    simulation]{lemma}{subtypingtypesimulation}\label{thm:subtyping_as_simulation}
  \proofreference{Appendix~\ref{app:subtypingtypesimulation}}
$\T_1 \subt \T_2$ iff $\T_1 \subtsim \T_2$. 
\end{restatable}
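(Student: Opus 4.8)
The plan is to prove the two inclusions separately, in each case exploiting that $\subt$ and $\subtsim$ are defined as \emph{largest} relations, so a single coinductive step suffices. Throughout I rely on the fact that the LTS of Definition~\ref{def:typegraph} fires purely on $\unfold{\T}$: guardedness makes $\unfold{\cdot}$ well defined, the nodes of $\GG(\T)$ lie in $\Sub(\T)$ so the graph is finite, and a transition $\T \trans{\gell} \T'$ exists exactly when $\unfold{\T}$ has the matching outermost constructor. Hence in every case I may replace $\T_1,\T_2$ by their unfoldings and case split on the top constructor.

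For $\subt \subseteq \subtsim$, I would exhibit $\cR := \{(\T_1,\T_2) \mid \T_1 \subt \T_2\} \cup \{(\Skip,\Skip)\}$ as a type simulation; being contained in the largest one, this gives $\T_1 \subtsim \T_2$. Taking $\T_1 \subt \T_2$, I check the two clauses of Definition~\ref{def:type_simulation}. For the covariant clause, a transition $\T_1 \trans{\gell} \T_1'$ with $\gell$ of shape $\trend$, $\trout{\pp}{\S}$, $\trin{\pp}{\S}$ or $\trsel{\pp}{l}$ forces $\unfold{\T_1}$ into the corresponding form; the matching clause of Definition~\ref{def:subtyping_local_types} then pins $\unfold{\T_2}$ to the same form (same $\pp$ and $\S$, and $\I \subseteq \J$ in the selection case) and returns both $\T_2 \trans{\gell} \T_2'$ and $\T_1' \subt \T_2'$, i.e.\ $\T_1' \cR \T_2'$, with $(\Skip,\Skip)$ covering the $\trend$ continuation. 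The contravariant clause is dual: from $\T_2 \trans{\trbra{\pp}{l}} \T_2'$ or $\T_2 \trans{\trend} \Skip$ I argue by elimination over the shapes of $\unfold{\T_1}$ that only the branching (resp.\ end) clause of Definition~\ref{def:subtyping_local_types} is consistent, which then supplies the required $\T_1$-move and a continuation inside $\cR$.

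For $\subtsim \subseteq \subt$, I would show that $\subtsim$, restricted to genuine types, satisfies all five clauses of Definition~\ref{def:subtyping_local_types}, so that it is contained in the largest such relation $\subt$. Fixing $\T_1 \subtsim \T_2$, I case split on $\unfold{\T_1}$. If $\unfold{\T_1}$ is $\tend$, an output, an input, or a selection, the covariant clause of the simulation produces a matching move of $\T_2$, which forces $\unfold{\T_2}$ into the same shape and keeps the continuations inside $\subtsim$; letting $i$ range over $\I$ in the selection case yields $\I \subseteq \J$. If $\unfold{\T_1}$ is a branching, I read off $\J \subseteq \I$ and $\T_{1j} \subtsim \T_{2j}$ by applying the contravariant clause to each branch move $\T_2 \trans{\trbra{\pp}{l_j}} \T_{2j}$.

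The step I expect to be the real obstacle is \emph{kind agreement} in this backward branching case. When $\unfold{\T_1}$ is a branching, its only moves are $\trbra$-labelled, on which the covariant clause is silent, while the contravariant clause reacts only to the $\trbra$- and $\trend$-moves of $\T_2$; so the two clauses do not, on their face, forbid $\unfold{\T_2}$ from being an output, input, or selection, which must be excluded before the branch conditions $\J \subseteq \I$ and $\T_{1j} \subtsim \T_{2j}$ can be harvested. I would therefore try to establish, as an invariant of any type simulation, that related nodes expose the same outermost action kind and participant: the $\trend$-action (present in both clauses) rules out the $\tend$ shape, and I expect excluding the remaining covariant shapes to require either an explicit kind-matching requirement in the simulation or a restriction to comparisons between well-formed graphs. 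Once such kind agreement is secured, transporting transitions and tracking continuations is routine, so it is this agreement, rather than the bookkeeping, where I expect the argument to need the most care.
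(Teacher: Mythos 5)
Your overall route is the one the paper takes: exhibit $\subt\cup\{(\Skip,\Skip)\}$ as a type simulation for one inclusion, and show that the largest type simulation satisfies the clauses of Definition~\ref{def:subtyping_local_types} for the other. Your forward direction is complete and correct (the elimination argument over the shapes of $\unfold{\T_1}$ in the contravariant clause is exactly what is needed), and is carried out in more detail than the paper's proof, which asserts each consistency in a single sentence.

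The obstacle you flag in the backward branching case is not a defect of your argument: it is a genuine gap, and under the definitions as literally stated the inclusion from $\subtsim$ into $\subt$ fails there. Take $\T_1=\tbra{\pp}{l:\tend}$ and $\T_2=\tout{\pq}{\S}{\tend}$. The only transition of $\T_1$ is labelled $\trbra{\pp}{l}$, which the covariant clause of Definition~\ref{def:type_simulation} ignores, and the only transition of $\T_2$ is labelled $\trout{\pq}{\S}$, which the contravariant clause ignores; hence $\{(\T_1,\T_2)\}$ is vacuously a type simulation, yet $\T_1\not\subt\T_2$, since the branching clause of Definition~\ref{def:subtyping_local_types} forces $\unfold{\T_2}$ to be a branching at $\pp$. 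The paper's proof of this lemma (``any type simulation is consistent with the subtyping rules'') silently assumes precisely the kind agreement you are asking for, so it is incomplete on the same point; you should not expect to derive that agreement from the stated clauses, because they only constrain the $\trbra$- and $\trend$-moves of $\T_2$, of which $\T_2$ may have none. Your proposed repair is the right one: add to the simulation the requirement that whenever $\T_1\trans{\trbra{\pp}{l}}\T_1'$ there exists some $l'$ with $\T_2\trans{\trbra{\pp}{l'}}\T_2'$ (matching the participant but not the label, so contravariance of the branch sets is preserved). With that clause the invariant that related nodes unfold to the same outermost constructor and participant holds in every case, and the rest of your case analysis closes the proof.
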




\begin{example}[Type simulation]\label{ex:typesimulation}
Recall $\T_1 = \mu\ty. \tsel\pp{l_1: \tsel\pp{l_1: \ty}, l_2: \tend}$ 
from Example~\ref{ex:typegraph}
and $\T_2 = \mu\ty. \tsel\pp{l_1: \ty, l_2: \tend}$.
$\T_1$ is a subtype of $\T_2$; the graph for the type simulation is in Figure~\ref{fig:typesimulation}.
\end{example}
Ghilezan \etal\ give an inductive subtype-checking 
algorithm 
defined
in~\iffull{Figure~\ref{fig:ghilezhan_subtyping_rules}\ }\else{\cite[Table
    6]{Ghilezan2019}\ }\fi (extending the one 
for binary in \cite{Gay2005}). The rules take 
the judgement of the form $\Theta \vdash \T \leq \T'$, which means ``assuming the relations in $\Theta$, we may conclude that $\T$ is a subtype of $\T'$''. 
They defined, but did not analyse its complexity. 
We prove the following result that 
their algorithm takes \emph{exponential 
time}. 

\begin{restatable}{theorem}{complexityghilezansubtyping}
\label{thm:complexity_ghilezan_subtyping}
Let $n=|\T_1|+|\T_2|$.
The upper bound of the worst-case complexity of 
the algorithm to check
$\T_1\subt \T_2$ 
given in \cite[Table 6]{Ghilezan2019} is $\bigO{n^{n^3}}$, 
and the lower bound of the worst-case complexity 
is $\bigOmega{(\sqrt{n})!}$. 
\end{restatable}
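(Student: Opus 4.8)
The plan is to treat the two bounds separately, each by a structural analysis of the derivation trees produced by the inductive algorithm of \cite{Ghilezan2019}, whose rules derive judgements $\Theta \vdash \T \leq \T'$ and record in $\Theta$ the pairs assumed while passing under recursions.

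For the upper bound, the first observation is that every type occurring in a judgement of a derivation of $\emptyset \vdash \T_1 \leq \T_2$ is a subformula in the sense of Definition~\ref{def:subformulas_of_local_type}: the left component lies in $\Sub(\T_1)$ and the right in $\Sub(\T_2)$, since each rule either descends into an immediate subterm or replaces a recursion $\mu \ty. \T$ by its unfolding $\T\sub{\mu \ty.\T}{\ty}$, and both operations stay inside $\Sub$. A routine induction on Definition~\ref{def:subformulas_of_local_type} gives $|\Sub(\T)| \le |\T|$, so there are at most $|\T_1|\cdot|\T_2| \le n^2/4$ distinct pairs, and hence at most that many distinct recursion-pairs that can be inserted into $\Theta$. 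I then bound the maximum size $S$ of such a subformula by analysing the blow-up of nested unfolding: each $\mu$-binder multiplies the size by at most a factor $1+|\T|$, and there are at most $|\T|$ binders, so $S \le n^{O(n)}$.

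The tree-size bound follows from the fact that $\Theta$ is \emph{path-local}: along any branch the recursion-pairs inserted into $\Theta$ are pairwise distinct (otherwise the assumption rule would have fired), so each branch performs at most $n^2/4$ unfoldings, and between two successive unfoldings the structural rules merely descend through a type of size $\le S$. Viewing the derivation as the unfolding of the product of the two type graphs, cut whenever a recursion-pair repeats, I bound the number of nodes by $S^{O(n^2)} = n^{O(n^3)}$; multiplying by the per-node cost (writing, comparing, and membership-testing types of size $\le S$, i.e.\ $n^{O(n)}$) still yields $n^{O(n^3)} = \bigO{n^{n^3}}$. The essential point is that, because $\Theta$ is local to a path rather than global, the same polynomially-many pairs are recomputed in exponentially many contexts --- exactly what the quadratic graph algorithm of \Sec\ref{section:subtyping} avoids.

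For the lower bound I will exhibit a family $(\T_1,\T_2)$ with $|\T_1|+|\T_2| = n$ on which this recomputation is unavoidable. Adapting the binary construction of \cite{Udomsrirungruang2024a}, the idea is to use $k = \Theta(\sqrt n)$ mutually interacting recursion variables arranged so that verifying $\T_1 \subt \T_2$ forces the algorithm to accumulate a fixed set of $k$ assumption-pairs into $\Theta$, while the structure offers these pairs in every order and never lets the assumption rule fire until all $k$ are present; choosing each of the $k$ components to list $k$ continuations keeps the total size $\Theta(k^2) = \Theta(n)$. Since $\Theta$ is path-local, the $k!$ orderings become $k!$ distinct, non-merging root-to-leaf branches, so the derivation has at least $(\sqrt n)!$ nodes. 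I expect the main obstacle to be precisely this construction: I must verify that distinct orderings really produce syntactically distinct intermediate judgements (so the assumption rule cannot short-circuit a branch early and collapse the factorial), that the types are guarded and the intended subtyping genuinely holds (so the algorithm is forced to explore every branch to completion rather than failing fast), and that the size is $\Theta(n)$. On the upper-bound side, the other delicate piece of bookkeeping is establishing the subformula-size bound $S \le n^{O(n)}$ carefully enough to land at the stated exponent $n^3$.
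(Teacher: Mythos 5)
Your upper bound is essentially the paper's argument, with one detour. You share the key steps: every judgement draws its two types from $\Sub(\T_1)\times\Sub(\T_2)$, each of linear cardinality; along a branch each recursion pair enters $\Theta$ at most once (else \RULE{Alg-Assump} fires), so there are $\bigO{n^2}$ unfoldings per branch. Where you differ is in bounding the work between unfoldings: you bound it by the \emph{syntactic size} $S=n^{\bigO{n}}$ of an unfolded subformula and recover $n^{\bigO{n^3}}$ via the multiplicative segment count $S^{\bigO{n^2}}$. That accounting is valid, but the paper avoids $S$ altogether: it observes that $\T_1'$ ranges over only $\bigO{n}$ distinct subformulas, hence $\nd(\T_1')$ takes only $\bigO{n}$ distinct values, and since $\nd(\T_1')$ strictly decreases at every non-unfolding step, each segment has length $\bigO{n}$ and the whole tree has height $\bigO{n^3}$ with branching factor $\bigO{n}$ and $\bigO{1}$ per-node cost (subformulas being indexable). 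Your route lands at the same exponent, but you do owe the careful $S\le n^{\bigO{n}}$ estimate you flag, which the cleaner measure makes unnecessary.

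The genuine gap is the lower bound: there the construction \emph{is} the proof, and you have only stated desiderata for it. Moreover, the design you describe --- a fixed set of $k$ assumption pairs that must all enter $\Theta$ on every branch, in every one of the $k!$ orders, with \RULE{Alg-Assump} blocked until all $k$ are present --- is not obviously realizable: if the same $k$ pairs occur on every branch, you must still argue that the $k!$ branches never pass through a common judgement $(\Theta,\T_1',\T_2')$, and since the algorithm is deterministic per judgement, identical judgements would collapse the count. The paper's witness works differently: it takes a single pair $\T_k\subt\T_{k+1}$ built from nested chains of unary selections with deliberately staggered $\mu$-binders, so that at stage $i$ the derivation can re-enter the recursion at any of $k-i$ distinct depths $\alpha_i$; the factorial counts the sequences $(\alpha_1,\dots,\alpha_l)$ with $0\le\alpha_i<k-i$, and distinctness of the resulting judgements is immediate because the \emph{pairs themselves} (not just their order) depend on the choices made so far. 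You correctly identify the verification obligations (distinct intermediate judgements, guardedness, $\T_1\subt\T_2$ must actually hold so the search runs to completion, size $\Theta(k^2)$), but without an explicit family exhibiting them the lower half of the theorem is unproved.
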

Subtyping checking can be performed
in quadratic time using the simulation $\subtsim$.
For fixed types $\T_1$ and $\T_2$, in a type derivation tree, one may need to prove $\Gamma \vdash \T_1 \subt \T_2$
multiple times (possibly with different values of $\Gamma$), whereas using a subtyping
graph all of these judgements are combined into a single node.

To describe the subtyping algorithm, 
we use the following definition, called 
\emph{inconsistent nodes}. 
In $\GG(\T, \T')$, a state $(\T_1', \T_2')$ is \emph{inconsistent} 
if it immediately violates Definition \ref{def:type_simulation}.

\begin{definition}[Inconsistent nodes]
  \label{def:inconsistent_node}
A state $(\T_1', \T_2')$ in 
$\GG(\T, \T')$ 
is \emph{inconsistent} iff at least one of the following hold, for some $\T_1''$ and $\T_2''$:
  \begin{itemize}
    \item $\T_1' \trans{\gell} \T_1''$ and $\T_2' \not\trans{\gell} \T_2''$, for $\gell \in \{\trbra\pp l, \tend, \trin{\pp}{S}, \trout{\pp}{S}\}$.
    \item $\T_2' \trans{\gell} \T_2''$ and $\T_1' \not\trans{\gell} \T_1''$, for $\gell \in \{\trsel\pp l, \tend\}$.
  \end{itemize}
\end{definition}

\begin{restatable}{theorem}{quadraticsubtyping}
  \label{thm:quadratic_subtyping}
  \proofreference{Appendix~\ref{section:quadratic_subtyping_proof}}
There exists an algorithm for checking $\T_1 \subt \T_2$
with worst-case time complexity
$\bigTheta{|\T_1| \cdot |\T_2|}$. 
\end{restatable}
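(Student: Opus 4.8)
The plan is to recast subtyping as a reachability problem on the product graph $\GG(\T_1,\T_2)$ and to show that this graph, together with all the work needed to explore and test it, has size $\bigO{|\T_1|\cdot|\T_2|}$. By Lemma~\ref{thm:subtyping_as_simulation} it suffices to decide $\T_1\subtsim\T_2$, and the heart of the argument is the characterisation
\[
\T_1\subtsim\T_2 \quad\Longleftrightarrow\quad \text{no inconsistent node (Definition~\ref{def:inconsistent_node}) is reachable from $(\T_1,\T_2)$ in $\GG(\T_1,\T_2)$.}
\]
I would prove this by the standard greatest-fixpoint reasoning. For the forward direction, if $\cR$ is a type simulation with $(\T_1,\T_2)\in\cR$, then by Definition~\ref{def:type_simulation} every product edge out of a node of $\cR$ lands again in $\cR$; by induction every reachable node lies in $\cR$, and membership in a simulation directly rules out the immediate violations listed in Definition~\ref{def:inconsistent_node}, so no reachable node is inconsistent. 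Conversely, if no reachable node is inconsistent, the set of reachable nodes is \emph{itself} a type simulation: each such node is consistent (so its forced transitions are matched) and each matched transition leads to another reachable, hence consistent, node. Thus $(\T_1,\T_2)\in\subtsim$.

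The algorithm then is: starting from $(\T_1,\T_2)$, perform a BFS/DFS over $\GG(\T_1,\T_2)$, at each discovered node testing inconsistency against Definition~\ref{def:inconsistent_node}; accept iff no inconsistent node is ever found. Correctness is immediate from the equivalence above composed with Lemma~\ref{thm:subtyping_as_simulation}. To realise this efficiently I would first precompute the two factor graphs $\GG(\T_1)$ and $\GG(\T_2)$ by memoised unfolding (Definition~\ref{def:unfold}), so that recursion is handled without any exponential re-substitution and each node's transition table is built once.

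For the complexity bound I rely on two facts about the factor graphs. First, their nodes are subformulas, so by Definition~\ref{def:subformulas_of_local_type} and a routine induction $|\Sub(\T_i)|\le|\T_i|$, giving $\bigO{|\T_i|}$ nodes; a similar count (one outgoing arc per label of a choice, one per input/output/$\tend$) gives $\bigO{|\T_i|}$ edges, computable in linear time. Hence the product has at most $\bigO{|\T_1|\cdot|\T_2|}$ nodes. Second, the LTS of Definition~\ref{def:typegraph} is \emph{deterministic}: for each action $\gell$ a node has at most one $\gell$-successor (labels are distinct, and $\trout{\pp}{\S},\trin{\pp}{\S},\trend$ determine their target). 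Consequently a product node $(\T_1',\T_2')$ has at most $\min(d(\T_1'),d(\T_2'))$ outgoing edges, where $d(\cdot)$ is the out-degree in the factor graph, and the total edge count is at most $\sum_{\T_2'}\sum_{\T_1'} d(\T_1') = |V(\GG(\T_2))|\cdot|E(\GG(\T_1))| = \bigO{|\T_1|\cdot|\T_2|}$. The per-node inconsistency test and the matching of transitions cost time proportional to the local out-degrees, so summing over the explored graph also stays within $\bigO{|\T_1|\cdot|\T_2|}$. For the matching lower bound that yields $\bigTheta{|\T_1|\cdot|\T_2|}$, I would exhibit a family of (mutually recursive, wide-choice) type pairs whose reachable product already contains $\bigTheta{|\T_1|\cdot|\T_2|}$ distinct states, forcing the algorithm to visit that many nodes.

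The main obstacle I anticipate is precisely the edge-counting step: naively a product node could appear to have out-degree equal to the number of common labels, and one must use determinism to pin the out-degree to $\min(d(\T_1'),d(\T_2'))$ and then telescope the sum into $|V|\cdot|E|$ of the factors rather than into something that carries an extra label factor. Getting the per-node work (transition matching plus the inconsistency check) to be genuinely proportional to local degree — via sorted or hashed label-indexed transition tables — is the second delicate point, since any super-linear per-node cost would break the quadratic guarantee.
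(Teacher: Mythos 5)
Your proposal is correct and follows essentially the same route as the paper: reduce $\subt$ to $\subtsim$ via Lemma~\ref{thm:subtyping_as_simulation}, characterise $\T_1\subtsim\T_2$ as non-reachability of inconsistent nodes in the product graph (the reachable consistent set being itself a type simulation), and bound the product graph by $\bigO{|\T_1|\cdot|\T_2|}$ nodes and edges. The only part you leave as a sketch is the lower-bound witness, which the paper instantiates very simply as $\T_i=\mu\ty.\tin\pp\tint\dots\tin\pp\tint\ty$ with coprime cycle lengths, a positive instance of subtyping whose product graph has all $n_1\cdot n_2$ states reachable.
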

\begin{proof}
{\bf (upper bound)} 
For the algorithm for checking $\T_1 \subt \T_2$,  
we first construct a subtyping graph $\GG(\T_1, \T_2)$.
Then for each node $(\T_1', \T_2')$ in the graph, 
check if it is inconsistent. Next we add 
the edge $(\T_1,\T_2)\rightarrow (\T_1',\T_2')$ if $\T_1 \,\cR\,\T_2$ 
directly implies $\T_1' \,\cR\,\T_2'$ in the definition of $\subtsim$. 
If there is any reachable inconsistent node, 
then $\T_1 \not\subt \T_2$; otherwise, $\T_1 \subt \T_2$.
To show correctness of the algorithm, 
observe that any set of consistent nodes closed
under reachability is a type simulation, 
directly from Definition \ref{def:type_simulation}.
The minimal such set containing $(\T_1, \T_2)$, 
if it exists, must be the set of reachable nodes from $(\T_1, \T_2)$.
Thus we can check if any inconsistent nodes are contained in this set to solve the problem. 
Since the graph has $\bigO{|\T_1| \cdot |\T_2|}$ nodes and edges, 
the algorithm runs in $\bigO{|\T_1| \cdot |\T_2|}$ time with a reachability search.

\noindent{\bf (lower bound)} 
  For $i \in \{1, 2\}$, consider 
  $\T_i = \mu\ty.\tin\pp\tint \dots \tin\pp\tint \ty$ where
  $\size{\T_i}=n_i+2$ (i.e., there are $n_i$-times $\tinn\pp\tint$
  under the
  recursive body).   
  The type graph of $\T_i$ is isomorphic to the graph on $n_i$ vertices $\{0, \dots, n_i - 1\}$, and $j \trans{\trin\pp{\tint}} ((j+1) \bmod n_i)$.
  Thus the subtyping graph has transitions $(j, k) \rightarrow ((j+1) \bmod n_1, (k+1) \bmod n_2)$ with starting node $(0, 0)$.
  If $n_1$ and $n_2$ are coprime then all $n_1 \cdot n_2$ nodes are reachable.
  We have that $\T_1 \subt \T_2$ thus the algorithm must search the entire graph,
  and therefore the complexity in this case is $\bigTheta{|\T_1| \cdot |\T_2|}$.
\end{proof}

A summary of the results in this section is in Table~\ref{tab:subtyping_complexity_summary}.



\section{Complexity Analyses of Four End-Point Projections} 
\label{sec:projections}
This section studies complexity of the four existing 
end-point projections. More precisely, 
for the \emph{inductive plain and full merging}, as well as the \emph{coinductive plain merging}
projections, we give the upper and lower bounds of
the worst-case complexity of the projection algorithms.
For the \emph{coinductive full merging} projection,
we define a new sound and complete 
projection algorithm for coinductive projection with full merging using type graphs (Definition~\ref{def:subsetconstruction}).
We then give the upper and lower bounds of 
the worst-case complexity of 
\emph{the projection problem} (which asks the projection of given $\G$) for 
the coinductive full merging projection.
Figure~\ref{fig:venn} depicts the relationship 
between sets of local types produced by 
the four projections. 
The table in the right hand side (1-4) lists examples 
which represent different projections. 
They are explained throughout this section.  
Liveness, deadlock-freedom and safety properties (5-9) are explained in 
\Sec\ref{section:checking_safety_properties}.


\input{fig_venn.tex}

\subsection{Global Types}
\label{subsec:global}
We define \textit{global types}, which give a global view 
of interactions between all the participants. Global types are 
projected to a set of local types by the algorithms defined in the next subsection. 
%
\begin{definition}[Global type] 
\label{def:global} \emph{Global types} $\G, \G', \G_i, \dots$ 
are defined by the following syntax:\\[1mm]
\centerline{$
\begin{array}{c}
  \G ::= \tend \SEP \GMsg{\pp}{\q}{S}\G \SEP \GvtPair{\pp}{\q}{l_i: \G_i}{i \in \I} \SEP \mu\ty. \G \SEP \ty
\end{array}
$}
\end{definition}
The \emph{message type}, $\GMsg{\pp}{\q}{S}\G$, denotes 
a message from participant $\pp$ to participant $\q$ with a payload of type 
$S$; the \emph{branching type}, $\GvtPair{\pp}{\q}{l_i: \G_i}{i \in \I}$, 
defines that participant $\pp$ can select label $l_i$ at participant $\q$ where $I\not=\emptyset$ and $l_i\not=l_j$ with $i\not = j$ for all $i,j\in I$; 
$\mu\ty. \G$ is a \emph{recursive type} (where we assume $\G$ is guarded);  
and $\ty$ is a \emph{type variable}. 
The set of participants of a global type $\G$ is defined as:
$\pt{\ty} = \pt{\tend} = \emptyset$; 
$\pt{\mu \ty. \G} = \pt{\G}$; 
$\pt{\GMsg{\pp}{\q}{S}\G} = \pt{\G}\cup\{\pp, \q\}$; and 
$\pt{\GvtPair{\pp}{\q}{l_i: \G_i}{i \in \I}} = \left(\bigcup_{i \in \I}{\pt{\G_i}}\right) \cup \{\pp, \q\}$.  
We denote $\ftv{\G}$ as the set of \emph{free type variables} in $\G$ 
and if $\ftv{\G}=\emptyset$, we call $\G$ \emph{closed}. 

We define the size $|\G|$ and subformulas of global types $\Sub(\G)$ analogously to local types. 

\begin{definition}[Size of a global type] 
The size of a global type $\G$, denoted $|\G|$, is defined as:
$|\ty| = |\tend| = 1$; $|\mu\ty. \G| = 1 + |\G|$; $|\GMsg{\pp}{\q}{S}\G| = 1 + |\G|$; and $|\GvtPair{\pp}{\q}{l_i: \G_i}{i \in \I}| = 1 + \sum_{i \in \I} |\G_i|$. 
\end{definition}
\begin{definition}[Subformulas of global types]  
\label{def:subg}
The set of \emph{subformulas of a global type $\G$}, denoted $\Sub(\G)$, is defined as:
$\Sub(\tend) = \tend$; $\Sub(\GMsg{\pp}{\q}{S}\G) = \{\GMsg{\pp}{\q}{S}\G\} \cup \Sub(\G)$; 
$\Sub(\GvtPair{\pp}{\q}{l_i: \G_i}{i \in \I}) = \{\GvtPair{\pp}{\q}{l_i: \G_i}{i \in \I}\} \cup \bigcup_{i \in \I} \Sub(\G_i)$; and 
$\Sub(\mu\ty. \G) = \{\G\} \cup \{\G'[\mu\ty. \G/\ty] \mid \G' \in \Sub(\G)\}$.\end{definition}
 
\begin{definition}[Unfolding of a global type]\label{def:unfold_global}
We define $\unfold{\mu\ty. \G} = \unfold{\G\sub{\mu\ty. \G}{\ty}}$, and $\unfold{\G} = \G$ otherwise.
\end{definition}




\label{section:projection}
\subsection{Two Inductive Projections}
\label{section:inductive_projection}
Firstly, we consider the two algorithms of \textit{inductive projections},
which are defined purely syntactically. 
This comes in two variants: \textit{plain merging} 
from \cite{Honda2008,HYC2016} which merges only 
identical types if the participant is not involved. 
The second one, called \textit{full merging} from \cite{YH2024}, 
gives more projectable global types, 
enabling merging of two different branchings. 

\begin{definition}[Inductive projection] \label{def:inductive_projection}
The projection operator on participant $\pp$ for global type $\G$, 
denoted by $\proj{\G}{\pp}$, is a partial function defined as:
{\small
  \begin{itemize}[leftmargin=*]
    \item $\proj{\left(\Gvt{\pq}{\pr}{S} \G\right)}{\pp} =
      \begin{cases}
        \tout\pr{S} \proj{\G}{\pp} & \text{if } \pp = \pq\\
        \tin\pq{S} \proj{\G}{\pp} & \text{if } \pp = \pr\\
        \proj{\G}{\pp} &\hspace{-2mm}\text{otherwise}\\
      \end{cases}$
      \quad
      $\proj{\left(\GvtPair{\pq}{\pr}{l_i: \G_i}{i \in \I}\right)}{\pp} =
      \begin{cases}
        \tselsub\pr{l_i: \proj{\G_i}{\pp}}{i \in \I} & \text{if } \pp = \pq\\
        \tbrasub\pq{l_i: \proj{\G_i}{\pp}}{i \in \I} & \text{if } \pp = \pr\\
        \bigmerge_{i \in \I} \proj{\G_i}{\pp} & \text{otherwise}\\
      \end{cases}
    $
    \item 
$\proj{\left(\mu \ty. \G\right)}{\pp} =
      \begin{cases}
        \tend & \text{if }\pp \not\in \pt{\G}\text{ and }
\ftv{\mu\ty. \G}=\emptyset\\
        \mu \ty. \left(\proj{\G}{\pp}\right) & \text{otherwise}\\
      \end{cases}
    $
\quad $\proj{\tend}{\pp}=\tend$ 
  \end{itemize}
}
\noindent where the merging operator $\mergep$ is defined in Definitions~\ref{def:inductive_plain_merging} and \ref{def:inductive_full_merging} below.  
If undefined by the above rules, then $\proj{\G}{\pp}$ is undefined. 
\end{definition}

\begin{definition}[Inductive plain merging]\label{def:inductive_plain_merging}
The plain merge is defined by $\T \mergep \T = \T$, and undefined otherwise.
\end{definition}
\begin{definition}[Inductive full merging]\label{def:inductive_full_merging}
The full merge is defined inductively by:\\[1mm]
{\small
$\tdag\pp{S}{\T_1} \mergef \tdag\pp{S}{\T_2} = \tdag\pp{S}(\T_1 \mergef \T_2)$ with $\dagger\in \{!,?\}$;
$\ty \mergef \ty = \ty$;
$\tend \mergef \tend = \tend$;\\[1mm]
$\tselsub\pp{l_i: \T_i}{i \in \I} \mergef \tselsub\pp{l_i: \T_i'}{i \in \I} = \tselsub\pp{l_i: \T_i \mergef \T_i'}{i \in \I}$; and\\[1mm]
$\tbrasub\pp{l_i: \T_i}{i \in \I} \mergef \tbrasub\pp{l_j: \T_j'}{j \in \J}
= \tbraset\pp\left(\{l_k: (\T_k \mergef \T_k') \mid k \in \I \cap \J\} \cup \{l_i: \T_i \mid i \in \I \minus \J\} \cup \{l_j: \T_j' \mid j \in \J \minus \I\}\right)$;
}\\
\noindent and undefined otherwise.
\end{definition}
\noindent Note that we can show that both notions of merging are associative and commutative \cite{Ghilezan2019}.
We introduce the notation: $\bigmergep_{i \in \I}\T_i = (\dots(\T_{n_1} \mergep \T_{n_2})\dots) \mergep \T_{n_m}$ for finite non-empty set $\I = \{n_1, \dots, n_m\}$, and similarly for $\bigmergef$.


Notice that in full merging (Definition~\ref{def:inductive_plain_merging}), while the internal choice does not increase 
the branchings, the two external choices can be merged to make 
a larger type with more branchings.  
This is consistent with subtyping: given some types
with external choices, merging the choices together keeps them as
subtypes of the original types, and such can be used in place of
them. For internal choice, adding more choices makes
an ill-behaved global type well-formed. See \cite[Example
  3.13]{YH2024} and \cite[\Sec 4.3]{HYC2016}.

The following example illustrates that full merging is more expressive than plain merging.

\begin{example}\label{ex:full_merging_is_more_expressive} 
Consider two global types where $\pr$ is sending 
a label to $\pp$ after receiving from $\pq$:\\[1mm]
\centerline{
$\begin{array}{rl}
\G_{\text{ip}} \ = & \mu\ty.\:\GvtPairs\pq\pr{l_1: \GvtPairs\pr\pp{l_1: \ty}, l_2: \GvtPairs\pr\pp{{\color{violet}{l_1: \ty}}}}\\
\G_{\text{if}} \ = & \mu\ty.\:\GvtPairs\pq\pr{l_1: \GvtPairs\pr\pp{l_1: \ty}, l_2: \GvtPairs\pr\pp{{\color{violet}{l_2: \tend}}}}
\end{array}$
}\\[1mm]
Then we have 
$\proj{\G_{\text{ip}}}\pp \ = \ 
\mu\ty.\:\tbra\pr{l_1: \ty}$ by both plain and full merging,  
but:\\[1mm]
\centerline{
$\begin{array}{ll}
\proj{\G_{\text{if}}}\pp
    \ = & 
       \mu\ty.\:(\proj{\GvtPairs\pr\pp{l_1: \ty}}\pp \mergep\: 
       \proj{\GvtPairs\pr\pp{l_2: \tend}}\pp)
     = \mu\ty.\:(\tbra\pr{l_1: \ty} \mergep \tbra\pr{l_2: \tend})
\end{array}$}
\\[1mm]
By plain merging, this is \emph{undefined}; by full merging, it is \emph{defined} as: $\mu\ty.\tbra\pr{l_1: \ty, l_2: \tend}$. 
\end{example}
\subsubsection{Complexity of the projection with plain merging}
\label{subsec:plainmerge}
We show that inductive projection of a global type $\G$ with plain merging is in $\bigO{|\G| \log |\G|}$ time. 
Most cases of inductive projection can be performed in constant time and reduce the total sizes of the types to be projected by a constant amount; the only exception is the projection of $\proj{\left(\GvtPair{\pq}{\pr}{l_i: \G_i}{i \in \I}\right)}{\pp}$ where $|\I| \geq 2$ and $\pp \notin \{\q, \pr\}$.
In this case the syntactic equality check takes linear time and the size of each projected branch is at most half of the size of the original type. Theorem~\ref{thm:inductive_plain_merge} states this argument.

\begin{restatable}{theorem}{thmplain}\label{thm:inductive_plain_merge}
\label{thm:plain:lower}  
\label{thm:plain}
\proofreference{Appendix~\ref{app:plainprojection}}
There exists an algorithm for inductive projection with plain merging
with worst-case time complexity $\bigTheta{\size{\G}\log\size{\G}}$. 
\end{restatable}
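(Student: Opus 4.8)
The plan is to analyse the recursive structure of the projection operator $\proj{\cdot}{\pp}$ (Definition~\ref{def:inductive_projection}) directly and to isolate the single expensive clause. First I would observe that for the clauses producing $\tend$, $\tout$, $\tin$, recursion, and a variable, and for a branching $\GvtPair{\pq}{\pr}{l_i:\G_i}{i\in\I}$ in which $\pp\in\{\pq,\pr\}$, the work done at the node itself is constant (or $\bigO{|\I|}$ for re-assembling a selection/branching, which amortises to $\bigO{|\G|}$ overall), and the recursion descends into strictly smaller subterms whose sizes sum to at most $|\G|-1$. Hence all of these contribute only $\bigO{|\G|}$ in total, and the entire cost is governed by the remaining clause: a branching $\GvtPair{\pq}{\pr}{l_i:\G_i}{i\in\I}$ with $|\I|\ge 2$ and $\pp\notin\{\pq,\pr\}$, where we must form $\bigmergep_{i\in\I}\proj{\G_i}{\pp}$.

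For this clause the key is the half-size property. By Definition~\ref{def:inductive_plain_merging}, $\bigmergep$ is defined only when all the $\proj{\G_i}{\pp}$ are syntactically equal; writing $s$ for their common size, we have $s\le\min_i|\proj{\G_i}{\pp}|\le\min_i|\G_i|$, and since $|\I|\ge 2$ and $\sum_i|\G_i|\le|\G|-1$ the minimum is at most the average, so $s\le(|\G|-1)/2$ --- each projected branch is at most half the size of the node. The merge then performs $|\I|-1$ equality tests between trees of size $s$; traversing them is linear, while matching the label sets at each branching/selection node is done by comparison and carries a logarithmic overhead, so a single node of size $k$ costs $\bigO{|\I|\cdot s\cdot\log s}$, and since $|\I|\cdot s\le\sum_i|\G_i|\le k$ this is $\bigO{k\log k}$. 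A naive summation of the induced recurrence over the projection tree is too weak (it degrades to quadratic when merges are deeply nested); the rigorous bound instead comes from an amortised argument. I would assign to every projected subterm a potential proportional to its output size: projecting a node that involves $\pp$ raises the potential by a constant, while a merge of $m\ge 2$ syntactically equal copies collapses $m$ outputs into one and releases potential $\ge(m-1)s$, exactly paying for the linear part of the $m-1$ equality tests. This shows that the total volume of branches compared across all merges, $\sum_{\text{merge nodes}}|\I|\cdot s$, is only $\bigO{|\G|}$; multiplying by the $\bigO{\log|\G|}$ per-element cost of comparison-based label matching gives a grand total of $\bigTheta{|\G|\log|\G|}$ for the upper bound.

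For the lower bound I would exhibit a family on which the algorithm is forced to do $\bigOmega{|\G|\log|\G|}$ work. Take a recursion-free $\G=\GvtPair{\pq}{\pr}{l_1:\G_1,\,l_2:\G_2}$ with $\pp\notin\{\pq,\pr\}$, where $\G_1$ and $\G_2$ each drive $\pp$ through a single selection with the \emph{same} set of $N=\bigTheta{|\G|}$ distinct labels but listed in two different (adversarial) orders. Projection produces two selection types $\tselsub{\ps}{k_a:\tend}{a\in A}$ and $\tselsub{\ps}{k_b:\tend}{b\in B}$ with $A=B$ as sets, and the plain merge is defined iff these are syntactically equal; deciding this amounts to testing equality of two $N$-element label sets, which in the comparison model requires $\bigOmega{N\log N}=\bigOmega{|\G|\log|\G|}$ comparisons by the standard element-distinctness/sorting lower bound. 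Together with the upper bound this pins the worst case at $\bigTheta{|\G|\log|\G|}$.

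The main obstacle is the accounting at the merge clause: although each individual equality test is cheap, merges can be deeply nested and branchings arbitrarily wide, so only a careful argument prevents the bound from collapsing to quadratic. The crux is therefore to prove that the total merge volume is linear --- via the half-size property together with the potential/amortisation argument above --- and to pin down exactly where the logarithmic factor is unavoidable, namely the comparison-based matching of label sets, which is precisely the phenomenon that the lower-bound family is designed to stress.
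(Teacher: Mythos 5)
Your upper bound shares the paper's skeleton (isolate the merge clause, exploit the half-size property $s\le(\size{\G}-1)/2$ when $|\I|\ge 2$), but the accounting is genuinely different, and the difference matters. The paper charges the merge cost $|\I|\cdot|\T|$ to the non-maximal branches and absorbs it via $n_i\le n/2$ into one extra level of $\log$, proving directly by induction that the total cost is at most $2\size{\G}(\log\size{\G}+1)$; the logarithm comes from the recursion, while the syntactic equality test on trees of size $s$ is taken to cost $\bigO{s}$ with $\bigO{1}$ label comparisons. Your potential argument instead shows that the \emph{total} merge volume is $\bigO{\size{\G}}$ --- and for plain merging that argument is actually sound, since at most $\size{\G}$ output nodes are ever created, a merge of $m$ identical copies of size $s$ destroys $(m-1)s$ of them, and its cost $ms\le 2(m-1)s$. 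But this creates a tension you never resolve: under the paper's cost model your own accounting yields a \emph{linear}-time algorithm, so the entire $\log$ factor in your bound rests on the extra assumption that equality of two branchings requires order-insensitive, comparison-based matching of their label sets. That is not the paper's model (its equality check is stated to be linear in $\sum_{i}|\T_i|$, and elsewhere it fixes a total order on labels so branch lists can be kept sorted), so you are analysing a different algorithm from the one whose upper bound you proved.

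This becomes a genuine gap at the lower bound. Your family relies on two selections carrying the same label set in different orders and invokes the $\bigOmega{N\log N}$ comparison lower bound for set equality. Under the standard reading of syntactic equality (and of $\T\mergep\T=\T$), two selections whose branches are permutations of one another are \emph{not} syntactically equal, so on your instance the plain merge is simply undefined and an ordered linear-time comparison detects this; no $\bigOmega{N\log N}$ work is forced. A $\bigTheta{\cdot}$ statement needs upper and lower bounds for the \emph{same} algorithm in the \emph{same} model, and your two halves do not coexist: in the ordered, $\bigO{1}$-comparison model your upper bound sharpens to $\bigO{\size{\G}}$ but your lower bound fails, while in the unordered comparison model your lower bound is essentially a sorting lower bound that the algorithm could sidestep with a single $\bigO{\size{\G}\log\size{\G}}$ preprocessing pass. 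The paper's lower bound targets a different phenomenon entirely: a family $\G_m$ of exponentially deep nested merges in which projections of the recursive subterms are re-compared at every level, so that the per-level charge in its upper-bound induction accumulates to $\bigTheta{m\cdot 2^m}$. That recursion-driven re-comparison, not label-set matching, is what your lower bound would need to capture to match your (or the paper's) upper bound.
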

\begin{proof}
\textbf{(upper bound)}  
First, checking 
$\ftv{\G}=\emptyset$
and $\pp \notin \pt{\G}$
for every syntax of subformulas of $\G$ 
is performed in $\bigO{n}$ time by 
$\size{\Sub(\G)}= \bigO{|\G|}$. 
We also note that 
$\size{\proj{\G}{\pp}}\leq\size{\G}$.
Then, our algorithm performs the projection in the same recursive sense as in Definition~\ref{def:inductive_projection}, except for the aforementioned checking of $\ftv{\G}=\emptyset$. 
First, we prove that the algorithm 
computes $\proj\G\pp$ in $2|\G|\log|\G|$ time, if $\proj\G\pp$ is defined.
Then, we prove that the time taken when $\proj\G\pp$ is undefined is
$2|\G|\log|\G| + |\G|$. 

\noindent\textbf{(lower bound)}  
Consider: 
$\G_m = \Gvt\pp\pr\tint \GvtPairs\pp\pq{l_1: \G_{m-1}, l_2: \G_{m-1}}$
$(m > 0)$
and $\G_0 = \tend$.  
We compute $\proj{\G_m}\pp$. Type $\G_m$ has size $\bigO{2^m}$ but
computing $\proj{\G_k}\pp \mergep \proj{\G_k}\pp$ requires $\bigTheta{2^k}$ time (because $|\proj{\G_k}\pp| = \bigTheta{2^k}$). Thus, the
time taken to compute $\proj{\G_m}\pp$ is $\bigTheta{m \cdot 2^m}$, which
gives rise to the $\bigOmega{n \log n}$ lower bound.
\end{proof}


\subsubsection{Complexity of projection with full merging: Na\"ive algorithm}
\label{subsec:fullmerge}
In the full merging, as the requirement for merged branches to be
syntactically equal is dropped, unlike the plain merging case, the
branch sizes may be unbalanced. 
If we represent type graphs in a \emph{na\"ive} way (that is, syntactically representing types as finite trees),
the following example introduces a proof that the worst-case runtime of such an algorithm is $\bigOmega{n^2}$.\\[1mm] \centerline{
$\begin{array}{rcll}
  \G^{(n+1)} &= & \GBra \pq \pr {l_1: \G^{(n)}, l_2: \GBra \pq \pp {l_{n+1}: \tend}} & \text{for } n \geq 0\\
\G^{(0)} &= & \GBra \pq \pp {l_0: \tend}
\end{array}
$}



\begin{theorem}\label{thm:naive_full_merging_algorithm_is_quadratic} 
Finding $\proj {\G^{(n)}} \pp$ under full merging with a na\"ive representation of types as syntax trees takes $\bigTheta{n^2}$ time.
\end{theorem}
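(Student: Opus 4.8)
The plan is to compute the projection in closed form and then track the cost of the naïve recursion one level at a time, obtaining matching $\bigO{n^2}$ and $\bigOmega{n^2}$ bounds; the input size is harmless here since $\size{\G^{(n)}} = \bigTheta{n}$, so ``quadratic in $n$'' coincides with ``quadratic in the input''.

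First I would prove by induction on $n$ that $\proj{\G^{(n)}}{\pp} = \tbra{\pq}{l_0: \tend, \dots, l_n: \tend}$, a flat branching with $n+1$ labels. The base case is immediate, since $\pp$ is the receiver in $\G^{(0)}$. For the step, the outer branching of $\G^{(n)}$ is a $\pq\to\pr$ choice with $\pp \notin \{\pq, \pr\}$, so Definition~\ref{def:inductive_projection} forces the merge $\proj{\G^{(n-1)}}{\pp} \mergef \proj{\GBra{\pq}{\pp}{l_n: \tend}}{\pp}$; the second operand projects to $\tbra{\pq}{l_n: \tend}$, and since the label $l_n$ is fresh with respect to $\{l_0, \dots, l_{n-1}\}$, we have $\I \cap \J = \emptyset$, so the branching case of full merge simply appends it. In particular $\size{\proj{\G^{(n)}}{\pp}} = \bigTheta{n}$ as well.

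For the upper bound I would follow the recursive shape of Definition~\ref{def:inductive_projection}: the recursion has depth $n$, and the only nonconstant work at level $k$ is the single merge $\tbra{\pq}{l_0: \tend, \dots, l_{k-1}: \tend} \mergef \tbra{\pq}{l_k: \tend}$. On a syntax-tree representation this costs $\bigO{k}$ — scanning the $k$ accumulated branches to form $\I \cap \J$, $\I \minus \J$ and $\J \minus \I$, then rebuilding the merged branching node — so the total is $\sum_{k=0}^{n} \bigO{k} = \bigO{n^2}$. For the lower bound the point is that each such merge genuinely costs $\bigOmega{k}$: to evaluate the branching case of $\mergef$ correctly one must decide whether $l_k$ coincides with any of $l_0, \dots, l_{k-1}$ (a match would recurse into a different sub-merge), and against a plain unsorted, unindexed list of labels this membership test cannot be resolved without inspecting $\bigOmega{k}$ of them; the output $\proj{\G^{(k)}}{\pp}$ moreover has $\bigTheta{k}$ branches. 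Summing over the $n$ levels gives $\bigOmega{n^2}$, matching the upper bound.

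The main obstacle is making the lower bound rigorous. I must fix precisely what the naïve syntax-tree model permits and argue that it admits neither a sublinear label lookup nor the kind of structure sharing that would let the accumulated branching be extended in amortised $\bigO{1}$ per level. The crux is that the linear label-membership test at level $k$ is semantically forced by the definition of full merge and so cannot be circumvented in this model, so the $\bigOmega{k}$ cost is genuinely incurred at every level and cannot be amortised away. This is exactly the inefficiency that a balanced-search representation of the choices would remove; here the task is to show that, without it, the accumulated branching must be traversed afresh at every step.
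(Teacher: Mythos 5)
Your proposal is correct and follows essentially the same route as the paper: compute the closed form $\proj{\G^{(n)}}{\pp}$ as a flat branching of linear size, observe that the $k$-th step of the recursion merges this accumulated branching with a singleton branching at cost $\bigTheta{k}$ under the na\"ive syntax-tree representation, and sum to $\bigTheta{n^2}$. The paper's proof is terser (it simply asserts that merging takes linear time in the na\"ive algorithm), whereas you additionally flag the need to justify the $\bigOmega{k}$ per-step lower bound; that extra care is reasonable but not something the paper itself elaborates.
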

\begin{proof}
We have $\proj{\G^{(n)}}\pp = \tbrasub{\pq}{l_{i}: \tend}{1 \leq i \leq n}$.
 Therefore, when computing $\proj{\G^{(n+1)}}\pp$ from $\proj{\G^{(n)}}\pp$, the na\"ive algorithm will need to compute $\tbrasub{\pq}{l_{i}: \tend}{1 \leq i \leq n} \:\mergef\: \tbra {\pq}{l_{n+1}: \tend}$. As the size of the types is linear, and merging takes linear time in the na\"ive algorithm, the $n$-th step uses $\bigTheta{n}$ time, so the total time is $\bigTheta{n^2}$. 
\end{proof} 

\subsubsection{Complexity of projection with full merging: Optimised algorithm}
We construct the optimised algorithm 
using the classical 
technique of small-to-large merging: by representing each branch of a \emph{branching type} 
as a leaf of a binary search tree on their labels, we can efficiently represent
the set of branches of a type, and 
merge branching types adding each branch of the smaller type to the
bigger type. Intuitively, each branch is ``merged'' into a
larger type only $\bigO{\log n}$ times, but a careful analysis is
needed as merging affects the size of types and is recursive.
Note that our method works for non-binary branching types, and the
nontriviality in the above construction is handling \emph{arbitrary types}
with \emph{arbitrarily many branches} in each choice.

\begin{restatable}{lemma}{complexityfullmerge}
\proofreference{Appendix~\ref{app:fullprojection}}
\label{thm:complexity_of_full_merge} 
\quad If $\:\T_1 \mergef \T_2 = \T$, then $\T_1 \mergef \T_2$ can be computed in\\$\bigO{|\T_1| + |\T_2| - |\T| + |\T_2| \cdot \log |\T_1|}$ time.
\end{restatable}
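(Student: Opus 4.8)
The plan is to give a concrete representation of types together with a recursive merge procedure following the clauses of Definition~\ref{def:inductive_full_merging}, and then bound the total running time by a global (amortised) counting argument rather than by a per-node recurrence. First I would fix the data structure: a type headed by $\tout$, $\tin$, $\tend$ or a variable is a constant-size record pointing at its (shared, never copied) continuations, while a choice $\tbrasub{\pp}{l_i: \T_i}{i\in\I}$ or $\tselsub{\pp}{l_i: \T_i}{i\in\I}$ keeps its branches in a balanced search tree keyed on the labels, each continuation stored by reference. The merge then follows the definition clause by clause: $\tend$ and variables cost $\bigO{1}$; $\tout$/$\tin$ recurse once on the continuation with $\bigO{1}$ local work; a selection (whose two label sets must coincide) is handled by walking the two search trees in tandem and recursing on the matched continuations, at local cost $\bigO{|\I|}$; and a branching is handled by \emph{small-to-large merging} --- insert every branch of the side with fewer branches into the search tree of the other side, recursing on the labels in $\I\cap\J$ and linking the unmatched branches by pointer --- at local cost $\bigO{\min(m_1,m_2)\cdot\log(m_1+m_2)}$, where $m_1,m_2$ are the two branch counts.

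The key bookkeeping fact I would establish is that the saving $|\T_1|+|\T_2|-|\T|$ equals the number of recursive merge calls. A one-line induction over the clauses shows $|\T_1|+|\T_2|-|\T| = \big(\sum_{\text{matched children}} (|\cdot|+|\cdot|-|\cdot\mergef\cdot|)\big)+1$, so each call contributes exactly $1$ to the saving and the saving telescopes to the node count $N$; along the way this also yields $|\T|\ge\max(|\T_1|,|\T_2|)$ and $|\T_1|+|\T_2|-|\T|\ge 1$. Since Definition~\ref{def:inductive_full_merging} has no clause for $\mu$, every recursive call descends into proper subformulas of both arguments, so the recursion is finite and its calls sit at distinct positions of $\T_1$ and of $\T_2$. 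Consequently every constant local cost, and every selection's $\bigO{|\I|}$ cost, sums to $\bigO{N}=\bigO{|\T_1|+|\T_2|-|\T|}$, because the matched children counted by the $|\I|$'s are themselves distinct merge calls. It remains to bound the total branching cost $\sum_v \min(m_1^v,m_2^v)\log(m_1^v+m_2^v)$ by $\bigO{|\T_2|\log|\T_1|}$.

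For the branching total I would split the merge nodes by which side is locally smaller. Because the calls sit at distinct positions, the branch counts satisfy $\sum_v m_1^v \le |\T_1|$ and $\sum_v m_2^v \le |\T_2|$ (each top-level branch of $\T_1$, resp.\ $\T_2$, is charged at most once). At a node where $\T_2$ is the smaller side, $\log(m_1^v+m_2^v)\le 1+\log|\T_1|$, so these contribute at most $(1+\log|\T_1|)\sum_v m_2^v = \bigO{|\T_2|\log|\T_1|}$. At a node where $\T_1$ is the smaller side, $m_1^v < m_2^v$, whence $\sum m_1^v < \sum m_2^v \le |\T_2|$ and also $\sum m_1^v\le|\T_1|$, so $\sum m_1^v \le \min(|\T_1|,|\T_2|)$; bounding $\log(m_1^v+m_2^v)\le 1+\log|\T_2|$ these contribute $\bigO{\min(|\T_1|,|\T_2|)\cdot\log|\T_2|}$, which is $\bigO{|\T_2|\log|\T_1|}$ after the reorientation $|\T_1|\log|\T_2|\le|\T_2|\log|\T_1|$, valid for $2\le|\T_1|\le|\T_2|$ by monotonicity of $x/\log x$. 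Adding the three contributions gives the claimed $\bigO{|\T_1|+|\T_2|-|\T|+|\T_2|\log|\T_1|}$.

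I expect the main obstacle to be precisely this last, asymmetric branching bound: the naive per-node estimate of the small-to-large cost is already $\bigTheta{|\T_2|\log|\T_1|}$ and so cannot be paid for locally, forcing the global argument; and the delicate point is controlling the nodes where $\T_1$ is the locally smaller side --- there one must use both branch-counting inequalities \emph{simultaneously} (to get $\sum m_1^v\le\min(|\T_1|,|\T_2|)$) and the $x/\log x$ monotonicity to convert the resulting $|\T_1|\log|\T_2|$ into the stated $|\T_2|\log|\T_1|$.
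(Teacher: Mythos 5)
Your proposal is correct, but it reaches the bound by a genuinely different route from the paper's. The paper also represents branchings as balanced BSTs, but its merge is \emph{direction-fixed}: it always starts from $\T_1$'s tree and inserts each of $\T_2$'s branches into it, and the whole analysis is a structural induction carrying the exact potential $2(|\T_1|+|\T_2|-|\T|)-1+(|\T_2|-1)\cdot\log|\T_1|$ through every case, so the asymmetric $|\T_2|\log|\T_1|$ term falls out of the invariant with no reorientation step. You instead use size-based small-to-large insertion and a global charging argument: the telescoping identity equating $|\T_1|+|\T_2|-|\T|$ with the number of recursive calls (which the paper only uses in the weaker form $|\T|<|\T_1|+|\T_2|$, its Lemma on merge sizes), the injections of merge calls into positions of $\T_1$ and of $\T_2$ giving $\sum_v m_1^v\le|\T_1|$ and $\sum_v m_2^v\le|\T_2|$, and the case split on which side is locally smaller. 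What each buys: your version needs the extra $x/\log x$ reorientation of $|\T_1|\log|\T_2|$ into $|\T_2|\log|\T_1|$ (and note $x/\log x$ is only monotone from $x=e$, not from $x=2$, so state that step up to a constant factor); in exchange it is robust in the regime $|\T_2|\gg|\T_1|$, where the paper's per-operation estimate of $\log|\T_1|$ for a BST that can grow to $|\I|+|\J|>|\T_1|$ entries is the delicate point --- e.g.\ merging a one-branch $\T_1$ into an $(n{-}1)$-branch $\T_2$ costs your algorithm $O(\log n)$ but the fixed-direction insertion $\Theta(n\log n)$ against a claimed budget of $O(n)$. So your small-to-large choice is not just stylistic; it is what makes the purely local tree costs fit the stated bound without further argument.
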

\begin{proof}
By structural induction on $\T_1$. 
First note that 
$\size{\proj{\G}{\pp}}\leq\size{\G}$ under the full merge. 
We show that, up to constant
factors, the merge can be computed in time at most $2(|\T_1| + |\T_2|
- |\T|) - 1 + (|\T_2| - 1) \cdot \log |\T_1|$.  
Note this is positive by
the fact that $\T_1 \mergef \T_2 = \T$ implies $|\T| < |\T_1| + |\T_2|$\iffull{\ (Lemma~\ref{thm:merge_is_smaller})}\else{}\fi.    
To allow for the complexity logarithmic in $|\T_1|$, 
we represent branching types as binary search trees. 
\end{proof}


\begin{restatable}{theorem}{thmfullmergeopt}\label{thm:inductive_full_merge}
  \label{thm:fullmergeopt}\label{thm:fullmergeopt:lower}
  \proofreference{Appendix~\ref{app:fullprojection}} 
There exists an algorithm for inductive projection with full merging
with worst-case time complexity
$\bigTheta{\size{\G}\log^2\size{\G}}$. 
\end{restatable}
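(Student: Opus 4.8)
The plan is to establish matching upper and lower bounds for one optimised algorithm, so that its worst-case running time is exactly $\bigTheta{\size\G\log^2\size\G}$ (write $n=\size\G$). For the upper bound I would first precompute, for every subformula of $\G$, the predicates $\pp\in\pt{\cdot}$ and $\ftv{\cdot}=\emptyset$ in $\bigO{n}$ total time, so that every case of Definition~\ref{def:inductive_projection} can be dispatched recursively. The algorithm represents the choice set of each selection and branching type as a balanced binary search tree keyed on labels, so all non-merging work — creating prefixes, wrapping recursions, and assembling selections or branchings at nodes where $\pp$ is a participant — costs $\bigO{\log n}$ per node of $\G$, hence $\bigO{n\log n}$ overall. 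The entire remaining cost sits in the fold $\bigmergef_{i\in\I}\proj{\G_i}{\pp}$ performed at each branching node not involving $\pp$. Here I invoke Lemma~\ref{thm:complexity_of_full_merge}: each binary merge $\T_1\mergef\T_2=\T$ with $\size{\T_2}\le\size{\T_1}$ runs in $\bigO{\size{\T_1}+\size{\T_2}-\size\T+\size{\T_2}\log n}$. Since a full merge never shrinks below its larger operand, $\size\T\ge\size{\T_1}$, so the first summand is at most $\size{\T_2}$ and each merge costs $\bigO{\size{\T_2}\log n}$.

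It therefore remains to prove the key amortised bound $\sum\size{\T_2}=\bigO{n\log n}$, where the sum ranges over all binary merges and $\T_2$ is always the smaller operand; combined with the per-merge estimate this yields total merge cost $\bigO{n\log^2 n}$. This is where I apply \emph{small-to-large} merging: each fold repeatedly merges the smaller type into the larger, so that any branch of a projected type is absorbed into a strictly larger type only $\bigO{\log n}$ times. Using additionally that the projected branches $\proj{\G_i}{\pp}$ feeding all folds come from disjoint subterms of $\G$ and so have total size $\bigO{n}$ (since $\size{\proj{\G_i}{\pp}}\le\size{\G_i}$), one obtains $\sum\size{\T_2}=\bigO{n\log n}$, and adding the $\bigO{n\log n}$ contribution of the non-merging work gives $\bigO{n\log^2 n}$.

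I expect this amortised bound to be the main obstacle. Merges are recursive and can genuinely shrink types, since coinciding labels absorb nodes, so the naïve ``the container at least doubles at every merge'' argument does not apply verbatim. The careful accounting I would carry out charges absorbed nodes against the telescoping $\size{\T_1}+\size{\T_2}-\size\T$ term (whose total is bounded by the number of node coincidences, and in any case by $\sum\size{\T_2}$), while charging the surviving branches against the global node budget of $\G$ through the disjointness of the subterms feeding the folds. This is precisely the ``detailed analysis'' that makes the $\log^2$ rather than a worse factor come out.

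For the lower bound I would exhibit a family forcing a fully balanced merge tree with globally distinct labels. Take $\G_0=\GvtPairs\pq\pp{l:\tend}$ and $\G_k=\GvtPairs\pr\ps{l_a:\G_{k-1}^{L},\ l_b:\G_{k-1}^{R}}$ with $\pp\notin\{\pr,\ps\}$, where $\G_{k-1}^{L}$ and $\G_{k-1}^{R}$ are copies of $\G_{k-1}$ over disjoint label universes. Projecting onto $\pp$ gives $\proj{\G_k}{\pp}=\proj{\G_{k-1}^{L}}{\pp}\mergef\proj{\G_{k-1}^{R}}{\pp}$, a merge of two branching types on $\pq$ with $2^{k-1}$ pairwise distinct labels each; since the label sets are disjoint, the algorithm must insert all $2^{k-1}$ branches of one side into the size-$2^{k-1}$ search tree of the other, costing $\bigTheta{2^{k-1}k}$. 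Writing $C(k)$ for the cost of computing $\proj{\G_k}{\pp}$ yields $C(k)=2\,C(k-1)+\bigTheta{2^{k-1}k}$, which unrolls to $C(k)=\bigTheta{2^{k}k^2}$. As $\size{\G_k}=\bigTheta{2^{k}}$ and $k=\bigTheta{\log\size{\G_k}}$, this is $\bigOmega{n\log^2 n}$, matching the upper bound and establishing $\bigTheta{n\log^2 n}$.
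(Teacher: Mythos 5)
Your proposal is correct and rests on the same two pillars as the paper's proof: the BST-based merge lemma (Lemma~\ref{thm:complexity_of_full_merge}) for the upper bound, and a binary tree of branchings with pairwise-disjoint label sets for the lower bound (your recurrence $C(k)=2C(k-1)+\Theta(2^{k-1}k)$ unrolling to $\Theta(2^k k^2)=\Theta(n\log^2 n)$ is exactly the paper's computation). Where you diverge is in how the upper bound is accounted. The paper runs a direct structural induction with the explicit potential $|\G|\log^2|\G|+2|\G|-|\proj\G\pp|$, absorbing the extra $\log n$ factor of each merge into the slack $\log n-\log n_i\geq 1$ available for every non-maximal branch; you instead package the same observation as a global charging argument ($\sum|\T_2|=\bigO{n\log n}$ over all merges, each merge then costing $\bigO{|\T_2|\log n}$). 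The two are equivalent, and your decomposition $\min(|\T_1|,|\T_2|)=(|\T_1|+|\T_2|-|\T|)+(|\T|-\max(|\T_1|,|\T_2|))$, with the first part telescoping within each fold and the second bounded by the growth of the accumulator, does reduce the per-fold cost to $\sum_{i\neq m}|\T_i|$ and correctly handles label collapse (your auxiliary claim $|\T_1\mergef\T_2|\geq\max(|\T_1|,|\T_2|)$ holds by an easy induction, though the paper only states the upper bound $|\T|<|\T_1|+|\T_2|$).

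One sentence in your accounting is literally false and should be repaired, though the surrounding argument shows you have the right picture: the branches $\proj{\G_i}{\pp}$ feeding \emph{all} folds do \emph{not} have total size $\bigO{n}$, because the children of distinct branching nodes are nested rather than disjoint, so $\sum_v\sum_i n_i^{(v)}$ can be as large as $\Theta(n\cdot\mathrm{depth})$. What is true, and what your ``absorbed only $\bigO{\log n}$ times'' remark is really using, is the light-edge bound: a fixed node of $\G$ lies in a \emph{non-maximal} branch of at most $\log_2 n$ of its branching ancestors (since $n_i\leq n_{m_v}$ forces $n_i\leq n_v/2$ for $i\neq m_v$), whence $\sum_v\sum_{i\neq m_v}n_i^{(v)}\leq n\log n$. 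With that substitution your amortised bound $\sum|\T_2|=\bigO{n\log n}$ is sound and the rest of the argument goes through.
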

\begin{proof}
{\bf (upper bound)}   
We start by checking every syntactic subterm $\G$ satisfies $\ftv{\G}=\emptyset$
and $\pp \notin \pt{\G}$ in linear time.
Then, we show inductively that (up to constant factors) we can compute $\proj\G\pp$ in time $|\G| \log^2 |\G| + 2|\G| - |\T|$, where $\T = \proj\G\pp$. 

\noindent{\bf (lower bound)}
Consider the following global types: 
$\G_k^{(j)} = \GvtPairs\pp\pq{l_1: \G_{k-1}^{(2j)}, l_2:
  \G_{k-1}^{(2j+1)}}$  with $k > 0$ and 
    $\G_0^{(j)} = \GvtPairs\pp\pr{l_j: \tend}$. 
  We shall compute $\proj{\G_m^{(0)}}\pr = \tbrasub\pr{l_i: \tend}{i \in \{0, \dots, 2^m - 1\}}$.
  We have $\proj{\G_k^{(j)}}\pr = \tbrasub\pr{l_i: \tend}{i \in \{2^k \cdot j, \dots, 2^k \cdot (j+1) - 1\}}$ and $|\proj{\G_k^{(j)}}\pr| = 2^j$.
Computing 
  $\proj{\G_{k-1}^{(2j)}}\pr \mergef \proj{\G_{k-1}^{(2j+1)}}\pr$
  $=\tbrasub\pr{l_i: \tend}{i \in \{2^{k-1} \cdot (2j), \dots, 2^{k-1} \cdot (2j+1) - 1\}} \mergef \tbrasub\pr{l_i: \tend}{i \in \{2^{k-1} \cdot (2j+1), \dots, 2^{k-1} \cdot (2j+2) - 1\}}$
  requires merging $2^{k-1}$ branches into an existing branching type with $2^{k-1}$ branches,
thus this requires $2^{k-1}$ operations on a balanced binary search tree with $2^{k-1}$ leaves.
Hence the merging requires time $\bigTheta{k \cdot 2^k}$.

We can deduce that the total time required to compute $\proj{\G_m^{(0)}}\pr$ is $\bigTheta{m \cdot 2^m}$ and $|\G_m^{(0)}| = \bigTheta{2^m}$, which gives rise to the $\bigOmega{n \log^2 n}$ lower bound.
\end{proof}


\subsection{Tirore \etal's Projection \cite{Tirore2023}}
\label{section:coinductive_projection}
In the next two subsections, we discuss complexity of 
the \emph{coinductive projection} with the plain and full merging
operators.
We define the \emph{coinductive projection} using the unfolding 
function following the style in Definition~\ref{def:subtyping_local_types}. 

\begin{definition}[Coinductive projection]\label{def:coinductiveproj}
The \emph{coinductive projection} is the largest relation 
${}\fproj{\pr}{}$ between global and local types such that, when 
${\G}\fproj{\pr}{\T}$:
  
\begin{itemize}
\item $\pr \notin \pt{\G}$ implies $\unfold{\T} = \tend$
\item $\unfold{\G} = \Gvt{\pp}{\pr}{\S}\G'$ implies 
$\unfold{\T} = \tin{\pp}{\S} \T'$ and ${\G'}\fproj{\pr}{\T'}$.
\item $\unfold{\G} = \Gvt{\pr}{\pp}{\S}\G'$ implies $\unfold{\T} = \tout{\pp}{\S} \T'$ and ${\G'}\fproj{\pr}{\T'}$.
\item $\unfold{\G} = \Gvt{\pp}{\pq}{\S}\G'$ and $\pr \notin \{\pp, \pq\}$ implies $\G' \fproj{\pr} {\T}$.
\item $\unfold{\G} = \GvtPair{\pp}{\pr}{l_i: \G_i}{i \in \I}$ implies $\unfold{\T} = \tbra{\pp}{l_i: \T_i}$ and $\G_i \fproj{\pr} \T_i\ \forall i\in\I$.
\item $\unfold{\G} = \GvtPair{\pr}{\pp}{l_i: \G_i}{i \in \I}$ implies $\unfold{\T} = \tsel{\pp}{l_i: \T_i}$ and $\G_i \fproj{\pr} \T_i\ \forall i\in\I$.
\item $\unfold{\G} = \GvtPair{\pp}{\pq}{l_i: \G_i}{i \in \I}$ implies $\exists \T_i.\ \T = \bigmergec_{i \in \I} \T_i$ and $\G_i \fproj{\pr} \T_i\ \forall i\in\I$
\end{itemize}

    for some \emph{merging operator} $\mergec$.
\end{definition}

\begin{definition}[Coinductive plain merging]\label{def:coinductive_plain_merge}
Recall $\subteq$ denotes the type graph equivalence
(Definition~\ref{def:type_simulation}).
The \emph{coinductive plain merge} is defined by $\T \mergec \T' = \T$ 
if $\T\subteq \T'$ and undefined otherwise. 
\end{definition}

We first recall \emph{balanced global types} from 
Definition~3.3 in \cite{Ghilezan2019}.
\begin{definition}[Global type graph] 
A \emph{global type graph} of $\G$ (denoted by $\GG(\G)$) has 
nodes $\Sub(\G)$ and has edges defined by the following rules:\\[1mm]
\centerline{\small
    \begin{prooftree}
      \hypo{\unfold{\G} = \GMsg \pp \pq \S \G'}
      \infer1{\G \rightarrow \G'}
    \end{prooftree}\quad \quad \quad 
    \begin{prooftree}
      \hypo{\unfold{\G} = \GvtPair \pp \pq {l_i: \G_i}{i \in \I}}
      \infer1{\G \rightarrow \G_i}
    \end{prooftree}
}
\end{definition}
\begin{definition}A global type $\G$ \emph{involves} participant $\pp$ 
if $\unfold{\G}=\Gvt{\pr}{\ps}{S}\G'$ 
or $\unfold{\G}=\GvtPair{\pr}{\ps}{l_i: \G_i}{i \in \I}$
with $\pp\in \set{\pr,\ps}$.  We write $\involve{\G}{\pp}$
if $\G$ involves $\pp$. 
\end{definition}
\begin{definition}[Balanced global types, Definition~3.3 in \cite{Ghilezan2019}]
A global type $\G$ is \emph{balanced} iff, for each node $\G'$ in 
a global type graph $\GG(\G)$,
  whenever a node involving some $\pp$ is reachable from
 $\G'$, there exists $k$ such that all paths from $\G'$ reach a node involving $\pp$ within $k$ steps.
\end{definition}
\noindent The balanced condition is required for
ensuring liveness of session $\M$ (Definition~\ref{def:properties}(3))
under the coinductive projection. 
For example, 
the local types projected from 
the global type 
$\G'=\mu\ty.\:\GvtPairs\pp\pq{l:\ty, \ l':\GvtPairs{\pq}{\pr}{l'':\tend}}$ 
in \cite[Example~3.12]{Ghilezan2019}
can type \emph{non-live} session $\N$. 

\begin{lemma}[Balanced global types] \label{lem:balanced} 
Suppose $\proj{\G}{\pp}$ (Definition~\ref{def:inductive_projection}) is defined for 
$\forall\pp. \ \pp\in \pt{\G}$.
Then $\G$ is balanced.
\end{lemma}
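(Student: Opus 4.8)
The plan is to prove the contrapositive: assuming $\G$ is \emph{not} balanced, I would exhibit some $\pp \in \pt{\G}$ for which $\proj{\G}{\pp}$ is undefined. First I would reformulate balancedness using the finiteness of $\GG(\G)$: since the graph is finitely branching, K\"onig's lemma shows that the uniform bound $k$ exists for $(\G',\pp)$ exactly when there is \emph{no} infinite $\pp$-avoiding path out of $\G'$. Hence $\G$ being unbalanced yields a participant $\pp$ and a node $\G'$ such that (a) some node $\G''$ with $\involve{\G''}{\pp}$ is reachable from $\G'$, and (b) there is an infinite path from $\G'$ along which no node involves $\pp$. Note $\pp \in \pt{\G}$ by (a). Because every recursive clause of Definition~\ref{def:inductive_projection} descends into all branches and continuations, the only exception being the closed $\pp$-free recursion returning $\tend$ (which cannot be triggered on a path leading to a node from which $\pp$ is reachable), it suffices to show that $\proj{\G'}{\pp}$ is undefined for this witness $\G'$.

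The engine of the argument would be two head-characterisation lemmas, holding for both $\mergep$ (Definition~\ref{def:inductive_plain_merging}) and $\mergef$ (Definition~\ref{def:inductive_full_merging}), proved by induction on the least number of steps from a node $H$ to a node involving $\pp$: \emph{(A)} if $\proj{H}{\pp}$ is defined and some $\pp$-node is reachable from $H$, then $\unfold{\proj{H}{\pp}}$ is headed by a communication with $\pp$ (one of $\tout{}{}{}$, $\tin{}{}{}$, $\tsel{}{}$, $\tbra{}{}$); \emph{(B)} if $\proj{H}{\pp}$ is defined and no $\pp$-node is reachable from $H$, then $\proj{H}{\pp}$ is $\tend$ or a recursion variable. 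The inductive step for a message not involving $\pp$ passes through to the unique continuation; for a branching $\GvtPair{\pr}{\ps}{l_i:\G_i}{i \in \I}$ with $\pp \notin \{\pr,\ps\}$ the projection merges the projected branches, and the decisive algebraic fact is that a communication head merges only with a communication head of the same shape, and never with $\tend$ or with a variable. This incompatibility is precisely what forces (A) and (B).

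With (A) and (B) in hand, I would locate the failing merge by walking along the infinite $\pp$-avoiding path $\G' = W_0 \to W_1 \to \cdots$ from (b). If some $W_j$ has no $\pp$-node reachable from it, take the least such $j$; then $j \geq 1$ and $\pp$ is reachable from $W_{j-1}$, which is $\pp$-silent and cannot unfold to a message (that would force $\pp$ reachable from $W_j$), so it unfolds to a branching with the $\pp$-avoiding child $W_j$ and a sibling from which $\pp$ is reachable. By (B) the $W_j$-branch projects to $\tend$ or a variable, by (A) the sibling projects to a communication head, so the merge computing $\proj{W_{j-1}}{\pp}$ is undefined, hence so is $\proj{\G'}{\pp}$. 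Otherwise $\pp$ is reachable from every $W_j$; finiteness makes the path revisit a node, giving a $\pp$-silent cycle from each of whose nodes $\pp$ is reachable. A shortest path to a $\pp$-node must leave this cycle, so some cycle node unfolds to a branching with one child continuing the cycle and another strictly closer to $\pp$; the closer child projects to a communication head by (A), while the cycle-continuing child returns to the recursion binder along a $\pp$-free path and so projects to the bound recursion variable, and again the merge is undefined.

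The main obstacle is the cyclic case: I must argue rigorously that a $\pp$-free cycle forces the loop-back branch to project to a bare recursion variable (equivalently, that the candidate projected recursion $\mu\ty.(\cdots)$ would be unguarded), rather than to some genuine $\pp$-communication. This is exactly the point at which the balanced condition is indispensable, and it is where care is needed with the unfolding (Definition~\ref{def:unfold_global}), with identifying back-edges of $\GG(\G)$ with the variable bound by the corresponding $\mu\ty$, and with the convention that a projection is ``defined'' only when it denotes a guarded local type. The head-merge lemmas (A) and (B), together with this refinement to the back-variable case, carry the real content; the remaining reasoning is finite-graph bookkeeping.
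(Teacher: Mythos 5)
The paper states Lemma~\ref{lem:balanced} without any proof (there is no corresponding argument in the appendices), so there is nothing to compare your attempt against; I assess it on its own terms. Your overall strategy --- contrapositive, head-shape lemmas for projections of subterms, and the observation that neither $\mergep$ nor $\mergef$ can combine a $\pp$-communication head with $\tend$ or a type variable --- is the right mechanism and does yield the lemma. Two points, however, are genuine gaps. First, your K\"onig reformulation is not an equivalence: a \emph{finite} maximal $\pp$-avoiding path (e.g.\ a branch terminating in $\tend$ while a sibling leads to a node involving $\pp$) also witnesses unbalancedness, and in that case there is no infinite path for your argument to walk along. Your ``least $j$ with $\pp$ unreachable from $W_j$'' step handles it, but as structured the proof never reaches that step.

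Second, and more seriously, lemma (A) is false as stated, and the failure lands exactly on the case that carries the content of the result. The nodes of $\GG(\G)$ are closed substitution instances, while $\proj{\cdot}{\pp}$ recurses over the open syntax tree; for a subterm occurrence $H=\ty$ (a back-edge), a $\pp$-node is reachable from the corresponding graph node, yet $\proj{\ty}{\pp}=\ty$ has no communication head. Consequently, in your cyclic case the ``child strictly closer to $\pp$'' may itself project to the bound variable (when its only routes to a $\pp$-node also pass through the back-edge), the merge $\ty\mergep\ty$ (or $\ty\mergef\ty$) succeeds, and no contradiction arises at the chosen branching; the proof then gives no instruction for where to look next. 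You flag this obstacle honestly, but flagging is not closing. The repair is to drop graph reachability from the hypotheses of (A)/(B) and use the syntactic condition $\pp\in\pt{H}$ instead: by structural induction, if $\proj{H}{\pp}$ is defined and $\pp\in\pt{H}$, then after stripping $\mu$'s it is headed by a communication of $\pp$, since a head generated at a $\pp$-involving subterm survives message pass-through and survives every merge (both merges are undefined between a communication head and $\tend$, a variable, or a differently shaped head); dually $\pp\notin\pt{H}$ forces a $\tend$/variable skeleton. Then a $\pp$-silent syntactic path from the body $H'$ of a binder $\mu\ty.H'$ down to an occurrence of $\ty$ forces $\proj{H'}{\pp}$ to be undefined or a bare variable (each merge along that path must reconcile the child's $\ty$ with its siblings, and $\ty$ merges only with $\ty$), contradicting the communication head forced by $\pp\in\pt{H'}$. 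This single contradiction disposes of the cyclic case without choosing a particular escape branch, and with (A)/(B) restated this way your outline goes through.
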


\begin{restatable}{theorem}{thmbalancedcomplexity}
\label{thm:balanced_complexity}
\proofreference{Appendix~\ref{app:coinductivefull}}
Checking $\G$ balanced can be performed in $\bigO{|\G|^2}$ time.
\end{restatable}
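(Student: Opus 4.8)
The plan is to reduce balancedness to a batch of reachability and cycle-detection queries on the global type graph $\GG(\G)$, run once per participant. First I would record the relevant sizes. Using $\size{\Sub(\G)} = \bigO{\size{\G}}$ (already invoked for plain merging), the graph $\GG(\G)$ has $\bigO{\size{\G}}$ nodes, and since each node unfolds to a message (out-degree $1$), a choice (out-degree $|\I|$, summing to $\bigO{\size{\G}}$), or $\tend$ (out-degree $0$), it also has $\bigO{\size{\G}}$ edges. The global type $\G$ mentions $\bigO{\size{\G}}$ distinct participants, computable in one pass. Hence it suffices to give, for each fixed participant $\pp$, an $\bigO{\size{\G}}$-time test for whether balancedness fails ``because of $\pp$''.

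\textbf{Graph reformulation.} The conceptual core is restating the definition graph-theoretically. For a fixed $\pp$, call a path in $\GG(\G)$ \emph{$\pp$-avoiding} if no node on it involves $\pp$, and \emph{maximal} if it is infinite or ends at a node $\G''$ with $\unfold{\G''}=\tend$ (the only sinks). I would first prove the reformulation: $\G$ is \emph{not} balanced iff there exist a participant $\pp$ and a node $\G'$ of $\GG(\G)$ such that (a) some node involving $\pp$ is reachable from $\G'$, and (b) there is a maximal $\pp$-avoiding path starting at $\G'$.

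\textbf{Main obstacle.} The nontrivial part is matching the ``$\exists k$'' clause of the definition with condition (b); this is where I expect the main difficulty, because both infinite maximal paths (cycles) and finite maximal paths (to $\tend$) must be handled and lined up with the quantifier structure. Fix $\G'$ and $\pp$ with $\pp$ reachable from $\G'$. If every maximal path from $\G'$ meets a node involving $\pp$, then there is no maximal $\pp$-avoiding path: a finite one would end at $\tend$, which does not involve $\pp$, a contradiction, and an infinite one is impossible, so the subgraph of nodes reached from $\G'$ along $\pp$-avoiding prefixes is acyclic; being a finite DAG, it bounds all such prefix lengths, giving the uniform $k$. Conversely, any witness that ``$\exists k$'' fails yields, by finiteness of $\GG(\G)$, either a reachable $\pp$-avoiding cycle (hence an infinite, maximal, $\pp$-avoiding path) or a $\pp$-avoiding path to a $\tend$-sink; either way condition (b) holds. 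Note that when $\G'$ itself involves $\pp$, condition (b) fails automatically, consistently with the definition being trivially satisfied there.

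\textbf{Per-participant algorithm and total cost.} Given the reformulation, the per-participant computation is routine. Let $P_\pp$ be the set of nodes involving $\pp$ and $H_\pp = \Sub(\G) \setminus P_\pp$. I would compute (i) $A_\pp$, the nodes that can reach $P_\pp$, by one backward search from $P_\pp$ in the full graph, in $\bigO{\size{\G}}$; and (ii) $B_\pp \subseteq H_\pp$, the ``$\pp$-bad'' nodes possessing a maximal $\pp$-avoiding path, characterised as the nodes of $H_\pp$ that can reach, \emph{within} the induced subgraph $H_\pp$, either a $\tend$-node or a node lying on an $H_\pp$-cycle. Step (ii) runs an SCC decomposition of $H_\pp$ to mark nodes on nontrivial SCCs (or self-loops), adds the $\tend$-nodes, and performs a backward search inside $H_\pp$, all in $\bigO{\size{\G}}$. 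By the reformulation, balancedness fails for $\pp$ exactly when $A_\pp \cap B_\pp \neq \emptyset$, checkable in $\bigO{\size{\G}}$. Iterating over the $\bigO{\size{\G}}$ participants and declaring $\G$ balanced iff no participant witnesses failure gives total time $\bigO{\size{\G}^2}$, as required.
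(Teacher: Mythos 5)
Your proposal follows essentially the same route as the paper's proof: iterate over the $\bigO{|\G|}$ participants, and for each $\pp$ run a linear-time analysis of $\GG(\G)$ that detects $\pp$-avoiding ``bad'' structures among the nodes from which $\pp$ is reachable, giving $\bigO{|\G|^2}$ in total. The one substantive difference is in the characterisation of non-balancedness: the paper reduces it solely to the existence of a cycle of nodes not involving $\pp$ from which a node involving $\pp$ is reachable, whereas you additionally account for finite maximal $\pp$-avoiding paths that terminate at a $\tend$-node. Under the literal reading of the balancedness definition such a terminating path also defeats the ``$\exists k$'' clause (it never reaches a node involving $\pp$), so your reformulation is the more faithful one, and handling it costs nothing asymptotically; your SCC-plus-backward-search implementation is a more explicit but equivalent realisation of the paper's per-participant linear-time step.
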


\noindent We now give the complexity analysis for the most recent projection 
algorithm proposed by Tirore \etal~\cite{Tirore2023}, which we call 
\TBC.  
This algorithm computes a projection on inductive types with plain merge 
that is sound and complete with respect to its coinductive projection; 
Example~\ref{ex:coinductive_projection_is_more_expressive} shows that
syntactic inductive projections may fail to compute a projection when 
\TBC\ may not.
This is because inductive projection considers the syntax of types,
so types with the same infinite tree but different patterns of $\mu$-binders 
are not considered equal in merging. 
Their algorithm is equivalent to considering types 
as infinite trees.

\begin{example}\label{ex:coinductive_projection_is_more_expressive} Consider the following global type:\\[1mm]
\centerline{
   $\G_{\text{cp}} = \GvtPairs\pq\pr{l_1: \mu\ty. \GMsg\pq\pp\S \ty, l_2: \mu\ty'. \GMsg\pq\pp\S \GMsg\pq\pp\S \ty'}$ 
}\\[1mm]
\noindent Both branches of the communication~$\pq \rightarrow \pr$ represent the same tree, but they are not equal syntactically. Therefore, inductive projection (with either kind of merging) will fail to compute a projection. The \TBC\ projection is defined as these types represent the same infinite trees. On the other hand, ${\G_{\text{if}}}$ in 
Example~\ref{ex:full_merging_is_more_expressive} is not projectable 
by \TBC. We shall see ${\G_{\text{if}}}$ is projectable 
by coinductive projection 
with the full merging defined in Definition~\ref{def:coinductive_full_merge}. 
\end{example}

The result of the following lemma is visualised in 
Figure~\ref{fig:venn}.

\begin{restatable}{proposition}{propifcp}
  \label{pro:ifcp}
  \proofreference{Appendix~\ref{app:coinductiveprojection}} 
If $\G$ is inductively projectable by full merging but 
not inductively projectable by plain merging,  
then $\G$ is not coinductively projectable by 
plain merging. 
\end{restatable}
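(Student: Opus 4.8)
The plan is to prove the contrapositive in its per-participant form: for every participant $\pp$, if $\proj{\G}{\pp}$ is defined under full merging \emph{and} $\G$ admits a coinductive plain projection onto $\pp$ (some $\T$ with $\G \fproj{\pp} \T$), then $\proj{\G}{\pp}$ is already defined under plain merging. Since inductive full-projectability fails for no participant whereas inductive plain-projectability fails for at least one, applying this at that one participant yields the proposition. The inductive and coinductive projections traverse $\G$ in the same way and differ only at a branching $\GvtPair{\pq}{\pr}{l_i: \G_i}{i \in \I}$ with $\pp \notin \{\pq, \pr\}$, where $\pp$ cannot observe the chosen label and the branch projections must be combined. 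The goal is therefore to show that at every such point the branch projections coincide \emph{syntactically}, so that the full merge actually degenerates to a plain merge.

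The engine of the argument is a structural property of $\mergef$: if $\T_1 \mergef \T_2$ is defined but $\T_1 \neq \T_2$ syntactically, then $\T_1 \not\subteq \T_2$. Indeed, inspecting Definition~\ref{def:inductive_full_merging}, every clause except branching either forces the two operands to share their top constructor (outputs, inputs, selections, $\tend$, variables) and recurses, or is undefined; in particular there is no clause for $\mu$, so a defined merge never traverses a recursion binder at a merged position. Hence the only way two full-mergeable types can differ is a branching $\tbrasub{\pr}{\ldots}{\I} \mergef \tbrasub{\pr}{\ldots}{\J}$ with $\I \neq \J$ (possibly nested below matching prefixes and common labels). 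Such a label-set mismatch is visible in the labelled transition system of Definition~\ref{def:typegraph}: one side offers a transition $\trbra{\pr}{l}$ for some $l \in (\I \minus \J) \cup (\J \minus \I)$ that the other cannot match, so no type simulation relates them and $\T_1 \not\subteq \T_2$.

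It remains to feed this into the coinductive side. First, one shows that the inductive full projection $\proj{\G_i}{\pp}$ and the coinductive plain projection $\T_i$ of the \emph{same} branch (where $\G_i \fproj{\pp} \T_i$) denote the same infinite tree, i.e.\ $\proj{\G_i}{\pp} \subteq \T_i$. The point is that along any branch admitting a coinductive plain projection, every $\pp$-absent choice it merges consists of $\subteq$-equal components (the side condition of $\mergec$ in Definition~\ref{def:coinductive_plain_merge}); on such components $\mergef$ introduces no new labels and hence yields the same tree. Granting this, $\bigmergec_{i \in \I} \T_i$ being defined forces all $\T_i$ to be $\subteq$-equal, so all $\proj{\G_i}{\pp}$ are $\subteq$-equal; and since the fold $\bigmergef_{i \in \I}\proj{\G_i}{\pp}$ is defined, applying the contrapositive of the structural property above along that fold upgrades $\subteq$-equality to syntactic equality. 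Thus $\bigmergep_{i \in \I} \proj{\G_i}{\pp}$ is defined, and the full projection is in fact a plain projection.

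The delicate step, and the main obstacle, is establishing the tree-level agreement $\proj{\G_i}{\pp} \subteq \T_i$ across recursion: the inductive projection keeps $\mu$-binders in place while the coinductive one works up to $\unfold{\cdot}$, so a naive induction on $|\G|$ breaks the moment a recursion is unfolded. I would discharge it by a coinductive argument over the finite set $\Sub(\G)$ of subformulas, exhibiting $\{(\proj{\G'}{\pp}, \T') \mid \G' \in \Sub(\G),\ \G' \fproj{\pp} \T'\}$ as a type simulation in both directions, rather than by structural recursion. The no-$\mu$-clause observation from the structural property is exactly what keeps ``same infinite tree $\Rightarrow$ syntactically identical'' honest here: wherever $\mergef$ is genuinely applied the operands are recursion-free at the merged positions, so a coincidence of infinite trees cannot be concealed behind differing $\mu$-unfoldings.
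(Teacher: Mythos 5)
Your proposal is correct and follows the same route as the paper's proof: both argue that coinductive plain projectability forces every $\pp$-absent merge of branch projections to be a merge of $\subteq$-equal types, which (because $\mergef$ has no clause for recursion binders and branching is the only clause tolerating a syntactic difference) collapses the inductive full merge to the plain merge, contradicting the failure of inductive plain projection. The paper's own proof is a four-line sketch that leaves implicit exactly the two lemmas you isolate --- that a defined full merge of tree-equivalent types is a merge of syntactically identical types, and that the inductive full projection of each branch agrees as a tree with its coinductive plain projection --- so your version is the same argument carried out with the rigor the sketch omits.
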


The \TBC\ algorithm is non-trivial and consists of a number of functions
over graph representations of $\G$ and $\T$,  
taking several pages of their paper. Rather than listing 
all of the definitions needed for \TBC, we summarise how we analyse 
its complexity. The \TBC\ algorithm reduces the problem to one of checking whether a
global type $\G$ is projectable to a local type $\T$: first, a
candidate projection $\ptrans(\G)$ is defined, then the
\textit{intermediate projection}
is checked. It is then shown that this transformation 
is sound and complete with respect to a coinductive definition of projection with plain merging.   
Specifically, if $\G$ is projectable, it is projectable to a coinductive 
type that unravels to $\ptrans(\G)$ \cite[Definition~9]{Tirore2023}. 
In short, two predicates are checked: one for all nodes in a graph representation of $\G$ and another for all nodes in a \emph{product graph} of $\G$ and $\T$, which has size $|\G| \cdot |\T|$.
As $|\T| \leq |\G|$, 
we arrive at the following theorem.

\begin{restatable}{theorem}{theconductiveprojectionupperlower}
\proofreference{Appendix~\ref{app:coinductiveprojection}} 
\label{thm:coinductiveprojection}
The projection in~\cite{Tirore2023} takes $\bigTheta{|\G|^2}$ time in the worst case.
\end{restatable}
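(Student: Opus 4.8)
The plan is to treat the two bounds separately: the upper bound follows by directly accounting for the work the \TBC\ algorithm performs on its two graph-based checks, whereas the lower bound requires a family of global types that forces the algorithm to inspect quadratically many nodes.

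\textbf{Upper bound.} First I would recall the two-phase structure of \TBC: it computes a candidate projection $\T = \ptrans(\G)$ and then verifies a pair of predicates, one ranging over the nodes of a graph representation of $\G$ and one ranging over the nodes of the product graph of $\G$ and $\T$. The graph of $\G$ has $O(|\G|)$ nodes, since its nodes are subformulas of $\G$ and $|\Sub(\G)| = O(|\G|)$, so the first predicate is checked $O(|\G|)$ times. Using $|\T| \le |\G|$ (established just before the statement), the product graph has $|\G| \cdot |\T| = O(|\G|^2)$ nodes, so the second predicate is checked $O(|\G|^2)$ times. I would then argue that each individual check costs only $O(1)$ once transitions and unfoldings have been precomputed in the type-graph representation — the total cost is bounded by the number of nodes plus the number of edges, and the out-degrees sum to $O(|\G|^2)$ on the product graph. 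Since constructing $\ptrans(\G)$ and both graphs is itself within the same budget, summing yields the $O(|\G|^2)$ bound.

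\textbf{Lower bound.} For the matching $\Omega(|\G|^2)$ bound I would exhibit a family $\G_n$ with $|\G_n| = \Theta(n)$ whose candidate projection $\ptrans(\G_n)$ also has size $\Theta(n)$, so that the product graph carrying the second predicate has $\Theta(n^2)$ nodes. A convenient choice is a linear chain of messages all involving the projected participant, e.g.
\[
\G_n \;=\; \underbrace{\GMsg{\pp}{\pq}{\tint}\cdots\GMsg{\pp}{\pq}{\tint}}_{n}\,\tend,
\qquad
\ptrans(\G_n) \;=\; \proj{\G_n}{\pp} \;=\; \underbrace{\tout{\pq}{\tint}\cdots\tout{\pq}{\tint}}_{n}\,\tend,
\]
for which $|\G_n| = \Theta(n)$ and $|\ptrans(\G_n)| = \Theta(n)$. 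The point to establish is that the algorithm truly must examine every product node: since $\G_n$ is projectable, the second predicate holds at all nodes, and a universally quantified check of this form cannot certify success without visiting each node at least once. Hence the algorithm spends $\Omega(n^2)$ time, and combined with the upper bound this yields the claimed $\Theta(|\G|^2)$ worst-case complexity.

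\textbf{Main obstacle.} The delicate step is pinning down the exact behaviour of \TBC\ from the multi-page definitions in \cite{Tirore2023}: I must confirm (i) that the candidate projection has size $O(|\G|)$ and is computable within the target budget, (ii) that the predicate over the product graph is evaluated at each of the $|\G| \cdot |\T|$ nodes with only constant work per node, and, for the lower bound, (iii) that the algorithm does not short-circuit on projectable inputs but genuinely traverses the whole product graph. Establishing (iii) is what guarantees the lower bound concerns the algorithm's actual worst-case running time, rather than merely the size of a data structure it might in principle build lazily.
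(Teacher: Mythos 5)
Your upper bound matches the paper's: compute the candidate projection $\ptrans(\G)$ in linear time (plain merge is a constant-time operation, so $\ptrans$ is just a linear traversal), precompute the linear-time auxiliary predicate over the graph of $\G$, and then charge constant work per node and edge of the product graph of $(\G,\ptrans(\G))$, which has $\bigO{|\G|\cdot|\T|}=\bigO{|\G|^2}$ of each since $|\T|\le|\G|$. That part is sound.

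The lower bound has a genuine gap. The check performed by \TBC\ is a reachability-closed predicate: it visits only the nodes of the product graph that are \emph{reachable} from the initial node $(\G,\ptrans(\G))$, not all of $\Sub(\G)\times\Sub(\T)$. For your chain $\G_n=\GMsg{\pp}{\pq}{\tint}\cdots\GMsg{\pp}{\pq}{\tint}\tend$ the global type and its candidate projection advance in lockstep --- every product edge moves both components forward by one communication --- so only the $n{+}1$ ``diagonal'' pairs are reachable and the algorithm runs in $\bigO{n}$ time on this input. Your observation that a universally quantified check must visit every node applies only to the reachable subgraph, which here is linear, so no quadratic bound follows; this is exactly the point (iii) you flagged as delicate, and it fails for your family. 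To force $\bigOmega{n^2}$ reachable product nodes you must de-synchronise the two components. The paper does this with a choice between two recursive loops of lengths $n$ and $n+1$ (all involving the projected participant): the candidate projection keeps only the length-$n$ loop, and checking the length-$(n{+}1)$ branch against it makes the pair of positions advance modulo two coprime periods, so all $\Theta(n^2)$ product states become reachable --- the same coprime-cycles trick used for the $\bigOmega{|\T_1|\cdot|\T_2|}$ lower bound in Theorem~\ref{thm:quadratic_subtyping}.
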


\subsection{Coinductive Projection with Full Merging}
\label{subsec:coinductivefull}
We define the coinductive projection with 
full merging extending from \cite[Definition 3.6]{Ghilezan2019}. 

\begin{definition}[Full coinductive merge]\label{def:coinductive_full_merge}
The \emph{full coinductive merge} $\mergecf$ is coinductively  
defined by:

\begin{itemize}[leftmargin=*]
\item If $\T_1 = \tend$ and $\T_2 = \tend$, then $\T_1 \mergecf \T_2 = \tend$.
\item If $\unfold{\T_1} = \tdag\pp{\S}{\T_1'}$ and $\unfold{\T_2} = \tdag\pp{\S}{\T_2'}$,
then $\T_1 \mergecf \T_2 = \tdag\pp{\S}{\T_1' \mergecf \T_2'}$ with $\dagger\in \set{!,?}$.
\item If $\unfold{\T_1} = \tselsub\pp{l_i: \T_i}{i \in \I}$ and $\unfold{\T_2} = \tselsub\pp{l_i: \T_i'}{i \in \I}$,
then $\T_1 \mergecf \T_2 = \tselsub\pp{l_i: \T_i \mergecf \T_i'}{i \in \I}$.
\item If $\unfold{\T_1} = \tbrasub\pp{l_i: \T_i}{i \in \I}$ and $\unfold{\T_2} = \tbrasub\pp{l_j: \T_j'}{j \in \J}$,\\
then $\T_1 \mergecf \T_2 = \tbrasub\pp{l_k: \T_k''}{k \in \I \cup \J}$ where 
(1) $\T_k'' =\T_k \mergecf \T_k'$ if 
$k \in \I \cap \J$; 
(2) $\T_k'' = \T_k$ if  $k \in \I \minus \J$; 
or (3) $\T_k'' = \T_k'$ if $k \in \J \minus \I$; undefined otherwise. 
\end{itemize}
\end{definition}

\subsubsection{Lower bounds for coinductive projection with full merging}
We exhibit a global type $\G$ such that the size of the projected
local type graph must be exponential in the syntactic size of
$\G$. 

\begin{theorem}\label{thm:coinductive:size}
There exists a (balanced) global type $\G$ whose projection onto some participant requires a local type graph of size $\bigOmega{2^{|\G|^c}}$ for some $c > 0$.
\end{theorem}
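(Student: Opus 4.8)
The plan is to construct a global type $\G$ whose coinductive full-merge projection onto some target participant $\pr$ has a local type graph whose number of distinct states (subtype graph nodes) is doubly-exponential, hence exponential in the \emph{syntactic size} $|\G|$. The key intuition is that full merging can \emph{combine} external choices coming from different branches, and by nesting such merges I can force the projected type to remember, at each point, an exponentially large \emph{set} of possible pending branch-labels. Since distinct reachable sets give rise to distinct nodes in the projected type graph, the graph must have a number of nodes proportional to the number of distinct sets that can arise, which I will arrange to be $2^{\Omega(|\G|^c)}$.

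First I would fix a target participant $\pr$ and design $\G$ so that a sender $\pp$ performs a sequence of choices toward an intermediary $\pq$ (not involving $\pr$), after which $\pr$ must receive a branching message from $\pq$ whose available labels depend on \emph{which} path was taken. Concretely, I would build a balanced recursive skeleton of depth $m$ (so $|\G| = \bigTheta{m}$ up to the constant blow-up from branching, giving syntactic size polynomial in $m$) in which each level offers a binary choice $\GvtPair{\pp}{\pq}{l_1:\cdots, l_2:\cdots}{}$ not involving $\pr$; at the leaves $\pr$ receives from $\pq$ a single distinct label determined by the full path. The projection onto $\pr$ must then, by Definition~\ref{def:coinductiveproj} and the full-coinductive-merge clause of Definition~\ref{def:coinductive_full_merge}, merge the $2^m$ leaf-branchings together. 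By iterating this construction recursively (nesting one gadget inside a branch of the next), the \emph{set} of branches that survive in the merged branching type at a given node encodes an exponential-length bitstring, and following transitions in $\GG(\proj{\G}{\pr})$ forces the graph to distinguish states indexed by such bitstrings.

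Next I would verify the two nontrivial side conditions. I must check that $\G$ is \emph{balanced} (Definition in \Sec\ref{section:coinductive_projection}): since my skeleton reaches every participant along every path within a uniformly bounded number of steps, the bound $k$ exists for every reachable node, so balancedness holds. I must also verify that the projection is in fact \emph{defined} under full coinductive merging at every merge point — this requires that, whenever two branchings are merged, their common labels carry mergeable continuations; I would arrange the continuations at shared labels to be syntactically coordinated (e.g.\ identical up to unfolding) so the full-merge clause never hits its ``undefined otherwise'' case. Finally I would argue the lower bound on graph size: I would identify $2^{\Omega(m)}$ distinct reachable nodes in $\GG(\proj{\G}{\pr})$, for instance by exhibiting distinguishing transition sequences (an adversary argument showing two encoded sets differ on some label), so no two encodings collapse to the same node. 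Taking $|\G| = \bigO{m}$ yields graph size $\bigOmega{2^{|\G|^c}}$ with, e.g., $c = 1$ after accounting for polynomial overhead in the label alphabet.

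The main obstacle I anticipate is ensuring \emph{definedness} of the merge while simultaneously forcing the state-set to blow up: these pull in opposite directions, because making merges always defined tends to \emph{collapse} distinct branches, which is exactly what would shrink the graph. The delicate part is engineering the continuations so that merged branchings agree precisely on the labels they share (keeping the merge defined) yet \emph{differ} on which labels they offer (keeping the states distinct), and doing so recursively so the distinctions compound multiplicatively rather than additively. I expect the cleanest route is to let the label sets themselves encode the bitstring — each level contributes one independent ``bit'' of offered/not-offered label — so that definedness is trivial (no two merged branches ever share a continuation that could fail) while the number of distinct offered label-sets is exactly $2^m$.
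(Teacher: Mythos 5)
There is a genuine gap in your size accounting, and it is fatal to the construction as described. You claim that a skeleton of depth $m$ of nested binary choices, whose leaves each carry a distinct label for $\pr$ ``determined by the full path'', has size $|\G| = \bigTheta{m}$ ``up to the constant blow-up from branching''. But global types are syntax trees without sharing: if the two branches of each level $\GvtPairs\pp\pq{l_1:\G', l_2:\G''}$ must lead to different continuations --- which they must, if the leaf label is to depend on the full path, or if each level is to contribute an independent bit to the offered label-set --- then the size recurrence is $S(m) = 2S(m-1) + \bigO{1}$, i.e.\ $|\G| = \bigTheta{2^m}$. A projected type graph with $2^m$ nodes is then only \emph{linear} in $|\G|$, and no bound of the form $2^{\bigOmega{|\G|^c}}$ follows. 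Your closing idea that ``each level contributes one independent bit \ldots\ so that the number of distinct offered label-sets is exactly $2^m$'' runs into exactly this wall: encoding $2^m$ distinct label-sets in tree-shaped syntax costs $2^m$ distinct subterms. The only mechanism in this grammar for extracting many behaviours from a small syntactic object is $\mu$-recursion, which your core gadget does not use; and recursion gives you loops (periodicity), not independent one-shot bits.

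The paper's proof sidesteps this by combining recursion with a number-theoretic rather than a powerset argument. It uses an initial branching $\GvtPair{\pp}{\pr}{l_i : \cdots}{i \in \I}$ that does \emph{not} involve the target participant $\pq$, where branch $i$ is a $\mu$-loop of period $n_i$ of $a$-messages from $\pp$ to $\pq$, with an exit label $b$ available exactly at multiples of $n_i$. The full coinductive merge onto $\pq$ must then accept $b$ after $t$ steps iff $n_i \mid t$ for some $i$; taking the $n_i$ to be the first $k$ primes, this set has minimal eventual period $\prod_{i \in \I} n_i = \bigOmega{2^k}$, forcing that many states, while $|\G| = \bigO{\sum_{i \in \I} n_i} = \bigO{k^{2+\varepsilon'}}$ by the prime number theorem --- whence the exponent $c = \frac{1}{2} - \varepsilon$ (note that your claimed $c = 1$ is stronger than what is actually established). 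Repairing your approach would require the ``independent bits'' to share syntax, and within this calculus that essentially forces you back to loops and a periodicity argument of this kind.
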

\begin{proof}
We prove this for $c = \frac{1}{2} - \varepsilon$, $\varepsilon > 0$.
Fix integers $n_i$ for $i$ in some finite indexing set
$\I$. Consider the following global type:\\[1mm]
{$\begin{array}{rl}
    \G_{\text{cf}} &= \mathrlap{\GvtPair{\pp}{\pr}{l_i: \mu\ty. \GvtBigPairs{\pp}{\pq}{\text{\color{red}$a$}: \GvtPairs{\pp}{\pq}{\text{\color{red}$a$}: \dots \GvtPairs{\pp}{\pq}{\text{\color{red}$a$}: \ty} \dots },\
\text{\color{red}$b$}: \GvtPairs{\pp}{\pq}{l_i: \tend}}}{i \in \I}}
{\hspace{5.9em}{\underbrace{\hspace{14.6em}}_{n_i{\text{ times}}}}}
  \end{array}
$}\\[1mm]
\noindent Participant $\pp$ chooses one $i \in \I$ to follow: 
after every multiple of $n_i$ communications with label {\color{red}$a$} to $\pq$,
it is possible to send a label {\color{red}$b$} to $\pq$.
Thus, $\pq$ must be able to accept {\color{red}$b$} precisely after $m$ {\color{red}$a$} messages
if and only if $m$ is a multiple of some $n_i$.
For coprime $n_i$, the sequence of allowable communications is acyclic, so we would require at least $\prod_{i \in \I} n_i$ states of a local type graph.
We have that $|\G| = 1 + 2 \cdot |\I| + \sum_{i \in \I} n_i$.
Taking $n_i$ as the first $k$ primes, as a loose bound we have $|\G| = \bigO{k^{2+\varepsilon'}}$ (for all $\varepsilon' > 0$)
by the prime number theorem,
and $\prod_{i \in \I} n_i = \bigOmega{2^k}$.
Thus, the required number of states is $\bigOmega{2^m}$ with
$m={|\G|^{\left(\frac{1}{2} - \varepsilon\right)}}$. 
\end{proof}
Type $\G_{\text{cf}}$ also serves as the example which is projecatble 
only by the coinductive projection with full merging, but not any of other 
three projections. 

\begin{corollary}\label{cor:coinductive:lower}
The coinductive projection with full merging has worst-case time complexity lower bounded by $\bigOmega{2^{|\G|^{\left(\frac{1}{2} - \varepsilon\right)}}}$ for some $\varepsilon > 0$.
\end{corollary}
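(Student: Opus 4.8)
The plan is to derive this bound directly from Theorem~\ref{thm:coinductive:size}, via the standard principle that the running time of any algorithm is bounded below by the size of the output it must write. Concretely, I would instantiate the witness family $\G_{\text{cf}}$ from the proof of Theorem~\ref{thm:coinductive:size}, taking the $n_i$ to be the first $k$ primes, so that the coinductive full-merge projection onto $\pq$ admits no local type graph with fewer than $\bigOmega{2^{|\G|^{\left(\frac{1}{2} - \varepsilon\right)}}}$ nodes. Note that this witness is \emph{balanced}, so the bound is not vacuous: it applies to an input on which the projection problem is genuinely posed.

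Any algorithm solving the projection problem must produce a representation of the resulting local type (equivalently, of its type graph). Since Theorem~\ref{thm:coinductive:size} guarantees that every such representation has at least $\bigOmega{2^{|\G|^{\left(\frac{1}{2} - \varepsilon\right)}}}$ states, the algorithm must emit at least that many symbols and hence perform at least that many elementary steps. This immediately yields the worst-case lower bound $\bigOmega{2^{|\G|^{\left(\frac{1}{2} - \varepsilon\right)}}}$, matching the claim.

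The only delicate point — and the step I would treat most carefully — is ensuring that no more compact output (for instance, a recursive local type with heavy $\mu$-sharing) can evade the bound. I would justify this using the correspondence between type-graph nodes and subformulas (Definitions~\ref{def:typegraph} and~\ref{def:subformulas_of_local_type}): the size of any local type is at least the number of pairwise-inequivalent states of its type graph, and for the coprime construction these $\prod_{i \in \I} n_i$ states are genuinely distinct, since whether label $b$ may be offered after a further block of $a$-messages depends on the current count modulo $\prod_{i \in \I} n_i$. Hence no recursive folding collapses the state count below $\prod_{i \in \I} n_i$, and the exponential lower bound on the minimal graph transfers verbatim to the output size and therefore to the running time. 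Beyond this observation the corollary is essentially a restatement of Theorem~\ref{thm:coinductive:size}.
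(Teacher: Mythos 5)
Your proposal is correct and follows the paper's own argument exactly: the corollary is derived from Theorem~\ref{thm:coinductive:size} by observing that running time is lower-bounded by the size of the output local type graph. Your extra paragraph about recursive folding is re-establishing what Theorem~\ref{thm:coinductive:size} already asserts (that the projection \emph{requires} a graph of that size), so it is harmless but not needed beyond the one-line argument the paper gives.
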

\begin{proof}
  The time complexity of the algorithm is lower-bounded by the size of the output local type graph.
\end{proof}

\subsubsection{A \PSPACE\ algorithm for coinductive projection with full merging}
We give a sound and complete algorithm for coinductive projection with full merging. We first introduce a \emph{subset construction} procedure, 
inspired by \cite{Li2023}. We first make a closure 
of $\G$ on $\pp$ as a finite set of subglobal types of $\G$ 
which are reachable without involving $\pp$.

\begin{definition}[Closure of a participant]
Let $\cG$ denote 
a set of closed global types. 
  For participant $\pp$, define the \emph{$\pp$-closure}
  $\gcl{\pp}(\cG)$ of a set of global types $\cG$ as follows:
  (1) $\G \in \cG$ implies $\G \in \gcl{\pp}(\cG)$; 
(2) $\G \in \gcl{\pp}(\cG)$, 
    $\unfold{\G} = \Gvt{\pq}{\pr}{\S'} \G'$ and 
    $\pp \notin \{\pq, \pr\}$ imply 
    $\G' \in \gcl{\pp}(\cG)$; and  
  (3) 
  $\G \in \gcl{\pp}(\cG)$,  
    $\unfold{\G} = \GvtPair{\pq}{\pr}{l_i: \G_i}{i \in \I}$, $j \in \I$ 
and $\pp \notin \{\pq, \pr\}$ imply  
    $\G_j \in \gcl{\pp}(\cG)$. 
\end{definition}

Starting from a given global type $\G$, 
the following algorithm produces a local type graph of participant $\pp$ whose 
nodes are a set of global types and edges are 
actions from a node to a node. The produced 
type graph represents a projection of $\G$ onto $\pp$. 
\begin{definition}[Subset construction]
\label{def:subsetconstruction}
Let $\G$ be a closed global type and $\pp\in \pt{\G}$. 
We compute the \emph{projection graph} $\gproj{\pp}(\G)$ of $\G$ onto $\pp$ as the local type graph whose node is a set of subglobal types of $\G$. 
The initial node is set to $\gcl{\pp}(\{\G\})$.\\[1mm]
\begin{tabular}{rl}
\small  
\RULE{G-End} & $\forall \G\in \cG.\ \neg(\involve{\G}{\pp})$ implies   
$\cG \trans{\trend} \Skip$.\\
\RULE{G-Out} &
$\pp\in \pt{\cG}$, 
$\G_i'\in \cG$ and 
$\involve{\G_i'}{\pp}$ and 
$\G_i'=\Gvt{\pp}{\pq}{\S}\G_i$ imply  
$\cG \trans{\trout{\pq}{\S}} \gcl{\pp}(\{G_i\}_{i \in \I})$.\\


\RULE{G-In} & 
$\pp\in \pt{\cG}$, 
$\G_i'\in \cG$ and  
$\involve{\G_i'}{\pp}$ and 
$\G_i'=\Gvt{\pq}{\pp}{\S}\G_i$ imply  
$\cG \trans{\trin{\pq}{\S}} \gcl{\pp}(\{G_i\}_{i \in \I})$.\\

\RULE{G-Sel} & 
$\pp\in \pt{\cG}$, 
$\G_i'\in \cG$ and  
$\involve{\G_i'}{\pp}$ and 
$\G_i'=\GvtPair{\pp}{\pq}{l_j: \G_{ij}}{j \in \J}$ imply\\
&  
$\cG \trans{\trsel{\pq}{l_j}} \gcl{\pp}(\{\G_{ij}\}_{j \in \J, i \in \I})$ for all $j \in \J$.\\

\RULE{G-Bra} & 
$\pp\in \pt{\cG}$ and  
$\G_i'\in \cG$ and  
$\involve{\G_i'}{\pp}$ and 
$\G_i'=\GvtPair{\pq}{\pp}{l_j: \G_{ij}}{j \in \J_i}$ imply\\
& $\cG \trans{\trbra{\pq}{l_j}} \gcl{\pp}(\{\G_{ij}\}_{j \in \J_i, i \in \I})$ for all $j \in \bigcup_{i \in \I} \J_i$.
\end{tabular}\\[1mm]
The projection graph $\gproj{\pp}(\G)$ is the graph reachable from the initial node. If it is not a valid local type graph, 
then the projection is undefined.

\end{definition}
Note that all nodes are of the form 
$\gcl{\pp}{(\cG)}$ for some set of global types $\cG$.
Intuitively, a node $\cG$ in the subset construction $\gproj{\pp}(\G)$ represents all possible states in which
the global protocol could be in, from the perspective of $\pp$.
Rule \RULE{G-End} allows $\pp$ to terminate given that it is not possible for $\pp$ to be involved in any further communications.
Rules \RULE{G-Out}, \RULE{G-In} and \RULE{G-Sel} allow $\pp$ to communicate if all possible communications are identical.
Rule \RULE{G-Bra} additionally accommodates for the full merge by
becoming looser with labels.
\iffull{\ Example~\ref{ex:subsetcontruction} shows how the subset construction
is built.}\else{\ See \cite{UY2024} for an example of the subset construction.}\fi

\begin{restatable}[Soundness of the subset construction]{theorem}{thmcoprojsound}
  \label{thm:proj:sound}
  \proofreference{Appendix~\ref{app:coinductivefull}} 
Let $\G$ be a balanced global type. 
If the projection graph $\gproj{\pp}(\G)$ is defined, 
then $\G \fproj{\pp} \gproj{\pp}(\G)$.
\end{restatable}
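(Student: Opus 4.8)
The plan is to prove the statement by coinduction, exhibiting a relation $\cR$ between global and local types that contains $(\G, \gproj{\pp}(\G))$ and satisfies every clause of the coinductive projection (Definition~\ref{def:coinductiveproj}) with the merge operator instantiated to $\mergecf$; since $\fproj{\pp}$ is the largest such relation, this yields $\G \fproj{\pp} \gproj{\pp}(\G)$. The subtlety to anticipate is that the graph $\gproj{\pp}(\G)$ stores only \emph{merged} node types, whereas the receiving-branch clause of Definition~\ref{def:coinductiveproj} is \emph{strict on labels}: a subglobal type $\GvtPair{\pq}{\pp}{l_j: \G_j}{j \in \J}$ may be projected only to a branching with label set exactly $\J$, never the union produced by rule \RULE{G-Bra}. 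Pairing each node of $\gproj{\pp}(\G)$ with its members directly therefore fails this clause. To avoid this, for each closed subglobal type $\G'$ of $\G$ I write $\T_{\cG'}$ for the local type denoted by the subset construction seeded at a $\pp$-closed node $\cG'$, and set $\widehat{\G'} := \T_{\gcl{\pp}(\{\G'\})}$, the type obtained by seeding at the \emph{singleton} $\{\G'\}$. I then take $\cR = \{(\G', \widehat{\G'}) \mid \G' \in \Sub(\G)\}$. Since the initial node of $\gproj{\pp}(\G)$ is exactly $\gcl{\pp}(\{\G\})$, we have $\widehat{\G} = \gproj{\pp}(\G)$, so $(\G, \gproj{\pp}(\G)) \in \cR$.

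The heart of the argument, and the step I expect to be the main obstacle, is a compositionality lemma stating that the subset construction realises the full coinductive merge: for any finite nonempty set $\cG$ of closed subglobal types, the construction seeded at $\gcl{\pp}(\cG)$ is defined iff $\bigmergec_{\G' \in \cG} \widehat{\G'}$ (with $\mergec = \mergecf$) is defined, and then $\T_{\gcl{\pp}(\cG)} \subteq \bigmergec_{\G' \in \cG} \widehat{\G'}$. I would prove this by coinduction on the type-graph structure, matching transitions: rule \RULE{G-Bra}'s union of labels is precisely the union taken by $\mergecf$ on branchings, while rules \RULE{G-Out}, \RULE{G-In} and \RULE{G-Sel} fire only when all $\pp$-involving members agree on head and sort (and, for selection, on label set), mirroring the requirement that $\mergecf$ be defined only on matching heads; members that do not involve $\pp$ are inert and contribute no transitions. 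A consequence I would extract from this lemma is that definedness of $\gproj{\pp}(\G)$ at the root, hence at every node reached along $\cR$, forces each singleton construction $\widehat{\G'}$ to be defined, so $\cR$ is well formed. Finiteness of $\Sub(\G)$ keeps every $\widehat{\G'}$ a genuine finite, guarded local type graph; the balancedness hypothesis is carried from the ambient setting, but the core argument relies only on definedness of the construction and finiteness of $\Sub(\G)$.

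With the lemma in hand, verifying that $\cR$ satisfies Definition~\ref{def:coinductiveproj} is a case analysis on $\unfold{\G'}$, reading $\unfold{\widehat{\G'}}$ off the seeding rules. When $\pp \notin \pt{\G'}$, closure forces every member to avoid $\pp$, so \RULE{G-End} gives $\unfold{\widehat{\G'}} = \tend$. When $\G'$ involves $\pp$ (output, input, selection, or receiving branch), the singleton closure $\gcl{\pp}(\{\G'\})$ equals $\{\G'\}$, so the matching rule produces a head of exactly the required shape, with label set exactly that of $\unfold{\G'}$: this is precisely where seeding at singletons rather than at the full node resolves the label-strictness issue, and the continuations are again singleton closures $\widehat{\G_i}$, keeping the pairs in $\cR$. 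When $\unfold{\G'} = \Gvt{\pq}{\pr}{\S}\G''$ with $\pp \notin \{\pq,\pr\}$, the added member $\G'$ is inert, giving $\widehat{\G'} \subteq \widehat{\G''}$ and discharging the message clause. The only clause needing the merge is branching not involving $\pp$: here $\unfold{\G'} = \GvtPair{\pq}{\pr}{l_i: \G_i}{i \in \I}$ with $\pp \notin \{\pq,\pr\}$, and since $\G'$ is inert the closure coincides with $\gcl{\pp}(\{\G_i\}_{i\in\I})$; the compositionality lemma then gives $\widehat{\G'} \subteq \bigmergec_{i \in \I} \widehat{\G_i}$, which is exactly the witness demanded, each $(\G_i, \widehat{\G_i})$ lying in $\cR$. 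Since $\cR$ is thus a post-fixed point of the defining functional, $\cR \subseteq {\fproj{\pp}}$, and in particular $\G \fproj{\pp} \gproj{\pp}(\G)$.
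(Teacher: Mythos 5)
Your proposal is correct and follows essentially the same route as the paper: you exhibit the relation $\cR = \{(\G', \GG(\gcl{\pp}(\{\G'\}))) \mid \G' \in \Sub(\G)\}$ (seeding at singleton closures to sidestep the label-strictness of the receiving-branch clause) and discharge the non-involved-branching case via a compositionality lemma identifying the node's graph with the full coinductive merge of its members' singleton graphs, which is exactly the paper's Lemma~\ref{lem:subset_construction_is_merging}. The remaining case analysis on $\unfold{\G'}$ matches the paper's proof clause for clause.
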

\begin{proof}
By proving $\cR = \{(\G', \GG(\gcl{\pp}(\{\G'\})))\ | \ \G' \in
\Sub(\G)\}$ is a subset of relation $\fproj{\pp}$. 
\end{proof}


Completeness is proved by contradiction, 
assuming that $\G \fproj{\pp} \T$ but some local type $\cG \in \gproj{\pp}(\G)$ is invalid (i.e. its outgoing edges do not match any of the rules in the subset construction).

\begin{restatable}[Completeness of the subset
    construction]{theorem}{thmprojcomplete} \proofreference{Appendix~\ref{app:coinductivefull}}
  Let $\G$ be a balanced global type such that $\G \fproj{\pp} \T$. Then the subset construction $\gproj{\pp}(\G)$ is defined.
\end{restatable}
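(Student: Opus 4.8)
The plan is to argue by contradiction exactly as announced: assume $\G \fproj{\pp} \T$ for some $\T$, yet some reachable node $\cG$ of $\gproj{\pp}(\G)$ is invalid, meaning the elements of $\cG$ whose unfolding involves $\pp$ do not all fit a single rule among \RULE{G-End}, \RULE{G-Out}, \RULE{G-In}, \RULE{G-Sel}, \RULE{G-Bra}. The first ingredient I would establish is a \emph{propagation lemma}: every global type occurring in a reachable node is itself coinductively projectable onto $\pp$. I would prove this by induction along reachability of nodes and, within a node, along the closure $\gcl{\pp}$. The coinductive projection rules supply precisely the closure steps: the message-skip rule ($\unfold{\G'} = \Gvt{\pq}{\pr}{\S}\G''$ with $\pp \notin \{\pq, \pr\}$) transports a projection from $\G'$ to its continuation $\G''$ unchanged, and the branching-skip rule ($\unfold{\G'} = \GvtPair{\pq}{\pr}{l_i: \G_i}{i \in \I}$ with $\pp \notin \{\pq, \pr\}$) forces each branch $\G_i$ to be projectable; the transition rules \RULE{G-Out}, \RULE{G-In}, \RULE{G-Sel}, \RULE{G-Bra} transport projectability to their continuations through the corresponding communication clause of $\fproj{\pp}$. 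Since $\fproj{\pp}$ is the largest such relation, projectability indeed propagates to every element reached by the closure and by the transitions.

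The heart of the argument is the invariant that, for every reachable node $\cG$, the full coinductive merge of the projections of its elements is \emph{defined}. Fixing for each $\G' \in \cG$ a projection $\T_{\G'}$ with $\G' \fproj{\pp} \T_{\G'}$ (available by the propagation lemma), I claim $U_\cG = \bigmergec_{\G' \in \cG} \T_{\G'}$ exists. For the initial node $\gcl{\pp}(\{\G\})$ this is read off the derivation of $\G \fproj{\pp} \T$: unfolding $\G$ and repeatedly applying the message-skip and branching-skip clauses folds $\T$, up to $\subteq$ and unfolding, into exactly the merge of the projections of the frontier subglobal types accumulated by the closure, and that merge exists because $\T$ does. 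Associativity and commutativity of $\mergecf$ (Definition~\ref{def:coinductive_full_merge}) let me regroup the nested binary merges arising in the derivation into the single iterated merge over the set $\cG$, and the balanced hypothesis on $\G$ keeps the closure, hence $\cG$, finite so that the iterated merge is well defined. For a non-initial node I would push the invariant forward along each transition: the successor is again of the form $\gcl{\pp}(\cdot)$, and the communication clause of $\fproj{\pp}$ (together with the branching merge clause) supplies projections of its elements whose merge, by the same folding argument, is again defined.

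The contradiction then follows directly from the shape of $\mergecf$. If $\cG$ is invalid, two of its $\pp$-involving frontier elements are incompatible: their unfoldings are of two different kinds (e.g.\ an output against an input, a selection against a branching, or a terminated type against a communicating one), or they are outputs/inputs disagreeing on the partner $\pq$ or payload $\S$, or selections disagreeing on their label set. In each case Definition~\ref{def:coinductive_full_merge} leaves $\mergecf$ undefined on the corresponding pair of projections, so $U_\cG$ cannot exist, contradicting the invariant. Conversely, when the frontier is uniform the unique applicable rule reads off the outgoing edges, and the merge of the branch-continuations matches $U$ of the successor node, which is exactly where \RULE{G-Bra} exploits the looseness of $\mergecf$ on differing label sets. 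Hence no reachable node is invalid and $\gproj{\pp}(\G)$ is a valid local type graph, i.e.\ defined. The main obstacle I anticipate is the invariant of the second paragraph: showing that the merge of the frontier projections collected in $\cG$ is a subterm of $\T$ up to $\subteq$ requires carefully unwinding the merge clause of the coinductive projection across the skip-tree, tracking how the closure accumulates branches coming from distinct recursive unfoldings, and invoking balancedness to keep those sets finite; the commutativity and associativity bookkeeping for $\mergecf$ across nested, possibly recursive, merges is the delicate point.
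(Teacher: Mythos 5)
Your proposal follows essentially the same route as the paper's proof: both argue by contradiction, maintain along the reachable nodes the invariant that every member of a node is coinductively projectable and that the full coinductive merge of those projections is defined (your ``folding of $\T$ across the skip-tree'' is exactly the paper's lemma on projection and merging of the closure), and derive the contradiction from unmergeable first messages at the first invalid node. One small correction: balancedness is not what keeps the closure finite (that already follows from finiteness of $\Sub(\G)$); it is needed so that the decomposition of $\T$ into the merge of the projections of the $\pp$-involving frontier is well-founded, i.e.\ the dependency among closure elements that do not involve $\pp$ is acyclic.
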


\begin{theorem} The coinductive full projectability problem for global types is in $\PSPACE$. 
\end{theorem}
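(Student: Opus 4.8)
The plan is to show that the coinductive full projectability problem lies in $\PSPACE$ by exhibiting a nondeterministic polynomial-space algorithm and invoking Savitch's theorem (which gives $\NPSPACE = \PSPACE$). The central observation is that although the projection graph $\gproj{\pp}(\G)$ may have exponentially many nodes (by Theorem~\ref{thm:coinductive:size}, the output can be of size $\bigOmega{2^{|\G|^c}}$, so we cannot materialise it explicitly), each individual node is a $\pp$-closure $\gcl{\pp}(\cG)$ where $\cG \subseteq \Sub(\G)$. Since $|\Sub(\G)| = \bigO{|\G|}$, any node can be represented as a subset of subglobal types, which takes only $\bigO{|\G|}$ bits. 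This means the configurations of the subset construction are polynomially bounded in size, even though there may be exponentially many of them.

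\textbf{Algorithm.} First I would give a procedure that, given a node $\cG = \gcl{\pp}(\cG_0)$ (stored as a polynomial-size bitmask over $\Sub(\G)$), computes the $\pp$-closure in polynomial time (a straightforward fixpoint / reachability computation over $\Sub(\G)$ following the closure rules), and then checks the \emph{local validity} of $\cG$: namely, that its outgoing transitions under the rules \RULE{G-End}, \RULE{G-Out}, \RULE{G-In}, \RULE{G-Sel}, \RULE{G-Bra} are well-defined and mutually exclusive in the sense required to form a genuine local type graph (e.g.\ all involved communications agree on direction, participant and sort, so that exactly one rule applies and yields a deterministic action structure). Each successor node is again computable in polynomial time and representable in polynomial space. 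The nondeterministic algorithm for \emph{non}-projectability then guesses a path through the projection graph, one node at a time, maintaining only the \emph{current} node in memory, and accepts if it ever reaches a locally invalid node. Since an invalid node, if one exists, is reachable from the initial node $\gcl{\pp}(\{\G\})$ within a number of steps bounded by the total number of distinct nodes (at most $2^{|\Sub(\G)|} = 2^{\bigO{|\G|}}$), I would keep a binary step-counter of $\bigO{|\G|}$ bits to bound the search length and reject once it is exhausted. This shows non-projectability is in $\NPSPACE$, hence projectability is in $\co\NPSPACE = \NPSPACE = \PSPACE$ by Savitch and closure of $\PSPACE$ under complement.

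\textbf{Correctness} rests on the soundness and completeness of the subset construction (the two preceding theorems): by completeness, if $\G \fproj{\pp} \T$ then $\gproj{\pp}(\G)$ is defined, i.e.\ every reachable node is locally valid; conversely by soundness, if every reachable node is valid then the graph is a genuine projection. Thus $\G$ is coinductively-full-projectable onto $\pp$ if and only if no reachable node in the subset construction is invalid, which is exactly the complement of what the nondeterministic search decides. To decide full projectability of $\G$ (onto all participants) rather than onto a fixed $\pp$, I would simply iterate the check over each $\pp \in \pt{\G}$, reusing space, which preserves the polynomial space bound.

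\textbf{The main obstacle} I anticipate is the precise formulation of \emph{local validity} of a node and verifying that the on-the-fly successor computation faithfully reproduces the rules of Definition~\ref{def:subsetconstruction}: one must confirm that computing $\gcl{\pp}$ and detecting rule applicability (including the merge-compatibility conditions implicit in forming a valid branching/selection/message node, such as matching sorts and directions across all $\G_i' \in \cG$ with $\involve{\G_i'}{\pp}$) can indeed be done in polynomial time and space, and that the "invalid node" condition is decidable locally from a single node together with its immediate successors. Getting this bookkeeping exactly right—so that the nondeterministic reachability search over polynomial-size configurations is provably equivalent to undefinedness of $\gproj{\pp}(\G)$—is where the real care is needed; the complexity bound itself then follows routinely from Savitch's theorem.
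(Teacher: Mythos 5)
Your proposal is correct and follows essentially the same route as the paper: the paper's proof likewise observes that each node of the subset construction is a polynomial-size subset of $\Sub(\G)$, searches the reachable nodes in polynomial space checking that the transitions form a correct type graph, and concludes via soundness and completeness of the subset construction. You have simply made explicit the $\NPSPACE$/Savitch bookkeeping that the paper leaves implicit.
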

\begin{proof}
In polynomial space, we can check all reachable subsets from the initial node of the subset construction, and check that the resulting transitions form a correct type graph.
Then the result follows from the soundness and completeness of the subset construction.
\end{proof}

\begin{theorem} Finding a coinductive projection of a balanced global type $|\G|$ onto a participant $\pp$ with full merging takes time $\bigO{\size{\G} \cdot 2^{\size{\G}}}$.
\end{theorem}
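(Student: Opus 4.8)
The plan is to read the running time directly off the subset construction of Definition~\ref{def:subsetconstruction}: by the preceding soundness and completeness theorems, that construction already computes the projection (when it is defined), so I only need to bound the time to build the graph $\gproj{\pp}(\G)$ and to check that it is a valid local type graph. Building it is a textbook reachability exploration from the initial node $\gcl{\pp}(\{\G\})$, so the cost is governed by two quantities: (a) the number of reachable nodes, and (b) the cost of enumerating the outgoing edges of a single node. I would establish $\bigO{2^{\size{\G}}}$ for (a) and $\bigO{\size{\G}}$ for (b), and multiply.

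For (a), every node produced by the rules is of the form $\gcl{\pp}(\cG)$ and hence is a set of subglobal types, i.e.\ a subset of $\Sub(\G)$. Since $|\Sub(\G)| \le \size{\G}$ (one subformula per syntactic position, with $\mu$-unfoldings counted by substitution), there are at most $2^{\size{\G}}$ distinct nodes, so the reachable part of $\gproj{\pp}(\G)$ has $\bigO{2^{\size{\G}}}$ nodes. A standard breadth-first search keeping a visited set keyed by the bit-vector encoding of each node over $\Sub(\G)$ deduplicates nodes and visits each exactly once.

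For (b), the key structural observation is that the closure operator distributes over unions, $\gcl{\pp}(\cG) = \bigcup_{\G' \in \cG} \gcl{\pp}(\{\G'\})$, because the saturation rules (2)--(3) act on individual members of $\cG$. Hence I would precompute once, in time polynomial in $\size{\G}$, the unfolding $\unfold{\G'}$ and the singleton closure $\gcl{\pp}(\{\G'\})$ of every $\G' \in \Sub(\G)$, storing each singleton closure as a bit-vector. To expand a node $\cG$, I scan its (at most $\size{\G}$) members, read off their precomputed unfoldings to see which of \RULE{G-End}, \RULE{G-Out}, \RULE{G-In}, \RULE{G-Sel}, \RULE{G-Bra} fire and with which action, group the resulting successor subglobal types by action, and form each target node as the union (bit-vector OR) of the relevant singleton closures. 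Only $\bigO{\size{\G}}$ distinct actions occur in $\G$, so each node spawns $\bigO{\size{\G}}$ edges, and the validity check that these edges have the shape of a single local type is performed during the same scan.

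The sharp per-node bound $\bigO{\size{\G}}$, rather than a larger polynomial, is the main obstacle and is where I expect the real work. Naively OR-ing or comparing bit-vectors of length $\size{\G}$ costs $\bigTheta{\size{\G}}$ each and would introduce an extra factor. I would remove it by arguing in the word-RAM model: since the output graph may have $2^{\bigOmega{\size{\G}}}$ nodes, the machine word has $w = \bigOmega{\size{\G}}$ bits, so a subset of $\Sub(\G)$ fits in $\bigO{1}$ words and each union or membership test is $\bigO{1}$. Then expanding one node costs $\bigO{\size{\G}}$ (dominated by enumerating its edges), the precomputation is $\mathrm{poly}(\size{\G}) = o(\size{\G}\cdot 2^{\size{\G}})$ and hence absorbed, and multiplying (a) by (b) yields the claimed $\bigO{\size{\G} \cdot 2^{\size{\G}}}$ total time.
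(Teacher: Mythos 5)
Your proposal follows essentially the same route as the paper: the paper's proof is a two-line observation that the subset construction has at most $2^{\size{\G}}$ nodes (each being a subset of $\Sub(\G)$) and that each node can be generated in $\bigO{\size{\G}}$ time. Your additional care about how to actually achieve the $\bigO{\size{\G}}$ per-node cost (precomputed singleton closures, bit-vector unions in the word-RAM model) fills in a detail the paper leaves implicit, but the decomposition and the resulting bound are the same.
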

\begin{proof}
We can use the subset construction: there are at most $2^{|\G|}$ nodes in the projection graph, and each node can be generated in $\bigO{|\G|}$ time.\
\end{proof}

\section{Minimum Type Inference and its Complexity Analysis}
\label{section:minimum_typing}
This section first gives a quick summary of 
the top-down typing system from \cite{Ghilezan2019} 
and develops a \emph{minimum type inference system}, 
and gives its complexity analysis.
The term ``minimum'' means the smallest type $P$ can have with respect 
to the subtyping relation (following the same term in \cite{AbadiCardelli96}).

\subsection{Top Down Type Checking}
\label{subsec:topdown}
For the top-down procedure (Figure \ref{fig:top_down}), 
a typing system uses a projection of global types to type a multiparty session $\M$. Let us define a \emph{typing environment} 
($\Gamma, \Gamma',\dots$) as $\Gamma::=\emptyset \SEP \Gamma, x{:}\S \SEP 
\Gamma, X{:}\T$, and a \emph{(session) typing context}
$\Delta::= \pp{:}\T \SEP \Delta, \Delta'$ (notice that there should be at least one participant in $\M$). We write $\Delta(\pp)$ 
for $\T$ with $\pp{:}\T\in \Delta$ and denote $\dom{\Delta}$. 
The typing judgments of expressions and processes are defined as 
$\Gamma \vdash \e : \S$ and $\Gamma \vdash \PP : \T$, respectively,  
as defined in \cite{Ghilezan2019}. 

Following \cite{BHYZ2023,YH2024},  
to type a multiparty session $\M$, we associate a typing context $\Delta$ 
to a global type $\G$ via projection and subtyping.

\begin{definition}[Association, Definition 10 in \cite{BHYZ2023} and 
Definition 9 in \cite{YH2024}]\label{def:association}
A \emph{typing context $\Delta$ is associated with a global type $\G$}, 
written $\Delta\associated\G$ if $\dom{\Delta}=\pt{\G}$ and 
(1) $\forall\pp\in\pt{\G}$, \ $\Delta(\pp)\subt \proj{\G}{\pp}$ where 
$\proj{\G}{\pp}$ is defined by the inductive projection 
(Def.~\ref{def:inductive_projection}) with 
the plain merge (Def.~\ref{def:inductive_plain_merging}) or  
the full merge (Def.~\ref{def:inductive_full_merging}); or (2)  
$\forall\pp\in\pt{\G}$, $\Delta(\pp) \subt \T_{\pp}$ 
such that $\cprojt{\G}{\pp}{\T_\pp}$ where 
$\cprojt{\G}{\pp}{\T_\pp}$ is defined by the coinductive projection 
(Def.~\ref{def:coinductiveproj}) with 
the plain merge (Def.~\ref{def:coinductive_plain_merge}) or  
the coinductive projection with the full merge 
(Def.~\ref{def:coinductive_full_merge}) with $\G$ balanced.
\end{definition}
Notice that the association enlarges a typing context beyond
directly projected local types. 
Recall $\Delta_5$ in the table in
Figure~\ref{fig:venn} and 
$\G=\GvtPairs{\pp}{\pq}{l_1:\GvtPairs{\pr}{\pq}{l_2:\tend},
  l_4:\GvtPairs{\pr}{\pq}{l_2:\tend}}$. Then
$\Delta_5\associated \G$. 
Figure~\ref{fig:processtyping} lists the rules for processes.  
The rules for expressions are standard and found in 
\iffull{\ Appendix~\ref{app:typing}}\else{\ \cite{UY2024}}\fi.
The top-down typing of $\M$ is defined as 
$\provestop \M :: \Delta$ by \RULE{T-Sess} where 
a set of participants are typed by a typing context $\Delta$ 
associated by a global type \cite{YH2024}. 

\begin{figure}
{\small
\fbox{Processes $\Gamma \vdash \PP : \T$}\raggedright
\begin{center}
\begin{spacing}{2}
  \begin{prooftree}
    \hypo {\Gamma \vdash \PP : \T}
    \hypo {\T \subt \T'}
    \infer2[\RULE{T-Sub}] {\Gamma \vdash \PP : \T'}
  \end{prooftree}
\quad
  \begin{prooftree}
    \infer0[\RULE{T-Inact}] {\Gamma \vdash \inact : \tend}
  \end{prooftree}
\quad
  \begin{prooftree}
    \hypo {\Gamma, X: \T \vdash \PP: \T}
    \infer1[\RULE{T-Rec}] {\Gamma \vdash \mu X. \PP: \T}
  \end{prooftree}
\quad
  \begin{prooftree}
    \hypo {X: \T \in \Gamma}
    \infer1[\RULE{T-Var}] {\Gamma \vdash X: \T}
  \end{prooftree}

\vspace*{3mm}

\begin{prooftree}
    \hypo {\Gamma, x: S \vdash \PP: \T}
    \infer1[\RULE{T-In}] {\Gamma \vdash \procin \pp{x} \PP: \tin \pp{\S} \T}
\end{prooftree}
\quad
\begin{prooftree}
    \hypo {\Gamma \vdash \PP: \T}
    \hypo {\Gamma \vdash \e: S}
    \infer2[\RULE{T-Out}] {\Gamma \vdash \procout \pp{\e} \PP: \tout \pp{S} \T}
  \end{prooftree}
\quad
\begin{prooftree}
    \hypo {\forall i \in \I.\:\Gamma \vdash \PP_i: \T_i}
    \infer1[\RULE{T-Bra}] {\Gamma \vdash \procbrasub \pp{l_i: \PP_i}{i \in \I}: \tbra \pp{l_i: \T_i}_{i \in \I}}
\end{prooftree}

\vspace*{3mm}

  \begin{prooftree}
    \hypo {\Gamma \vdash \PP: \T}
    \infer1[\RULE{T-Sel}] {\Gamma \vdash \procsel \pp{l} \PP: \tsel \pp{l: \T}}
  \end{prooftree}
\quad 
  \begin{prooftree}
    \hypo {\Gamma \vdash \e: \tbool}
    \hypo {\Gamma \vdash \PP: \T}
    \hypo {\Gamma \vdash \PP': \T}
    \infer3[\RULE{T-Cond}]{\Gamma \vdash \cond{\e}{\PP}{\PP'}: \T}
  \end{prooftree}
\end{spacing}
\fbox{Session $\vdash \M : \Delta$}\raggedright\\[2mm]
\begin{prooftree}
  \hypo{\forall i \in \I.\ \vdash \PP_i : \Delta(\pp_i)}
  \hypo{\pt{\G} \subseteq \{\pp_i \mid i \in \I\}\quad \Delta \associated \G}
  \infer2[\RULE{T-Sess}]{\provestop \prod_{i \in \I} \pp_i :: \PP_i \tri 
  \Delta}
\end{prooftree}
\ 
\begin{prooftree}
\hypo{\forall i \in \I.\ \proves \PP_i : \T_i}
\hypo{\safe{(\{\pp_i{:}\T_i\}_{i\in I})}}
\infer2[\RULE{B-Sess}]{\provesbot \prod_{i \in \I} \pp_i :: \PP_i \tri 
\{\pp_i{:}\T_i\}_{i\in I}}
\end{prooftree}
\end{center}
}
\caption{Typing system for processes and a session (\RULE{T-Sess} top-down  
and \RULE{B-Sess} bottom-up) 
\label{fig:processtyping}
}
\end{figure}

The system satisfies the subject reduction theorem (\cite[Theorem~3.21]{Ghilezan2019}), and \textbf{by construction}, 
a typed multiparty session $\M$ automatically 
satisfies desired properties defined in 
Definition~\ref{def:properties} 
\cite[Theorem 7]{YH2024} using any of 
the four projections.
%
%
The following theorems are proved in \cite{Scalas2019,YH2024,Ghilezan2019}. 

\begin{theorem}[Top-down soundness]\label{thm:topdown}
Suppose $\provestop \M \tri \Delta$ by {\rm\RULE{T-Sess}}.  
Then $\M$ is communication safe, deadlock-free and live.  
\end{theorem}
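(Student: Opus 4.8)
The plan is to factor the top-down judgement through its associated global type and then reduce the claim to the bottom-up soundness already established by Scalas and Yoshida. First I would invert the rule \RULE{T-Sess}: from $\provestop \M \tri \Delta$ we obtain a global type $\G$ with $\Delta \associated \G$, the side condition $\pt{\G} \subseteq \{\pp_i \mid i \in \I\}$, and $\vdash \PP_i : \Delta(\pp_i)$ for every $i$. By Definition~\ref{def:association} this gives, for each $\pp \in \pt{\G}$, a supertype $\T_\pp$ of $\Delta(\pp)$ that is a projection of $\G$ onto $\pp$ — either an inductive projection (with plain or full merge) or a coinductive projection with $\G$ balanced.

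The heart of the argument is a two-step chain. The first step is that the projection context $\Delta_\G = \{\pp : \T_\pp\}_{\pp \in \pt{\G}}$ itself satisfies communication safety, deadlock-freedom and liveness. This is the classical correctness-by-construction property of MPST projection: a projectable global type yields a typing context that is safe, deadlock-free and live, where balancedness is exactly the condition needed to secure liveness in the coinductive case (for the inductive projections, definedness on all participants already forces balancedness by Lemma~\ref{lem:balanced}). I would cite \cite{Ghilezan2019,YH2024} for this step, distinguishing the inductive and coinductive clauses of the association. The second step is that all three properties are preserved downward under subtyping: since $\Delta(\pp) \subt \T_\pp$ for each $\pp$, and a context obtained by replacing each type by a subtype is simulated by the original, $\Delta$ inherits safety, deadlock-freedom and liveness from $\Delta_\G$. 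This monotonicity is the content of the association-and-subtyping result of \cite{YH2024} (Theorem~7 there), and it is what makes the enlargement of the context past the literal projection (as in the $\Delta_5$ example) sound.

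Finally I would invoke the bottom-up soundness of \cite{Scalas2019}: a session typed by a context satisfying a property $\varphi$ itself satisfies $\varphi$. Combined with $\vdash \PP_i : \Delta(\pp_i)$ and the subject-reduction theorem \cite[Theorem~3.21]{Ghilezan2019}, which ensures that reductions of $\M$ are tracked by reductions of $\Delta$, this transfers safety, deadlock-freedom and liveness from $\Delta$ to $\M$, completing the proof. Since all of these facts appear in the cited works, the proof is an assembly rather than a fresh development.

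The main obstacle is the liveness clause, both in step one and step two. For the projection context, liveness rests essentially on the balanced condition, so the coinductive case of the association must be argued separately from the inductive one; for the subtyping step, the interplay between covariance of selection and contravariance of branching means a naive simulation does not automatically preserve the ``eventually communicates'' obligations of Definition~\ref{def:properties}(3), and one must verify that every pending action of $\Delta$ still has a matching dual reachable along fair paths. These are precisely the delicate points discharged in \cite{YH2024,Scalas2019}, so I would lean on those statements rather than reprove them here.
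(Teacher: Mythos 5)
Your proposal matches the paper exactly: the paper gives no independent proof of this theorem, stating only that it is proved in \cite{Scalas2019,YH2024,Ghilezan2019} (with the subject-reduction theorem \cite[Theorem~3.21]{Ghilezan2019} and \cite[Theorem~7]{YH2024} cited just above the statement), and the chain you assemble --- inversion of \RULE{T-Sess}, association via projection plus subtyping, correctness-by-construction of the projected context, downward preservation under $\subt$, and transfer from the typing context to the session --- is precisely the decomposition those cited works use. No gap.
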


\paragraph{\bf Complexity of the top-down system}
Since checking $\pt{\G} \subseteq \{\pp_i \mid i \in \I\}$ 
is linear, the complexity of the top-down approach is 
a sum of the complexity of (i) projection and 
(iii-a) type checking of $\PP$ against (ii) a subtype of a projection
of $\G$ (Figure~\ref{fig:top_down}).  
In the case of coinductive projection, 
we additionally require checking $\G$ is balanced. 
As in Figure~\ref{fig:top_downtwo}, 
(iii-a) and (ii) are reduced to (ii-b) the minimum type inference
and (iii) the subtype checking. 
The complexity of (2-2) is given in
Theorems~\ref{thm:complexity_ghilezan_subtyping} 
and \ref{thm:quadratic_subtyping}.  
The rest of this selection develops  
the minimum type inference and analyses its complexity.



\subsection{Minimum Type Inference System}
\label{subsec:minimum}

\myparagraph{Overview.}
The minimum type inference system of process $P$ is constructed in 
two steps:
first, we build the \emph{constraint inference system}
for generating a constraint set of subtyping relations
of $\PP$;  
then we try to solve the constraint set to obtain the minimum type
of $P$. More precisely: 
\begin{enumerate}
\item \label{inf:one}
We construct \emph{the constraint inference system} (Definition~\ref{def:constraint_rules}) for deriving a constraint set $\C$ for process $P$.
\item \label{inf:two} We prove that the constraint inference system satisfies: 
\begin{enumerate}
\item
\label{inf:two:soundness}
\emph{soundness} (Theorem~\ref{thm:soundness_of_constraints}):
if a constraint set
$\C$ is derived from $P$, then $P$ is typable
by a solution of $\C$; and  
\item \emph{completeness}:
\label{inf:two:completeness} (Theorem~\ref{thm:completeness_of_constraints}): 
if $P$ is typable by type $\T$ and a constraint set
$\C$ is derived from $P$, then there exists a solution of $\C$
which corresponds to $\T$.
\end{enumerate}
\item \label{inf:three}  
We define the algorithm to find the \emph{minimum type graph} $\Gmin{\C}$ and
a set of sort constraints $\Csort$ given $P$ and $\C$ (Definition~\ref{def:minimum_type_graph_of_process}); 
\item \label{inf:four}  We prove that the algorithm satisfies: 
\begin{enumerate}
\item \label{inf:four:soundness} \emph{soundness} (Theorem~\ref{the:minimuminf:soundness}): the most general
solution $\Csort$ generated 
with $\Gmin{\C}$ by the algorithm corresponds to the solution of $\C$; and 

\item \label{inf:four:completeness} \emph{completeness} 
(Theorem \ref{thm:minimum_type_of_process}): $\Gmin{\C}$
substituted by the most general solution of $\Csort$ is a type graph of 
the minimum type of $P$. 
\end{enumerate}  
\end{enumerate}
In this subsection, we define (\ref{inf:one}) and prove 
(\ref{inf:two}). 
The second part 
(\ref{inf:three})--(\ref{inf:four}) will be 
presented in \Sec\ref{subsec:minimumtypes}. 

We start by introducing the definition of \emph{minimum types}.
\begin{definition}[Minimum types]\label{def:minimum_type}
  A \textit{minimum type} of a process $\PP$ is a type $\T$ such that:
  {\rm (1)} $\emptyset \vdash \PP: \T$; and 
  {\rm (2)} for all $\T'$ such that $\emptyset \vdash \PP: \T'$, there exists a sort substitution $\pi: \Sv \rightarrow \Sort$ such that $\T \subt \T'\pi$.
\end{definition}

We now introduce
the \emph{constraints} for \emph{unification} \cite[Chapter~22.3]{Pierce2002}. First, we extend the syntax of types 
$\T$ to include a countable set of \emph{type variables} $\Tv$, denoted by $\xi, \psi, \dots$;  
we also extend sorts $\S$ with a countable set of \emph{sort variables} $\Sv$,
ranged over $\alpha, \beta, \dots$. 
A \emph{constraint set} $\C$ includes elements of the form $\T\subt \T'$ and 
$\S = \S'$. 

\label{def:substitution}
Given a constraint set $\C$ on a set of free variables $\cX$, 
a \textit{type substitution} $\sigma: \Tv \prightarrow \Type$ is a partial function from type variables to types. 
Similarly, we define sort substitutions $\pi: \Sv \prightarrow \Sort$. 
We extend these substitutions to act on types and sorts respectively: $\sigma: \Type \prightarrow \Type$ and $\pi: \Sort \prightarrow \Sort$.

\begin{definition}[Constraint inference]
\label{def:constraint_rules}
Define \emph{contexts} $\Gamma::= \emptyset\ | \  \Gamma,x: \alpha \ | \ \Gamma,X: \xi$. Let us write $\dcup$ for disjoint union and alpha-convert all $\mu$-bound variables to be unique.
Figure~\ref{fig:mininfer} defines the judgements for rules 
for process $P$ and selected rules for expression $e$:\\[1mm]
\centerline{
  $\Gamma \vdash \e : \alpha \mid_\cX \C$
  \quad and\quad 
  $\Gamma \vdash \PP : \psi \mid_\cX \C$
}  
which state that
(1) any solution for $\C$ substituted into $\alpha$
gives a type for $\e$; and 
(2) any solution for $\C$ substituted into $\psi$ 
gives a type for $\PP$.
%
\end{definition}
Note that each judgement in the proof tree must type a process 
or an expression as a single type variable. 
In each rule we combine all the constraints $\C$ from the premises,
and add all fresh variables to $\mathcal{X}$.
\iffull{\ Explanations of the rules for expressions are found in
  Appendix~\ref{app:constraint_rules}}\else{\ The rules for expressions are
  standard; see \cite{UY2024}}\fi.
Rule \RULE{C-End} states that a nil must be typed as a supertype of $\tend$.
Rules \RULE{C-Var} and \RULE{C-Rec} are used for recursions: the type of a recursive process must be a subtype of the type of its corresponding 
recursion variable.
In rule \RULE{C-In}, we assume that $x$ has sort $\alpha$ and so the input must have a payload of $\alpha$.
In rule \RULE{C-Out}, 
the expression must have the same sort as the payload of the type.
Rules \RULE{C-Bra} and \RULE{C-Sel} ensure that the type variable has valid types with respect to each branch.
Rule \RULE{C-Cond} verifies that the expression is of sort $\tbool$ and the two sides of the conditional statement have the same type.



\iffull{\begin{figure}
{\small
\fbox{Expressions $\Gamma \vdash \e : \alpha \mid_\cX \C$}\raggedright
\begin{center}
\begin{spacing}{2}
  \begin{prooftree}
     \hypo {x: \alpha \in \Gamma}
     \infer1[\RULE{C-SortVar}] {\Gamma \vdash x: \alpha \mid_{\emptyset} \emptyset}
   \end{prooftree}
   \begin{prooftree}
     \hypo {\alpha \text{ fresh}}
     \infer1[\RULE{C-True}] {\Gamma \vdash \true: \alpha \mid_{\{\alpha\}} \{\alpha = \tbool\}}
   \end{prooftree}
   \begin{prooftree}
     \hypo {\alpha \text{ fresh}}
     \infer1[\RULE{C-False}] {\Gamma \vdash \false: \alpha \mid_{\{\alpha\}} \{\alpha = \tbool\}}
   \end{prooftree}
   
   \vspace*{0.7em}
   
   \begin{prooftree}
     \hypo {\alpha \text{ fresh}}
    \infer1[\RULE{C-Nat}] {\Gamma \vdash \valn: \alpha \mid_{\{\alpha\}} \{\alpha = \tnat\}}
   \end{prooftree}
   \quad
   \begin{prooftree}
     \hypo {\alpha \text{ fresh}}
     \infer1[\RULE{C-Int}] {\Gamma \vdash \valr: \alpha \mid_{\{\alpha\}} \{\alpha = \tint\}}
   \end{prooftree}

   \vspace*{0.7em}

    \begin{prooftree}
      \hypo {\Gamma \vdash \e: \alpha \mid_{\cX} \C}
      \hypo {\C' = \C \cup \{\alpha = \tint\}}
      \infer2[\RULE{C-Neg}] {\Gamma \vdash \fsqrt\e: \alpha \mid_{\cX} \C'}
 \end{prooftree}
  \quad
    \begin{prooftree}
     \hypo {\Gamma \vdash \e: \alpha \mid_\cX \C}
      \hypo {\C' = \C \cup \{\alpha = \tbool\}}
      \infer2[\RULE{C-Not}] {\Gamma \vdash \neg \e: \alpha \mid_{\cX} \C'}
    \end{prooftree}

    \vspace*{0.7em}

    \begin{prooftree}
        \hypo {\Gamma \vdash \e_1: \alpha_1 \mid_{\cX_1} \C_1}
        \hypo {\Gamma \vdash \e_2: \alpha_2 \mid_{\cX_2} \C_2}
        \hypo {\beta \text{ fresh}}
        \hypo {\C' = \C_1 \cup \C_2 \cup
          \{\alpha_1 = \tbool, \alpha_2 = \tbool, \beta = \tbool\}}
      \infer4[\RULE{C-Or}] {\Gamma \vdash \e_1 \vee \e_2: \beta \mid_{\cX_1 \dcup \cX_2 \dcup \{\beta\}} \C'}
    \end{prooftree}

    \vspace*{0.7em}

    \begin{prooftree}
     \hypo {\Gamma \vdash \e_1: \alpha_1 \mid_{\cX_1} \C_1}
     \hypo {\Gamma \vdash \e_2: \alpha_2 \mid_{\cX_2} \C_2}
     \hypo {\beta \text{ fresh}}
     \hypo {\C' = \C_1 \cup \C_2 \cup
       \{\alpha_1 = \beta, \alpha_2 = \beta\}}
     \infer4[\RULE{C-Nondet}] {\Gamma \vdash \e_1 \oplus \e_2: \beta \mid_{\cX_1 \dcup \cX_2 \dcup \{\beta\}} \C'}
 \end{prooftree}

    \vspace*{0.7em}

       \begin{prooftree}
     \hypo {\Gamma \vdash \e_1: \alpha_1 \mid_{\cX_1} \C_1}
      \hypo {\Gamma \vdash \e_2: \alpha_2 \mid_{\cX_2} \C_2}
      \hypo {\beta \text{ fresh}}
     \hypo {\C' = \C_1 \cup \C_2 \cup \{\alpha_1 = \beta, \alpha_2 = \beta, \beta = \tint\}}
      \infer4[\RULE{C-Add}] {\Gamma \vdash \e_1 + \e_2: \beta \mid_{\cX_1 \dcup \cX_2 \dcup \{\beta\}} \C'}
    \end{prooftree}
 \end{spacing}
\end{center}
\vspace*{-2em}
\fbox{Processes $\Gamma \vdash \PP : \psi \mid_\cX \C$}\raggedright
\vspace*{0.2em}
\begin{center}
\begin{spacing}{2}
\newcommand{\treesep}{.5em}
  \begin{prooftree}
    \hypo {\xi \text{ fresh}}
    \infer1[\RULE{C-End}] {\Gamma \vdash \inact: \xi \mid_{\{\xi\}} \{\tend \subt \xi\}}
  \end{prooftree}
\quad
\begin{prooftree}
    \hypo {X: \psi \in \Gamma}
    \hypo {\xi \text{ fresh}}
    \infer2[\RULE{C-Var}] {\Gamma \vdash X: \xi \mid_{\{\xi\}} \{\psi \subt \xi\}}
  \end{prooftree}
\quad
  \begin{prooftree}
    \hypo {\Gamma, X: \xi \vdash \PP: \xi \mid_\cX \C}
    \infer1[\RULE{C-Rec}] {\Gamma \vdash \mu X. \PP: \xi \mid_\cX \C}
  \end{prooftree}

\vspace*{3mm}

  \begin{prooftree}
    \hypo {\Gamma, x: \alpha \vdash \PP: \psi \mid_\cX \C}
    \hypo {\alpha \text{ fresh}}
    \hypo {\xi \text{ fresh}}
    \hypo {\C' = \C \cup \{\tin \pp{\alpha} \psi \subt \xi\}}
    \infer4[\RULE{C-In}] {\Gamma \vdash \procin \pp{x}\PP: \xi \mid_{\cX \dcup \{\xi, \alpha\}} \C'}
  \end{prooftree}

  \vspace*{0.7em}

  \begin{prooftree}
    \hypo {\Gamma \vdash \PP: \psi \mid_\cX \C_1}
    \hypo {\Gamma \vdash \e: \alpha \mid_{\cX'} \C_2}
    \hypo {\xi \text{ fresh}}
    \hypo {\C' = \C_1 \cup \C_2 \cup \{\tout \pp{\alpha} \psi \subt \xi\}}
    \infer4[\RULE{C-Out}] {\Gamma \vdash \procout \pp{\e}\PP: \xi \mid_{\cX \dcup \cX' \dcup \{\xi\}} \C'}
  \end{prooftree}

  \vspace*{0.7em}

  \begin{prooftree}
      \hypo {\forall i \in \I.\: \Gamma \vdash \PP_i: \psi_i \mid_{\cX_i} \C_i}
      \hypo {\xi \text{ fresh}}
      \hypo {\C' = \bigcup_{i \in \I} \C_i \cup \{\tbra \pp {l_i: \psi_i}_{i \in \I} \subt \xi\}}
      \hypo {\cX' = \dbigcup_{i \in \I} \cX_i \dcup \{\xi\}}
    \infer4[\RULE{C-Bra}]{\Gamma \vdash \procbra \pp{l_i: \PP_i}_{i \in \I}: \xi \mid_{\cX'} \C'}
  \end{prooftree}

  \vspace*{0.7em}

  \begin{prooftree}
      \hypo {\Gamma \vdash \PP: \psi \mid_{\cX} \C}
      \hypo {\xi \text{ fresh}}
      \hypo {\C' = \C \cup \{\tsel \pp {l: \psi} \subt \xi\}}
      \hypo {\cX' = \cX \dcup \{\xi\}}
    \infer4[\RULE{C-Sel}] {\Gamma \vdash \procsel \pp{l} \PP : \xi \mid_{\cX'} \C'}
  \end{prooftree}

\vspace*{0.7em}

  \begin{prooftree}
      \hypo {\Gamma \vdash \PP_i: \psi_i \mid_{\cX_i} \C_i}
      \hypo {\xi \text{ fresh}}
      \hypo {\Gamma \vdash \e: \alpha \mid_{\cX_3} \C_3}
      \hypo {\C' = \C_1 \cup \C_2 \cup \C_3 \cup \{\psi_1 \subt \xi, \psi_2 \subt \xi, \alpha = \tbool\}}
    \infer4[\RULE{C-Cond}] 
 {\Gamma \vdash \cond{\e}{\PP_1}{\PP_2}: \xi \mid_{\cX_1 \dcup \cX_2 \dcup \cX_3 \dcup \{\xi\}} \C'}
\end{prooftree}
\end{spacing}
\vspace{-7mm}
\end{center}
}
\caption{Constraint inference rules for expressions and processes}
\label{fig:mininfer}
\end{figure}
}\else{\input{fig_inference_selected}}\fi

\begin{example}[Constraint inference of a conditional]\label{ex:constraint_derivation_of_process_1}
Let $\PP = 
\cond{\true}{\procbra{\pp}{l_1:(\procsel{\pq}{l_2}{\inact}),\ l_3:\inact}}
{\procbra{\pp}{l_1:(\procsel{\pq}{l_4}{\inact}),\ l_5:\inact}}$. 
This process should have a type 
$\T = \tbra{\pp}{l_1: \tsel{\pq}{l_2: \tend, l_4: \tend}}$.

Let $\C_1' = \{\tend \subt \xi_1, \tsel{\pq}{l_2: \xi_1} \subt \xi_2\}$,
$\C_1 = \C_1' \cup \{\tend \subt \xi_3, \tbra{\pp}{l_1: \xi_2, l_3: \xi_3} \subt \xi_4\}$,
$\C_2' = \{\tend \subt \xi_5, \tsel{\pq}{l_4: \xi_5} \subt \xi_6\}$,
$\C_2 = \C_2' \cup \{\tend \subt \xi_7, \tbra{\pp}{l_1: \xi_6, l_5: \xi_7} \subt \xi_8\}$.
\vspace*{2mm}
  \begin{center}
    \small
    \newcounter{constraintderivationinline}
    \renewcommand{\theconstraintderivationinline}{\alph{constraintderivationinline}}
    \newcommand{\constraintderivationinline}[1]{(\refstepcounter{constraintderivationinline}\label{#1}\theconstraintderivationinline)}
    {\normalsize$\constraintderivationinline{ex:constraint_derivation_of_process_1:num1} =$}
    \begin{prooftree}
      \infer0[\RULE{C-End}]{\vdash \inact: \xi_1 \mid_{\{\xi_1\}} \{\tend \subt \xi_1\}}
      \infer1[\RULE{C-Sel}]{\vdash \procsel{\pq}{l_2: \inact}: \xi_2 \mid_{\{\xi_1, \xi_2\}} \C_1'}
      \infer0[\RULE{C-End}]{\vdash \inact: \xi_3 \mid_{\{\xi_1, \xi_2, \xi_3\}} \{\tend \subt \xi_3\}}
      \infer2[\RULE{C-Bra}]{\vdash \procbra{\pp}{l_1: \procsel{\pq}{l_2: \inact}, l_3: \inact}: \xi_4 \mid_{\{\xi_1, \xi_2, \xi_3, \xi_4\}} \C_1}
    \end{prooftree}

    \vspace{3mm}

    {\normalsize$\constraintderivationinline{ex:constraint_derivation_of_process_1:num2} =$}
    \begin{prooftree}
      \infer0[\RULE{C-End}]{\vdash \inact: \xi_5 \mid_{\{\xi_5\}} \{\tend \subt \xi_1\}}
      \infer1[\RULE{C-Sel}]{\vdash \procsel{\pq}{l_4: \inact}: \xi_6 \mid_{\{\xi_5, \xi_6\}} \C_2'}
      \infer0[\RULE{C-End}]{\vdash \inact: \xi_7 \mid_{\{\xi_5, \xi_6, \xi_7\}} \{\tend \subt \xi_3\}}
      \infer2[\RULE{C-Bra}]{\vdash \procbra{\pp}{l_1: \procsel{\pq}{l_4: \inact}, l_5: \inact}: \xi_8 \mid_{\{\xi_5, \xi_6, \xi_7, \xi_8\}} \C_2}
    \end{prooftree}
    
    \vspace{3mm}

    \begin{prooftree}
      \hypo{(\ref{ex:constraint_derivation_of_process_1:num1})}
      \hypo{(\ref{ex:constraint_derivation_of_process_1:num2})}
      \infer0[\RULE{C-True}]{\vdash \true: \alpha \mid_{\{\alpha\}} \{\tbool = \alpha\}}
      \infer3[\RULE{C-Cond}]{\vdash \PP: \xi \mid_{\{\xi_1, \xi_2, \xi_3, \xi_4, \xi_5, \xi_6, \xi_7, \xi_8, \alpha, \xi\}} \C_1 \cup \C_2 \cup \{\tbool = \alpha, \xi_4 \subt \xi, \xi_8 \subt \xi\}}
    \end{prooftree}

  \end{center}

\vspace*{1mm}

\noindent The last derived constraint set is $\C = \{
  \tend \subt \xi_1, \tsel{\pq}{l_2: \xi_1} \subt \xi_2,
  \tend \subt \xi_3, \tbra{\pp}{l_1: \xi_2, l_3: \xi_3} \subt \xi_4,
  \tend \subt \xi_5, \tsel{\pq}{l_4: \xi_5} \subt \xi_6,
  \tend \subt \xi_7, \tbra{\pp}{l_1: \xi_6, l_5: \xi_7} \subt \xi_8, 
  \tbool = \alpha, \xi_4 \subt \xi, \xi_8 \subt \xi
\}$.
Then $(\sigma, \pi)$ is a solution to $\C$ with
$\sigma = 
\{\xi \mapsto \T,
\xi_1 \mapsto \tend,
\xi_2 \mapsto \tsel{\pq}{l_2: \tend},
\xi_3 \mapsto \tend,
\xi_4 \mapsto \T,
\xi_5 \mapsto \tend,
\xi_6 \mapsto \tsel{\pq}{l_4: \tend},
\xi_7 \mapsto \tend,
\xi_8 \mapsto \T\}$
and $\pi = \{\alpha \mapsto \tbool\}$. 
Hence $\xi\pi\sigma = \T$, so by soundness (will be proven in 
Theorem~\ref{thm:soundness_of_constraints}), we derive $\vdash \PP : \T$.
\end{example}


\begin{example}[Constraint derivation of a process (2)]\label{ex:constraint_derivation_of_process_2}
Let $\PP = \mu X. \procin{\pp}{x}\procout{\pp}{x}X$. Then the constraint derivation of $\PP$ is the following:

  \begin{center}
    \small
    \begin{prooftree}[separation=1em]
      \infer0[\RULE{C-Var}]{X: \xi, x: \alpha \vdash X: \xi_1 \mid_{\{\xi_1\}} \{\xi \subt \xi_1\}}
      \infer0[\RULE{C-SortVar}]{X: \xi, x: \alpha \vdash x: \alpha \mid_\emptyset \emptyset}
      \infer2[\RULE{C-Out}]{X: \xi, x: \alpha \vdash \procout{\pp}{x}X: \xi_2 \mid_{\{\xi_1, \xi_2\}} \{\xi \subt \xi_1, \tout{\pp}{\alpha}\xi_1 \subt \xi_2\}}
      \infer1[\RULE{C-In}]{X: \xi \vdash \procin{\pp}{x}\procout{\pp}{x}X: \xi \mid_{\{\xi_1, \xi_2, \xi_3, \alpha\}} \C}
      \infer1[\RULE{C-Rec}]{\vdash \PP: \xi \mid_{\{\xi, \xi_1, \xi_2, \xi_3, \alpha\}} \C}
    \end{prooftree}
  \end{center}

\vspace*{1mm}  
  
\noindent where the generated set of constraints $\C = \{\xi \subt \xi_1, \tout{\pp}{\alpha}\xi_1 \subt \xi_2, \tin{\pp}{\alpha}\xi_2 \subt \xi\}$.
  $(\sigma, \pi)$ is a solution to $\C$ where $\sigma = \{\xi \mapsto
\mu\ty.\tin{\pp}{\alpha}\tout{\pp}{\alpha}\ty, \xi_1 \mapsto
\mu\ty.\tout{\pp}{\alpha}\tin{\pp}{\alpha}\ty, ,\xi_2 \mapsto
\mu\ty.\tin{\pp}{\alpha}\tout{\pp}{\alpha}\ty\}$ and $\pi = \{\alpha \mapsto \tint, \beta \mapsto \tint\}$.
  Soundness implies that $\vdash \PP : \T$ where $\T = \xi\pi\sigma = \mu\ty.\tin{\pp}{\tint}\tout{\pp}{\tint}\ty$; however, note that $\T$ is not the only type of $\PP$ (e.g.~$\mu\ty.\tin{\pp}{\tbool}\tin{\pp}{\tbool}\ty$ is another type of $\PP$).
\end{example}

\begin{definition}[Solutions]\label{def:solution}
Given that $\C$ is a constraint set, we say that ($\sigma, \pi)$ is a solution to $\C$ if for all $\xi \subt \psi \in \C$, $\xi\sigma\pi \subt \psi\sigma\pi$, and for all $\alpha = \beta \in \C$, $\alpha\pi = \beta\pi$.
\end{definition}

\begin{restatable}[Uniqueness of constraints]{lemma}{uniqueconstraintset}
  \label{thm:unique_constraint_set}
\proofreference{Appendix~\ref{app:subsec:soundcomplete}}
Given a process $\PP$, there is a unique constraint set $\C$ (up to renaming) such that $\Gamma \vdash \PP: \xi \mid_\cX \C$ for some $\Gamma, \xi, \cX$. That is, if $\C$ and $\C'$ are two such sets, then there exists one-to-one $\sigma: \Sv \rightarrow \Sv$, $\pi: \Tv \rightarrow \Tv$ such that $\C\sigma\pi = \C'$.
\end{restatable}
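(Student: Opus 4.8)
The plan is to prove the statement by structural induction on $\PP$, exploiting the fact that the constraint inference system of Figure~\ref{fig:mininfer} is \emph{syntax-directed}: for every syntactic form of expression or process there is exactly one applicable rule. Consequently the shape of a derivation of $\Gamma \vdash \PP : \xi \mid_\cX \C$ is completely determined by $\PP$ (and by the shape of $\Gamma$), and the \emph{only} freedom in building it lies in the choice of the fresh variables demanded by the side-conditions ``$\xi$ fresh'' and ``$\alpha$ fresh''. The one-to-one maps $\sigma: \Sv \rightarrow \Sv$ and $\pi: \Tv \rightarrow \Tv$ in the statement will be exactly the bijection reconciling the two independent choices of fresh names in $\C$ and $\C'$.

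Because the rules \RULE{C-In} and \RULE{C-Rec} extend the context (to $\Gamma, x{:}\alpha$ and $\Gamma, X{:}\xi$), I would not prove the bare statement directly but a strengthened version suitable for induction: for any one-to-one renaming pair $(\sigma_0, \pi_0)$, if $\Gamma \vdash \PP : \xi \mid_\cX \C$ and $\Gamma\sigma_0\pi_0 \vdash \PP : \xi' \mid_{\cX'} \C'$, then there is a one-to-one pair $(\sigma, \pi)$ agreeing with $(\sigma_0, \pi_0)$ on the variables occurring in $\Gamma$ such that $\xi\pi = \xi'$, $\cX\sigma\pi = \cX'$ and $\C\sigma\pi = \C'$. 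The original lemma is the special case where the two root contexts are related by a one-to-one map on the variables assigned to the free program- and recursion-variables of $\PP$ (here $\sigma_0$ acts on the sort variables attached to program variables and $\pi_0$ on the type variables attached to recursion variables); such a map always exists because those variables are introduced freshly and hence are all distinct.

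The induction then proceeds rule by rule. In each leaf rule (\RULE{C-End}, \RULE{C-Var}, \RULE{C-SortVar}, \RULE{C-True}, etc.) the constraint set is generated by a bounded number of freshly chosen variables together with the variable read off from $\Gamma$; matching the fresh choices of the two derivations and extending by $(\sigma_0,\pi_0)$ on the context part yields the required map at once. In each internal rule the decisive structural feature is that the fresh-variable sets $\cX_i$ of distinct premises are combined by \emph{disjoint} union $\dcup$, whereas the shared context $\Gamma$ is threaded unchanged into every premise. Applying the induction hypothesis to each premise with the common renaming $(\sigma_0,\pi_0)$ produces maps that necessarily agree on the context variables and act on pairwise disjoint sets of fresh variables; disjointness lets me glue them, together with the fresh $\xi$ introduced at the current node, into a single one-to-one pair $(\sigma,\pi)$, after which one checks that the constraint newly added at that node (e.g.\ $\tin \pp{\alpha} \psi \subt \xi$ in \RULE{C-In}, or $\tbra \pp {l_i: \psi_i}_{i \in \I} \subt \xi$ in \RULE{C-Bra}) is mapped correctly. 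The case \RULE{C-Rec} needs slight care: there the type variable assigned to the recursion variable and the type variable of the body coincide, so the context extension $X{:}\xi$ and the body type $\xi$ impose a single consistent requirement $\pi(\xi)=\xi'$, which is precisely what the strengthened hypothesis (with $X$ mapped by the extended context renaming) supplies.

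The main obstacle is exactly this gluing step in the internal rules: showing that the per-premise renamings can be merged into one global one-to-one map without clash. This rests on the disjoint-union discipline in the definition of $\cX$ (every branching rule forms the fresh sets of its premises by $\dcup$) and on the fact that the context is passed identically to all premises, so that the local maps are forced to agree on the context variables and differ only on mutually disjoint fresh variables. Once this is established, verifying that each rule's added constraint is preserved and that the combined map is one-to-one is routine, and the lemma follows by taking any two derivations of $\PP$, choosing the context-matching renaming they induce, and invoking the strengthened statement.
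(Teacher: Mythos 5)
Your proposal is correct and follows essentially the same route as the paper's proof: induction on the two proof trees, using the syntax-directed nature of the rules to force the trees to have the same shape, and gluing per-premise bijections on fresh variables via the disjoint-union discipline on $\cX$. Your strengthened induction hypothesis tracking a renaming of the context $\Gamma$ is a slightly more careful formulation of what the paper's argument implicitly relies on in the \RULE{C-In}, \RULE{C-Rec} and \RULE{C-Var} cases, but it does not change the substance of the argument.
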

\begin{proof}
By inversion, there is only one rule that can be used for any $\PP$, thus the trees have the same shape.
We map the fresh variables from one tree to the corresponding 
fresh variables in the other.
We prove this by induction on the proof trees of 
$\Gamma \vdash \PP: \xi \mid_\cX \C$ 
and $\Gamma' \vdash \PP: \xi' \mid_{\cX'} \C'$.
\end{proof}

\begin{lemma}
\label{lem:possibleconstraints}  
A constraint generated from the rules in 
Definition~\ref{def:constraint_rules} is of the form of 
either $\tend \subt \xi$, $\xi \subt \psi$,  
$\tin \pp{\beta} \xi \subt \psi$,
$\tout \pp{\beta} \xi \subt \psi$,
$\tbra \pp{l_i: \xi_i}_{i \in \I} \subt \psi$,  
$\tsel \pp{l_i: \xi_i}_{i \in \I} \subt \psi$,  
or $\S = \S'$ for $\S, \S' \in \Sort$; 
and the right-hand side type variables (e.g. $\psi$ in 
$\T \subt \psi$) in $\C$ are pairwise distinct. 
\end{lemma}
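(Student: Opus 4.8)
The plan is to proceed by mutual structural induction on the two constraint-inference derivations $\Gamma \vdash \e : \alpha \mid_\cX \C$ and $\Gamma \vdash \PP : \psi \mid_\cX \C$ of Definition~\ref{def:constraint_rules}, inspecting the last rule applied. For the first claim (the shape of each constraint) this is a direct case analysis: by Lemma~\ref{thm:unique_constraint_set} exactly one rule applies to each syntactic form, and each rule contributes to $\C$ only constraints of the advertised shapes. The expression rules (\RULE{C-SortVar}, \RULE{C-True}, through \RULE{C-Add}) add only sort equalities $\S = \S'$; the process rules add $\tend \subt \xi$ (\RULE{C-End}), a variable–variable constraint $\xi \subt \psi$ (\RULE{C-Var}), the structured constraints $\tin \pp{\beta} \xi \subt \psi$, $\tout \pp{\beta} \xi \subt \psi$, $\tbra \pp{l_i: \xi_i}_{i \in \I} \subt \psi$, $\tsel \pp{l_i: \xi_i}_{i \in \I} \subt \psi$ (rules \RULE{C-In}, \RULE{C-Out}, \RULE{C-Bra}, \RULE{C-Sel}), two variable–variable constraints together with one sort equality (\RULE{C-Cond}), and nothing new (\RULE{C-Rec}). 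Since the premises' constraints already have the required shapes by the induction hypothesis, and the conclusion's $\C$ is a union of the premises' sets with the freshly added constraint(s), every constraint in $\C$ falls into one of the listed forms.

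For the second claim I would strengthen the induction hypothesis to the invariant that every type variable occurs as the right-hand side of at most one \emph{structured} constraint of $\C$, i.e.\ a constraint whose left-hand side is not a single type variable (the forms $\tend \subt \xi$, $\tin \pp{\beta} \xi \subt \psi$, $\tout \pp{\beta} \xi \subt \psi$, $\tbra \pp{l_i: \xi_i}_{i \in \I} \subt \psi$, $\tsel \pp{l_i: \xi_i}_{i \in \I} \subt \psi$). The key observation is that each of \RULE{C-End}, \RULE{C-In}, \RULE{C-Out}, \RULE{C-Bra}, \RULE{C-Sel} introduces exactly one structured constraint, whose right-hand side is the \emph{fresh} result variable $\xi$ named in that rule's side condition, and that this fresh variable is by construction absent from the premises' variable sets. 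Because sibling premises are combined with disjoint union $\dcup$, no fresh variable is ever reused across subderivations; hence the structured constraint generated at a given node cannot share its right-hand side with any structured constraint generated elsewhere, and the invariant is preserved. The variable–variable constraints $\xi \subt \psi$ produced by \RULE{C-Var} and \RULE{C-Cond} are deliberately outside the scope of the invariant, which is necessary: \RULE{C-Cond} emits $\psi_1 \subt \xi$ and $\psi_2 \subt \xi$ with a common right-hand side, as witnessed by $\xi_4 \subt \xi$ and $\xi_8 \subt \xi$ in Example~\ref{ex:constraint_derivation_of_process_1}. Thus the pairwise-distinctness assertion is to be read for the structured right-hand sides, and this is what the invariant delivers.

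The step I expect to be the most delicate is \RULE{C-Rec}, which, unlike the other rules, does not introduce a fresh result variable but reuses the recursion variable $\xi$ already bound as $X:\xi$ in the context; consequently $\xi$ occurs both on the left of the variable–variable constraint contributed by \RULE{C-Var} and, at the same time, as the right-hand side determined by the body's outermost construct. I would argue that within the premise derivation $\Gamma, X:\xi \vdash \PP : \xi \mid_\cX \C$ the variable $\xi$ is never freshly re-introduced deeper in $\PP$ (every such introduction picks a genuinely fresh name), so $\xi$ can appear as a right-hand side only at the topmost construct of $\PP$; there it heads at most one structured constraint, while all its other occurrences are on left-hand sides via \RULE{C-Var}. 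Since \RULE{C-Rec} adds no new constraint, the invariant passes through unchanged. Having maintained the invariant across every rule, the pairwise distinctness of the structured right-hand sides follows, and the only remaining routine bookkeeping is to check that the freshness side conditions make all the disjoint unions $\dcup$ well defined.
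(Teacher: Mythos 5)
Your proof is correct, and it follows the only sensible route: the paper itself offers no proof of this lemma, treating it as immediate by inspection of the rules in Definition~\ref{def:constraint_rules}, which is exactly the case analysis you carry out for the first claim. Your treatment of the second claim is in fact more careful than the paper's statement: as you observe, \RULE{C-Cond} emits $\psi_1 \subt \xi$ and $\psi_2 \subt \xi$ with a shared right-hand side (witnessed by $\xi_4 \subt \xi$ and $\xi_8 \subt \xi$ in Example~\ref{ex:constraint_derivation_of_process_1}), so the pairwise-distinctness assertion cannot hold literally for all constraints in $\C$; restricting it to the structured constraints, as you do via the strengthened invariant on fresh result variables and disjoint unions, is the reading that the later construction of $\Gmin{\C}$ actually needs (and note that even this uniqueness is deliberately destroyed by $\tr(\C)$, whose output the graph rules accommodate by indexing multiple structured constraints over the same right-hand variable). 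Your handling of \RULE{C-Rec} — the one rule where the result variable is not fresh but is the recursion variable from $\Gamma$, so it heads the single structured constraint generated by the outermost construct of the body and otherwise occurs only on left-hand sides via \RULE{C-Var} — addresses the only genuinely delicate case.
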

\begin{restatable}[Soundness of constraints]{theorem}{soundminimum}
  \label{thm:soundness_of_constraints}
\proofreference{Appendix~\ref{app:subsec:soundcomplete}}  
If $\vdash \PP: \xi \mid_\cX \C$ is derivable and $(\sigma, \pi)$ 
is a solution to $\C$, then $\PP$ has type $\xi\sigma\pi$.
\end{restatable}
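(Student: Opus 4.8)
The plan is to prove a stronger statement by induction on the structure of the constraint derivation, generalising from the empty context to an arbitrary $\Gamma$. Concretely I would establish two mutually recursive claims at once: (i) if $\Gamma \vdash \e : \alpha \mid_\cX \C$ is derivable and $(\sigma,\pi)$ solves $\C$, then $\Gamma\sigma\pi \vdash \e : \alpha\sigma\pi$; and (ii) if $\Gamma \vdash \PP : \psi \mid_\cX \C$ is derivable and $(\sigma,\pi)$ solves $\C$, then $\Gamma\sigma\pi \vdash \PP : \psi\sigma\pi$. Here $\Gamma\sigma\pi$ is the declarative context obtained by replacing each binding $x{:}\alpha$ by $x{:}\alpha\pi$ and each binding $X{:}\xi$ by $X{:}\xi\sigma\pi$ (note $\alpha\sigma\pi = \alpha\pi$ since $\alpha$ is a sort variable). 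The theorem is the instance of (ii) with $\Gamma = \emptyset$, for which $\Gamma\sigma\pi = \emptyset$.

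Before the induction I would record two routine facts. First, substitution commutes with every type constructor, so e.g.\ $(\tin\pp{\alpha}{\psi})\sigma\pi = \tin\pp{\alpha\pi}{\psi\sigma\pi}$, $(\tbra\pp{l_i: \xi_i}_{i\in\I})\sigma\pi = \tbra\pp{l_i: \xi_i\sigma\pi}_{i\in\I}$, and similarly for output, selection and $\tend$; since $\pi: \Sv \rightarrow \Sort$ and $\sigma: \Tv \prightarrow \Type$, the resulting subjects are genuine closed types and sorts, as required by the declarative rules. Second, by Definition~\ref{def:solution}, if $(\sigma,\pi)$ solves a constraint set then it solves every subset; as each rule of Definition~\ref{def:constraint_rules} forms its conclusion's $\C$ by taking the union of the premises' constraint sets together with one new constraint, the very same $(\sigma,\pi)$ is a solution for each premise, so the induction hypothesis applies directly with no restriction of the substitutions.

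The inductive step follows a single uniform template: apply the induction hypothesis to the premises, use the matching declarative rule to derive a judgement whose subject is the substituted type \emph{constructor}, then close with \RULE{T-Sub} using the subtyping inequality guaranteed by the solution on the freshly added constraint. For \RULE{C-In}, the hypothesis gives $\Gamma\sigma\pi, x{:}\alpha\pi \vdash \PP : \psi\sigma\pi$, whence \RULE{T-In} yields $\Gamma\sigma\pi \vdash \procin\pp{x}\PP : \tin\pp{\alpha\pi}{\psi\sigma\pi} = (\tin\pp{\alpha}{\psi})\sigma\pi$, and since $(\sigma,\pi)$ solves $\tin\pp{\alpha}{\psi} \subt \xi$ we obtain $(\tin\pp{\alpha}{\psi})\sigma\pi \subt \xi\sigma\pi$, so \RULE{T-Sub} delivers $\Gamma\sigma\pi \vdash \procin\pp{x}\PP : \xi\sigma\pi$. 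The base cases \RULE{C-End} and \RULE{C-Var} instantiate the same template with \RULE{T-Inact} (resp.\ \RULE{T-Var}, using $X{:}\psi\sigma\pi \in \Gamma\sigma\pi$) followed by \RULE{T-Sub}; \RULE{C-Rec} applies \RULE{T-Rec} after the hypothesis has typed the body under $\Gamma\sigma\pi, X{:}\xi\sigma\pi$; \RULE{C-Out} additionally invokes claim (i) for the payload expression to recover the sort $\alpha\pi$; and \RULE{C-Bra}, \RULE{C-Sel}, \RULE{C-Cond} combine the branch-wise hypotheses via \RULE{T-Bra}/\RULE{T-Sel}/\RULE{T-Cond} before the concluding subsumption. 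The expression claim (i) is handled analogously with the standard expression rules, the sort-equality constraints (e.g.\ $\alpha = \tbool$ in \RULE{C-True}, $\beta = \tint$ in \RULE{C-Add}) ensuring $\alpha\pi$ takes the required concrete sort.

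The main obstacle is getting the generalisation right rather than any single calculation: the statement fixes the empty context, but \RULE{C-Rec}, \RULE{C-In} and the expression rules thread nontrivial bindings through $\Gamma$, so the induction must carry the substituted context and respect the division of labour between $\sigma$ (acting on process-variable bindings $X{:}\xi$) and $\pi$ (acting on sort-variable bindings $x{:}\alpha$). The only genuinely essential structural ingredient is the presence of the subsumption rule \RULE{T-Sub}: because the constraint system assigns a single fresh variable to each subterm and records its behaviour through a subtyping constraint, every inductive case must bridge the gap between the tight constructor type and the value of the constraint variable by one application of \RULE{T-Sub}. The remaining bookkeeping—that the freshness discipline of Definition~\ref{def:constraint_rules} keeps the introduced variables and the disjoint unions $\cX_i$ from clashing, so that extending $\Gamma\sigma\pi$ with the new bindings is well defined—is routine once the strengthened hypothesis is fixed.
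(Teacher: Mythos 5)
Your proposal is correct and follows essentially the same route as the paper's proof: the same strengthening to an arbitrary context $\Gamma$ with $\Gamma\sigma\pi$ applied to its bindings, the same mutual induction over the constraint derivations for processes and expressions, and the same case template of applying the matching declarative rule to the substituted constructor type and closing with \RULE{T-Sub} via the freshly added subtyping constraint. No gaps.
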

\begin{proof}
We show that if $\Gamma \vdash \PP: \xi \mid_\cX \C$ is derivable and $(\sigma, \pi)$ is a solution to $\C$,
  then there is a typing derivation for $\Gamma\sigma\pi \vdash \PP: \xi\sigma\pi$
  (where we define $\Gamma\sigma\pi$ to apply $\sigma\pi$ to each type in $\Gamma$).
  For sorts, we prove: if $\Gamma \vdash \e: \alpha \mid_\cX \C$
  is derivable, $\Gamma\sigma\pi \vdash \e: \alpha\sigma\pi$.
  We prove these by induction on the proof tree of the constraint derivations. 
\end{proof}

\begin{restatable}[Completeness of constraints]{theorem}{completeminimum}
  \label{thm:completeness_of_constraints}
\proofreference{Appendix~\ref{app:subsec:soundcomplete}}
Let $\vdash \PP: \T_0$ and $\vdash \PP: \xi \mid_\cX \C$. Then there exists a solution $(\sigma', \pi')$ of $\C$ such that $\xi\sigma'\pi' = \T_0$.
\end{restatable}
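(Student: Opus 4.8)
The plan is to prove a strengthened statement for \emph{open} processes and then specialise to the closed case $\Gamma=\emptyset$. I would proceed by induction on the structure of $\PP$, which is legitimate because the constraint derivation is syntax-directed and essentially unique by Lemma~\ref{thm:unique_constraint_set}. The generalised claim is: whenever $\Gamma \vdash \PP : \xi \mid_\cX \C$ and $(\sigma_0,\pi_0)$ is a substitution defined on the variables occurring in $\Gamma$ such that the realised environment $\Gamma\sigma_0\pi_0$ admits a typing $\Gamma\sigma_0\pi_0 \vdash \PP : \T_0$, then $(\sigma_0,\pi_0)$ extends to a solution $(\sigma',\pi')$ of $\C$ with $\xi\sigma'\pi' = \T_0$. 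This generalisation is essential so that the process variables and expression variables carry a fixed interpretation across the recursive calls of the induction.

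The engine of the proof is inversion of the typing judgement \emph{up to subsumption}. Since every rule other than \RULE{T-Sub} is syntax-directed, a derivation of $\Gamma^*\vdash \PP:\T_0$ (where $\Gamma^* = \Gamma\sigma_0\pi_0$) factors as the structural rule for the head constructor of $\PP$ followed by a chain of \RULE{T-Sub} steps; collapsing the chain yields the subterm typings together with one subtyping fact relating the tight constructor type to $\T_0$. For instance, if $\PP=\procbrasub\pp{l_i:\PP_i}{i\in\I}$ we recover $\Gamma^*\vdash \PP_i:\T_i$ for all $i$ and $\tbrasub\pp{l_i:\T_i}{i\in\I}\subt\T_0$, whence by Definition~\ref{def:subtyping_local_types} $\unfold{\T_0}=\tbrasub\pp{l_j:\T_j'}{j\in\J}$ with $\J\subseteq\I$ and $\T_j\subt\T_j'$. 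I then apply the induction hypothesis to each subterm under the \emph{same} realised environment, obtaining solutions of the premises' constraint sets. Because the constraint rules place the premises' fresh variables in disjoint unions ($\dcup$) and the only shared variables are those of $\Gamma$ (already pinned down by $(\sigma_0,\pi_0)$), these solutions merge into a single one. Finally I set the head variable $\xi:=\T_0$. This move is always legal for a non-recursive head: by Lemma~\ref{lem:possibleconstraints}, $\xi$ is the right-hand side of exactly one constraint (of the shape $\dots\subt\xi$), and the extracted subtyping fact is precisely what verifies that this constraint holds under $(\sigma',\pi')$. The expression judgements are discharged by a parallel, simpler induction producing the sort substitution $\pi'$, with the sort equalities $\S=\S'$ settled by the base-type reasoning in the expression rules.

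The recursion rule \RULE{C-Rec} is the main obstacle, and the only case where ``set $\xi:=\T_0$'' is not immediate. Here $X$ and the body $\PP'$ receive the \emph{same} variable $\xi$, so $\xi$ occurs not only on a right-hand side but also on the left, through the constraint $\xi\subt\psi$ generated by each use of $X$ via \RULE{C-Var}. Hence $\xi$ cannot be chosen freely: inflating it to a supertype in order to absorb the subsumption coming from \RULE{T-Rec} followed by \RULE{T-Sub} may break $\xi\subt\psi$, and this clashes with the cyclic shape of the recursive type. The plan is to treat this case by hand, exploiting that a process variable can occur only in tail (continuation) position, so that $\xi$ sits in covariant positions of every constraint it feeds; this monotonicity is what should make the cyclic constraint solvable. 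Concretely, I would read off the assignment for $\xi$ from the type graph $\GG(\T_0)$, using Lemma~\ref{thm:subtyping_as_simulation} to reason coinductively, and choose each use-variable $\psi$ as a suitable unfolding so that $\xi\subt\psi$ and the body's head constraint hold simultaneously, with the induction hypothesis supplying the solutions of the body beneath the bound occurrence of $X$.

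I expect the real difficulty to be verifying that these choices are mutually consistent \emph{around the recursion cycle}, rather than in any single branch of the case analysis: the single variable $\xi$ must simultaneously play the role of $X$'s type and of the type of the whole body, and reconciling the top-level subsumption with this identification is delicate. This is the step I would write out in the most detail, and where I would expect to need the most care in relating the subtyping witness for $\T_1\subt\T_0$ (the tight fixpoint type versus the target) to the loop of constraints produced by \RULE{C-Rec} and \RULE{C-Var}.
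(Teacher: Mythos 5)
Your overall architecture matches the paper's: you generalise to open processes with a pinned interpretation of $\Gamma$, induct on the structure, collapse the chain of \RULE{T-Sub} applications into a single subtyping fact by transitivity, assign the head variable $\xi := \T_0$, and merge the sub-solutions using disjointness of the fresh-variable sets. All of the non-recursive cases go through exactly as you describe, and this is how the paper argues them.

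The gap is the recursion case, which you correctly identify as the crux but leave as a plan rather than a proof. Your proposed route --- assigning $\xi$ by a coinductive construction over $\GG(\T_0)$ and choosing each use-site variable as a ``suitable unfolding'' --- does not resolve the tension you yourself point out: the statement forces $\xi\sigma'\pi' = \T_0$ exactly, each use of $X$ generates a constraint $\xi \subt \psi$ whose solution must be a supertype of $\T_0$, yet the given derivation only tells you that the use sites accept supertypes of the \emph{tight} fixpoint type $\T_k$ with $\T_k \subt \T_0$; from $\T_k \subt \T_u$ and $\T_k \subt \T_0$ you cannot conclude $\T_0 \subt \T_u$, so no amount of unfolding of the fixed target $\T_0$ will make the cycle close. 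The paper dissolves this with a single preprocessing step that your sketch is missing: permute the \RULE{T-Sub} applications \emph{past} \RULE{T-Rec}, i.e.\ replace the derivation of $\Gamma \vdash \mu X.\PP' : \T_k$ followed by subsumption to $\T_0$ with a derivation of $\Gamma, X{:}\T_0 \vdash \PP' : \T_0$ (re-typing the body with $X$ bound to $\T_0$ and absorbing the discrepancy at each use site of $X$ by \RULE{T-Sub} there, which is sound because $X$ occurs only in tail position). After this commutation $\T_k = \T_0$, the inductive hypothesis applies to the body with $\xi$ interpreted as $\T_0$ in both of its roles, and the cyclic constraints $\xi \subt \psi$ are discharged in the \RULE{T-Var}/\RULE{C-Var} case by setting each fresh use-site variable to whatever supertype that occurrence requires. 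Without this (or an equivalent re-derivation lemma), your coinductive construction has nothing to anchor the ``around the cycle'' consistency you flag, so the case remains open as written.
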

\begin{proof}
We show that if $\Gamma \vdash \PP: \xi \mid_\cX \C$, and $\Gamma\sigma\pi \vdash \PP: \T_0$, such that $\dom{\sigma} \cap \cX = \emptyset$ and $\dom{\pi} \cap \cX = \emptyset$,
then there exists a solution $(\sigma', \pi')$ of $\C$ such that $\xi\sigma'\pi' = \T_0$, $\sigma' \minus \cX = \sigma$, and $\pi' \minus \cX = \pi$.
We also show the analogous statement for sorts. The proof is done by
postponing the application of \RULE{T-Sub} in the derivations of
$\PP$.  
\end{proof}

\subsection{Solving the Minimum Types}
\label{subsec:minimumtypes}
This section proves that there exists an algorithm to find the 
minimum type for typable $\PP$ ((\ref{inf:three}) in
\Sec~\ref{section:minimum_typing}). The algorithm constructs
a \emph{minimum type graph} (denoted by $\Gmin{\C}$)
and \emph{sort constraints} $\Csort$ 
with respect to $\vdash \PP: \xi \mid_\cX \C$.
We then prove \emph{soundness} (\ref{inf:four:soundness}) and 
\emph{completeness} (\ref{inf:four:completeness}) of the algorithm, deriving the minimum type of $\PP$ as a type graph instantiated by the most general
solution of $\Csort$. 




\begin{definition}[Most general solutions for sorts]\label{def:most_general_pi}
  Let $\pi$ solve the sort constraints $\Csort$ if $\S_1\pi = \S_2\pi$ for all $\S_1 = \S_2 \in \Csort$.
  Then $\pi$ is a most general solution of $\Csort$ if, for all $\pi'$ solving $\Csort$,
  there exists $\pi''$ such that $\alpha\pi' = \alpha\pi\pi''$ for all $\alpha \in \Sv$.
\end{definition}

Note that, given a set of sort constraints, we may use a union-find to identify the equivalence classes of equality constraints. Then we can find the most general solution of the constraints using these equivalence classes.

\begin{definition} 
Define $\tr(\C)$ to be a \emph{minimum constraint set} 
created by the following procedure: while $\xi \subt \psi$ occurs in $\C$, 
perform the substitution $\sub{\xi}{\psi}$ on $\C$.
\end{definition}



Below Lemma~\ref{thm:remove_transitive} tells us that it is sufficient to find a solution to $\tr(\C)$, which can be used to reconstruct a solution to $\C$.

\begin{lemma}\label{thm:remove_transitive} 
For constraint set $\C$ created from Definition~\ref{def:constraint_rules} containing $\xi \subt \psi$, if $(\sigma, \pi)$ is a solution to $\C' = \C[\xi/\psi]$ then $(\sigma_{[\psi \mapsto \sigma(\xi)]}, \pi)$ is a solution to $\C$.
\end{lemma}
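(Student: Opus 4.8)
The plan is to verify the two clauses of Definition~\ref{def:solution} directly for the candidate $(\sigma', \pi)$ with $\sigma' := \sigma_{[\psi \mapsto \sigma(\xi)]}$, reducing each constraint of $\C$ to its image in $\C' = \C[\xi/\psi]$, which $(\sigma,\pi)$ already solves by hypothesis. First I would record the two structural facts that drive everything: $\sigma'$ agrees with $\sigma$ on every type variable except $\psi$, where $\sigma'(\psi) = \sigma(\xi)$; and both $\sigma$ and $\sigma'$ map into $\Type$, i.e.\ produce types closed with respect to type variables. Closedness is what prevents the single update at $\psi$ from cascading through further substitution.

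The heart of the argument is a substitution lemma, proved by a routine induction on the structure of an (extended) type $\T$: $\T\sigma' = (\T[\xi/\psi])\sigma$, and hence $\T\sigma'\pi = (\T[\xi/\psi])\sigma\pi$. The only nontrivial case is a free variable $\zeta$: if $\zeta = \psi$ then $\psi\sigma' = \sigma(\xi) = \xi\sigma = (\psi[\xi/\psi])\sigma$, and if $\zeta \neq \psi$ then $\zeta\sigma' = \sigma(\zeta) = (\zeta[\xi/\psi])\sigma$. Since these images are closed, applying $\sigma'$ afterwards does nothing, so the two substitutions coincide on each free variable and therefore on all of $\T$. As $[\xi/\psi]$ and $\pi$ act on disjoint syntactic categories (type variables versus sort variables), the trailing factor $\pi$ commutes through unchanged.

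With this lemma in hand I would dispatch the two kinds of constraints enumerated in Lemma~\ref{lem:possibleconstraints}. For a subtyping constraint $\T_1 \subt \T_2 \in \C$ (including the distinguished $\xi \subt \psi$, whose image is the reflexive $\xi \subt \xi \in \C'$), the substitution lemma gives $\T_i\sigma'\pi = (\T_i[\xi/\psi])\sigma\pi$ for $i \in \{1,2\}$; the image $(\T_1[\xi/\psi]) \subt (\T_2[\xi/\psi])$ lies in $\C'$ and is solved by $(\sigma,\pi)$, whence $\T_1\sigma'\pi \subt \T_2\sigma'\pi$. For a sort equation $\S_1 = \S_2 \in \C$, the substitution $[\xi/\psi]$ leaves it untouched (it mentions only sort variables), so it also lies in $\C'$ and $\S_1\pi = \S_2\pi$ holds by hypothesis, with the sort component $\pi$ shared between the two solutions. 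This establishes that $(\sigma', \pi)$ solves $\C$.

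I expect the only delicate point to be the substitution lemma of the second paragraph — specifically, making the $\zeta = \psi$ case precise and invoking closedness of the images of $\sigma$ to rule out any residual dependence on $\psi$ after the update. Everything else is a bookkeeping pass over the constraint shapes, and the reflexive collapse of the distinguished $\xi \subt \psi$ into $\xi \subt \xi$ under $[\xi/\psi]$ is exactly what makes that constraint harmless.
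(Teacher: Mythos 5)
Your proof is correct and follows essentially the same route as the paper, which simply says ``by considering each rule in $\C$, and using transitivity of subtyping'': both arguments reduce to a case analysis over the constraint shapes of Lemma~\ref{lem:possibleconstraints}, and your explicit substitution lemma $\T\sigma' = (\T[\xi/\psi])\sigma$ is just the clean way of organising that bookkeeping. The only cosmetic difference is that with $\psi$ mapped \emph{exactly} to $\sigma(\xi)$ reflexivity of $\subt$ already discharges the distinguished constraint $\xi \subt \psi$, so the transitivity the paper invokes is never actually needed in your version.
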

\begin{proof}
  By considering each rule in $\C$, and using transitivity of subtyping.
\end{proof}

We now define the algorithm to generate a pair of 
the minimum type graph $\Gmin{\C}$ 
and sort constraint $\Csort$ from $\vdash \PP: \xi \mid_\cX \C$. 
Intuitively, each node in the minimum type graph corresponds to the set of 
\emph{expected behaviours} of the process.

\begin{definition}[Minimum type graph]\label{def:minimum_type_graph_of_process}
Let $\set{\xi_i}_{i\in I} \in \powerset(\Tv)$ denote 
a subset of type variables. 
\emph{The minimum type graph} (denoted by $\Gmin{\C}$) 
with \emph{the constraint} $\Csort$ of 
$\vdash \PP: \xi \mid_\cX \C'$ with $\C = \tr(\C')$
is a directed connected graph where  
(1) a node 
is a subset of type variables in $\C$ or $\Skip$; and 
(2) the edges are given 
by the transition $\trans{\gell}$ defined below.

A sort constraint is generated from $\set{\xi}$ 
starting from the initial constraint 
$\Csort=\set{S_1=S_2 \ \SEP \ S_1=S_2\in \C}$. 
New constratins will be added to $\Csort$ when rule
\RULE{M-IO} is applied.  

The transition relation 
$\trans{\gell}$ which generates 
a type graph and the sort constraint $\Csort$ is defined below 
(Lemma~\ref{lem:possibleconstraints} constrains the possible
elements of $\C$).




\begin{itemize}[leftmargin=*]
\item \RULE{M-End} 
If $\tend\subt \xi_i \in \C$
for all $\xi_i\in \set{\xi_i}_{i\in I}$, 
then ${\set{\xi_i}_{i\in I}} \trans{\trend} \Skip$. 

\item \RULE{M-IO} 
If $\tdag\pp {\beta_{i}} \xi_{i}\subt \psi_k \in \C$ such that 
$\dagger\in \{ !,?\}$ for all $\psi_k\in \set{\psi_k}_{k\in K}$ with $i\in I$, 
then ${\set{\psi_k}_{k\in K}}\trans{\trdag{\pp}{\alpha}}
{\set{\xi_{i}}_{i\in I}}$; 
and update $\Csort$ to $\Csort\cup \set{\alpha=\beta_i}_{i\in I}$.   

\item \RULE{M-Sel} 
If $\tsel{\pp}{l_j:{\xi_{ij}}}_{j\in \J_{i}, i\in I} \subt \psi_{k} \in \C$
for all $\psi_{k} \in 
\set{\psi_{k}}_{k\in K}$,  
then $\egnode{\set{\psi_{k}}_{k\in K}}
\trans{\trsel{\pp}{l_j}}
\egnode{\set{\xi_{ij}}_{j \in J_i, i\in I}}$ for each 
$j\in \bigcup_{i\in I} J_i$. 

\item \RULE{M-Bra}
If $\tbra{\pp}{l_j:{\xi_{ij}}}_{j\in \J_{i},i\in I} \subt \psi_{k} \in \C$
for all $\psi_{k} \in \set{\psi_{k}}_{k\in K}$,  
then $\egnode{\set{\psi_{k}}_{k\in K}}\trans{\trbra{\pp}{l_{j}}}
\egnode{\set{\xi_{ij}}_{j\in J_i, i\in I}}$ for each 
$j\in \bigcap_{i\in I} J_i$. 
\end{itemize}

\begin{figure}[t]
\footnotesize
\begin{tabular}{cc}
\begin{tabular}{c}
\begin{tikzpicture}[>=Latex, node distance=2cm]
    \node (n1) [smallrect] {$\{\xi\}$};
    \node (n2) [smallrect, right of=n1] {$\{\xi_2, \xi_6\}$};
    \node (n3) [smallrect, right of=n2, yshift=.4cm] {$\{\xi_1\}$};
    \node (n4) [smallrect, right of=n2, yshift=-.4cm] {$\{\xi_5\}$};
    \node (n5) [smallrect, right of=n2, xshift=2cm] {$\kf{Skip}$};

    \draw[arrow] (n1) to node[auto] {$\trbra{\pp}{l_1}$} (n2);
    \draw[arrow] (n2) to node[auto, xshift=3mm] {$\trsel{\pp}{l_2}$} (n3);
    \draw[arrow] (n2) to node[auto, swap, xshift=3mm] {$\trsel{\pp}{l_4}$} (n4);
    \draw[arrow] (n3) to node[auto, xshift=-3mm] {$\trend$} (n5);
    \draw[arrow] (n4) to node[auto, swap, xshift=-3mm] {$\trend$} (n5);
  \end{tikzpicture}\\
(a)
\end{tabular}
& 
\begin{tabular}{c}
    \begin{tikzpicture}[>=Latex, node distance=3cm]
      \node (n1) [smallrect] {$\{\xi\}$};
      \node (n2) [smallrect, right of=n1] {$\{\xi_2\}$};

      \draw[arrow] (n1) to[out=20, in=160] node[auto] {$\trin{\pp}{\alpha_1}$} (n2);
      \draw[arrow] (n2) to[out=-160, in=-20] node[auto, swap] {$\trout{\pp}{\alpha_2}$} (n1);
    \end{tikzpicture}\\[2mm]
(b)\\[2mm]
\end{tabular}
\end{tabular}
\vspace{-5mm}
\caption{Minimum type graphs 
of Example~\ref{ex:building_minimum_type_graph_1} (a) 
and Example~\ref{ex:building_minimum_type_graph_2} (b) 
}
\label{fig:min_type_graph_1}
\end{figure}
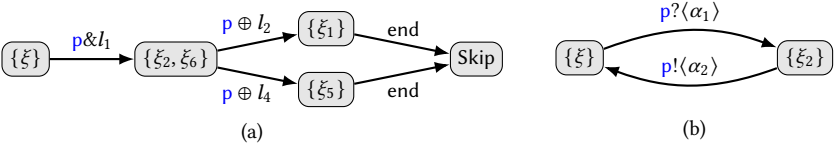

\noindent 
The type graph for $\set{\xi_i}_{i\in I}$, denoted by 
$\Gmin{\C}(\set{\xi_i}_{i\in I})$ is the graph reachable by the transitions from 
$\set{\xi_i}_{i\in I}$.  
If the generated type graph is invalid (i.e. not a type graph 
of any local type),
it is \emph{undefined}. 
\end{definition}
We explain the transition rules in Definition~\ref{def:minimum_type_graph_of_process}.
Intuitively, $\{\xi_i\}_{i \in \I}$ is a node that implements the behaviours of $\xi_i$ for all $i \in \I$. 
Rule \RULE{M-End} means a node whose type variables are all supertypes of $\tend$ can be typed by $\tend$;
rule \RULE{M-IO} defines  
a transit with an input or output action with a fresh sort variable $\alpha$. 
Sort constraints $\alpha=\beta_i$ are added to $\Csort$ to enforce that the payloads have the same types;
rule \RULE{M-Sel} adds the edge $\trsel{\pp}{l_j}$ 
for each $j$ in the \emph{union} of $J_i$. This is because  
a supertype of the selection type has \emph{more} choices so that all possible 
edges should be generated; and rule \RULE{M-Bra} states its 
dual--edge $\trbra{\pp}{l_j}$ for each $j$ 
in the \emph{intersection} of $J_i$ should be generated.

\begin{example}[Minimum type graph (1)]\label{ex:building_minimum_type_graph_1}
  We continue Example~\ref{ex:constraint_derivation_of_process_1}. We have (up to renaming of type variables)
  $\tr(\C) = \{
    \tend \subt \xi_1, \tsel{\pq}{l_2: \xi_1} \subt \xi_2,
    \tend \subt \xi_3, \tbra{\pp}{l_1: \xi_2, l_3: \xi_3} \subt \xi,
    \tend \subt \xi_5, \tsel{\pq}{l_4: \xi_5} \subt \xi_6,
    \tend \subt \xi_7, \tbra{\pp}{l_1: \xi_6, l_5: \xi_7} \subt \xi,
    \tbool = \alpha
  \}$.
  Then the minimum type graph $\Gmin{\tr(\C)}$ is given in Figure~\ref{fig:min_type_graph_1}.
 This graph represents that after branching from the label $l_1$,
the process can be in either side of the conditional statement, thus it must respect the behaviours of both.
  This is represented by the node being a two element-set, with one element corresponding to each selection. 
After selecting the labels $l_2$ or $l_4$, this nondeterminism is resolved, as each side of the conditional statement has a disjoint set of branches. Thus we have two separate nodes with one single element, which can now have different behaviours.

  Formally, we have that the sort constraints are $\Csort = \{\alpha = \tbool\}$.
  A most general solution is $\pi = \{\alpha \mapsto \tbool\}$, so the minimum type is the type corresponding to $\Gmin{\tr(\C)}$, with $\pi$ applied to each transition: 
$\Tmin = \tbra{\pp}{l_1: \tsel{\pq}{l_2: \tend, l_4: \tend}}$.
\end{example}

\begin{example}[Minimum type graph (2)]\label{ex:building_minimum_type_graph_2}
We continue Example~\ref{ex:constraint_derivation_of_process_2}. We have (up to renaming of type variables) $\tr(\C) = \{
   \tout{\pp}{\alpha}\xi \subt \xi_2,
   \tin{\pp}{\alpha}\xi_2 \subt \xi
 \}$.
Its minimum type graph is given in Figure~\ref{fig:min_type_graph_1}(b). 
The sort constraints are $\Csort = \{
    \alpha = \alpha,
    \alpha = \alpha_1,
    \alpha = \alpha_2
\}$.
The most general solution is $\pi = \{\alpha \mapsto \alpha, \alpha_1 \mapsto \alpha, \alpha_2 \mapsto \alpha\}$. 
Thus the minimum type is the type corresponding to the graph 
$\Gmin{\tr(\C)}$ with $\pi$ applied to the transitions: $\Tmin = 
\mu\ty.\trin{\pp}{\alpha};\trout{\pp}{\alpha};\ty$.
\end{example}
Example~\ref{ex:untypable_process} shows how the system works 
for the case of an untypable process. 
\begin{example}[Untypable process]
\label{ex:untypable_process}
Let $\PP = \procsel{\pp}{l}\cond{\false}{\procout{\pp}{1}\inact}{\procin{\pp}{x}\inact}$. This process is not typable because the the 
branches of the conditional statement are not instances of a common type. 
The constraints are $\tr(\C) = \{
     \alpha_1 = \tbool,
     \alpha_2 = \tint,
     \tout{\pp}{\alpha_2}\xi_1 \subt \xi_3,
     \tin{\pp}{\alpha_3}\xi_2 \subt \xi_3,
     \tsel{\pp}{l: \xi_3} \subt \xi
   \}$,
with $\vdash \PP: \xi \mid_\cX \C$.
If we try to build a minimum type graph, 
it fails at
{\tiny
   \begin{tikzpicture}[>=Latex, node distance=1.5cm]
     \node (n1) [smallrect] {$\{\xi\}$};
     \node (n2) [smallrect, right of=n1] {$\{\xi_3\}$};
     \draw[arrow] (n1) to node[auto] {$\trsel{\pp}{l}$} (n2);
     \end{tikzpicture}
}
since $\{\xi_3\}$ does not meet the condition of \RULE{M-IO} in  
Definition~\ref{def:minimum_type_graph_of_process}, hence 
the minimum type graph is undefined. Therefore no minimum type exists.
\end{example}

The following lemma states that set inclusion of type variables 
in nodes of the minimum type graph gives rise to the subtyping relation 
of types. 

\begin{restatable}[Subsets as subtyping]{lemma}{subsetsubtyping}
  \label{def:subsets_are_subtypes}
Let $\Gmin{\C}$ be a minimum type graph and $\pi$ be a solution of $\Csort$ generated with $\Gmin{\C}$.   
Suppose $\set{\xi_k}_{k\in K}$ is a node in $\Gmin{\C}$. 
Then for all non-empty $I,J$ 
such that $I\subseteq J\subseteq K$, 
$\Gmin{\C}(\set{\xi_i}_{i\in I})\pi\subtsim 
\Gmin{\C}(\set{\xi_j}_{j\in J})\pi$. I.e.,   
$\T_{1}\pi \subt \T_{2}\pi$ where $\T_{I}$ and 
$\T_{2}$ correspond to types of the type graphs 
$\Gmin{\C}(\set{\xi_i}_{i\in I})$ and 
$\Gmin{\C}(\set{\xi_j}_{j\in J})$, respectively. 
\end{restatable}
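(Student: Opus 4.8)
The plan is to prove the statement through the simulation characterisation of subtyping: by Lemma~\ref{thm:subtyping_as_simulation} it suffices to exhibit a type simulation $\cR$ (Definition~\ref{def:type_simulation}) relating $\Gmin{\C}(\set{\xi_i}_{i\in I})\pi$ and $\Gmin{\C}(\set{\xi_j}_{j\in J})\pi$. I would take
\[
\cR = \set{(\Gmin{\C}(A)\pi,\ \Gmin{\C}(B)\pi) \mid \emptyset \neq A \subseteq B \subseteq K_0 \text{ for some node } \set{\xi_k}_{k\in K_0} \text{ of } \Gmin{\C}},
\]
which contains the required pair by instantiating $A = I$, $B = J$, $K_0 = K$. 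The heart of the argument is to show $\cR$ is closed under the simulation clauses, maintaining the invariant that $A$ and $B$ remain nonempty subsets of a common node along every transition.

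The key structural fact I would establish first is \emph{node coherence}: since $K_0$ is a node of the defined graph $\Gmin{\C}$ and the constraints have the restricted shape given by Lemma~\ref{lem:possibleconstraints} (each retained type variable is the right-hand side of at most one constraint), exactly one of the rules \RULE{M-End}, \RULE{M-IO}, \RULE{M-Sel}, \RULE{M-Bra} governs the transitions out of $K_0$, and all $\xi_k$ with $k \in K_0$ carry constraints of the matching form. This coherence is inherited by every subset, so the \emph{same} rule governs $A$, $B$ and $K_0$ simultaneously. I then case split on that rule. For \RULE{M-End} both nodes emit $\trend$ to $\Skip$, matching in both directions. For \RULE{M-IO} (covariant) each of $A$, $B$ transits on the same participant and direction to its set of continuation variables, which are in inclusion-preserving bijection with the index sets, so the successors satisfy $A' \subseteq B'$. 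For \RULE{M-Sel} (covariant) the emitted labels of $A$ are $\bigcup_{k \in A} \J_k \subseteq \bigcup_{k \in B} \J_k$, so every selection move of the smaller graph is matched by the larger; for \RULE{M-Bra} (contravariant) the labels of $B$ are $\bigcap_{k \in B} \J_k \subseteq \bigcap_{k \in A} \J_k$, so every branching move of the larger graph is matched by the smaller. In each case the successor sets again satisfy $A' \subseteq B'$ and sit inside the corresponding successor of $K_0$, which is itself a node, so the invariant persists and the resulting pair lies in $\cR$. Having verified every clause, I conclude $\cR$ is a type simulation, whence $\Gmin{\C}(\set{\xi_i}_{i\in I})\pi \subtsim \Gmin{\C}(\set{\xi_j}_{j\in J})\pi$, and the equivalent $\T_1\pi \subt \T_2\pi$ follows again by Lemma~\ref{thm:subtyping_as_simulation}.

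The main obstacle I anticipate is the sort-variable bookkeeping in the \RULE{M-IO} case: the graphs $\Gmin{\C}(A)$ and $\Gmin{\C}(B)$ introduce \emph{fresh} payload sort variables at corresponding nodes, yet the simulation demands the edge labels coincide exactly after applying $\pi$. Handling this cleanly requires arguing that all payload variables $\beta_k$ occurring in the \RULE{M-IO} constraints of the shared node $K_0$ are forced equal under $\pi$ (otherwise $\Csort$ would be unsatisfiable, contradicting that $\pi$ solves it), and extending $\pi$ to send each freshly generated payload variable to this common sort; then the matching input/output edges of $\Gmin{\C}(A)\pi$ and $\Gmin{\C}(B)\pi$ carry identical sorts. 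A secondary subtlety, which the invariant resolves automatically, is that $\Gmin{\C}(A)$ must itself be a valid type graph for $\subtsim$ to be meaningful: this holds because every node reachable from $A$ is a subset of a node of $\Gmin{\C}$ and hence inherits coherence, so the construction never gets stuck.
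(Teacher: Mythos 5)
Your proposal is correct and follows essentially the same route as the paper: the paper also proves the lemma by showing that the inclusion-indexed relation $\{(\gnode{\cS}\pi, \gnode{\cS'}\pi) \mid \cS \subseteq \cS'\} \cup \{(\Skip,\Skip)\}$ is a type simulation, with the same case analysis on the four transition rules (union of labels for the covariant selection case, intersection for the contravariant branching case, and the $\Csort$ equalities forcing the payload sorts to agree under $\pi$ in the input/output case). Your explicit ``node coherence'' invariant and the discussion of fresh payload sort variables are just more careful spellings-out of steps the paper leaves implicit.
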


\begin{example}[Type graphs and subtyping]\label{ex:type_graphs_subtyping}
We demonstrate that the subset relation $\subseteq$ of variables at a node 
corresponds to the subtyping relation $\subtsim$ of types with two examples. \\
{\bf (1) Selection subtyping:}\ 
Consider the node $\set{\xi_2,\xi_6}$ in
Figure~\ref{fig:min_type_graph_1}(a) (Example~\ref{ex:building_minimum_type_graph_1}).  
The type corresponding to 
$\Gmin{\C}(\set{\xi_2,\xi_6})$ is 
$\T_{2,6}=\tsel{\pq}{l_2: \tend, l_4: \tend}$ and 
the type corresponding to 
$\Gmin{\C}(\set{\xi_2})$ 
is $\T_{2}=\tsel{\pq}{l_2: \tend}$. Hence 
$\set{\xi_2}\subseteq \set{\xi_2,\xi_6}$ implies 
$\Gmin{\C}(\set{\xi_2})\subtsim \Gmin{\C}(\set{\xi_2,\xi_6})$ 
and $\T_{2}\subt\T_{2,6}$.\\ 
{\bf (2) Branching subtyping:}\ 
The following process with branchings demonstrates a subset of variables 
produces \emph{more} branchings (the opposite to the above selection example). 
Consider 
  \begin{equation*}
Q\ = \ \cond{\true}{\procout\pp{1}\procbra{\pp}{l_1: \inact, l_3: (\procsel{\pp}{l_4} \inact)}}{\procout\pp{1}\procbra{\pp}{l_2: \inact, l_3: (\procsel{\pp}{l_5} \inact)}}
  \end{equation*}
Then, $\vdash Q: \xi \mid_\cX \C$ with
  $
    \tr(\C) = \{
      \tin{\pp}{\alpha_1}\xi_1 \subt \xi,
      \tbra{\pp}{l_1: \xi_2, l_3: \xi_3} \subt \xi_1,
      \tend \subt \xi_2,
      \tsel{\pp}{l_4: \xi_4} \subt \xi_3,
      \tend \subt \xi_4,
      \tin{\pp}{\alpha_2}\xi_5 \subt \xi,
      \tbra{\pp}{l_2: \xi_6, l_3: \xi_7} \subt \xi_5,
      \tend \subt \xi_6,
      \tsel{\pp}{l_5: \xi_8} \subt \xi_7,
      \tend \subt \xi_8,
      \alpha_1 = \tint,
      \alpha_2 = \tint
    \}
  $.
The minimum type graph of $Q$ is given in 
Figure~\ref{fig:min_type_graph_3}(a). 

Now let 
$Q_1=\procbra{\pp}{l_1: \inact, l_3: (\procsel{\pp}{l_4} \inact)}$ 
in the truth branch of $P$, which has type $\T_1=\tbra{\pp}{l_1{:}\tend, l_3{:}
\tsel{\pp}{l_4{:}\tend}}$. Its type graph $\Gmin{\C}(\set{\xi_1})$ is given  
in Figure~\ref{fig:min_type_graph_3}(b). 
Consider $\Gmin{\C}(\set{\xi_1,\xi_5})$ in 
Figure~\ref{fig:min_type_graph_3}(a). The corresponding type is 
$\T_{1,5}=\tbra{\pp}{l_3:\tsel{\pp}{l_4:\tend}}$. Hence 
$\set{\xi_1} \subseteq \set{\xi_1,\xi_5}$ implies 
$\Gmin{\C}(\set{\xi_1}) \subtsim \Gmin{\C}(\set{\xi_1,\xi_5})$  
and $\T_{1}\subt\T_{1,5}$. 
\end{example}

\begin{figure}[t]
  \footnotesize
  \begin{tabular}{cc}
  \begin{tabular}{c}
  \begin{tikzpicture}[>=Latex, node distance=2cm]
    \node (n1) [smallrect] {$\{\xi\}$};
    \node (n2) [smallrect, right of=n1] {$\{\xi_1, \xi_5\}$};
    \node (n3) [smallrect, right of=n2] {$\{\xi_3, \xi_7\}$};
    \node (n4) [smallrect, right of=n3, yshift=.7cm, xshift=-1cm] {$\{\xi_4\}$};
    \node (n5) [smallrect, right of=n3, yshift=-.7cm, xshift=-1cm] {$\{\xi_8\}$};
    \node (n6) [smallrect, right of=n2, xshift=2cm] {$\Skip$};

    \draw[arrow] (n1) to node[auto] {$\trin{\pp}{\beta}$} (n2);
    \draw[arrow] (n2) to node[auto] {$\trbra{\pp}{l_3}$} (n3);
    \draw[arrow] (n3) to node[auto] {$\trsel{\pp}{l_4}$} (n4);
    \draw[arrow] (n3) to node[auto, swap] {$\trsel{\pp}{l_5}$} (n5);
    \draw[arrow] (n4) to node[auto] {$\trend$} (n6);
    \draw[arrow] (n5) to node[auto, swap] {$\trend$} (n6);
    \end{tikzpicture}\\
  (a)
  \end{tabular}
  & 
  \begin{tabular}{c}
      \begin{tikzpicture}[>=Latex, node distance=2cm]
        \node (n2) [smallrect] {$\{\xi_1\}$};
        \node (n3) [smallrect, right of=n2, xshift=-1cm, yshift=.7cm] {$\{\xi_3\}$};
        \node (n4) [smallrect, right of=n3] {$\{\xi_4\}$};
        \node (n6) [smallrect, right of=n4] {$\Skip$};
        \node (n7) [smallrect, right of=n2, xshift=-1cm, yshift=-.7cm] {$\{\xi_2\}$};
    
        \draw[arrow] (n2) to node[auto] {$\trbra{\pp}{l_3}$} (n3);
        \draw[arrow] (n3) to node[auto] {$\trsel{\pp}{l_4}$} (n4);
        \draw[arrow] (n4) to node[auto] {$\trend$} (n6);
        \draw[arrow] (n2) to node[auto, swap] {$\trbra{\pp}{l_1}$} (n7);
        \draw[arrow] (n7) to node[auto, swap] {$\trend$} (n6);
      \end{tikzpicture}\\
  (b)
  \end{tabular}
  \end{tabular}
  \vspace{-3mm}
  \caption{Minimum type graphs 
  of Example~\ref{ex:type_graphs_subtyping}
  }
\label{fig:min_type_graph_3}
\end{figure}
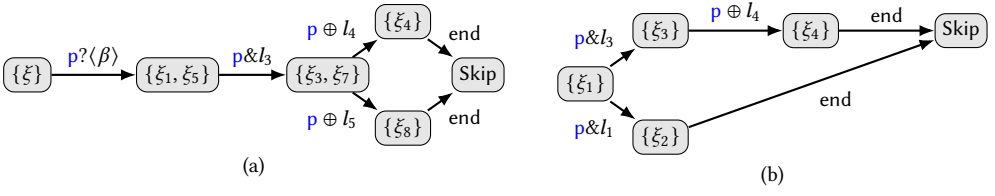


\begin{restatable}[Soundness of the minimum type inference]{theorem}{theminsound}\proofreference{Appendix~\ref{app:solveminimum}} 
\label{the:minimuminf:soundness}
Let $\vdash \PP: \xi \mid_\cX \C'$ and $\C = \tr(\C')$. 
Let the minimum type graph be $\Gmin{\C}$ with sort constraints $\Csort$.
Let $\sigma = \{\psi \mapsto \T_\psi 
\SEP \psi \in \C\}$ where $\T_\psi$ is a type which corresponds to 
$\Gmin{\C}(\set{\psi})$. 
Assume $\pi$ is a most general solution of $\Csort$. 
Then $(\sigma, \pi)$ is a solution to $\C$.
\end{restatable}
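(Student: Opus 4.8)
The plan is to verify $(\sigma,\pi)$ directly against Definition~\ref{def:solution}: for every $\S = \S' \in \C$ that $\S\pi = \S'\pi$, and for every subtyping constraint $\T \subt \psi \in \C$ that $\T\sigma\pi \subt \psi\sigma\pi$. By Lemma~\ref{lem:possibleconstraints} these are the only two kinds of elements of $\C$, with every subtyping constraint having a single type variable on the right. The sort equalities are immediate: each $\S=\S'\in\C$ is placed into $\Csort$ by the initial clause of Definition~\ref{def:minimum_type_graph_of_process}, and $\pi$, being a most general solution of $\Csort$, in particular solves it, so $\S\pi=\S'\pi$. All the content lies in the subtyping constraints.

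The key structural fact I would establish first is that, because $\C = \tr(\C')$, the pure variable--variable constraints $\xi \subt \psi$ have all been eliminated, so every subtyping constraint of $\C$ carries a genuine type constructor on its left. Moreover the right-hand-side distinctness recorded in Lemma~\ref{lem:possibleconstraints} is preserved by each substitution step of $\tr$: eliminating $\psi$ via $[\xi/\psi]$ only collapses the removed constraint $\xi\subt\psi$ into the tautology $\xi\subt\xi$ and never produces a second constraint with a given variable on the right. Hence each variable $\psi$ occurring in $\C$ is the right-hand side of at most one constraint, and of exactly one when $\set{\psi}$ is a reachable node of a defined $\Gmin{\C}$. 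This single constraint determines, through the matching rule among \RULE{M-End}, \RULE{M-IO}, \RULE{M-Sel} and \RULE{M-Bra}, all outgoing transitions of $\set{\psi}$. The crucial point is that, started from a \emph{singleton} node, each of these rules yields singleton target nodes; consequently $\Gmin{\C}(\set{\psi})$ consists solely of singleton nodes (and $\Skip$), and its top-level shape mirrors the unique constraint with $\psi$ on the right, each child variable $\xi$ being interpreted as $\xi\sigma = \Gmin{\C}(\set{\xi})$.

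With this in hand the remaining step is a short case analysis on the shape of a subtyping constraint $\T \subt \psi$, showing that $\T\sigma\pi$ and $\psi\sigma\pi$ denote the \emph{same} type graph, so that $\T\sigma\pi \subteq \psi\sigma\pi$ and hence $\T\sigma\pi \subt \psi\sigma\pi$ by reflexivity of $\subt$ (Definition~\ref{def:subtyping_local_types}). For $\tend \subt \psi$, rule \RULE{M-End} gives $\psi\sigma = \tend = \T\sigma$. For $\tdag\pp{\beta}{\xi} \subt \psi$ with $\dagger \in \set{!,?}$, rule \RULE{M-IO} yields $\psi\sigma = \tdag\pp{\alpha}{(\xi\sigma)}$ and inserts $\alpha = \beta$ into $\Csort$; applying $\pi$ and using $\alpha\pi = \beta\pi$ makes $\tdag\pp{\beta\pi}{(\xi\sigma\pi)}$ and $\tdag\pp{\alpha\pi}{(\xi\sigma\pi)}$ coincide. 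For $\tselsub\pp{l_j: \xi_j}{j\in J} \subt \psi$ and $\tbrasub\pp{l_j: \xi_j}{j\in J} \subt \psi$, rules \RULE{M-Sel} and \RULE{M-Bra} reproduce exactly the same labels and continuations $\xi_j\sigma$, again giving equal graphs. Collecting the cases shows every element of $\C$ is satisfied, so $(\sigma,\pi)$ is a solution.

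I expect the main obstacle to be the structural observation of the second paragraph rather than the case analysis. One must argue carefully that $\tr$ preserves right-hand-side distinctness and, above all, that singleton nodes transition only to singleton nodes, so that $\Gmin{\C}(\set{\psi})$ really is the structural unfolding of $\psi$'s unique constraint. Recursion requires care: since the graphs may contain cycles, the matching between $\T\sigma\pi$ and $\psi\sigma\pi$ must be stated as an equivalence of type graphs (Definition~\ref{def:type_simulation}), witnessed by the identity relation on reachable singleton nodes, rather than as a naive syntactic identity. It is worth remarking that soundness here is in fact an \emph{equivalence} rather than a strict subtyping, precisely because $\sigma$ targets singleton nodes; the strict subtyping phenomena captured by Lemma~\ref{def:subsets_are_subtypes} concern genuinely multi-variable nodes and play no role in this particular argument.
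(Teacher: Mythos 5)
Your reduction of the sort constraints and your case analysis on the shapes permitted by Lemma~\ref{lem:possibleconstraints} start out on the right track, but the structural claim your whole argument rests on --- that after $\tr$ each variable is the right-hand side of at most one constraint, so singleton nodes transition only to singleton nodes and every constraint is satisfied as a graph \emph{equivalence} --- is false. Rule \RULE{C-Cond} emits two constraints $\psi_1 \subt \xi$ and $\psi_2 \subt \xi$ with the same right-hand variable; the substitutions performed by $\tr$ then identify $\psi_1$, $\psi_2$ and $\xi$, so any constructor constraints that previously targeted $\psi_1$ and $\psi_2$ end up targeting one and the same variable. This is not a corner case: in Example~\ref{ex:building_minimum_type_graph_1} the set $\tr(\C)$ contains both $\tbra{\pp}{l_1: \xi_2, l_3: \xi_3} \subt \xi$ and $\tbra{\pp}{l_1: \xi_6, l_5: \xi_7} \subt \xi$, and consequently $\set{\xi} \trans{\trbra{\pp}{l_1}} \set{\xi_2,\xi_6}$ lands on a two-element node (Figure~\ref{fig:min_type_graph_1}(a)). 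There the constraint is satisfied only as a \emph{proper} subtyping: $\tbra{\pp}{l_1: \xi_2, l_3: \xi_3}\sigma\pi = \tbra{\pp}{l_1: \tsel{\pq}{l_2:\tend}, l_3:\tend}$ while $\xi\sigma\pi = \tbra{\pp}{l_1: \tsel{\pq}{l_2:\tend, l_4:\tend}}$, because \RULE{M-Bra} intersects the branch labels and \RULE{M-Sel} unions the selection labels. Your closing remark that Lemma~\ref{def:subsets_are_subtypes} ``plays no role'' is therefore exactly backwards.

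The missing idea is precisely that lemma. The paper's proof handles each constraint $\T \subt \psi$ by observing that $\T\sigma\pi$ makes a transition to a \emph{singleton} continuation $\Gmin{\C}(\set{\xi})\pi$, while $\psi\sigma\pi = \Gmin{\C}(\set{\psi})\pi$ makes the matching transition to some node $\cS$ with $\set{\xi} \subseteq \cS$; Lemma~\ref{def:subsets_are_subtypes} then gives $\Gmin{\C}(\set{\xi})\pi \subt \Gmin{\C}(\cS)\pi$, and the subtyping rules for the outermost constructor (covariant for selection and input/output, contravariant for branching, with the label-set inclusions supplied by the union/intersection in \RULE{M-Sel}/\RULE{M-Bra}) close the argument. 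To repair your proof you would need to drop the equivalence claim, keep your case analysis, and invoke the subset-as-subtyping lemma at the point where the continuation nodes diverge.
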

\begin{proof}
All sort constraints are satisfied because $\pi$ solves $\Csort$ which is a superset of the sort constraints of $\C$. We prove that all type constraints are satisfied by $\sigma\pi$ 
using $\subtsim$. 
\end{proof}


Below we write $\Gmin{\C}\pi$ to be the graph with $\pi$ applied to all transitions.

\begin{restatable}[Completeness of the minimum type inference]{theorem}{themincomplete}
  \label{thm:minimum_type_of_process}
\proofreference{Appendix~\ref{app:solveminimum}}  
Let $\vdash \PP: \xi \mid_\cX \C'$, and $\C = \tr(\C')$, such that $\C$ is solvable.
Assume $\Gmin{\C}$ and $\Csort$ are the minimum graph and sort constraints 
of $\vdash \PP: \xi \mid_\cX \C$. 
If $\pi$ is the most general solution to $\Csort$, then $\Gmin{\C}\pi$ is a type graph (Definition~\ref{def:typegraph}) that corresponds to a minimum type (Definition~\ref{def:minimum_type}) $\Tmin$ of $\PP$.
\end{restatable}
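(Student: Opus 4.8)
The plan is to verify, for the type $\Tmin$ corresponding to $\Gmin{\C}\pi$, the two conditions of Definition~\ref{def:minimum_type}: (1) that $\emptyset \vdash \PP : \Tmin$, and (2) that $\Tmin$ lies below every other type of $\PP$ up to a sort substitution. Condition (1) is obtained by threading together the soundness results already proved; condition (2) requires a genuine simulation argument and is the main obstacle.

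For condition (1), I would first invoke Theorem~\ref{the:minimuminf:soundness}, which states that $(\sigma,\pi)$ is a solution of $\C = \tr(\C')$, where $\sigma$ sends each variable $\psi$ to the type $\T_\psi$ associated with $\Gmin{\C}(\set{\psi})$ and $\pi$ is the most general solution of $\Csort$. Since the judgement $\vdash \PP : \xi \mid_\cX \C'$ refers to the original constraint set $\C'$ rather than to $\tr(\C')$, I would then lift this solution back along the transitive eliminations performed by $\tr$: applying Lemma~\ref{thm:remove_transitive} once for each variable constraint $\xi' \subt \psi'$ removed during $\tr$, and extending $\sigma$ by $\psi' \mapsto \sigma(\xi')$ at each step, yields a solution $(\sigma^\ast,\pi)$ of $\C'$. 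Because the top-level variable $\xi$ never occurs on the left of a variable--variable constraint, it survives $\tr$ and $\xi\sigma^\ast\pi = \xi\sigma\pi$, which is exactly the type of $\Gmin{\C}(\set{\xi})\pi = \Gmin{\C}\pi = \Tmin$. Applying Theorem~\ref{thm:soundness_of_constraints} to $(\sigma^\ast,\pi)$ then gives $\emptyset \vdash \PP : \Tmin$, establishing Definition~\ref{def:minimum_type}(1).

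For condition (2), take any $\T'$ with $\emptyset \vdash \PP : \T'$. By Theorem~\ref{thm:completeness_of_constraints} there is a solution $(\sigma',\pi')$ of $\C'$ with $\xi\sigma'\pi' = \T'$, whose restriction solves $\C$. I first observe that $\pi'$ extends to a solution of $\Csort$, so by the most-generality of $\pi$ there is $\pi''$ with $\pi' = \pi\pi''$ on $\Sv$, whence $\Gmin{\C}\pi' = \Tmin\pi''$. The crux is to show $\Tmin\pi'' \subt \T'$, which is the instance of Definition~\ref{def:minimum_type}(2) obtained after concretising the residual sort variables of $\Tmin$ via $\pi''$. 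I would prove this by exhibiting a type simulation (Definition~\ref{def:type_simulation}) $\cR = \set{(\Gmin{\C}(\set{\xi_i}_{i\in I})\pi',\ \xi_j\sigma'\pi') \mid \set{\xi_i}_{i\in I} \text{ a node of } \Gmin{\C},\ j \in I}$, where applying $\pi'$ to the graph aligns the payload sorts with those of $\T'$. Taking the initial node $\set{\xi}$ with $j = \xi$ then gives $\Gmin{\C}(\set{\xi})\pi' = \Tmin\pi'' \subt \xi\sigma'\pi' = \T'$. Verifying that $\cR$ is a simulation is where Lemma~\ref{def:subsets_are_subtypes} and a case analysis over the transition rules of Definition~\ref{def:minimum_type_graph_of_process} do the work: each outgoing action of a node is generated by exactly one of \RULE{M-End}, \RULE{M-IO}, \RULE{M-Sel}, \RULE{M-Bra} from the constraints carrying the $\xi_i$ on their right-hand sides, and the fact that $\sigma'\pi'$ solves these same constraints forces $\xi_j\sigma'\pi'$ to match in the direction dictated by subtyping.

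The main obstacle is checking the two transition clauses of the simulation against the variance of choice. For selection I must use that \RULE{M-Sel} emits the \emph{union} of the branch labels, so the covariant supertype $\xi_j\sigma'\pi'$ necessarily offers each such label; dually, for branching \RULE{M-Bra} emits the \emph{intersection}, so the contravariant supertype, offering only a subset of labels, is still matched. In both cases Lemma~\ref{def:subsets_are_subtypes} supplies the relatedness of the successor nodes (whose index sets grow) to the corresponding successors of $\xi_j\sigma'\pi'$, closing the simulation. Guardedness of recursive types keeps the type graphs finite and the relation well-defined, which together with condition (1) completes the proof that $\Tmin$ is a minimum type of $\PP$.
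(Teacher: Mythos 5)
Your handling of condition (1) --- lifting the canonical solution of $\tr(\C')$ back to a solution of $\C'$ via Lemma~\ref{thm:remove_transitive} and then applying Theorem~\ref{thm:soundness_of_constraints} --- is sound, and your reduction of condition (2) to exhibiting $\pi''$ with $\Tmin\pi''\subt\T'$ matches the paper. The gap is in the simulation you propose for condition (2). The relation $\cR = \set{(\Gmin{\C}(\set{\xi_i}_{i\in I})\pi',\ \xi_j\sigma'\pi') \mid j \in I}$ is not a type simulation: for a non-singleton node, the transitions of $\Gmin{\C}$ are computed from \emph{all} constraints $\T_i \subt \xi_i$ with $i\in I$, whereas $\xi_j\sigma'\pi'$ is only forced to respect the single constraint whose right-hand side is $\xi_j$. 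Concretely, take the node $\set{\xi_2,\xi_6}$ of Example~\ref{ex:building_minimum_type_graph_1} and the canonical solution $\sigma$ of Theorem~\ref{the:minimuminf:soundness}: then $\Gmin{\C}(\set{\xi_2,\xi_6})\pi$ corresponds to $\tsel{\pq}{l_2:\tend, l_4:\tend}$ while $\xi_2\sigma\pi = \tsel{\pq}{l_2:\tend}$, so the $\trsel{\pq}{l_4}$ transition of the first component has no match in the second; dually, in Example~\ref{ex:type_graphs_subtyping}(2) the node $\set{\xi_1,\xi_5}$ yields $\tbra{\pp}{l_3:\tsel{\pp}{l_4:\tend}}$ while $\xi_1\sigma\pi$ offers the extra branch $l_1$, violating the branching clause. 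Indeed the paper's Lemma~\ref{def:subsets_are_subtypes} gives subtyping in the \emph{opposite} direction to the one your $\cR$ asserts. The flaw is exactly in your variance argument: the covariant supertype $\xi_j\sigma'\pi'$ is only guaranteed to offer the labels $J_j$ of \emph{its own} constraint, not the union $\bigcup_{i\in I}J_i$ emitted by \RULE{M-Sel}, and symmetrically a supertype for branching may offer labels outside the intersection emitted by \RULE{M-Bra}.

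What is needed --- and what the paper does via the matching relation $\psubt_{\Gmin{\C},\T}$ of Definition~\ref{def:matching_simulation} and Lemma~\ref{lem:simsound} --- is to pair each reachable node $\cS$ not with one of its own components but with the subformula $\T''$ of $\T=\xi\sigma'\pi'$ reached by the \emph{same} sequence of actions, together with the inductive invariant that $\psi\sigma'\pi'\subt\T''$ for \emph{every} $\psi\in\cS$ (Lemma~\ref{lem:simsound}(a)). Being a common supertype of all the $\xi_i\sigma'\pi'$ is precisely what forces $\T''$ to offer the whole union of selection labels and at most the intersection of branching labels, so the transitions of Definition~\ref{def:minimum_type_graph_of_process} can be matched. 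This invariant also discharges the step you pass over quickly, namely that $\pi'$ extends to a solution of $\Csort$: the equality of the payload sorts $\beta_i\pi'$ across a multi-element node is itself a consequence of the $\xi_i\sigma'\pi'$ sharing the common supertype $\T''$, so it cannot be assumed up front. Restricting your $\cR$ to the initial singleton does not help, since a simulation must be closed under successors and the failure arises exactly at the non-singleton nodes those successors reach.
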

\begin{proof}
Let $(\sigma, \pi)$ be a solution to $\C$. 
Let $\T = \xi\sigma\pi$.
By Theorems \ref{thm:soundness_of_constraints} and \ref{thm:completeness_of_constraints}, it suffices to show that there exists $\pi''$ such that $\Tmin\pi'' \subt \T$.
The proof involves defining a simulation $\psubts$ on 
$\Gmin{\C}(\set{\xi_i}_{i\in I})\pi'$ and $\Sub(\T)$ 
inductively matching each subformula of $\T$ with 
corresponding type variables. 
We show $\psubts$ is a type simulation proving two lemmas. 
\end{proof}

\begin{corollary}
  If $\vdash \PP: \T$ is derivable, then there exists a minimum type graph for $\T$.
\end{corollary}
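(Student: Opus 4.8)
The plan is to read this corollary as a direct consequence of the completeness theorem for minimum type inference (Theorem~\ref{thm:minimum_type_of_process}), whose only non-immediate hypothesis is that $\C = \tr(\C')$ is solvable; so the real work is to discharge that hypothesis from the bare assumption that $\PP$ is typable. First I would invoke Lemma~\ref{thm:unique_constraint_set} to fix the (unique up to renaming) constraint derivation $\vdash \PP: \xi \mid_\cX \C'$. Since $\vdash \PP: \T$ is derivable, Theorem~\ref{thm:completeness_of_constraints} hands me a solution $(\sigma', \pi')$ of $\C'$ with $\xi\sigma'\pi' = \T$; in particular $\C'$ is solvable.

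The one step that needs an argument---and the step I expect to be the main obstacle, though it is routine---is promoting solvability of $\C'$ to solvability of $\C = \tr(\C')$. This is the forward companion of Lemma~\ref{thm:remove_transitive}. Each reduction step of $\tr$ rewrites $\psi$ to $\xi$ for some $\xi \subt \psi$ in the current set, and I would analyse its effect using Lemma~\ref{lem:possibleconstraints}: the right-hand side variables are pairwise distinct, so $\psi$ appears as a right-hand side only in $\xi \subt \psi$ itself, which the substitution turns into the trivial $\xi \subt \xi$; every other occurrence of $\psi$ lies on a left-hand side, whose admissible shapes ($\xi$, $\tin\pp{\beta}\xi$, $\tout\pp{\beta}\xi$, $\tsel\pp{\dots}$, $\tbra\pp{\dots}$) are all monotone in their continuation variables. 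Combining $\xi\sigma'\pi' \subt \psi\sigma'\pi'$ with this monotonicity and the transitivity of $\subt$ then shows that $(\sigma', \pi')$ still satisfies the image of every constraint under $\sub{\xi}{\psi}$; iterating over the finitely many substitutions yields solvability of $\C$.

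With $\C$ solvable I would simply apply Theorem~\ref{thm:minimum_type_of_process}, taking $\pi$ to be the most general solution of $\Csort$ (which exists by the union-find construction noted after Definition~\ref{def:most_general_pi}). That theorem gives that $\Gmin{\C}\pi$ is a valid type graph corresponding to a minimum type $\Tmin$ of $\PP$; in particular the minimum type graph $\Gmin{\C}$ is defined, and by Definition~\ref{def:minimum_type} its type satisfies $\Tmin \subt \T$, which is the content of the corollary. Apart from the solvability-transfer step, every ingredient is a direct citation of the already-established soundness and completeness results (Theorems~\ref{thm:soundness_of_constraints}, \ref{thm:completeness_of_constraints} and~\ref{thm:minimum_type_of_process}).
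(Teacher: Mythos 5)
Your proof follows essentially the same route as the paper's: obtain the constraint derivation, apply Theorem~\ref{thm:completeness_of_constraints} to get a solution of $\C'$, transfer solvability to $\tr(\C')$, and conclude with Theorem~\ref{thm:minimum_type_of_process}. The one place where you diverge is also the one place where you are more careful than the paper. The paper discharges the transfer step by citing Lemma~\ref{thm:remove_transitive}, but that lemma is stated in the opposite direction (a solution of $\C[\xi/\psi]$ yields a solution of $\C$), whereas the corollary needs to go from a solution of $\C'$ to a solution of $\tr(\C')$. You correctly identify this as the ``forward companion'' and supply the missing argument: by Lemma~\ref{lem:possibleconstraints} the eliminated variable $\psi$ occurs on the right-hand side only in $\xi \subt \psi$ itself, and every left-hand-side occurrence sits under a constructor that is covariant in its continuation, so the same $(\sigma',\pi')$ survives each substitution by monotonicity and transitivity of $\subt$. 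That fills a small gap the paper leaves implicit. Your only slip is a citation: existence of the constraint derivation is Lemma~\ref{thm:unique_domain_gamma}, not the uniqueness lemma (Lemma~\ref{thm:unique_constraint_set}); this is cosmetic, since the rules are syntax-directed and total.
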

\begin{proof}
\iffull{By Lemma~\ref{thm:unique_domain_gamma}}\else{First}\fi, there exists a derivation for $\vdash \PP: \xi \mid_\cX \C$.
  Thus, by Theorem~\ref{thm:completeness_of_constraints}, there exists a solution $(\sigma, \pi)$ of $\C$.
  Therefore, a solution also exists for $\tr(C)$ by Lemma~\ref{thm:remove_transitive}.
  The result follows by Theorem~\ref{thm:minimum_type_of_process}.
\end{proof}

\subsection{Complexity of Minimum Type Inference}
\label{subsec:min:complexity}
\begin{definition}[Size of expressions, processes and sessions]\label{def:size_of_process}
\emph{The size of an expression} $\e$, denoted $\size{\e}$ is 
defined as:
$|\true| = |\false| = |\valn| = |\vali| = |x| = 1$,
$|\neg \e| = |\fsqrt{\e}| = |\e| + 1$, and
$|\e \vee \e'| = |\e + \e'| = |\e \oplus \e'| = |\e| + |\e'| + 1$.
\emph{The size of a process} $\PP$, denoted $|\PP|$, 
is  defined as:
$|\inact| = |X| = 1$, 
$|\mu X. \PP| = |\PP| + 1$, 
$|\procin{\pp}{x} \PP| = |\procout{\pp}{\e} \PP| = |\PP| + |\e| + 1$,  
$|\procsel{\pp}{l}{\PP}| = |\PP| + 1$, and
$|\procbrasub{\pp}{l_i: \PP_i}{i\in \I}| = \sum_{i\in \I} |\PP_i| + 1$, and
$|\cond{\e}{\PP_1}{\PP_2}| = |\PP_1| + |\PP_2| + |\e| + 1$.
\emph{The size of a session} $\M$, denoted $\size{\M}$, 
is defined as: $\size{\pa \pp \PP}=\size{\PP}+1$, 
and $\size{\M\ | \ \M'}=\size{\M}+\size{\M'}+1$. 
\end{definition}

\begin{lemma} \label{thm:constraint_proof_tree_linear}
\label{thm:constraint_set_linear}
The number of judgements in a proof tree of $\Gamma \vdash \PP: \xi \mid_\cX \C$ is $\leq |\PP|$ and $|\C|$ = $\bigO{|\PP|}$.
\end{lemma}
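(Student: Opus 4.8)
The plan is to prove both bounds at once by structural induction on $\PP$ (equivalently, on the height of the constraint derivation), with a subsidiary induction on expressions to account for the subderivations invoked by \RULE{C-Out} and \RULE{C-Cond}. First I would dispatch the expression case: for every $\e$, the derivation of $\Gamma \vdash \e : \alpha \mid_\cX \C_\e$ has exactly $|\e|$ judgements and introduces at most $3|\e|$ constraints. This is immediate by induction on $\e$, since each leaf rule (\RULE{C-SortVar}, \RULE{C-True}, $\dots$) is a single judgement contributing at most one constraint, each unary rule (\RULE{C-Neg}, \RULE{C-Not}) adds one judgement and one constraint over its premise, and each binary rule (\RULE{C-Or}, \RULE{C-Add}, \RULE{C-Nondet}) adds one judgement and at most three constraints over its two premises. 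Matching this against the size clauses $|\neg\e| = |\e|+1$ and $|\e \vee \e'| = |\e| + |\e'| + 1$ gives the judgement count exactly and the stated constraint bound.

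For the main claim, write $J(\PP)$ for the number of judgements in the derivation of $\Gamma \vdash \PP : \xi \mid_\cX \C$. By the inversion already used in Lemma~\ref{thm:unique_constraint_set}, the syntactic form of $\PP$ determines its last rule uniquely, so the derivation tree is shape-isomorphic to the syntax tree of $\PP$, with expression derivations grafted on at \RULE{C-Out} and \RULE{C-Cond} nodes. I would then check the recurrence for $J$ rule by rule against the size clauses. The leaves give $J(\inact) = J(X) = 1 = |\inact| = |X|$; the single-premise process rules give, e.g., $J(\mu X.\PP) = 1 + J(\PP) \leq 1 + |\PP| = |\mu X.\PP|$ and likewise for \RULE{C-Sel} and \RULE{C-In}; branching gives $J(\procbrasub{\pp}{l_i: \PP_i}{i\in \I}) = 1 + \sum_{i\in\I} J(\PP_i) \leq 1 + \sum_{i\in\I}|\PP_i| = |\procbrasub{\pp}{l_i: \PP_i}{i\in \I}|$; and the rules carrying expressions give $J(\procout \pp{\e} \PP) = 1 + J(\PP) + |\e| \leq |\PP| + |\e| + 1$, and similarly for \RULE{C-Cond}, using the expression count from the first step. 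In every case $J$ obeys the same recurrence as $|\cdot|$ but never exceeds it, so $J(\PP) \leq |\PP|$.

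For $|\C|$, the key observation is that each rule builds $\C$ as the union of the premises' constraint sets together with a fixed, bounded number of fresh constraints: zero in \RULE{C-Rec}, at most one in \RULE{C-End}, \RULE{C-Var}, \RULE{C-In}, \RULE{C-Out}, \RULE{C-Bra}, and \RULE{C-Sel}, and three in \RULE{C-Cond}. Since cardinality is subadditive under union, $|\C|$ is at most the number of new constraints at the root plus the sizes of the premise constraint sets; unfolding this recurrence over the whole derivation shows $|\C| \leq 3\,J(\PP)$, where $3$ bounds the new constraints introduced by any single rule (expression rules included). Combined with the first bound this yields $|\C| \leq 3|\PP| = \bigO{|\PP|}$.

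None of these computations is deep; the points that need care are purely in the bookkeeping. First, the $|\e|$-many judgements of each expression subderivation must be charged against the $|\e|$ term appearing in the size clauses for \RULE{C-Out} and \RULE{C-Cond}, which is exactly why I isolate the expression count before the main induction. Second, subadditivity of union means I do not need the premises' constraint sets to be disjoint for the upper bound; the fresh-variable/alpha-conversion convention of Definition~\ref{def:constraint_rules} does make them disjoint, but this fact is never used. A last minor subtlety is the size clause for $\procin \pp{x} \PP$, which nominally reuses the $|\e|$ term although an input carries no expression: since an input contributes at least $|\PP|+1$ to the size under any reading of that clause, the step $J(\procin \pp{x} \PP) = 1 + J(\PP) \leq |\procin \pp{x} \PP|$ goes through regardless.
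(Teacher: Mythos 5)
Your proposal is correct and follows essentially the same route as the paper's proof: induction on the derivation tree, observing that the premises' process sizes sum to strictly less than $|\PP|$ (so the judgement count is bounded by $|\PP|$) and that each rule contributes only a constant number of new constraints. Your version merely spells out the rule-by-rule bookkeeping, including the expression subderivations, that the paper leaves implicit.
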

\begin{proof}
By induction on the proof tree. 
Note that we never unfold recursions in the judgement and the sum of the sizes of the processes in the premises is strictly less than the size of $\PP$.
Thus, the number of judgements in the proof tree is bounded by $|\PP|$.
We can also easily check 
the number of constraints added per judgement in the proof tree 
is bounded by a constant.
\end{proof}

\begin{theorem} 
\label{thm:typeinfer:complexity}
The minimum type of a process $\PP$ can be found 
in $\bigO{n\cdot 2^n}$, where $n = |\PP|$.
\end{theorem}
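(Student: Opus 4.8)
The plan is to bound the running time of the algorithm of \Sec\ref{subsec:minimumtypes} that, given $\PP$, produces the constraint set $\C$, its transitive reduction $\tr(\C)$, the minimum type graph $\Gmin{\tr(\C)}$, and the most general solution $\pi$ of the sort constraints $\Csort$. I would split the cost into three phases: (i) generating $\C$ and $\tr(\C)$; (ii) constructing the minimum type graph by exploring its reachable nodes; and (iii) solving $\Csort$ and applying $\pi$ to the transitions. The argument is that phase (i) is polynomial while phases (ii)--(iii) are dominated by $\bigO{n \cdot 2^n}$, giving the overall bound; the construction mirrors the subset-construction complexity argument used for coinductive full projection.

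For phase (i), by Lemma~\ref{thm:constraint_set_linear} the derivation of $\vdash \PP: \xi \mid_\cX \C$ has at most $|\PP| = n$ judgements and $|\C| = \bigO{n}$; since each rule of Definition~\ref{def:constraint_rules} introduces at most one fresh type variable, there are at most $n$ distinct type variables in $\C$. Computing $\tr(\C)$ repeatedly eliminates the $\bigO{n}$ constraints of the form $\xi \subt \psi$, each substitution touching $\bigO{n}$ constraints, so $\tr(\C)$ is obtained in $\bigO{n^2}$ time, which is subsumed by the later exponential term.

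The crux is phase (ii). Each node of $\Gmin{\tr(\C)}$ is a subset of the (at most $n$) type variables of $\tr(\C)$, so there are at most $2^n$ nodes; I would enumerate the reachable ones by a breadth-first search from the initial node $\set{\xi}$. By Lemma~\ref{lem:possibleconstraints} the right-hand-side variables of $\tr(\C)$ are pairwise distinct, so a single $\bigO{n}$ preprocessing pass indexes every variable by its unique defining constraint. The outgoing transitions of a node $\set{\psi_k}_{k\in K}$ under \RULE{M-End}, \RULE{M-IO}, \RULE{M-Sel}, \RULE{M-Bra} are then obtained just by reading the defining constraints of its members and combining them (union of labels for selection, intersection for branching, fresh-variable unification of payloads for input/output); the aggregate size of these defining constraints is bounded by $|\tr(\C)| = \bigO{n}$, so each node is processed in $\bigO{n}$ time, giving $\bigO{n \cdot 2^n}$ for the exploration. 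Where the defining constraints of a node fail to share a common shape, the graph is invalid and $\PP$ has no minimum type (cf.\ Example~\ref{ex:untypable_process}).

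For phase (iii), $\Csort$ starts at size $\bigO{n}$ and is augmented by \RULE{M-IO}, contributing $\bigO{n}$ equalities per input/output edge; as each node has at most one such edge, $|\Csort| = \bigO{n \cdot 2^n}$, and the most general solution $\pi$ is computed in near-linear time by union-find (as remarked after Definition~\ref{def:most_general_pi}), after which applying $\pi$ to all transitions is again $\bigO{n \cdot 2^n}$. Summing the three phases yields $\bigO{n \cdot 2^n}$, and correctness of the output as the minimum type follows from Theorems~\ref{the:minimuminf:soundness} and \ref{thm:minimum_type_of_process}. The main obstacle I anticipate is the amortised analysis of phase (ii): although a single node may have up to $\bigO{n}$ outgoing edges, one must argue that constructing and deduplicating \emph{all} successor nodes of a given node still costs only $\bigO{n}$, which relies both on the unique-defining-constraint indexing and on a canonical (e.g.\ bitmask) representation of nodes so that set construction and lookup are charged against the $\bigO{n}$ total branch count of that node rather than re-scanned per edge.
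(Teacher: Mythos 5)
Your proposal is correct and follows essentially the same route as the paper's proof: polynomial time for building $\C$ and $\tr(\C)$, at most $2^n$ nodes in the minimum type graph because at most one type variable is introduced per judgement of the (at most $n$-node) derivation, and $\bigO{n}$ work per node to compute outgoing transitions, yielding $\bigO{n\cdot 2^n}$ overall. The extra detail you supply on indexing defining constraints, the amortised per-node cost, and the union-find solution of $\Csort$ merely fleshes out steps the paper leaves implicit.
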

\begin{proof}
Finding the subset graph of Definition~\ref{def:minimum_type_graph_of_process} requires $\bigO{n\cdot 2^n}$ time. First, building the set of constraints $\C$ and finding $\tr(C)$ take polynomial time in $n$, as there are only polynomially many type constraints.
  The number of nodes in the proof tree for constraint derivation is at most $n$ by Lemma~\ref{thm:constraint_proof_tree_linear}, and at most one new type variable is introduced per rule.
  Therefore, there are $\bigO{2^n}$ nodes in the subset graph. The transitions outgoing from $\mathcal{S}$ can be computed in $\bigO{n}$ time, as the number of constraints is $\bigO{n}$. Therefore, the total time complexity is $\bigO{n\cdot 2^n}$.
\end{proof}

We can show an explicit example where the minimum type has an exponential size.

\begin{theorem}\label{thm:worst_case_minimum_type} There exists a process $\PP$ s.\ t.\ any minimum type of $\PP$ has size exponential in $\size{\PP}$.
\end{theorem}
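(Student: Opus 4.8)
The plan is to exhibit one small process whose minimum type graph (Definition~\ref{def:minimum_type_graph_of_process}) is forced to have exponentially many pairwise non-equivalent nodes, reusing the coprime-periods idea behind Theorem~\ref{thm:coinductive:size}. Fix the first $k$ primes $n_1, \dots, n_k$. For each $i$ let $\PP_i$ be a period-$n_i$ loop that repeatedly selects $a$ towards $\pp$ and, only at the start of each cycle, may instead select $b$ and stop:
\[
\PP_i = \mu X_i.\ \cond{\e}{\procsel{\pp}{b}{\inact}}{\underbrace{\procsel{\pp}{a}{\procsel{\pp}{a}{\cdots\procsel{\pp}{a}{X_i}}}}_{n_i \text{ selections}}}.
\]
The whole process is the nested conditional $\PP = \cond{\e}{\PP_1}{\cond{\e}{\PP_2}{\cdots \PP_k}}$. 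Each $\PP_i$ has size $\bigTheta{n_i}$, so $\size{\PP} = \bigTheta{\sum_{i \le k} n_i}$.

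First I would read off the minimum type. Since selection is covariant, rule \RULE{C-Cond} forces the type of a conditional to be a common supertype of both branches, and $\tr(\cdot)$ identifies the variables involved, so rule \RULE{M-Sel} emits one edge for the \emph{union} of the labels offered by the two branches (exactly as in Example~\ref{ex:type_graphs_subtyping}(1)). Hence the minimum type of a single $\PP_i$ offers $a$ at every position and additionally offers $b$ precisely at the cycle starts, i.e. at positions $\equiv 0 \pmod{n_i}$. For $\PP$, by soundness and completeness of the inference (Theorem~\ref{thm:minimum_type_of_process}) the minimum type is $\Gmin{\tr(\C)}\pi$, and tracking the subset construction shows that after $m$ consecutive $a$-transitions from the initial node the reached node is exactly the set of position variables $\{\xi_{i,\, m \bmod n_i}\}_{i \le k}$: every loop advances on $a$, while the $b$-edge goes to $\tend$ and is available iff $n_i \mid m$ for some $i$.

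The core of the argument is a Myhill--Nerode-style lower bound on the number of nodes. By the Chinese Remainder Theorem the residue vector $(m \bmod n_i)_{i \le k}$ takes $\prod_{i \le k} n_i$ distinct values, so the subset construction visits $\prod_{i \le k} n_i$ distinct nodes. I would then show these nodes are pairwise non-equivalent under $\subteq$: from the node reached at time $m$, the set of delays $\{t \ge 0 : b \text{ is offered at time } m+t\} = \{t : \exists i.\ n_i \mid (m+t)\}$ determines $m \bmod \prod_{i \le k} n_i$, so nodes with distinct residue vectors have non-bisimilar behaviours and cannot be merged. As the minimum type is the $\subt$-least type of $\PP$ (Definition~\ref{def:minimum_type}) and hence determined up to $\subteq$ (type-graph equivalence), its canonical type graph contains at least $\prod_{i \le k} n_i$ states; therefore any minimum type $\Tmin$ of $\PP$ has at least $\prod_{i \le k} n_i$ distinct subformulas, so $\size{\Tmin} = \bigOmega{\prod_{i \le k} n_i}$.

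Finally I would convert this into the stated exponential gap exactly as in Theorem~\ref{thm:coinductive:size}: with $n_1, \dots, n_k$ the first $k$ primes, $\size{\PP} = \bigTheta{\sum_{i \le k} n_i}$ is polynomial in $k$ while $\prod_{i \le k} n_i = \bigOmega{2^k}$, so $\size{\Tmin}$ is exponential in $\size{\PP}$. The main obstacle I anticipate is this lower-bound step: I must argue carefully both that the subset construction genuinely separates the $\prod_{i \le k} n_i$ residue vectors (the counting part) and that no $\subteq$-equivalence collapses them (the delay-set distinguishability), since only a $\subteq$-invariant count of states bounds the size of \emph{every} minimum type rather than of one particular presentation.
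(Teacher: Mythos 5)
Your proof is correct, but it is the covariant dual of the paper's argument rather than the same one. The paper builds each periodic gadget out of \emph{branchings}, $\PP_n = \mu X.\, \procbra{\pp}{l_1: \cdots \procbra{\pp}{l_1: X, l_2: X}\cdots}$, and composes them with nested conditionals; since branching is contravariant, \RULE{M-Bra} \emph{intersects} the label sets contributed by the two sides of a conditional, so the minimum type of the composite offers $l_2$ only at positions divisible by \emph{every} $n_i$ and is literally the single cycle $\T_{\lcm(n_1, \dots, n_k)}$ --- its exponential size is read off directly, with no separate state-counting lemma. You instead use \emph{selection} loops, where covariance makes \RULE{M-Sel} take the \emph{union} of labels, so $b$ is offered at positions divisible by \emph{some} $n_i$; the price is that you must additionally prove the Myhill--Nerode-style claim that the delay sets $S_m = \{t \geq 0 : \exists i.\ n_i \mid (m+t)\}$ are pairwise distinct, since, as you correctly observe, only a count of pairwise non-$\subteq$-equivalent states bounds the size of \emph{every} minimum type rather than of one presentation. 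That step does go through: if $m \not\equiv m' \pmod{n_1}$, choose by CRT some $t \equiv -m \pmod{n_1}$ with $t \not\equiv -m' \pmod{n_j}$ for every $j$, giving $t \in S_m \setminus S_{m'}$; and since these type graphs are deterministic, mutual simulation preserves the trace sets, so any $\subteq$-equivalent presentation retains all $\prod_i n_i$ inequivalent states, whence $\size{\Tmin} \geq |\Sub(\Tmin)|$-many nodes forces $\size{\Tmin} = \bigOmega{\prod_i n_i}$. The remaining bookkeeping (typability of $\PP$ via $\mu\ty.\tsel{\pp}{a: \ty, b: \tend}$, identification of reached nodes with residue vectors) is sound, and the final prime-number-theorem step is shared with Theorem~\ref{thm:coinductive:size}. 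In short, both routes give the same asymptotics; the paper's intersection/$\lcm$ route is shorter because the minimum type is exhibited syntactically, while yours demonstrates that covariance of selection alone already forces the blow-up and isolates a reusable distinguishability lemma.
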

\begin{proof}
Define $\PP_n = \mu X. \procbra{\pp}{l_1: \procbra{\pp}{l_1: \dots
    \procbra{\pp}{l_1:\procbra{\pp}{l_1: X, l_2: X}\dots}}}$ where $n$ is the
number of $\& l_1$-nested branches (i.e., $\size{\PP_n}=n+4$).
This has the minimum type $\T_n = \mu\ty. \tbra{\pp}{l_1:
  \tbra{\pp}{\dots \tbra{\pp}{l_1: \tbra{\pp}{l_1: \ty, l_2:
        \ty}\dots}}}$ where there are $n$ number of $\& l_1$ branch nested 
constructs in $\T_n$ (i.e., $\size{\T_n}=n+4$).

Consider $\PP = \cond{\e}{\PP^{(1)}}{\PP^{(2)}}$ for some expression $\e$. If the minimum types of $\PP^{(1)}$ and $\PP^{(2)}$ are $\T_n$ and $\T_m$ respectively, then the minimum type of $\PP$ is $\T_{\lcm(n, m)}$, as $l_2$ is guaranteed to be safely received only when each cycle is synchronised.
  Thus, for integers $n_1, \dots, n_k$, the process $\PP^{(k)}$, defined by
$\PP^{(1)} = \PP_{n_1}$ and 
$\PP^{(k)} = \cond{\e}{\PP_{n_k}}{\PP^{(k-1)}}$
  has a minimum type of $\T_{\lcm(n_1, \dots, n_k)}$, which in general has size exponential in $\sum_{i \in \{1, \dots, k\}} n_i$.
\end{proof}

\section{Complexity Analysis of Bottom-Up MPST} \label{section:modelchecking}
\subsection{Bottom-Up Multiparty Session Type System} \label{section:checking_safety_properties}
We now start the analysis of \emph{bottom-up} MPST.
Recall a \emph{typing context} $\Delta$ defined in \Sec\ref{subsec:topdown}. 
We often write $\Delta$ as 
$\prod_{i \in \I} \pp_i {:} \T_i$ as shorthand for 
$\{\pp_1 {:} \T_1, \dots, \pp_n {:} \T_n\}$ where 
$\prod_{i \in \emptyset} \pp_i {:} \T_i = \emptyset$. 

\begin{definition}[Size of a typing context]
  Define $|\pp {:} \T| = |\T|$ and $|\Delta, \Delta'| = |\Delta| + |\Delta'|$.
\end{definition}

The typing system for process $\PP$ does not change.
For typing a multiparty session, we use \RULE{B-Sess} instead of 
\RULE{T-Sess} where $\safe(\Delta)$ means typing context $\Delta$ satisfies 
a \emph{safety} property. To define this property, we first recall 
the transition relation of typing contexts 
from~\cite[Definition 2.8]{Scalas2019}. 
The transition relation is similar to the LTS of the type graph 
defined in Definition~\ref{def:typegraph}. 

\begin{definition}[Labelled transition relation of typing contexts]\label{def:composition_reduction}
  Define labels
  ($\cxell,\cxell'\cdots$) by 
$\cxell::= \ \trin{\pp\pq}{\S}
\SEP \trout{\pp\pq}{\S} \SEP \trbra{\pp\pq}{l} \SEP
\trsel{\pp\pq}{l} \SEP  \SelBra{\pp}{\pq}{l} \SEP \OutIn{\pp}{\pq}$.
Then the \emph{labelled transition relation} (LTS) over typing contests, 
$\Delta\trans{\cxell}\Delta'$, is defined by:

  \begin{center}
  \small
      \begin{spacing}{0}
      \newcommand{\treesep}{.5em}%
      \begin{prooftree}
      \hypo{\dagger\in \set{!,?}}
      \infer1[\RULE{E-IO}]{\pp : \trdag\pq{\S};\T \redDagger{\pp}{\pq}{\S} \pp : \T}
      \end{prooftree}\hspace*{\treesep}
      \begin{prooftree}
     \hypo{\dagger\in \set{\oplus,\&} \quad k \in \I}
      \infer1[\RULE{E-BS}]{\pp : \trbradag\pq{\set{l_i: \T_i}_{i \in \I}} 
\redDagger{\pp}{\pq}{l_k} \pp : \T_k}
    \end{prooftree}\hspace*{\treesep}
      \begin{prooftree}
        \hypo{\Delta \trans{\cxell} \Delta'}
        \infer1[\RULE{Comp}]{\Delta, \pp : \T \trans{\cxell} \Delta', \pp : \T}
      \end{prooftree}

  \vspace{2mm}

      \begin{prooftree}
        \hypo{\Delta_1 \redOut{\pp}{\pq}{\S} \Delta'_1}
        \hypo{\Delta_2 \redIn{\pq}{\pp}{\S} \Delta'_2}
        \infer2[\RULE{Msg}] {\Delta_1, \Delta_2 \redOutIn{\pp}{\pq} \Delta'_1, \Delta'_2}
      \end{prooftree}\hspace*{\treesep}
      \begin{prooftree}
        \hypo{\Delta_1 \redSel{\pp}{\pq}{l} \Delta'_1}
        \hypo{\Delta_2 \redBra{\pq}{\pp}{l} \Delta'_2}
        \infer2[\RULE{Bra}] {\Delta_1, \Delta_2 \redSelBra{\pp}{\pq}{l} \Delta'_1, \Delta'_2}
      \end{prooftree}\hspace*{\treesep}
      \begin{prooftree}
        \hypo{\pp : \T[\mu \ty. \T/\ty] \trans{\cxell} \Delta'}
        \infer1[\RULE{$\mu$}]{\pp : \mu \ty. \T \trans{\cxell} \Delta'}
      \end{prooftree}
      \end{spacing}
  \end{center}
Define $\Delta \rightarrow \Delta'$ if $\Delta \redOutIn{\pp}{\pq} \Delta'$ or $\Delta \redSelBra{\pp}{\pq}{l} \Delta'$
and $\Delta \rightarrow$ (resp. $\Delta \trans{\cxell}$) 
for 
$\exists \Delta'. \Delta\rightarrow \Delta'$ (resp.~$\Delta \trans{\cxell}\Delta'$). 
\end{definition}

\begin{definition}[Safety, Definition 4.1 in \cite{Scalas2019}]
\label{def:safety}
\label{def:safe_states}
\begin{itemize}[leftmargin=*]
\item 
$\Delta$ is a \emph{safe state} if 
(1) 
$\Delta \redOut{\pp}{\pq}{\S}$ and $\Delta \redIn{\pq}{\pp}{\S'}$ imply $\Delta \redOutIn{\pp}{\pq}$; and 
(2)
  $\Delta \redSel{\pp}{\pq}{l}$ and $\Delta \redBra{\pq}{\pp}{l'}$ imply $\Delta \redSelBra{\pp}{\pq}{l}$.
\item 
$\phi$ is a \emph{safety property} if, for all $\Delta$ such that
$\phi(\Delta)$, we have: 
(1) 
$\Delta$ is a safe state; 
(2) 
$\Delta = \Delta', \pp : \mu\ty.\T$ implies $\phi(\Delta', \pp : \T[\mu\ty.\T/\ty])$; and 
(3) 
$\Delta \rightarrow \Delta'$ implies $\phi(\Delta')$.
\end{itemize}
\noindent We call 
$\Delta$ \emph{safe} if $\phi(\Delta)$ holds for some safety property $\phi$ and write $\safe(\Delta)$ if $\Delta$ is safe. 
\end{definition}

\begin{remark}\label{rem:asymmetric_safe_live}
In Definition~\ref{def:safety}, 
the transition for selection and branching (2) is not symmetric; 
in a safe state, all labels of 
a selection $l$ must be reducible, while \emph{some} 
label of a branching needs to be reducible. 
See \cite{Scalas2019} for more explanations.
\end{remark}
\begin{restatable}{lemma}{safetyfromreductions}\label{thm:safety_from_reductions}\proofreference{Appendix~\ref{app:proof_of_safety_from_reductions}}
   $\Delta$ is safe if and only if $\Delta'$ is a safe state for all $\Delta \rightarrow^* \Delta'$.
\end{restatable}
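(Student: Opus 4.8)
The plan is to characterise $\safe$ by the explicit predicate
$\Phi(\Delta) :\equiv \forall \Delta'.\ \Delta \rightarrow^* \Delta' \text{ implies } \Delta' \text{ is a safe state}$,
and to prove the lemma as two inclusions: first that $\safe(\Delta)$ implies $\Phi(\Delta)$, and second that $\Phi$ is \emph{itself} a safety property in the sense of Definition~\ref{def:safety}. The second fact immediately yields the converse, because $\safe(\Delta)$ is by definition the statement that \emph{some} safety property holds at $\Delta$, so a proof that $\Phi$ is a safety property with $\Phi(\Delta)$ witnesses $\safe(\Delta)$.

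For the forward direction, fix a safety property $\phi$ with $\phi(\Delta)$ (which exists by $\safe(\Delta)$). I would argue by induction on the length of the reduction $\Delta \rightarrow^* \Delta'$ that $\phi(\Delta')$ holds for every reachable $\Delta'$: the base case is the hypothesis $\phi(\Delta)$, and the inductive step applies clause~(3) of Definition~\ref{def:safety} to the last reduction step. Once $\phi(\Delta')$ is established for every reachable $\Delta'$, clause~(1) gives that each such $\Delta'$ is a safe state, which is exactly $\Phi(\Delta)$. Note that for this direction only clauses~(1) and~(3) are used.

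For the backward direction I would verify that $\Phi$ satisfies all three closure conditions. Clause~(1) is immediate: taking zero reduction steps in the definition of $\Phi(\Delta)$ shows $\Delta$ itself is a safe state. Clause~(3) is also routine: if $\Phi(\Delta'')$ and $\Delta'' \rightarrow \Delta'''$, then every $\Delta''''$ with $\Delta''' \rightarrow^* \Delta''''$ also satisfies $\Delta'' \rightarrow^* \Delta''''$ by prepending the step, hence is a safe state, giving $\Phi(\Delta''')$. The main obstacle is clause~(2), the closure under top-level unfolding: I must show $\Phi(\Delta'', \pp : \mu\ty.\T)$ implies $\Phi(\Delta'', \pp : \T[\mu\ty.\T/\ty])$, even though unfolding is \emph{not} itself a $\rightarrow$-step. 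The key auxiliary fact is that the labelled transition relation identifies a recursive component with its unfolding: since $\RULE{$\mu$}$ is the only rule applicable to a component of the form $\pp : \mu\ty.\T$ and it defers to a transition of $\pp : \T[\mu\ty.\T/\ty]$, an inversion argument shows that $\Delta'', \pp : \mu\ty.\T \rightarrow \Delta'''$ holds if and only if $\Delta'', \pp : \T[\mu\ty.\T/\ty] \rightarrow \Delta'''$, with identical successor $\Delta'''$. Consequently the two contexts have exactly the same $\rightarrow^*$-reachable states, and (being determined by the same labelled transitions) each is a safe state precisely when the other is. This gives $\Phi(\Delta'', \pp : \T[\mu\ty.\T/\ty])$ from $\Phi(\Delta'', \pp : \mu\ty.\T)$, completing the verification that $\Phi$ is a safety property and hence the lemma. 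I expect the bulk of the care to lie in this unfolding equivalence, which should be discharged by a short inversion on the LTS rules $\RULE{$\mu$}$, $\RULE{E-IO}$, $\RULE{E-BS}$, $\RULE{Comp}$, $\RULE{Msg}$ and $\RULE{Bra}$.
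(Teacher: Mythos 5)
Your proof is correct, and the forward direction coincides with the paper's (induction on the length of the reduction sequence using clause~(3), then clause~(1)). The backward direction reaches the same conclusion by a mildly different decomposition. The paper takes as its witness the \emph{minimal} relation containing $\Delta$ and closed under $\rightarrow$ and top-level unfolding; for that witness, clauses~(2) and~(3) of Definition~\ref{def:safety} hold by construction, and the work goes into clause~(1) --- showing every member is a safe state --- which the paper discharges by a commutation argument rearranging any mixed sequence of reductions and unfoldings into the form $\Delta \rightarrow^* \Delta'' $ followed by unfoldings only (using that an unfolding followed by a reduction collapses to a single reduction). You instead take the reachability predicate $\Phi$ itself as the witness, for which clauses~(1) and~(3) are immediate and all the work sits in clause~(2), closure under unfolding. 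The two arguments are dual in this sense, and both ultimately rest on the same key fact: a context and its top-level unfolding have identical outgoing labelled transitions (by inversion on rule \RULE{$\mu$}), hence the same reachable states and the same safe-state status. Your version is arguably slightly more direct, since the candidate safety property is stated explicitly rather than as a closure; neither approach is more general than the other.
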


\begin{theorem}[Communication safety, Corollary~4.7 in \cite{Scalas2019}] 
Suppose $\provesbot \M \tri \Delta$ by \RULE{B-Sess} in 
Figure~\ref{fig:processtyping}. 
Then $\M$ is communication safe. 
\end{theorem}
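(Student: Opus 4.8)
The plan is to derive communication safety from the standard combination of \emph{subject reduction} (typability is preserved under $\red$, with the typing context transitioning in lockstep) and \emph{error exclusion} (a safe typing context forbids the two error-producing reductions \rulename{v-err} and \rulename{c-err}). Since $\provesbot \M \tri \Delta$ by \RULE{B-Sess} already gives $\vdash \PP_i : \T_i$ for each participant and $\safe(\Delta)$ with $\Delta = \{\pp_i{:}\T_i\}_{i\in I}$, the whole argument reduces to an induction on the length of $\M \red^* \M'$: at every step the reached session is typed by a \emph{safe} context, and a safe context cannot type $\error$.

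First I would establish the bridge between process reductions and the typing-context LTS of Definition~\ref{def:composition_reduction}. Concretely, I would show by inversion on the process typing rules that whenever $\M = \prod_{i\in I}\pp_i :: \PP_i$ with $\vdash \PP_i : \T_i$ performs a communication step, the context $\Delta$ can perform the matching transition: \rulename{r-comm} is matched by \RULE{E-IO} on both sides combined through \RULE{Msg} into $\redOutIn{\pp}{\pq}$, and \rulename{r-bra} is matched by \RULE{E-BS} combined through \RULE{Bra} into $\redSelBra{\pp}{\pq}{l}$. Here I must take care of \RULE{T-Sub}: subtyping keeps the payload sort on inputs/outputs invariant and is covariant on selection / contravariant on branching, so the offered top-level action is still realisable by $\Delta$ up to the residual being a subtype. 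I also need typability to be stable under $\prestruct$, which is immediate for participant reordering and, for recursion, is handled by \RULE{$\mu$} together with \RULE{T-Rec}/\RULE{T-Var}.

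Next I would prove subject reduction: if $\vdash \PP_i : \T_i$ for all $i$, $\safe(\Delta)$ with $\Delta = \{\pp_i{:}\T_i\}$, and $\M \red \M'$ with $\M' \not\prestruct \error$, then there is $\Delta'$ with $\Delta \rightarrow^* \Delta'$ such that the components of $\M'$ are typed by $\Delta'$ and, crucially, $\safe(\Delta')$ by Lemma~\ref{thm:safety_from_reductions}. The communication cases (\rulename{r-comm}, \rulename{r-bra}) use the bridge of the previous paragraph, with the continuation typed by the residual type; the conditional cases (\rulename{t-cond}, \rulename{f-cond}) keep $\Delta' = \Delta$ because \RULE{T-Cond} assigns both branches the \emph{same} type; and \rulename{r-str} is absorbed by the stability under $\prestruct$. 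For error exclusion I would argue: \rulename{v-err} cannot fire because \RULE{T-Cond} forces the guard to have sort $\tbool$ and expression soundness makes it evaluate to a boolean; and \rulename{c-err} cannot fire because inversion on the typing of $\pp :: \procsel{\pq}{l}{\PP}$ and $\pq :: \procbrasub{\pp}{l_i : \PP_i}{i\in \I}$ yields $\Delta \redSel{\pp}{\pq}{l}$ and $\Delta \redBra{\pq}{\pp}{l_k}$, whence safe-state condition~(2) of Definition~\ref{def:safety} forces $\Delta \redSelBra{\pp}{\pq}{l}$, so $l$ is an offered branch, contradicting the side condition of \rulename{c-err}. Composing these by induction on the number of steps of $\M \red^* \M'$ shows every reachable $\M'$ is typed by a safe context and hence $\M' \not\prestruct \error$, which is exactly Definition~\ref{def:properties}(1).

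The main obstacle I expect is the subject reduction step, and specifically getting the correspondence between a single process reduction and the context transition exactly right in the presence of \RULE{T-Sub}: I must confirm that the covariant selection / contravariant branching directions still let $\Delta$ realise the communication and that the residual context stays safe through Lemma~\ref{thm:safety_from_reductions}. The asymmetry highlighted in Remark~\ref{rem:asymmetric_safe_live}---all selection labels must be matchable, only \emph{some} branching label need be---is precisely what makes the \rulename{c-err} case go through, so I would be careful to invoke safe-state condition~(2) in the correct direction rather than its converse.
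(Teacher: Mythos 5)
The paper does not prove this statement itself: it is recalled verbatim from Scalas and Yoshida (Corollary~4.7 of \cite{Scalas2019}), and the corresponding argument there is exactly the one you outline --- subject reduction linking session reductions to safe transitions of the typing context, plus error exclusion using sort soundness for \rulename{v-err} and safe-state condition~(2) for \rulename{c-err}. Your reconstruction, including the careful handling of \RULE{T-Sub} (covariant selection keeps the chosen label in the assigned type, contravariant branching shrinks the offered set so that safety forces the chosen label back into the process's branches), is correct and matches the standard proof.
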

Unfortunately, under the bottom-up system, 
typable $\M$ does not guarantee neither deadlock-freedom nor 
liveness properties. To guarantee those properties for $\M$, 
one needs to \emph{additionally} check $\Delta$ satisfies  
the corresponding property $\varphi(\Delta)$. 
Below we recall the definitions from \cite{Scalas2019,BSYZ2022,YH2024}.

\begin{definition}[Deadlock-freedom, Figure 5(2) in \cite{Scalas2019}]\label{def:deadlock_freedom}
  $\Delta$ is deadlock-free if $\Delta \rightarrow^* \Delta' \not\rightarrow$ implies $\unfold{\T_\pp} = \tend$ for all $\pp:\T_\pp \in \Delta'$.
\end{definition}
\begin{definition}[Fair and Live Paths, Definition~17 in \cite{BSYZ2022}]
\label{def:fair_live_paths}
A \emph{path} is a sequence of contexts $\mathcal{P} ::=
(\Delta_i)_{i \in N}$, where $N = \{0, 1, 2, \dots\}$ is a (finite or
infinite) set of consecutive natural numbers, and
$\Delta_i \rightarrow \Delta_{i+1}$ for all $i, i+1 \in N$.  Then a path
is \emph{fair} if, for all $i \in N$, $\Delta_i \redOutIn{\pp}{\pq}$
implies $\exists j \geq
i.\:\Delta_j \redOutIn{\pp}{\pq} \Delta_{j+1}$, and
$\Delta_i \redSelBra{\pp}{\pq}{l}$ implies $\exists l', j$ such that
$j \geq i$ and $\Delta_j \redSelBra{\pp}{\pq}{l'} \Delta_{j+1}$.

A path is \emph{live} if, for all $i \in N$:
{\rm (1)} $\Delta_i \redOut{\pp}{\pq}{S}$ implies $\exists j \geq i.\:\Delta_j \redOutIn{\pp}{\pq} \Delta_{j+1}$; 
{\rm (2)} $\Delta_i \redIn{\pp}{\pq}{S}$ implies $\exists j \geq i.\:\Delta_j \redOutIn{\pq}{\pp} \Delta_{j+1}$; 
{\rm (3)} $\Delta_i \redSel{\pp}{\pq}{l}$ implies $\exists l', j$ such that $j \geq i$ and $\Delta_j \redSelBra{\pp}{\pq}{l'} \Delta_{j+1}$; and 
{\rm (4)} $\Delta_i \redBra{\pp}{\pq}{l}$ implies $\exists l', j$ such that $j \geq i$ and $\Delta_j \redSelBra{\pq}{\pp}{l'} \Delta_{j+1}$.
\end{definition}
\begin{definition}[Liveness]\label{def:liveness}
$\Delta$ is \emph{live}, denoted $\live(\Delta)$, 
if $\Delta \rightarrow^* \Delta'$ implies that all paths starting with $\Delta'$ that are fair are also live.
\end{definition}

\begin{remark}[Liveness]\label{rem:liveness}
We require fair paths to define liveness of typing contexts 
(Definition~\ref{def:liveness}), while liveness of session $\M$ 
(Definition~\ref{def:properties}) 
does not require the fairness assumption. 
This is because a typing context without the fair path requirement
can type \emph{non-live sessions}. See \cite[Example~5.14]{Scalas2019} and \cite[Example 9]{YH2024}. 
More explanations with examples  
can be found in
\iffull{\ Appendix~\ref{app:rem:liveness}}\else{\ \cite{UY2024}}\fi.
\end{remark}

\begin{theorem}[Deadlock-freedom and liveness, Theorem~5.15 in \cite{Scalas2019}]
Suppose $\provesbot \M \tri \Delta$ with $\varphi \in \{\df, \live\}$, 
and $\varphi(\Delta)$. Then $\M$ satisfies $\varphi$. 
\end{theorem}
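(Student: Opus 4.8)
The plan is to transfer the property $\varphi$ from the typing context $\Delta$ to the session $\M$ by setting up an operational correspondence between the session reduction relation $\red$ (Figure~\ref{fig:reduction_sessions}) and the context transition relation $\rightarrow$ (Definition~\ref{def:composition_reduction}), and then pushing the context-level guarantee through this correspondence. The two ingredients I would establish first are: \textbf{(SR)} a subject-reduction/fidelity statement --- if $\provesbot \M \tri \Delta$ and $\M \red \M'$, then there is $\Delta'$ with $\Delta \rightarrow^* \Delta'$ (the step being silent, $\Delta = \Delta'$, exactly when the reduction is a conditional or a recursion-unfolding step) and $\provesbot \M' \tri \Delta'$; and \textbf{(Cmp)} a completeness statement --- if $\provesbot \M \tri \Delta$ and $\Delta \rightarrow \Delta'$, then $\M$ can realise this move, i.e.\ $\M \red^* \M'$ (possibly preceded by silent conditional and recursion reductions) with $\provesbot \M' \tri \Delta'$. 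Both follow by matching the rules of Figure~\ref{fig:reduction_sessions} against \RULE{Msg}, \RULE{Bra}, \RULE{E-IO}, \RULE{E-BS} and \RULE{$\mu$}: an output prefix $\procout{\pq}{\e}{\Q}$ at $\pp$ is typed $\tout{\pq}{\S}{\T}$ and enables the context move $\Delta \redOut{\pp}{\pq}{\S} \cdot$, dually for inputs, selections and branchings, so \rulename{r-comm} mirrors \RULE{Msg} and \rulename{r-bra} mirrors \RULE{Bra}, while \rulename{t,f-cond} and the $\mu$-unfolding built into $\prestruct$ leave $\Delta$ unchanged. I also need that $\varphi$ is preserved along $\rightarrow$; this is immediate, since both $\df$ (Definition~\ref{def:deadlock_freedom}) and $\live$ (Definition~\ref{def:liveness}) are stated over all $\Delta'$ reachable by $\rightarrow^*$ and are therefore closed under $\rightarrow$.

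For \textbf{deadlock-freedom} ($\varphi = \df$), take any $\M \red^* \M' \not\red$. Iterating (SR) yields $\Delta'$ with $\Delta \rightarrow^* \Delta'$, $\provesbot \M' \tri \Delta'$ and $\df(\Delta')$. I first bring $\M'$ to head normal form under $\prestruct$, unfolding recursions and discharging conditionals (which fire via \rulename{t,f-cond}, the guards evaluating to booleans by typedness), so that each component is $\inact$ or a bare communication prefix. The key step is $\M' \not\red \Longrightarrow \Delta' \not\rightarrow$: by (Cmp) a context move $\Delta' \redOutIn{\pp}{\pq}$ or $\Delta' \redSelBra{\pp}{\pq}{l}$ would require complementary prefixes at $\pp$ and $\pq$ and hence a matching \rulename{r-comm} or \rulename{r-bra} reduction of $\M'$, contradicting irreducibility. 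With $\Delta' \not\rightarrow$, Definition~\ref{def:deadlock_freedom} forces $\unfold{\Delta'(\pp_i)} = \tend$ for every $\pp_i$; a prefix-headed process cannot be typed by a type unfolding to $\tend$, so each component is $\inact$, giving $\M' \prestruct \prod_{i} \pp_i :: \inact$, which is exactly Definition~\ref{def:properties}(2).

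For \textbf{liveness} ($\varphi = \live$), take $\M \red^* \M' \prestruct \M_1 \pc \pp :: \PP$ with $\PP$ a pending action, say $\PP = \procin{\pq}{x}{\PP'}$ (the output, selection and branching cases are symmetric). By (SR), $\provesbot \M' \tri \Delta'$ with $\live(\Delta')$, and the input prefix gives $\unfold{\Delta'(\pp)} = \tin{\pq}{\S}{\T''}$, hence $\Delta' \redIn{\pp}{\pq}{\S}$. The goal, per Definition~\ref{def:properties}(3a), is to exhibit one reduction sequence of $\M'$ that actually performs the input. I would manufacture it by choosing a \emph{fair} path in the context: starting from $\Delta'$, schedule enabled interactions round-robin so that every continuously-enabled interaction is eventually fired, producing a fair path in the sense of Definition~\ref{def:fair_live_paths}. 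Since $\live(\Delta')$ declares every fair path live, clause~(2) of the live-path definition guarantees some $\Delta_j \redOutIn{\pq}{\pp} \Delta_{j+1}$ that consumes the pending input at $\pp$. Finally I lift this finite context prefix back to a session run via (Cmp), inserting the silent conditional and recursion reductions as needed, to obtain $\M' \red^* \M_2 \pc \pp :: \PP'\sub{\val}{x}$.

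The main obstacle is the liveness case, for two reasons. First, the definitions are asymmetric in fairness: session liveness (Definition~\ref{def:properties}) asks only for the \emph{existence} of a completing run, whereas context liveness (Definition~\ref{def:liveness}) quantifies over \emph{all fair} paths, so the argument must synthesise a concrete fair context path and then transport exactly its firing of the target action back to the session --- which is precisely where (Cmp) and the bookkeeping of silent $\prestruct$-steps do the heavy lifting. Second, the selection/branching asymmetry flagged in Remark~\ref{rem:asymmetric_safe_live} means that lifting a \RULE{Bra} move must pick a label that the selector offers and the brancher accepts; guaranteeing that such a label is not starved along the fair path is the delicate point, and it is what ultimately requires the full strength of the fair-path hypothesis rather than mere reachability.
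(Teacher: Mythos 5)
This statement is not proved in the paper at all: it is imported verbatim as Theorem~5.15 of \cite{Scalas2019} (the paper explicitly says ``The following theorems are proved in \cite{Scalas2019,YH2024,Ghilezan2019}''), so there is no in-paper proof to compare against. Your architecture --- subject reduction (SR) plus a session-fidelity converse (Cmp), then transferring $\df$ and $\live$ through the resulting operational correspondence --- is exactly the strategy used in the cited work, and your deadlock-freedom case is essentially right: stuck session implies stuck context, context deadlock-freedom forces every type to unfold to $\tend$, and a prefix-headed process cannot inhabit such a type.

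There is, however, a genuine gap in (Cmp) as you state it, and it matters precisely in the liveness case. You assert the \emph{universal} form: every context move $\Delta \rightarrow \Delta'$ is realisable by the session. This is false in the presence of \RULE{T-Sub}: selection subtyping is covariant, so a process $\procsel{\pq}{l}{\PP'}$ may be typed by $\tsel{\pq}{l:\T, l':\T'}$, and the context can then fire $\SelBra{\pp}{\pq}{l'}$ while the session cannot. Consequently your plan of building a fair path by round-robin scheduling \emph{at the context level} and then lifting it back via (Cmp) can get stuck at the first such phantom selection. The repair is to reverse the direction of the construction: run a fair scheduler at the \emph{session} level, observe that each session step induces a context step (the existential form of fidelity, which does hold), and then argue that the induced context path is fair --- this works because Definition~\ref{def:fair_live_paths} only demands that the \emph{pair} $(\pp,\pq)$ eventually interact with \emph{some} label $l'$, not the specific label the context happened to offer, and because a type prefix at $\pp$ toward $\pq$ forces (after resolving conditionals and recursion) a process prefix at $\pp$ toward $\pq$ with at least one common label. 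With the fair context path obtained this way, context liveness fires the pending action on that very path, and the corresponding session step is already a session reduction, so no lifting is needed. You correctly identified label starvation as the delicate point, but the fix requires inverting which level drives the schedule, not just more bookkeeping on top of (Cmp).
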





Note that for all $\Delta$, $\live(\Delta)$ implies 
$\df(\Delta)$ but 
$\live(\Delta)$ and $\safe(\Delta)$ are incomparable, and  
$\df(\Delta)$ and $\safe(\Delta)$ are incomparable~\cite[Theorem~4]{YH2024}.  
See Figure~\ref{fig:venn} with appropriate examples. 

\paragraph{\bf Complexity of bottom-up system}
The complexity of checking whether $\M$ satisfies 
$\varphi$ (Figure~\ref{fig:bottom_up})
is reduced to (i) the minimum type inference (given 
in Theorem~\ref{thm:typeinfer:complexity})  
and (ii) checking whether inferred $\Delta$ satisfies $\varphi$.
For the rest of this section, we investigate the complexity of (ii). 

\subsection{Checking Safety, Deadlock-Freedom and Liveness is in \PSPACE}
\label{subsec:pspacecomplete}
Unfortunately, given typing context $\Delta$, 
checking the three aforementioned properties is \emph{\PSPACE-complete}. 
We start proving that checking these properties 
are in \PSPACE. 
First, we bound the number of reachable contexts $\Delta'$ such that $\Delta \rightarrow^* \Delta'$ by the product of the sizes of the types.

\begin{restatable}{lemma}{numberofreachablecompositionstates}\label{thm:number_of_reachable_composition_states} \proofreference{Appendix~\ref{app:pspace_completeness_of_safety_and_deadlock_freedom}}
  Let $\Delta = \prod_{i \in \I} \pp_i : \T_i$. Then $|\{\Delta' \mid \Delta \rightarrow^* \Delta'\}| \leq \prod_{i \in \I}\size{\T_i} = 2^{\bigO{|\Delta|}}$ where 
  $\prod_{i \in \I}n_i$ denotes the product of $n_i$.
\end{restatable}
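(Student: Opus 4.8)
The plan is to exploit the fact that each participant's type evolves essentially independently under $\rightarrow$, so that a reachable context is completely determined by the tuple of ``local states'' of the individual types, and each such local state ranges over only a small set of continuations.

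\textbf{Single-component reachability.} First I would isolate the single-component behaviour. Inspecting Definition~\ref{def:composition_reduction}, the only rules that change one component in isolation are \RULE{E-IO}, \RULE{E-BS} and \RULE{$\mu$}, and each of them takes $\pp{:}\T$ to $\pp{:}\T'$ where $\T'$ is a continuation of $\T$ (a tail or a branch, possibly after unwinding one $\mu$), keeping the participant $\pp$ fixed. Writing $\mathrm{Reach}(\T)$ for the set of types $\T'$ with $\pp{:}\T \trans{\cxell}^{*} \pp{:}\T'$, I would prove by induction on $\T$ that $\mathrm{Reach}(\T) \subseteq \Sub(\T)$. The one delicate point is closure under unfolding: when a recursion is unwound via \RULE{$\mu$}, the resulting continuation has the shape $\T''\sub{\mu\ty.\T''}{\ty}$, which is again a subformula by the last clause of Definition~\ref{def:subformulas_of_local_type}, and $\Sub$ is closed under taking further subformulas. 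Combined with the easy inductive bound $\size{\Sub(\T)} \le \size{\T}$ (immediate from Definitions~\ref{def:size_of_local_type} and~\ref{def:subformulas_of_local_type}), this yields $|\mathrm{Reach}(\T)| \le \size{\T}$; equivalently, $\mathrm{Reach}(\T)$ corresponds to the node set of the type graph $\GG(\T)$ of Definition~\ref{def:typegraph}, whose finiteness may simply be cited.

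\textbf{Lifting to the whole context.} Next I would lift this to $\Delta$. Since $\Delta \rightarrow \Delta'$ means $\Delta \redOutIn{\pp}{\pq} \Delta'$ or $\Delta \redSelBra{\pp}{\pq}{l} \Delta'$, an inspection of \RULE{Msg}, \RULE{Bra} and \RULE{Comp} shows that a single reduction step leaves $\dom{\Delta}$ unchanged and replaces the types of exactly the two participants $\pp,\pq$ by continuations, fixing every other component. By induction on the length of the reduction sequence, every $\Delta'$ with $\Delta \rightarrow^{*} \Delta'$ therefore has the form $\prod_{i\in\I}\pp_i{:}\T_i'$ with $\T_i' \in \mathrm{Reach}(\T_i)$ for each $i$. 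Since the participant names $\pp_i$ are fixed and distinct, the assignment $\Delta' \mapsto (\T_i')_{i\in\I}$ is injective into $\prod_{i\in\I}\mathrm{Reach}(\T_i)$, whence
\[
  |\{\Delta' \mid \Delta \rightarrow^{*} \Delta'\}| \ \le\ \prod_{i\in\I}|\mathrm{Reach}(\T_i)| \ \le\ \prod_{i\in\I}\size{\T_i}.
\]
Finally I would convert this product into the stated exponential bound: using $\log_2 \size{\T_i} \le \size{\T_i}$ and $\size{\Delta} = \sum_{i\in\I}\size{\T_i}$, we obtain $\prod_{i\in\I}\size{\T_i} = 2^{\sum_{i\in\I}\log_2\size{\T_i}} \le 2^{\sum_{i\in\I}\size{\T_i}} = 2^{\size{\Delta}} = 2^{\bigO{\size{\Delta}}}$.

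I expect the main obstacle to be the closure argument of the first step: showing rigorously that unfolding recursions never produces a type outside $\Sub(\T)$, i.e.\ that $\Sub(\T)$ is a finite set closed under the one-step transitions of the LTS (the session-type analogue of a Fischer--Ladner closure). Everything after that is bookkeeping about which components a single synchronisation step can touch, and the elementary size estimate.
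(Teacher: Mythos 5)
Your proposal is correct and follows essentially the same route as the paper: the paper first proves (Lemma~\ref{thm:composition_transition_as_type_graph}) that each component of a reachable context is a subformula of the original type, using the correspondence between context reductions and type-graph transitions together with $\T' \in \Sub(\T)$ for reachable nodes and $|\Sub(\T)| \le |\T|$, and then takes the product over components exactly as you do. Your write-up is merely more explicit about the closure of $\Sub(\T)$ under unfolding and about the injectivity of the component decomposition, both of which the paper leaves implicit.
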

\noindent Thus the number of reachable contexts is at most exponential in the
size of the input ($\sum_{i \in \I} |\T_i|$). This gives rise to a
nondeterministic polynomial-space algorithm for checking the complement
of safety, where all states at a distance up to this exponential bound
are nondeterministically visited and checked for states that are not
safe. 

\begin{theorem}\label{thm:safety_liveness_pspace}
\proofreference{Appendix~\ref{app:pspace_completeness_of_safety_and_deadlock_freedom}}  
Checking for safety and deadlock-freedom is in \PSPACE.
\end{theorem}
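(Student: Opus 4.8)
The plan is to reduce both property checks to a \emph{reachability} problem on the labelled transition system of typing contexts (Definition~\ref{def:composition_reduction}) and to exploit the exponential bound on the number of reachable contexts from Lemma~\ref{thm:number_of_reachable_composition_states}. The key observation is that ``$\Delta$ is not safe'' and ``$\Delta$ is not deadlock-free'' are both \emph{existential} (reachability) statements, so their complements — safety and deadlock-freedom — are universal. I would therefore show that the \emph{complements} lie in nondeterministic polynomial space, and then conclude via Savitch's theorem (nondeterministic polynomial space equals $\PSPACE$, \cite{Sipser2012}) together with the closure of $\PSPACE$ under complement.

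First I would establish that every reachable context $\Delta'$ with $\Delta \rightarrow^* \Delta'$ is representable in polynomial space. Inspecting the rules of Definition~\ref{def:composition_reduction}, each component of $\Delta'$ is obtained from some $\T_i$ by transitions that descend into (unfoldings of) subterms, so every reachable component lies in $\Sub(\T_i)$; since the cardinality of $\Sub(\T_i)$ is at most $|\T_i|$, each component is encodable as a position into the syntax of $\T_i$, and hence $|\Delta'|$ is polynomial in $|\Delta|$. Moreover, the one-step relation $\Delta' \rightarrow \Delta''$, as well as the safe-state conditions of Definition~\ref{def:safety} and the deadlock test of Definition~\ref{def:deadlock_freedom}, are all decidable in polynomial time on this representation, as they only inspect the unfoldings of finitely many relevant components.

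Next I would give the nondeterministic algorithm for the complement of safety. By Lemma~\ref{thm:safety_from_reductions}, $\Delta$ is \emph{not} safe iff some reachable $\Delta'$ fails to be a safe state. The algorithm stores the current context (polynomial space) together with a step counter bounded by $\prod_{i \in \I}|\T_i| = 2^{\bigO{|\Delta|}}$, which needs only $\bigO{|\Delta|}$ bits. At each step it nondeterministically chooses a reduction $\Delta' \rightarrow \Delta''$, increments the counter, and tests in polynomial time whether the current context violates one of the two safe-state conditions of Definition~\ref{def:safety}; it accepts as soon as a violation is found, and rejects if the counter is exhausted. By Lemma~\ref{thm:number_of_reachable_composition_states} every reachable context is reached within the bounded number of steps, so the algorithm accepts iff $\Delta$ is not safe, using nondeterministic polynomial space. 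The deadlock-freedom case is entirely analogous: by Definition~\ref{def:deadlock_freedom}, $\Delta$ is \emph{not} deadlock-free iff some reachable $\Delta'$ satisfies $\Delta' \not\rightarrow$ while some component is not $\tend$ after unfolding, and the same counter-bounded search applies with this polynomial-time acceptance test.

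Hence the complements of both properties lie in nondeterministic polynomial space; Savitch's theorem and closure under complement then place safety and deadlock-freedom in $\PSPACE$. The step I would be most careful about — and the main technical obstacle — is the polynomial-space representation of reachable contexts together with the verification that the reduction relation and the safe-state and deadlock predicates are all polynomial-time decidable on that representation. Once this is in place, the remainder is the standard reachability-in-$\PSPACE$ argument, made effective by the exponential counting bound of Lemma~\ref{thm:number_of_reachable_composition_states}.
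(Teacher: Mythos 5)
Your proposal is correct and follows essentially the same route as the paper: a nondeterministic polynomial-space search for a reachable non-safe (resp.\ deadlocked) state, bounded in length by the $\prod_{i\in\I}|\T_i|$ count from Lemma~\ref{thm:number_of_reachable_composition_states}, storing only the current context (each component being a subformula of the original types) and a counter, and concluding via $\coNPSPACE=\NPSPACE=\PSPACE$. The only cosmetic difference is that you invoke Lemma~\ref{thm:safety_from_reductions} explicitly to reduce non-safety to reachability of a non-safe state, which the paper's main-text proof leaves implicit.
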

\begin{proof}
We construct a nondeterministic polynomial-space algorithm that accepts if and only if the input is not safe.
Given an initial context $\Delta$, the algorithm nondeterministically traverses a path of $\prod_{i \in \I} |\T_i|$ transitions or fewer,
finding a state $\Delta'$ such that $\Delta \rightarrow^* \Delta'$. Then, it accepts iff $\Delta'$ is not a safe state.
Thus, there exists some accepting path iff there exists a reachable unsafe state by Theorem~\ref{thm:number_of_reachable_composition_states}.
Note that the path does not need to be explicitly stored; rather, only the current context needs to be stored (which can be done in polynomial space).

We use the fact that $\coNPSPACE = \PSPACE$. Deadlock-freedom is similarly proved by checking a deadlocked state. 
\iffull{\ The algorithm (Algorithm~\ref{alg:pspace_safety}) is found in
  Appendix~\ref{app:pspace_completeness_of_safety_and_deadlock_freedom}}\else{\ The
  algorithm is found in \cite{UY2024}}\fi.
\end{proof}


The proof that checking for liveness is in $\PSPACE$ is more involved:
first, we define an equivalent characterisation of liveness. Live paths are characterised by \emph{barbs}, which are the offered transitions in some context, and \emph{observations}, which are the transitions that are matched in some reduction. 

\begin{figure}
{
   \centering
   \small
    \begin{spacing}{3}
    \newcommand{\treesep}{.5em}
    \begin{prooftree}
      \hypo {\Delta \redDagger{\pp}{\pq}{[\S]} \Delta'
            \quad \dagger\in \set{!,?}}
      \infer1[\RULE{Barb-IO}]{\barbDagger{\pp}{\pq} \in \barbs(\Delta)}
    \end{prooftree}\hspace{\treesep}%
    \begin{prooftree}
      \hypo {\Delta \redDagger{\pp}{\pq}{l} \Delta' 
             \quad \dagger\in \set{\oplus,\&}}
      \infer1[\RULE{Barb-BS}] {\barbBra{\pp}{\pq} \in \barbs(\Delta)}
    \end{prooftree}\hspace{\treesep}%

    \begin{prooftree}
      \hypo {\Delta \redOutIn{\pp}{\pq} \Delta'}
      \infer1[\RULE{Obs-O}] {\barbOut{\pp}{\pq} \in \observations(\Delta, \Delta')}
    \end{prooftree}
\begin{prooftree}
      \hypo {\Delta \redOutIn{\pp}{\pq} \Delta'}
      \infer1[\RULE{Obs-I}] {\barbIn{\pq}{\pp} \in \observations(\Delta, \Delta')}
    \end{prooftree}\hspace{\treesep}%
    \begin{prooftree}
      \hypo {\Delta \redSelBra{\pp}{\pq}{l} \Delta'}
      \infer1[\RULE{Obs-B}] {\barbBra{\pq}{\pp} \in \observations(\Delta, \Delta')}
    \end{prooftree}\hspace{\treesep}%
    \begin{prooftree}
      \hypo {\Delta \redSelBra{\pp}{\pq}{l} \Delta'}
      \infer1[\RULE{Obs-S}] {\barbSel{\pp}{\pq} \in \observations(\Delta, \Delta')}
    \end{prooftree}
    \end{spacing}
  }
\vspace{-1cm}
\caption{Barbs $\barbs(\Delta)$ and observations $\observations(\Delta,\Delta')$.}
\label{fig:barbs_obs}
\end{figure}

\begin{definition}[Barbs and observations]\label{def:barbs_obs}
Recall $\cxell$ defined in Definition~\ref{def:composition_reduction} 
and abbreviate 
$\barbOut{\pp}{\pq}$ etc. if the payload is unimportant. 
The sets of \emph{barbs} of $\Delta$ and \emph{observations} of $\Delta,\Delta'$ are defined 
in Figure~\ref{fig:barbs_obs}.
For paths $\mathcal{P} = (\Delta_i)_{i \in N}$, 
define $\barbs(\mathcal{P}) = \bigcup_{i \in N} \barbs(\Delta_i)$ 
and $\observations(\mathcal{P}) = \bigcup_{i, i+1 \in N} \observations(\Delta_i, \Delta_{i+1})$.
\end{definition}
The set of barbs contains all actions offered by some type in $\Delta$, and the set of observations contains all barbs that become satisfied in the definition of liveness from some reduction (Definition \ref{def:liveness}).
Note that $\barbs$ and $\observations$ are checkable in linear time.
%
%
Then we define \emph{counterwitness}, which is a path that is fair but not live. 
\begin{definition}\label{def:counterwitness}
  A path $(\Delta_i)_{i \in N}$ is a counterwitness of $\Delta$ if $\Delta = \Delta_0$ and $\Delta_i \trans{\cxell_i} \Delta_{i+1}$ and:
  \begin{itemize}
    \item There exists $k \in N$ and $a \in \barbs(\Delta_k)$ such
      that $a \not\in \observations((\Delta_i)_{i \geq k})$; and 
    \item For all $k \in N$, $\{\cxell_i \mid i \geq k\} = \{\cxell \mid \exists i \geq k.\:\Delta_i \trans{\cxell}, \cxell \in \{\OutIn{\pp}{\pq}, \SelBra{\pp}{\pq}{l}\}\}$.
  \end{itemize}
\end{definition}


\iffull{The proofs in the following lemmas are found in 
Appendix~\ref{app:pspace_completness_of_liveness}.}\else{}\fi
\begin{restatable}{lemma}{counterwitnessiffnotlive}\label{thm:counterwitness_iff_not_live}
  A counterwitness exists for $\Delta$ if and only if $\Delta$ is not live.
\end{restatable}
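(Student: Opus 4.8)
The plan is to bridge Definitions~\ref{def:liveness} and~\ref{def:counterwitness} through the barb/observation machinery of Figure~\ref{fig:barbs_obs}, reducing the two clauses of a counterwitness to ``not live'' and ``fair'' respectively, and then to build the required path using the finiteness of the reachable state space (Lemma~\ref{thm:number_of_reachable_composition_states}). First I would establish a \emph{bridge lemma}: a path $\mathcal{P} = (\Delta_i)_{i \in N}$ is live (in the sense of Definition~\ref{def:fair_live_paths}) if and only if for every $k \in N$ and every $a \in \barbs(\Delta_k)$ we have $a \in \observations((\Delta_i)_{i \geq k})$. This is proved by matching each of the four clauses of liveness with the corresponding Barb/Obs rules: an output/input obligation $\redOut{\pp}{\pq}{\S}$ or $\redIn{\pp}{\pq}{\S}$ generates a barb (\RULE{Barb-IO}) that is discharged exactly by the observation produced by $\redOutIn{\pp}{\pq}$ (\RULE{Obs-O}/\RULE{Obs-I}), and likewise a selection/branching obligation generates a barb (\RULE{Barb-BS}) discharged by $\redSelBra{\pp}{\pq}{l}$ (\RULE{Obs-S}/\RULE{Obs-B}). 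With this bridge in hand, clause~(1) of Definition~\ref{def:counterwitness} says exactly that $\mathcal{P}$ is not live, and clause~(2) is a strong-fairness condition: every synchronisation label enabled from some point on is eventually taken from that point on.

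The ($\Rightarrow$) direction is then short. Given a counterwitness $(\Delta_i)$, clause~(2) implies the fairness of Definition~\ref{def:fair_live_paths} (if $\Delta_i \redOutIn{\pp}{\pq}$ or $\Delta_i \redSelBra{\pp}{\pq}{l}$, then that label lies in the enabled set and hence is taken at some $j \geq i$). Clause~(1) gives $k$ and $a \in \barbs(\Delta_k)$ with $a$ never observed after $k$; the suffix $(\Delta_i)_{i \geq k}$ is therefore a fair path that violates the bridge characterisation, hence is not live, while $\Delta = \Delta_0 \rightarrow^* \Delta_k$. By Definition~\ref{def:liveness} this already witnesses $\neg\live(\Delta)$.

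The ($\Leftarrow$) direction is the crux. From $\neg\live(\Delta)$ I obtain a reachable $\Delta'$ and a fair non-live path from $\Delta'$; the bridge then yields a barb $a$ offered at some state and never discharged thereafter. The key structural observation is that, along any fair path, the synchronisation matching $a$ is eventually never enabled --- otherwise fairness would force it to be taken and $a$ would be observed --- so the offering participant never moves and $a$ persists at every subsequent state. Since Lemma~\ref{thm:number_of_reachable_composition_states} bounds the reachable set, I would pass to a bottom strongly connected component $R$ of the reachability graph that the non-live behaviour enters, in which $a$'s synchronisation is never enabled and $a \in \barbs(s)$ for every $s \in R$. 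The counterwitness is built as a finite path from $\Delta$ into $R$ followed by a cyclic traversal of $R$, taking clause~(1) with witness $a$ and any index at which the path has entered $R$.

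The main obstacle is making the constructed path satisfy the strong-fairness clause~(2) \emph{everywhere}, as this is strictly stronger than the fairness of Definition~\ref{def:fair_live_paths}: it demands that each specific enabled label, including each individual branch label, be taken, whereas Definition~\ref{def:fair_live_paths} permits a different label to be chosen at a branching. The point is that inside the bottom SCC $R$ this is achievable: because $R$ is strongly connected and closed under its enabled transitions, conflicting choices (e.g.\ two selection labels that disable one another) can be scheduled on different passes around $R$, so a round-robin schedule takes every $R$-enabled label infinitely often while never enabling $a$'s synchronisation. I expect the delicate bookkeeping to be (i) verifying that the starved synchronisation is genuinely never enabled inside $R$, and (ii) arranging the finite prefix into $R$ so that clause~(2) is not spoiled by transient enabled labels; both rely on the finiteness of the reachable set together with the concurrency (diamond) structure of the context transitions in Definition~\ref{def:composition_reduction}.
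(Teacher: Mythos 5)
Your ``bridge lemma'' is already in the paper as Lemma~\ref{lem:live_iff_barbs_subset_obs}, and your ($\Rightarrow$) direction coincides with the paper's own (two-sentence) argument, which simply identifies clause~(1) of Definition~\ref{def:counterwitness} with non-liveness of the path via that lemma, and clause~(2) with fairness. Up to that point you are reproving what the paper already has, correctly.

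The divergence, and the gap, is in ($\Leftarrow$). You read clause~(2) literally as per-label strong fairness and try to close the resulting mismatch with Definition~\ref{def:fair_live_paths} by a round-robin traversal of a bottom SCC. That step fails: a label can be enabled at every state of the SCC while its transition \emph{leaves} the SCC, in which case no schedule confined to the SCC ever takes it and clause~(2), under your reading, is still violated. Concretely, take $\Delta = \pp{:}\mu\ty.\tsel{\pq}{l_1{:}\ty,\ l_2{:}\tin{\pr}{S}\tend},\ \pq{:}\mu\ty.\tbra{\pp}{l_1{:}\ty,\ l_2{:}\tend},\ \pr{:}\tout{\pp}{S}\tend$. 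The only way to starve the barb $\barbOut{\pr}{\pp}$ is to take $l_1$ forever; the corresponding bottom SCC is the singleton $l_1$-self-loop, the label $\SelBra{\pp}{\pq}{l_2}$ is enabled at every state of it and exits it, so your round-robin path violates your clause~(2), while any path that does satisfy per-label clause~(2) must eventually take $l_2$ and then observes every barb. Hence under your reading no counterwitness exists for this non-live $\Delta$, and no construction can rescue ($\Leftarrow$): the equivalence itself breaks. The lemma is sustained, and the paper's proof goes through, only when clause~(2) is read as coinciding with the fairness of Definition~\ref{def:fair_live_paths} (i.e.\ selection labels identified at the level of the pair $\pp,\pq$, matching the label-free barbs and observations of Figure~\ref{fig:barbs_obs}); under that reading ($\Leftarrow$) needs no SCC machinery at all --- one takes the fair non-live path from a reachable $\Delta'$ guaranteed by $\neg\live(\Delta)$ and prepends the reduction sequence $\Delta \rightarrow^* \Delta'$, which is exactly what the paper's one-line proof is doing.
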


Counterwitnesses are paths which have a barb that is never observed.

\begin{lemma}\label{lem:live_iff_barbs_subset_obs} A path $(\Delta_i)_{i \in N}$ is live iff $\barbs(\Delta_k) \subseteq \observations((\Delta_i)_{i \geq k})$ for all $k$. \label{thm:barbs_obs_live_path}
\end{lemma}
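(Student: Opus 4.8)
The plan is to reduce the lemma to a pointwise statement indexed by $k$ and then quantify over $k$. Concretely, for a fixed index $k \in N$ I would show that the four clauses of Definition~\ref{def:fair_live_paths}, instantiated at $i = k$, all hold if and only if $\barbs(\Delta_k) \subseteq \observations((\Delta_i)_{i \geq k})$. Since the definition of a live path demands these clauses for every index, and since $\observations((\Delta_i)_{i \geq k}) = \bigcup_{i \geq k} \observations(\Delta_i, \Delta_{i+1})$ by Definition~\ref{def:barbs_obs}, conjoining the pointwise equivalence over all $k \in N$ yields the lemma. No induction is needed: the argument is a faithful translation between the transition-based phrasing of liveness and the barb/observation phrasing of Figure~\ref{fig:barbs_obs}.

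First I would read off, from Figure~\ref{fig:barbs_obs}, the dictionary between offered transitions and barbs, and between completed reductions and observations. The barbs of $\Delta_k$ are precisely the offered actions $\Delta_k \redOut{\pp}{\pq}{\S}$, $\Delta_k \redIn{\pp}{\pq}{\S}$, $\Delta_k \redSel{\pp}{\pq}{l}$ and $\Delta_k \redBra{\pp}{\pq}{l}$ (with payloads and labels forgotten), recorded by \RULE{Barb-IO} and \RULE{Barb-BS}. Dually, a single step $\Delta_j \to \Delta_{j+1}$ contributes to $\observations(\Delta_j, \Delta_{j+1})$ exactly the action discharged by the underlying reduction: a message reduction $\Delta_j \redOutIn{\pp}{\pq} \Delta_{j+1}$ contributes both $\barbOut{\pp}{\pq}$ and $\barbIn{\pq}{\pp}$ (rules \RULE{Obs-O}, \RULE{Obs-I}), and a choice reduction $\Delta_j \redSelBra{\pp}{\pq}{l} \Delta_{j+1}$ contributes both $\barbSel{\pp}{\pq}$ and $\barbBra{\pq}{\pp}$ (rules \RULE{Obs-S}, \RULE{Obs-B}). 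With this dictionary, each liveness clause at $k$ reads as: ``if the corresponding action is offered at $\Delta_k$, then it occurs among the observations at some step $j \geq k$'', which is exactly membership of that barb in $\observations((\Delta_i)_{i \geq k})$.

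The step requiring genuine care is the matching of participant orders in the two ``dual'' clauses, (2) and (4). An input barb of $\pp$ --- present when $\Delta_k \redIn{\pp}{\pq}{\S}$ --- is discharged not by $\pp$ but by its partner's reduction $\Delta_j \redOutIn{\pq}{\pp} \Delta_{j+1}$; I would check that rule \RULE{Obs-I} is stated with the roles swapped, so that this reduction records exactly $\barbIn{\pp}{\pq}$, matching the barb produced by \RULE{Barb-IO}. The branching clause is symmetric: a branch barb of $\pp$, arising from $\Delta_k \redBra{\pp}{\pq}{l}$, is satisfied by $\Delta_j \redSelBra{\pq}{\pp}{l'} \Delta_{j+1}$, which \RULE{Obs-B} records as $\barbBra{\pp}{\pq}$. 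Verifying that the offered barb and the discharging observation carry identical labels after these swaps is the crux; once confirmed, the existential ``$\exists j \geq k$'' of each liveness clause becomes verbatim the set membership in $\observations((\Delta_i)_{i \geq k})$. A minor additional point is the harmless asymmetry that the choice reductions $\redSelBra{\pp}{\pq}{l}$ forget the transmitted label while clauses (3) and (4) quantify existentially over $l'$; since the observations also forget the label, the two phrasings agree.

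Finally, I would assemble the four clause-equivalences. For fixed $k$ their conjunction states precisely that every barb of $\Delta_k$ lies in $\observations((\Delta_i)_{i \geq k})$, i.e.\ $\barbs(\Delta_k) \subseteq \observations((\Delta_i)_{i \geq k})$. Quantifying over all $k \in N$ --- which coincides with the ``for all $i \in N$'' of Definition~\ref{def:fair_live_paths} --- gives both directions of the iff simultaneously, completing the proof. I expect the only real obstacle to be the bookkeeping of participant orientations described above; everything else is a direct unfolding of definitions.
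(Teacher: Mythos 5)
Your proof is correct and is precisely the direct definitional unfolding that the paper treats as immediate (the lemma is stated without an explicit proof): the four clauses of Definition~\ref{def:fair_live_paths} at index $k$ correspond one-to-one, via the rules of Figure~\ref{fig:barbs_obs}, to membership of each barb of $\Delta_k$ in $\observations((\Delta_i)_{i \geq k})$, and you correctly handle the two role-swapped cases (\RULE{Obs-I}, \RULE{Obs-B}) and the label-forgetting in the choice clauses. The only interpretive step you take --- reading the conclusion of \RULE{Barb-BS} as producing $\barbSel{\pp}{\pq}$ for $\dagger = \oplus$ and $\barbBra{\pp}{\pq}$ for $\dagger = \&$ --- is the intended reading, since otherwise the observation rules \RULE{Obs-S} and \RULE{Obs-B} would have nothing to discharge.
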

\begin{restatable}{lemma}{thmnumberofgammatransitions}\label{thm:number_of_gamma_transitions}
  Let $\Delta = \prod_{i \in \I} \pp_i : \T_i$. Let $n = \sum_{i \in \I} |\T_i|$. Then there are at most $2n$ possible reductions $\cxell \in \{\OutIn{\pp}{\pq}, \SelBra{\pp}{\pq}{l}\}$ such that $\Delta \rightarrow^* \trans{\cxell}$.
\end{restatable}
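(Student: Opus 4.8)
The plan is to charge each achievable reduction label to a distinct syntactic action-prefix of the participant that \emph{initiates} it, and then bound the number of such prefixes by the size of the types. Recall from rules \RULE{Msg} and \RULE{Bra} that a reduction $\Delta \redOutIn{\pp}{\pq} \Delta'$ is always driven by $\pp$ offering an output $\redOut{\pp}{\pq}{\S}$ (matched by an input at $\pq$), and a reduction $\Delta \redSelBra{\pp}{\pq}{l} \Delta'$ is always driven by $\pp$ offering a selection $\redSel{\pp}{\pq}{l}$ (matched by a branch at $\pq$). Hence every label $\OutIn{\pp}{\pq}$ is witnessed by an output action of $\pp$ towards $\pq$, and every label $\SelBra{\pp}{\pq}{l}$ is witnessed by a selection branch of $\pp$ towards $\pq$ carrying label $l$. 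I would first fix, for each achievable label $\cxell$, one witnessing action of its initiator.

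The first key step is to observe that reduction never creates \emph{new} communication actions: the rules \RULE{E-IO}, \RULE{E-BS}, \RULE{$\mu$} and \RULE{Comp} only replace a participant's type by a continuation, a selected branch, or an unfolding, so every type occurring in a reachable context $\Delta'$ with $\Delta \rightarrow^* \Delta'$ is (an unfolding of) a member of $\Sub(\T_i)$ for the corresponding $\pp_i$. Since unfolding substitutes recursion variables without introducing new target participants or new labels, every output prefix and every selection branch that $\pp_i$ can ever offer along \emph{any} path already appears, with the same target and label, in the syntax tree of the original $\T_i$. Thus the witnessing actions chosen above can be taken to be syntactic occurrences in the original $\T_i$, independently of how deep into the reduction they are exposed.

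The second step is the counting. For a fixed initiator $\pp_i$, distinct labels $\OutIn{\pp_i}{\pq}$ correspond to distinct targets $\pq$, which I map injectively to distinct output-prefix occurrences $\tout{\pq}{\S}{\T'}$ in $\T_i$; distinct labels $\SelBra{\pp_i}{\pq}{l}$ correspond to distinct pairs $(\pq, l)$, which I map injectively to distinct selection-branch occurrences of $\T_i$. A straightforward induction on $\T_i$ using the clauses of Definition~\ref{def:size_of_local_type} shows that the number of output prefixes and the number of selection branches are each at most $\size{\T_i}$ (the selection case uses that every branch continuation has size at least one). Summing over all participants, the number of achievable $\OutIn{}{}$ labels is at most $\sum_{i\in\I}\size{\T_i}=n$, and likewise the number of achievable $\SelBra{}{}{}$ labels is at most $n$, giving the bound $2n$ in total.

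The main obstacle I anticipate is precisely the interaction between reachability and recursion: Lemma~\ref{thm:number_of_reachable_composition_states} already shows that there can be exponentially many reachable contexts, so a naive count over reachable states is useless. The content of the argument is that, despite this exponential state space and despite unfolding repeatedly regenerating the same subterms, the set of label/target combinations that can ever be offered is frozen by the syntax of the original types. Making the ``unfolding introduces no new actions'' observation precise---so that the injective charging lands in the finite original syntax tree rather than in the infinite unfolded tree---is the step that needs the most care.
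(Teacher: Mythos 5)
Your proof is correct and takes essentially the same route as the paper's: the paper charges each achievable reduction to edges of the type graphs $\GG(\T_{\pp_i})$ (at most $2\size{\T_i}$ edges per participant), while you charge it to output-prefix and selection-branch occurrences in the syntax of the initiator's original $\T_i$ (at most $\size{\T_i}$ of each). Both arguments hinge on the same key observation---that every reachable continuation is a subformula of the original type, so unfolding and reduction expose no new target/label combinations---which the paper packages as Lemma~\ref{thm:composition_transition_as_type_graph} together with the linear bound on type-graph edges.
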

Then, we give a sufficient bound on the length and periodicity of a counterwitness.
\begin{restatable}{lemma}{witnesslengthforliveness}\label{thm:witness_length_for_liveness}\proofreference{Appendix~\ref{app:pspace_completness_of_liveness}} 
  Let $\Delta = \prod_{i \in \I} \pp_i : \T_i$. Let $m = (2n+2) \cdot \prod_{i \in \I}|\T_i|$.
  If $\Delta$ is not live, then it has a finite counterwitness of size $k$ 
such that 
$k\leq m$ or an infinite counterwitness of the form $\mathcal{P}_1\mathcal{P}_2^\omega$ with $\size{\mathcal{P}_1}\leq m$ and  
$\size{\mathcal{P}_2}\leq m$. 
\end{restatable}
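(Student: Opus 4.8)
The plan is to replace liveness by the combinatorial object that Lemma~\ref{thm:counterwitness_iff_not_live} provides: since $\Delta$ is not live, some counterwitness $\mathcal{P} = (\Delta_i)_{i \in N}$ exists, and the task is to show that one of bounded shape must exist. Two counting facts govern everything: there are at most $\prod_{i \in \I}\size{\T_i}$ reachable contexts (Lemma~\ref{thm:number_of_reachable_composition_states}) and at most $2n$ distinct reducible labels $\cxell \in \{\OutIn{\pp}{\pq}, \SelBra{\pp}{\pq}{l}\}$ (Lemma~\ref{thm:number_of_gamma_transitions}). The delicate feature I must respect throughout is the second, \emph{equality}, clause of Definition~\ref{def:counterwitness}: at every index $k$, the set of reductions fired in the tail from $k$ must equal the set of reductions enabled somewhere in that tail. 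This is strong fairness, and it is what forbids the naive shortening tricks (excising a loop can delete a firing that fairness still demands), so the bound $m = (2n+2)\prod_{i\in\I}\size{\T_i}$ must be read as a \emph{budget}: walking through at most $\prod_{i\in\I}\size{\T_i}$ states in order to fire each of the at most $2n$ outstanding reductions.

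The core is the infinite case. Given an infinite counterwitness with unobserved barb $a \in \barbs(\Delta_k)$, I would pass to its limit behaviour: let $R$ be the set of contexts visited infinitely often and $L$ the set of reductions fired infinitely often. Two facts drive the construction. First, every reduction enabled at a state of $R$ lies in $L$, since such a reduction is enabled infinitely often and hence, by fairness, fired infinitely often. Second, the matched reduction of $a$ is never fired after $k$ (by the first clause of Definition~\ref{def:counterwitness} with Lemma~\ref{lem:live_iff_barbs_subset_obs}), so it is never fired in the recurrent part and, by the same fairness argument, not enabled at any state of $R$. Because the path eventually stays inside $R$ and each state of $R$ recurs, the $L$-transitions make $R$ strongly connected, so from a fixed $\Delta^\ast \in R$ I can assemble a cycle $\mathcal{P}_2$ returning to $\Delta^\ast$ that fires every reduction of $L$ at least once: for each such reduction walk to a state enabling it and fire it, each walk costing at most $\prod_{i\in\I}\size{\T_i}$ steps, and there are at most $2n$ of them plus one closing walk, giving $\size{\mathcal{P}_2} \leq (2n+1)\prod_{i\in\I}\size{\T_i} \leq m$. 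By the first fact the reductions fired in one period equal those enabled along $\mathcal{P}_2$, so $\mathcal{P}_2^\omega$ satisfies the fairness clause; by the second fact the match of $a$ is never fired in $\mathcal{P}_2$, so the barb stays unobserved.

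It remains to attach a bounded prefix $\mathcal{P}_1$ and to treat the finite case. For the prefix I would first arrange the unobserved barb to sit at a recurrent state: the suffix of $\mathcal{P}$ past which only states of $R$ occur is again a fair path, so either it is itself a counterwitness—and then its barb lies in $R$, which I route into $\Delta^\ast$—or it is live, which forces $a$ to live at a transient $\Delta_k$, and then I prepend a simple path reaching $\Delta_k$ followed by a shortest path into $\Delta^\ast$, of total length at most $\prod_{i\in\I}\size{\T_i} \le m$. The finite case is handled analogously: a finite counterwitness ends at a stuck context $\Delta' \not\rightarrow$ carrying a pending barb, and I build a path to it by the same fire-all-enabled budget, giving length $\le m$. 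I expect the gluing of $\mathcal{P}_1$ to $\mathcal{P}_2^\omega$ to be the main obstacle, because the fairness equality of Definition~\ref{def:counterwitness} must hold at \emph{every} prefix index: I must show that any reduction enabled only transiently in $\mathcal{P}_1$ is either fired within $\mathcal{P}_1$ or belongs to $L$ and so is fired by $\mathcal{P}_2$, while simultaneously certifying that the match of $a$—not being enabled on $R$—never reappears. Once this case analysis is in place, the length arithmetic against $m=(2n+2)\prod_{i\in\I}\size{\T_i}$ follows directly from the two cited counting lemmas.
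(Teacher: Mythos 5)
Your analysis of the recurrent part is sound and is a genuinely different route from the paper's: the facts that every reduction enabled at a state of the recurrent set $R$ must lie in the infinitely-fired set $L$, and that no reduction matching the unobserved barb $a$ can be enabled on $R$, both follow correctly from the equality clause of Definition~\ref{def:counterwitness}, and your fire-everything-in-$L$ cycle of length at most $(2n+1)\cdot\prod_{i\in\I}\size{\T_i}$ is a valid construction of $\mathcal{P}_2$. But the gap you flag at the end is not a routine gluing step -- it is the crux, and as written your construction of $\mathcal{P}_1$ (and your finite case) does not close it. A ``simple path reaching $\Delta_k$ followed by a shortest path into $\Delta^\ast$'' fails for two independent reasons. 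First, the fairness equality must hold at $k=0$ and at every prefix index, so any label enabled at an intermediate state of $\mathcal{P}_1$ that is neither fired later in $\mathcal{P}_1$ nor a member of $L$ destroys the counterwitness property; an arbitrary shortest path gives you no control over which labels become enabled along the way. Second, in the sub-case where the unobserved barb lives at a transient $\Delta_k$, the segment from $\Delta_k$ to $\Delta^\ast$ must avoid every reduction whose observation matches $a$; the original path's continuation does so, but a freshly chosen shortest path need not. The same two objections apply verbatim to your finite case, where ``build a path to the stuck context by the same budget'' would have to certify both properties for a path you construct from scratch.

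The paper sidesteps both problems by never building a fresh path: it compresses the \emph{given} counterwitness in place, excising only segments $[j_1{+}1,j_2]$ whose endpoints satisfy $\{\cxell_i \mid i \geq j_1\} = \{\cxell_i \mid i \geq j_2\}$ (and which avoid the index $k$). Such cuts leave every future-fired-label set unchanged, can only shrink the future-enabled-label sets and the observation sets, and hence preserve both clauses of Definition~\ref{def:counterwitness} by construction; the bound $m=(2n+2)\cdot\prod_{i\in\I}\size{\T_i}$ then falls out because the map $j \mapsto \lvert\{\cxell_i \mid i \geq j\}\rvert$ takes at most $2n+1$ values and each constant block can be made repetition-free. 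To rescue your version you would need to take $\mathcal{P}_1$ to be a compressed prefix of the original witness using exactly this cut criterion -- at which point the recurrent-set machinery buys you a cleaner $\mathcal{P}_2$ but the prefix argument collapses back into the paper's. As it stands, the proposal identifies the right obstacle but leaves the lemma unproved.
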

\begin{restatable}{theorem}{livenesspspace}\label{thm:liveness_pspace}
\proofreference{Appendix~\ref{app:pspace_completness_of_liveness}}
  Checking for liveness is in \PSPACE.
\end{restatable}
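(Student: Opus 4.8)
The plan is to show that the complement of liveness---non-liveness---is decidable in nondeterministic polynomial space, and then invoke $\coNPSPACE = \PSPACE$ (as in the proof of Theorem~\ref{thm:safety_liveness_pspace}, ultimately Savitch's theorem) to conclude. The starting point is Lemma~\ref{thm:counterwitness_iff_not_live}, which tells us that $\Delta$ is not live precisely when a counterwitness exists, together with Lemma~\ref{thm:witness_length_for_liveness}, which guarantees that whenever a counterwitness exists there is one of bounded shape: either a finite path of length $k \leq m$, or an ultimately periodic (lasso) path $\mathcal{P}_1\mathcal{P}_2^\omega$ with $\size{\mathcal{P}_1}, \size{\mathcal{P}_2} \leq m$, where $m = (2n+2)\cdot\prod_{i\in\I}\size{\T_i} = 2^{\bigO{\size{\Delta}}}$. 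It therefore suffices to describe a nondeterministic algorithm that guesses and verifies such a witness while using only space polynomial in $\size{\Delta}$.

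First I would observe that each context reachable from $\Delta$ has size $\bigO{\size{\Delta}}$, so a single context is storable in polynomial space, and a step counter bounded by $m$ needs only $\bigO{\size{\Delta}}$ bits. Crucially, the universe of possible barbs and of reduction labels $\cxell \in \{\OutIn{\pp}{\pq}, \SelBra{\pp}{\pq}{l}\}$ is of polynomial size (at most $2n$ reductions by Lemma~\ref{thm:number_of_gamma_transitions}, and similarly polynomially many barbs), so any subset of barbs or of observations can itself be stored explicitly in polynomial space. The algorithm therefore never stores the exponentially long witness; it only maintains the current context, a bounded counter, and a constant number of polynomial-size label and barb sets that it updates as it simulates transitions one step at a time.

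Then I would split on the two witness shapes. For the finite case the algorithm simulates a path from $\Delta_0 = \Delta$, nondeterministically choosing reductions and, at a nondeterministically chosen index $k$, recording a barb $a \in \barbs(\Delta_k)$; it continues for at most $m$ further steps to a context with no reduction, and accepts iff $a$ is never produced as an observation after $k$ (the second counterwitness clause is vacuous once the path ends at a non-reducing state). For the lasso case the algorithm guesses the cycle head $\hat\Delta$, verifies $\Delta \rightarrow^* \hat\Delta$ within $m$ steps, and then simulates exactly one period back to $\hat\Delta$ (again bounded by the counter); over this period it accumulates the set of reduction labels taken and the set enabled, checks that these coincide to enforce the fairness clause, and checks that the guessed barb $a$ is never observed, which by periodicity need only be verified on the prefix tail and a single copy of the cycle. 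Recasting ``not live'' as $\barbs(\Delta_k) \not\subseteq \observations$ of the tail via Lemma~\ref{lem:live_iff_barbs_subset_obs} keeps all of these checks local to the current window.

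The main obstacle, and the reason the preceding lemmas are essential, is exactly this tension: a genuine counterwitness may be exponentially long, yet we must certify both of its defining conditions---that some offered barb is never matched, and that every cofinally enabled reduction is eventually taken (the ``for all $k$'' clause)---without materialising the path. The resolution is that Lemma~\ref{thm:witness_length_for_liveness} collapses any witness to a lasso of period $\leq m$, so the infinitary fairness quantifier reduces to comparing the taken-versus-enabled label sets over a single period, and the barb--observation check likewise stabilises after one period. Since both comparisons range over a polynomial-size universe, they are carried out by polynomial-space set tracking, giving the desired nondeterministic polynomial-space decision procedure for non-liveness and hence membership of liveness in $\PSPACE$.
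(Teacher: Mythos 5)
Your proof is, in structure, exactly the paper's: reduce to the complement, use Lemma~\ref{thm:counterwitness_iff_not_live} together with the bounded finite-or-lasso witness of Lemma~\ref{thm:witness_length_for_liveness}, simulate the guessed witness while storing only the current context, an $\bigO{|\Delta|}$-bit counter, and a constant number of polynomial-size sets of barbs and reduction labels, and finish with $\coNPSPACE = \PSPACE$.

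There is, however, one concrete error: the parenthetical claim that the second clause of Definition~\ref{def:counterwitness} is vacuous for a finite witness ending in a non-reducing state. At the final index that clause indeed only forces the last context to have no reductions, but at every earlier index $k$ it still requires each reduction $\cxell$ enabled at some $\Delta_i$ with $i \geq k$ to be \emph{taken} at some later step; a context offering two enabled reductions of which the path fires only one (e.g.\ two labels of a single selection, or two independent synchronisations of which one is abandoned) violates the clause even though the path terminates. The paper's Algorithm~\ref{alg:pspace_liveness} therefore carries the set $U$ of enabled-but-not-yet-taken reductions through the finite case and accepts only if $U = \emptyset$; the analogous bookkeeping in the lasso case must also begin in the prefix $\mathcal{P}_1$, not just the period $\mathcal{P}_2$, since a reduction enabled only in the prefix must still be matched inside the cycle. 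Your finite-case acceptance condition thus admits paths that are not counterwitnesses. The algorithm happens to remain sound --- if a barb of participant $\pp$ at $\Delta_k$ is never observed on a finite maximal path, then no reduction after $k$ can involve $\pp$, so the final context retains a non-$\tend$ type for $\pp$, contradicting deadlock-freedom and hence liveness by Lemma~\ref{thm:liveness_implies_df} --- but that rescue is a different argument from the one you gave, so you should either perform the $U$-check as the paper does or supply that argument explicitly.
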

\begin{proof}
Using the above bound on the length of a counterwitness, we can give a nondeterministic algorithm that checks for the complement of liveness by checking all paths in the form specified by Lemma~\ref{thm:witness_length_for_liveness}.
We construct a nondeterministic polynomial-space 
algorithm that accepts if and only if the input is not live.
The algorithm nondeterministically chooses a candidate witness of the form $\mathcal{P}_1 \mathcal{P}_2^\omega$, where $\size{\mathcal{P}_1}\leq m$ 
and $\size{\mathcal{P}_2}\leq m$ such that $m= (2n+2) \cdot \prod_{i \in \I}|\T_i|$.
Along the path, some $k \leq 2m$ is nondeterministically chosen and the conditions in Definition~\ref{def:counterwitness} are checked.
A similar check is performed for finite witnesses of the form $\mathcal{P}_1$ of length $j$ such that $j\leq m$.
 Along the path, the algorithm involves storing a finite number of sets of barbs, and there are only polynomially many barbs. Therefore, the algorithm is polynomial-space.
\iffull{\ The algorithm (Algorithm~\ref{alg:pspace_liveness}) is 
  found in Appendix~\ref{app:pspace_completness_of_liveness}}\else{\ See
  \cite{UY2024} for the algorithm}\fi.
\end{proof}

\begin{remark}\label{rem:graph_search}
In practice, the number of reachable states is often much smaller than
the bound given in
Lemma~\ref{thm:number_of_reachable_composition_states}.
Hence even if the space complexity is not polynomial, 
we can build a deterministic algorithm that might be more
practical with an explicit breadth-first search over all reachable
states. 
\begin{restatable}{theorem}{safedfcomplexityefficient}
\label{thm:safedf_complexity}
 Checking for safety (or deadlock-freedom) of a typing context $\Delta = \prod_{i \in \I} \pp_i : \T_i$ can be done in $\bigO{|\Delta| \cdot \U}$ time
 where $\U = \prod_{i \in \I} |\T_i|$.
\end{restatable}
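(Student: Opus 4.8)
The plan is to prove this constructively by running an explicit graph search over the reachable typing contexts and testing a local predicate at each one. The whole argument hinges on Lemma~\ref{thm:number_of_reachable_composition_states}, which bounds $|\{\Delta' \mid \Delta \rightarrow^* \Delta'\}|$ by $\U = \prod_{i \in \I}\size{\T_i}$, together with the two reduction-based characterisations already available: Lemma~\ref{thm:safety_from_reductions} says $\Delta$ is safe iff every reachable $\Delta'$ is a \emph{safe state} in the sense of Definition~\ref{def:safe_states}, and Definition~\ref{def:deadlock_freedom} says $\Delta$ is deadlock-free iff every reachable $\Delta' \not\rightarrow$ satisfies $\unfold{\T_\pp} = \tend$ for all $\pp : \T_\pp \in \Delta'$. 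Both properties therefore reduce to visiting all reachable contexts and checking a predicate that depends only on a single context and its immediate successors.

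First I would fix a compact representation of contexts. Since a reduction (Definition~\ref{def:composition_reduction}) only ever advances a participant to one of the (unfolded) subformulas of its original type, every reachable $\Delta'$ is determined by a tuple recording, for each $\pp_i$, which element of the reachable state space of $\T_i$ it currently occupies. That state space has size at most $\size{\T_i}$, and it can be precomputed once, together with the head unfolding $\unfold{\cdot}$ of each entry, in $\bigO{\size{\Delta}}$ time; a context is then stored in $\bigO{n}$ space, and the product of the per-participant state-space sizes is exactly $\U$, matching Lemma~\ref{thm:number_of_reachable_composition_states}.

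Next I would perform a breadth-first (or depth-first) search from the initial context, keeping a visited set so each reachable context is materialised and expanded exactly once. At a popped context $\Delta'$ the processing is entirely local: reading off the single head action of each participant, I match every offered output $\redOut{\pp}{\pq}{\S}$ against the partner's input and every selection label against the partner's branching labels, which simultaneously (i) enumerates the successors $\Delta' \rightarrow \Delta''$ produced by rules \RULE{Msg} and \RULE{Bra}, and (ii) checks the two safe-state conditions of Definition~\ref{def:safe_states}; for deadlock-freedom I instead check, precisely when $\Delta'$ has no successor, that every head type unfolds to $\tend$. Correctness is then immediate from Lemma~\ref{thm:safety_from_reductions} and Definition~\ref{def:deadlock_freedom}. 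For the running time, each participant sits at one head construct and the total width of all current heads is bounded by $\sum_{i \in \I}\size{\T_i} = \size{\Delta}$, so enumerating reductions and testing the predicate at a single context costs $\bigO{\size{\Delta}}$; summing over the at-most-$\U$ contexts gives the claimed $\bigO{\size{\Delta} \cdot \U}$.

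The hard part will be the careful per-context accounting, because the out-degree of a context is not constant and naively copying the $\bigO{n}$-sized tuple for each successor could cost $\bigO{\size{\Delta} \cdot n}$ per context. I would resolve this by charging the $\bigO{n}$ materialisation of each tuple to the state itself (once per state, hence $\bigO{n \cdot \U} \leq \bigO{\size{\Delta} \cdot \U}$ overall) rather than to the incoming edge, and by checking each candidate successor against the visited set in $\bigO{1}$ using an incrementally maintained hash that is updated only at the two positions a reduction changes. The remaining ingredients are the observation that message reductions involve each participant in at most one synchronisation (so there are $\bigO{n}$ of them) while selection/branching reductions are bounded by the telescoping total head width $\bigO{\size{\Delta}}$, and that the subset and sort-equality tests required by the safe-state conditions can be answered in $\bigO{\size{\Delta}}$ per context using the precomputed label indices. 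Together these keep every context within the $\bigO{\size{\Delta}}$ budget and yield the bound.
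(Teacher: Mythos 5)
Your proposal is correct and follows essentially the same route as the paper: an explicit breadth-first search over the at-most-$\U$ reachable contexts (Lemma~\ref{thm:number_of_reachable_composition_states}), checking the local safe-state predicate (via Lemma~\ref{thm:safety_from_reductions}) or the deadlock condition at each, with $\bigO{|\Delta|}$ work per context. The paper's own proof is just terser, leaving the per-context accounting and data-structure details you spell out implicit.
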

We can also use a similar construction to the proof of Savitch's theorem
\cite[Theorem 8.5]{Sipser2012} 
to create a deterministic algorithm for checking liveness.
\begin{restatable}{theorem}{livecomplexityefficient}
\proofreference{Appendix~\ref{subsec:live:efficient}}
\label{thm:live_complexity}
Let $\Delta = \prod_{i \in \I} \pp_i : \T_i$ and $\U = \prod_{i \in \I} |\T_i|$.
Then there exists a deterministic algorithm that checks if $\Delta$ is live in $\bigO{\U^3 \cdot 2^{3|\Delta|} + \U^3 \log \U \cdot 2^{2|\Delta|}}$ time.
\end{restatable}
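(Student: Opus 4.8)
The plan is to reduce the liveness check to a constrained reachability / cycle-detection problem on a finite augmented transition system, and then to decide that problem deterministically by a Savitch-style recursive-reachability procedure, following \cite[Theorem~8.5]{Sipser2012}. By Lemma~\ref{thm:counterwitness_iff_not_live}, $\Delta$ is live iff it has no counterwitness, and by Lemma~\ref{thm:witness_length_for_liveness} any counterwitness may be taken either finite of length $k \le m$ or of lasso form $\mathcal{P}_1\mathcal{P}_2^\omega$ with $\size{\mathcal{P}_1},\size{\mathcal{P}_2}\le m$, where $m=(2n+2)\cdot\U$ and $n=|\Delta|$. So it suffices to decide, deterministically, whether such a bounded counterwitness exists, and then complement.

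First I would set up the search space. By Lemma~\ref{thm:number_of_reachable_composition_states} there are at most $\U$ reachable contexts, and by Lemma~\ref{thm:number_of_gamma_transitions} at most $2n\le 2|\Delta|$ distinct reductions $\cxell$ can ever be enabled. The two conditions of Definition~\ref{def:counterwitness} are both finite-memory: the non-liveness condition along a tail is, by Lemma~\ref{lem:live_iff_barbs_subset_obs}, captured by a single fixed barb $a\in\barbs(\Delta_k)$ with $a\notin\observations$ of the suffix, and the fairness condition (along every tail the fired labels equal the enabled labels) is captured by a subset of the $\le 2|\Delta|$ reductions recording outstanding obligations; moreover $\barbs$ and $\observations$ are computable in linear time (noted after Definition~\ref{def:barbs_obs}). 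I would therefore form an augmented state consisting of a reachable context together with this bookkeeping. Because the bookkeeping tracks subsets of sets of size $O(|\Delta|)$, the augmented state space has size $\U\cdot 2^{O(|\Delta|)}$.

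Next I would express ``$\Delta$ is not live'' as reachability of a \emph{good lasso} in the augmented graph: an augmented state reachable from the initial one from which a cycle returns to the same augmented state with all its fairness obligations discharged and with the chosen barb never observed along the cycle (the finite case being the analogous reachability of a stuck augmented state still carrying an unmet barb). To decide this within the length bound $m$, I would compute the constrained reachability relation by the recursive-midpoint / repeated-squaring technique underlying Savitch's theorem: a predicate $\mathrm{CANYIELD}(c,c',t)$ splits a length-$\le t$ constrained path at a midpoint, and iterating the squaring $O(\log m)=O(\log\U)$ times yields the full constrained transitive closure. The cost is dominated by repeatedly composing the reachability relation over the augmented states, which is cubic in their number; carrying this out over the two kinds of bookkeeping — the $2^{2|\Delta|}$ fairness-obligation sets drawn from the $\le 2|\Delta|$ reductions, and the tracking of barbs and observations — yields the two summands $\bigO{\U^3\cdot 2^{3|\Delta|}}$ and $\bigO{\U^3\log\U\cdot 2^{2|\Delta|}}$.

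The main obstacle is the encoding of the fairness condition of Definition~\ref{def:counterwitness} as a finite-memory reachability predicate. Unlike safety and deadlock-freedom (Theorem~\ref{thm:safedf_complexity}), liveness quantifies over \emph{all} fair paths, so I must argue that it suffices to search for lassos whose cycle simultaneously fires every enabled reduction and withholds a fixed barb, and that such lassos correspond exactly to the bounded counterwitnesses of Lemma~\ref{thm:witness_length_for_liveness}. Getting this correspondence right, and choosing the bookkeeping so that the augmented state space stays of size $\U\cdot 2^{O(|\Delta|)}$ while the time accounting produces precisely the two stated terms, is where the real work lies; once the reduction is in place, the Savitch-style reachability is routine.
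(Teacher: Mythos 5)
Your proposal is correct and follows essentially the same route as the paper's proof: both reduce non-liveness to the bounded finite/lasso counterwitnesses of Lemma~\ref{thm:witness_length_for_liveness}, track the same bookkeeping (a fixed unobserved barb, the set of observed transitions, and the set of outstanding ``unfair'' transitions drawn from the $\le 2|\Delta|$ possible reductions of Lemma~\ref{thm:number_of_gamma_transitions}), and compute the resulting constrained reachability relation by Savitch-style doubling in $O(\log \U)$ squaring steps. The paper merely phrases the augmented states as extra arguments of predicates $\Reach_t^a(\Delta',\Delta'',T,U)$ and $\CReach_t^a$ rather than as a product state space, and the final existential check over barbs and subset pairs yields the $\bigO{\U^3\cdot 2^{3|\Delta|}}$ term exactly as in your accounting.
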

\end{remark}

\begin{remark}
If we restrict the number of participants to some constant $c$, then
Lemma~\ref{thm:number_of_reachable_composition_states} shows that the number of reachable configurations is $\bigO{\prod_{i \in \I}\size{\T_i}} = \bigO{|\Delta|^c}$.
This implies that the algorithms described in Remark~\ref{rem:graph_search} are polynomial-time for a fixed number of participants,
and suggests that they are feasible
for larger typing contexts with a small number of participants.
\end{remark}

\subsection{PSPACE-Hardness of Checking Safety, Deadlock-Freedom and Liveness} 
\label{subsec:pspacehard}
We turn our attention to \PSPACE-hardness. 
We reduce it from the \emph{quantified Boolean formula (QBF) 
problem} which is known to be \PSPACE-complete~\cite[Theorem 8.9]{Sipser2012}.
Given a formula in QBF $\bqbf = \cQ_1 x_1 \dots \cQ_n x_n \sqbf,\  \cQ_i \in \{\exists, \forall\}$, with $\sqbf$ in conjunctive normal form: $\sqbf := \bigwedge_{i=1}^{m} C_i,\ C_i = (L_{i1} \vee L_{i2} \vee L_{i3})$ where $L$ is a literal which is either a variable $x_i$ or its negation $\neg x_i$, 
the QBF problem asks whether $\bqbf$ is true.  
The full construction for the reduction from the QBF problem is very large and non-trivial:  
we leave the detailed definitions and proofs in
\iffull{\ Appendix~\ref{app:reduction_from_qbf}}\else{\cite{UY2024}}\fi, 
but give a summary. 

Our construction is a typing context $\Deltainit$ that computes the truth value of $\bqbf$. It contains one participant for each variable ($\pp_i$) and clause ($\pr_i$), as well as the ``controller'' ($\ps$). Participant $\pp_i$ finds the truth value of the formula $\cQ_i x_i \dots \cQ_n x_n \sqbf$, and $\pr_i$ finds the truth value of the clauses $C_i$ onwards. The controller $\ps$ queries $\pp_1$ and enters an undesirable state $\Tbad$ if $\bqbf$ is false, otherwise the process repeats indefinitely.
We construct $\Deltainit$ such that its reduction paths are \emph{deterministic} (i.e., $\Delta \rightarrow \Delta_1$ and $\Delta \rightarrow \Delta_2$ 
imply $\Delta_1 = \Delta_2$) and \emph{safe}  
(i.e., $\Delta \rightarrow \Delta'$ is safe if $\Delta$ is a safe state). 
This enables that all possible behaviours of the context are 
analysed by the reductions from $\Deltainit$, and those 
reductions must not contain any violations of a property.

As safety and deadlock-freedom are reachability properties, we only need to exhibit a $\Tbad$ that violates the properties from $\Deltainit$.    
We prove that, if $\bqbf$ is true, then $\Deltainit$ reduces safely
and deterministically back to itself by the multistep reductions.
Otherwise $\Deltainit$ reduces to a state where $\ps$ has
the undesirable type $\Tbad=\tbra{\pp_1}{\kf{unsafe}: \tend}$, 
which represents an unsafe state. 
For deadlock-freedom, we set $\Tbad=\tend$ which creates a stuck state.  

For the liveness, we first identify an equivalent formulation 
of the live path (Definition~\ref{def:fair_live_paths}).
We then prove that if $\bqbf$ is satisfied, a deterministic infinite 
live path reduces to the same state; 
otherwise it leads to $\Tbad=\tend$ showing the path is not live. 
We can use the same bad type as deadlock-freedom since 
if $\Delta$ is live, it is deadlock-free.

\begin{theorem}\label{thm:safety_df_liveness_pspace_hard}
\proofreference{Appendices~\ref{sec:safety_pspace_hard}, \ref{sec:deadlock_freedom_pspace_hard}, \ref{sec:liveness_pspace_hard}.}
  Checking for safety, deadlock-freedom and liveness is \PSPACE-hard.
\end{theorem}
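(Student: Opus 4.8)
The plan is to give a polynomial-time many-one reduction from the QBF problem, which is \PSPACE-complete, to each of the three property-checking problems. Given an instance $\bqbf = \cQ_1 x_1 \dots \cQ_n x_n \sqbf$ with $\sqbf = \bigwedge_{i=1}^m C_i$, I would build a typing context $\Deltainit$ of size polynomial in $n + m$ whose reduction behaviour faithfully simulates the recursive evaluation of $\bqbf$. The context contains a participant $\pp_i$ for each variable, a participant $\pr_i$ for each clause, and a controller $\ps$. The type of $\pp_i$ encodes the subformula $\cQ_i x_i \dots \cQ_n x_n \sqbf$: on being queried it explores the two assignments $x_i \mapsto \true$ and $x_i \mapsto \false$, recursively queries $\pp_{i+1}$ under each, and combines the two answers conjunctively if $\cQ_i = \forall$ and disjunctively if $\cQ_i = \exists$. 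The clause participants $\pr_i$ answer whether the current partial assignment already satisfies $C_i, \dots, C_m$, propagating truth values along the clause chain. The controller $\ps$ initiates a query to $\pp_1$ and branches into a designated ``bad'' continuation $\Tbad$ exactly when the returned value is $\false$; when it is $\true$ the protocol resets to $\Deltainit$ and repeats indefinitely.

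The crucial structural property I would establish first is that the reductions of $\Deltainit$ are \emph{deterministic} ($\Delta \rightarrow \Delta_1$ and $\Delta \rightarrow \Delta_2$ imply $\Delta_1 = \Delta_2$) and that every reachable state is a safe state except possibly for those reached through $\Tbad$. Determinism is what lets the single reduction path of $\Deltainit$ act as a deterministic evaluator of $\bqbf$: with it, ``$\Deltainit$ satisfies property $\varphi$'' becomes equivalent to ``the evaluation of $\bqbf$ never reaches $\Tbad$''. I would prove this by an invariant argument over the shape of the types, showing each intermediate context has at most one enabled communication.

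With determinism in hand, each property reduces to a choice of $\Tbad$. For safety I would take $\Tbad = \tbra{\pp_1}{\kf{unsafe}: \tend}$ paired with a mismatching selection on $\ps$'s side, so that reaching $\Tbad$ produces an unsafe state; then $\safe(\Deltainit)$ holds iff $\bqbf$ is true, using Lemma~\ref{thm:safety_from_reductions} to reduce safety to reachability of unsafe states. For deadlock-freedom I would instead set $\Tbad = \tend$, so that reaching it yields a stuck configuration in which some participant is not terminated; then $\df(\Deltainit)$ holds iff $\bqbf$ is true. For liveness I would reuse the deadlock-freedom gadget, since any live context is deadlock-free: if $\bqbf$ is true the unique fair path is the infinite periodic reset path, which I would verify is live via the barb/observation characterisation (Lemma~\ref{thm:barbs_obs_live_path}), whereas if $\bqbf$ is false the path stalls at $\Tbad$ and some pending barb is never observed, yielding a counterwitness in the sense of Definition~\ref{def:counterwitness}.

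The main obstacle will be the design and verification of the variable-participant types so that the conjunction/disjunction of recursive subqueries is realised by synchronous communication while keeping the whole context deterministic, safe, and of polynomial size. Encoding the alternation of quantifiers without blowing up the type sizes, and arranging the clause-checking participants to feed the correct literal values into the evaluation, is the technically delicate part; the liveness direction additionally requires matching the informal ``never reaches $\Tbad$'' condition precisely to the formal fair-but-not-live counterwitness notion.
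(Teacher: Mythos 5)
Your proposal follows essentially the same route as the paper: a polynomial-size reduction from QBF with one participant per variable, one per clause, and a controller, arranged so that reductions are deterministic and safe, with $\Tbad = \tbra{\pp_1}{\kf{unsafe}: \tend}$ for safety and $\Tbad = \tend$ for deadlock-freedom, and with liveness handled by reusing the deadlock-freedom gadget together with the fact that liveness implies deadlock-freedom. The only cosmetic difference is that the paper certifies liveness of the infinite reset path by showing every participant is involved in some transition infinitely often (Lemma~\ref{thm:single_selection_live_paths}) rather than by directly invoking the barb/observation characterisation, but this is the same underlying argument.
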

\begin{corollary}
Checking for safety, deadlock-freedom and liveness is $\PSPACE$-complete.
\end{corollary}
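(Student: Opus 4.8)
The plan is to establish \PSPACE-completeness for each of the three properties by assembling two ingredients that have already been proved in the preceding development: \PSPACE-membership and \PSPACE-hardness. Recall that a decision problem is \PSPACE-complete exactly when it lies in \PSPACE\ and is \PSPACE-hard (every problem in \PSPACE\ reduces to it in polynomial time). Hence the corollary follows immediately once both directions are in hand, and the whole argument amounts to citing the right statements and observing that the two bounds coincide.

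First I would invoke the membership results. For safety and deadlock-freedom, Theorem~\ref{thm:safety_liveness_pspace} already places both problems in \PSPACE\ via the nondeterministic counterexample-search algorithm, using $\coNPSPACE = \PSPACE$ together with the exponential bound on the number of reachable contexts from Lemma~\ref{thm:number_of_reachable_composition_states}. For liveness, Theorem~\ref{thm:liveness_pspace} supplies membership through the counterwitness characterisation and the length-and-periodicity bound of Lemma~\ref{thm:witness_length_for_liveness}, which lets a nondeterministic machine guess a counterwitness of the form $\mathcal{P}_1\mathcal{P}_2^\omega$ while storing only polynomially many barbs.

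Next I would invoke the hardness result: Theorem~\ref{thm:safety_df_liveness_pspace_hard} shows that checking safety, deadlock-freedom and liveness is each \PSPACE-hard, via the polynomial-time reduction from QBF that constructs the controller context $\Deltainit$ whose deterministic and safe reduction paths simulate the evaluation of the quantified Boolean formula, entering an undesirable type $\Tbad$ precisely when the formula is false.

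Combining membership with hardness for each of the three properties yields \PSPACE-completeness, which is the statement of the corollary. I do not expect any genuine obstacle in this final assembly step: the entire difficulty was discharged in the preceding theorems—most notably the liveness membership argument built on the counterwitness construction, and the large QBF reduction used for hardness—so here one only needs to remark that the established lower and upper bounds meet. The proof is therefore a one-line consequence of Theorems~\ref{thm:safety_liveness_pspace}, \ref{thm:liveness_pspace} and~\ref{thm:safety_df_liveness_pspace_hard}.
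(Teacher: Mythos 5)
Your proposal is correct and matches the paper's intent exactly: the corollary is an immediate combination of the membership results (Theorem~\ref{thm:safety_liveness_pspace} for safety and deadlock-freedom, Theorem~\ref{thm:liveness_pspace} for liveness) with the hardness result (Theorem~\ref{thm:safety_df_liveness_pspace_hard}), and the paper offers no further argument beyond this assembly. Nothing is missing.
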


 \section{Summary of Complexity Results and Related Work}
\label{sec:related}
\begin{table}[t]
  \centering
\small
  \begin{tabular}{|c|c|c|c|c|c|c|}
    \hline
    \multicolumn{4}{|c|}{\textbf{Binary}} & \multicolumn{3}{c|}{\textbf{Multiparty}} \\
    \hline
    Algorithm & \cite{Lange2016} & \multicolumn{2}{|c|}{\cite{Udomsrirungruang2024a}} & Algorithm & \multicolumn{2}{|c|}{\texthighlight{This paper}} \\
    \hline
    Inductive \cite{Gay2005} & $\bigO{n^{2^n}}$ & $\bigO{n^{n^3}}$ &
    $\bigOmega{(\sqrt{n})!}$ & Inductive \cite{Ghilezan2019} &
    \texthighlight{$\bigO{n^{n^3}}$}
    & \texthighlight{$\bigOmega{(\sqrt{n})!}$} \\
    \hline
    Coinductive \cite{Udomsrirungruang2024a} & $\bigO{n^2}$ & \multicolumn{2}{|c|}{$\bigO{n^2}$} & \texthighlight{Coinductive} & \texthighlight{$\bigO{n^2}$} & \texthighlight{$\bigOmega{n^2}$} \\
    \hline
  \end{tabular}
  \vspace{1mm}%
  \caption{Summary of the complexity results for subtyping ($n=\size{\T}$).}
  \label{tab:subtyping_complexity_summary}
\vspace{-5mm}
\end{table}

\begin{table}[t]
\small
\begin{tabular}{|l|l||l|c|l|}
\hline
\textbf{Projection} & Defs. & \highlight{\textbf{Complexity}} & \textbf{Output} & \highlight{\textbf{Top-down total complexity}} \\
    \hline
    Inductive plain & 
\ref{def:inductive_projection},~\ref{def:inductive_plain_merging}
&
\texthighlight{$\bigTheta{\size{\G} \log{\size{\G}}}$}
& $\size{\G}$ & \texthighlight{$\bigO{2^\size{\M} \cdot (\size{\M} + \size{\G}) + \size{\G} \log{\size{\G}}}$} \\
    Inductive full & 
\ref{def:inductive_projection},~\ref{def:inductive_full_merging}
& \texthighlight{$\bigTheta{\size{\G} \log^2{\size{\G}}}$} &
$\size{\G}$ & \texthighlight{$\bigO{2^\size{\M} \cdot (\size{\M} + \size{\G}) + \size{\G} \log^2{\size{\G}}}$} \\
    \TBC~\cite{Tirore2023} (\Sec\ref{section:coinductive_projection}) & 
\ref{def:coinductiveproj},~\ref{def:coinductive_plain_merge}
& \texthighlight{$\bigTheta{\size{\G}^2}$} &
$\size{\G}$ &
\texthighlight{$\bigO{2^\size{\M} \cdot (\size{\M} + \size{\G}) + \size{\G}^2}$} \\
    \texthighlight{Coinductive full} & 
\ref{def:coinductiveproj},~\ref{def:coinductive_full_merge}
& \texthighlight{$\bigO{\size{\G} \cdot 2^\size{\G}}$},
\texthighlight{$\bigOmega{2^{|\G|^{\left(\frac{1}{2} - \varepsilon\right)}}}$} & $2^\size{\G}$ & \texthighlight{$\bigO{\max(\size{\M}, \size{\G}) \cdot 2^{\max(\size{\M}, \size{\G})}}$} \\
    \hline
  \end{tabular}
\caption{Complexity of the four projections. The output size is used as the input size for checking $\Delta\sqsubseteq \G$. \label{table:projection}}
\vspace{-5mm}
\end{table}

\begin{table}[t]
\small
\begin{tabular}{|l|l||l|l|}
  \hline
\textbf{Property} & Definitions & \highlight{\textbf{Complexity}} & \highlight{\textbf{Bottom-up total complexity}} \\
  \hline
  Safety, deadlock-freedom & Def.~\ref{def:safety}, \ref{def:deadlock_freedom} & \texthighlight{$\bigO{|\Delta| \cdot \U}$}
  & \texthighlight{$\bigO{2^{2\size{\M}}}$} \\
  \hline
  Liveness & Def.~\ref{def:liveness} & \texthighlight{$\bigO{\U^3 \cdot 2^{3|\Delta|} + \U^3 \log \U \cdot
    2^{2|\Delta|}}$} 
  & \texthighlight{$\bigO{\size{M} \cdot 2^{6\size{\M}}}$} \\
  \hline
\end{tabular}
\caption{Complexity of the bottom-up approach ($\U = \prod_{i
    \in \I} |\Tmin_i|$) \label{tab:bot_complexity}}
\vspace{-5mm}
\end{table}

\myparagraph{Summary.\ }
We summarise the complexity results 
\highlight{highlighting} the algorithms and results we have constructed or proven
in this paper. 
Table~\ref{tab:subtyping_complexity_summary} lists the complexity of
the subtyping checking, including the complexity results from
\cite{Lange2016,Udomsrirungruang2024a} (which will 
be discussed in \textbf{Subtyping} in the related work). 
Figure~\ref{fig:venn} lists 
the relationship between the sets of local types 
obtained by the projections and safety, deadlock-freedom and liveness 
properties, and 
Table~\ref{table:projection} lists the complexity of the top-down approach for each projection. 
In practice, 
Figure~\ref{fig:venn} and Table~\ref{table:projection} together give 
us a stand-alone guidance of tractability in relation to 
expressiveness 
(i.e., a set of global types which the algorithm can project) since
local types are directly usable for implementing applications,
such as API generations and runtime-monitoring \cite{Yoshida24,BETTYTOOLBOOK}.
The inductive projections are lower complexity than the corresponding 
coinductive projections. 
The \TBC~algorithm is more efficient (quadratic) than 
the coinductive projection with full merging (in \PSPACE). 
However, \TBC\ \emph{completely} rejects 
inductively projectable global types with full merging which are not inductively 
projectable by plain merging (Proposition~\ref{pro:ifcp}). 

The full complexity of type-checking with global type $\G$ is given below.  

\begin{restatable}[Top-down full complexity]{theorem}{topdowncomplexity}
\label{thm:fullcomplexity}
Given $\G$ and $\M$, checking whether 
$\provestop \M:\Delta$ such that $\Delta\associated \G$ is 
exponential in $\size{\G}$ and $\size{\M}$.
\end{restatable}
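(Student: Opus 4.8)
The plan is to follow the decomposition of the top-down workflow in Figure~\ref{fig:top_downtwo}: reduce the query to (i)~projecting $\G$, (ii-b)~inferring the minimum type of each process, and (iii)~subtype-checking the inferred minimum types against the projections. Write $\M = \prod_{i\in\I} \pp_i :: \PP_i$ and, for each $\pp\in\pt{\G}$, let $\PP_\pp$ be the process playing $\pp$. First I would reduce the existential question to a family of subtyping checks on minimum types. By \RULE{T-Sess} and Definition~\ref{def:association}, a witness $\Delta$ with $\provestop \M : \Delta$ and $\Delta\associated\G$ exists iff $\pt{\G}\subseteq\{\pp_i\}_{i\in\I}$ and, for each $\pp\in\pt{\G}$, there is a type $\T_\pp'$ with $\vdash \PP_\pp : \T_\pp'$ and $\T_\pp' \subt \T_\pp$, where $\T_\pp$ is the chosen projection of $\G$ onto $\pp$. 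Using Definition~\ref{def:minimum_type} together with \RULE{T-Sub}, such a $\T_\pp'$ exists iff the minimum type $\Tmin_\pp$ of $\PP_\pp$ satisfies $\Tmin_\pp\subt\T_\pp$ (up to a sort substitution): if $\Tmin_\pp\subt\T_\pp$ then subsumption gives $\vdash\PP_\pp:\T_\pp$ directly, and conversely any $\T_\pp'$ typing $\PP_\pp$ has $\Tmin_\pp\subt\T_\pp'$, so transitivity yields $\Tmin_\pp\subt\T_\pp$. Thus the whole problem reduces to computing each $\T_\pp$ and each $\Tmin_\pp$ and performing one subtyping test per participant.

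Next I would bound each of the three stages. The linearity check $\pt{\G}\subseteq\{\pp_i\}_{i\in\I}$ costs $\bigO{\size{\G}}$, and for the coinductive projections I additionally verify that $\G$ is balanced in $\bigO{\size{\G}^2}$ by Theorem~\ref{thm:balanced_complexity}. By Theorems~\ref{thm:inductive_plain_merge}, \ref{thm:inductive_full_merge}, and \ref{thm:coinductiveprojection}, the three non-exponential projections compute each $\T_\pp$ in at most $\bigO{\size{\G}^2}$ time with output graph of size $\bigO{\size{\G}}$, while the coinductive full projection costs $\bigO{\size{\G}\cdot 2^{\size{\G}}}$ and outputs a graph of size at most $2^{\size{\G}}$. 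By Theorem~\ref{thm:typeinfer:complexity}, each $\Tmin_\pp$ (as a minimum type graph) is produced in $\bigO{\size{\PP_\pp}\cdot 2^{\size{\PP_\pp}}}$ time with at most $2^{\size{\PP_\pp}}$ nodes; summing over participants and using $\sum_{\pp}\size{\PP_\pp}\leq\size{\M}$ gives $\bigO{\size{\M}\cdot 2^{\size{\M}}}$ and total graph size $\bigO{2^{\size{\M}}}$. Finally, by Theorem~\ref{thm:quadratic_subtyping} each test $\Tmin_\pp\subt\T_\pp$ runs in $\bigO{\size{\Tmin_\pp}\cdot\size{\T_\pp}}$ on the two graphs, which is at most $2^{\size{\M}}\cdot 2^{\size{\G}}=2^{\size{\M}+\size{\G}}$. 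Since there are at most $\size{\G}$ participants, the sum of all stages is dominated by an exponential in $\size{\M}$ and $\size{\G}$; Theorem~\ref{thm:worst_case_minimum_type} shows this exponential behaviour is already unavoidable at the inference stage.

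The main obstacle is the ``up to a sort substitution'' clause in Definition~\ref{def:minimum_type}: the concrete projection $\T_\pp$ carries fixed sorts, whereas $\Tmin_\pp$ carries free sort variables, so the subtyping test must quantify over the existence of a sort substitution aligning the two. I expect to discharge this by reusing the most-general-solution machinery of \Sec\ref{subsec:minimumtypes}: running the product-graph subtyping of Theorem~\ref{thm:quadratic_subtyping} while collecting, at each matched input/output pair, an equality constraint between the projected sort and the sort variable of $\Tmin_\pp$, then solving these by union--find. Verifying that this sort-unification pass respects the quadratic (hence overall exponential) bound, and that it correctly characterises when a valid $\pi$ exists, is the delicate part; the remaining combination of bounds is routine additive accounting, matching the per-projection figures in the rightmost column of Table~\ref{table:projection}.
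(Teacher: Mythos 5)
Your proposal is correct and follows essentially the same route as the paper's proof: decompose the top-down check into projection, minimum type inference, and subtyping of the inferred minimum types against the projections (as in Figure~\ref{fig:top_downtwo}), then sum the bounds from Theorem~\ref{thm:typeinfer:complexity}, Table~\ref{table:projection}, and Theorem~\ref{thm:quadratic_subtyping}. The paper's proof is just a terser version of the same additive accounting; your extra care about the reduction to $\Tmin_\pp\subt\T_\pp$ and the sort-substitution alignment is a legitimate elaboration of details the paper leaves implicit, not a different argument.
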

\begin{proof}
By Theorem~\ref{thm:typeinfer:complexity}, 
inferring the minimum type from $\M$ takes time $\size{\M} \cdot 2^\size{\M}$
and generates a minimum type graph of size $2^\size{\M}$. The complexities 
of all projections in \Sec\ref{sec:projections} are given in 
Table~\ref{table:projection}.
In the case of the coinductive projection 
with full merging, we check $\G$ is balanced, which costs  
polynomial. Hence it does not 
affect to the total complexity.  
By summing up, the total complexity of type checking by all projection methods 
are exponential in $\size{\G} + \size{\M}$.
\end{proof}

\begin{restatable}[Bottom-up full complexity]{theorem}{bottomupcomplexity}
\label{thm:bottomupfullcomplexity}
Given $\M$, checking whether there exits 
$\provesbot \M:\Delta$ where $\Delta$ is safe, 
deadlock or live has complexity exponential in $\size{M}$. 
\end{restatable}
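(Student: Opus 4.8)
The plan is to mirror the summation argument used for the top-down full complexity (Theorem~\ref{thm:fullcomplexity}): the bottom-up pipeline of Figure~\ref{fig:bottom_up} consists of Step~(i), inferring a minimum local type $\Tmin_i$ from each component process $\PP_i$, followed by Step~(ii), checking the chosen property (safety, deadlock-freedom, or liveness) on the inferred typing context $\Delta = \prod_{i \in \I} \pp_i {:} \Tmin_i$. I would establish the overall exponential bound by costing each step separately and adding the two contributions, exactly as in the top-down case.

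First I would account for the inference phase. By Theorem~\ref{thm:typeinfer:complexity} each minimum type $\Tmin_i$ is computed in $\bigO{|\PP_i| \cdot 2^{|\PP_i|}}$ time and is represented by a minimum type graph with $\bigO{2^{|\PP_i|}}$ nodes. Since the component processes partition $\M$, we have $\sum_{i \in \I} |\PP_i| \le \size{\M}$, so the whole inference phase costs $\bigO{\size{\M} \cdot 2^{\size{\M}}}$; note that by Theorem~\ref{thm:worst_case_minimum_type} the individual $\Tmin_i$ can already be exponentially large in $\size{\M}$. The crucial quantitative observation, which I would isolate as a lemma, is that the number of reachable contexts nonetheless stays \emph{single}-exponential in $\size{\M}$. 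By Lemma~\ref{thm:number_of_reachable_composition_states} this count is bounded by $\U = \prod_{i \in \I}|\Tmin_i|$, and using $|\Tmin_i| \le 2^{|\PP_i|}$ from the inference step,
\[
  \U \;=\; \prod_{i \in \I} |\Tmin_i| \;\le\; \prod_{i \in \I} 2^{|\PP_i|} \;=\; 2^{\sum_{i \in \I}|\PP_i|} \;\le\; 2^{\size{\M}}.
\]
Thus, although a generic typing context of total size $N$ may have $2^{\bigO{N}}$ reachable states, a context produced by minimum inference on $\M$ has at most $2^{\size{\M}}$ of them, and moreover $|\Delta| = \bigO{2^{\size{\M}}}$.

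I would then feed these bounds into the property-checking complexities. For safety and deadlock-freedom, Theorem~\ref{thm:safedf_complexity} gives $\bigO{|\Delta| \cdot \U} = \bigO{2^{\size{\M}} \cdot 2^{\size{\M}}} = \bigO{2^{2\size{\M}}}$, which is cleanly single-exponential in $\size{\M}$. For liveness, Theorem~\ref{thm:live_complexity} gives $\bigO{\U^3 \cdot 2^{3|\Delta|} + \U^3 \log\U \cdot 2^{2|\Delta|}}$, which I would reduce to the $\bigO{\size{\M} \cdot 2^{6\size{\M}}}$ recorded in Table~\ref{tab:bot_complexity}. Summing the inference and checking costs then leaves an exponential-in-$\size{\M}$ total in every case, closing the theorem.

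I expect the liveness case to be the main obstacle. The naive reading of the $2^{3|\Delta|}$ factor is dangerous: since $|\Delta|$ is \emph{itself} exponential in $\size{\M}$, a literal substitution would suggest a \emph{double}-exponential blow-up. The point I must argue carefully is that the exponential factors in the liveness algorithm are not driven by the raw size $|\Delta|$ of the (possibly huge) minimum type graphs, but by the number of reachable configurations $\U$ and by the number of distinct barbs and observations — quantities that are controlled by the finitely many actions occurring in $\M$, hence at most $2^{\size{\M}}$ (states) and polynomial in $\size{\M}$ (distinct barbs), respectively. Making this re-expression precise, so that every $2^{|\Delta|}$-type and $\U$-type factor collapses to a single-exponential expression in $\size{\M}$, is the delicate step; once it is done, the summation closes exactly as for the top-down bound.
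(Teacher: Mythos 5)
Your proposal follows essentially the same route as the paper's proof: cost the inference phase via Theorem~\ref{thm:typeinfer:complexity}, bound $\U = \prod_{i\in\I}|\Tmin_i| \le 2^{\size{\M}}$ and $\size{\Delta} = \bigO{2^{\size{\M}}}$, and then read the property-checking costs off Table~\ref{tab:bot_complexity} and sum. Your flagged concern about the liveness factor $2^{3|\Delta|}$ is well placed and is, if anything, treated more carefully in your plan than in the paper, whose proof dismisses the double-exponential danger with a single sentence about $\U$ being exponential in $\size{\M}$; you correctly identify that the exponent must be re-expressed in terms of the number of distinct reductions and barbs (controlled by the actions occurring in $\M$) rather than the raw size of the inferred type graphs.
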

\begin{proof}
By Theorem~\ref{thm:typeinfer:complexity}, inferring the minimum type of $\PP$ 
takes time $\size{\PP} \cdot 2^\size{\PP}$ 
and generates a minimum type graph of size $2^\size{\PP}$.
Table~\ref{tab:bot_complexity} summarises the complexity of checking safety,
deadlock-freedom and liveness of the inferred typing context, respectively.
For a multiparty session $\M = \prod_{i \in \I} \PP_i$, 
we have $\U = \prod_{i \in \I} |\Tmin_i| \leq 2^{\size{\sum_{i \in
      \I}\PP_i}} = \bigO{2^\size{\M}}$. 
Furthermore, $\size{\Delta} = \bigO{2^{\size{\M}}}$.
Although the type inference and property-checking are individually exponential,
the total complexity in Table~\ref{tab:bot_complexity} is not doubly exponential, as the product of the minimum types $\U$ is exponential in $\size{\M}$.
\end{proof}

\noindent While both approaches are exponential,
the bottom-up requires a more complex approach
during property-checking, and thus the algorithms
have a higher complexity. Among them, liveness is the highest. 
In practice, it is unlikely that type inference yields types of
exponential size. 
Ignoring the inference step,
but focusing on the \emph{type-level},
the top-down inductive plain and full merging and coinductive plain
merging approaches 
are only \emph{polynomial} in the size of the local types, while
the top-down coinductive full merging and 
the bottom-up approaches 
are \emph{exponential}.
These suggest that the top-down approaches with polynomial complexity
are more efficient, albeit limited.

The top-down takes time linear in the size of the type graph of the projection. In practice,
the size of this type graph is much smaller than the theoretical bound of $2^{\size{\G}}$.
Even though examples exist where the projection has an exponential size (Theorem~\ref{thm:coinductive:size}),
these examples remain contrived and unnatural.
In the bottom-up approach, the complexity is
linear in the number of reachable typing contexts.
In practice,
this number can easily become exponential in the size of the session,
for example by composing many copies of independently interacting sessions.
This is the main source of the high complexity of the bottom-up approach.
For processes with a large number of participants and nondeterministic behaviour,
the bottom-up may be more expensive in practice.

\subsection*{Related and Future Work}
\label{subsec:related}
\myparagraph{Subtyping.}
Gay and Hole~\cite{Gay1999} introduced
the binary synchronous session subtyping system and proposed 
a sound and complete inductive subtyping algorithm in~\cite{Gay2005}.
Lange and Yoshida~\cite{Lange2016} reduce subtyping of synchronous binary session types to checking a modal $\mu$-calculus formula, and benchmark 
their algorithm against \cite{Gay2005} and a coinductive algorithm.
Ghilezan \etal~\cite{Ghilezan2019} prove operational and
denotational \emph{preciseness} of synchronous multiparty session
subtyping
for the synchronous multiparty session calculus.  
An inductive subtyping algorithm is given in~\cite{Ghilezan2019}, but
they did not measure its complexity. 
In this paper, we
have proven exponential in Theorem~\ref{thm:complexity_ghilezan_subtyping}.
We proposed a subtyping algorithm for the multiparty setting based on 
a simulation extending the binary algorithm given in 
\cite[Theorem~4.6]{Udomsrirungruang2024a}.
We proved the tight bound of our algorithm is
quadratic in Theorem~\ref{thm:quadratic_subtyping}.
Similar to type simulations (Definition~\ref{def:type_simulation}),
Silva \etal\ \cite{Silva2023} introduce the
$\mathit{XYZW}$-simulation for formulating subtyping of context free
session types inspired by 
$\mathit{XY}$-simulation by Aarts and Vaandrager \cite{Aarts2010}.
A semi-decision procedure for this problem is introduced along with an empirical analysis.
Padovani~\cite{Padovani2019} shows that subtyping for context-free session types is undecidable.

In asynchronous semantics, Ghilezhen \etal~\cite{GPPSY2023} proposed
the \emph{precise multiparty asynchronous subtyping} ($\subt_{a}$)
which is operationally and denotationally sound and complete 
with respect to the liveness property. The synchronous subtyping
relation $\subt$ studied in this paper 
is a strict subset of $\subt_{a}$.
Checking $\subt_{a}$ is known to be
\emph{undecidable} even in the binary case \cite{LY2017}. 
Li \etal~\cite{Li2024} studied the problem of subtyping in
communicating finite state machines (CFSM) \cite{BZ1983}
under \emph{subprotocol fidelity},
where the CFSMs are refined 
with respect to a context of a global type $\G$.
They showed $\PSPACE$-hardness to decide
\emph{monolithic refinement} (whether a set of CFSMs are valid
refinement of a given global type), and
further proposed more tractable (polynomial) \emph{protocol refinement}
where only one local type is changed at a time.
Cutner~\etal~\cite{CYV2022} proposed a sound but not complete 
approximation of $\subt_a$ 
which uses the bounded rewriting technique 
inspired by the IO-tree decomposition theorem in \cite{GPPSY2023}. 
Their relation is integrated into
the top-down approach, and is applied to asynchronous message optimisation
in Rust. 

\myparagraph{End-Point Projections.}
The end-point projection for session types
was first introduced by Honda \etal~\cite{Honda2008},
using the inductive plain merge
(Definition~\ref{def:inductive_plain_merging}), 
with linearity conditions on channels 
to check the well-formedness of global types. 
Later, Bettini \etal~\cite{BettiniCDLDY08,CDYP13} proposed simplified global types
with plain merging, adopted by most works, including ours. 

The inductive full merge
(Definition~\ref{def:inductive_full_merging})
is introduced for enlarging projectable sets of global types, see
\cite[\Sec 3.1,\Sec 8.1]{Scalas2019} for a survey. 
A recent paper \cite{YH2024} proves the subject reduction theorem, 
communication safety, 
deadlock-freedom and liveness 
under the top-down approach which uses 
the inductive full merge (Definition~\ref{def:inductive_full_merging}). 
The typing system incorporates the subtyping relation with the projection 
($\Delta\associated\G$ in \RULE{Sess-T} in Figure~\ref{fig:processtyping}). 
This work has resolved misunderstanding by researchers
(caused by a misinterpretation of the claim in \cite{Scalas2019}) 
that (full inductive) merging leads to type-unsafety. 

Tirore \etal~\cite{Tirore2023} defined
a projection on inductive types that is sound and complete
with respect to the coinductive projection 
with plain merging. 
They proved its correctness in Coq.
Their algorithm relies on plain merging which 
is only defined when all branches have the same projected type.
Therefore, all branches except one can be
discarded to form a candidate projection $\ptrans(\G)$ (\Sec\ref{section:coinductive_projection}). This enables its complexity to stay 
quadratic (Theorem~\ref{thm:coinductiveprojection}). 
They have left the
extension to full merging as their future work. 
Ghilezan \etal~\cite{Ghilezan2019} defined full merging on
regular coinductive types (which can be viewed as infinite trees). 
We reformulate the projection in \cite{Ghilezan2019} using the unfolding 
function, extend to the full coinductive merging and show 
its complexity is in \PSPACE, which is  
the highest among the four projections,
but produces the largest 
set of local types (see Figure~\ref{fig:venn}).




Majumdar \etal~\cite{DBLP:conf/concur/MajumdarMSZ21}
studied a projection of 
more expressive global types where a sender
can distribute messages to different participants via choice (sender-driven global types), and
proved its soundness against CFSMs.  
They did not analyse the complexity of their projection. 
Later, Stutz~\cite{Stutz2023} studied 
\emph{implementability} of 
those extended global types, i.e.~a projection is defined if there is any combination of local specifications that respect deadlock-freedom and protocol refinement.
An $\EXPSPACE$ bound for checking 
implementability is achieved for $0$-reachable, sender-driven global types. 
Li \etal~\cite{Li2023} later improved their implementability algorithm 
to be sound and complete with respect to 
protocol fidelity and deadlock-freedom, and proved it is in 
$\PSPACE$. Their conference paper \cite{Li2023_old} claimed 
their algorithm is \PSPACE-complete, which was corrected in their full version 
\cite{Li2023}, by showing   
a global type 
for which the construction of an implementation 
requires exponential time. 
The main differences from the projections we analysed are:
(1) their global types are \emph{more expressive}
(sender-driven and asynchronous); 
and (2) our projections guarantee \emph{liveness} of local types, while 
theirs guarantees \emph{deadlock-freedom}. For example, 
their algorithm implements 
$\G'$ 
in \cite[Example~3.12]{Ghilezan2019} (cf.~\Sec\ref{subsec:coinductivefull}), 
which generates non-live local types, 
and types non-live session $\M$ (see Remark~\ref{rem:liveness}).
In both their and our work, 
the hardness of the implementability (i.e., projectability) remains opened. 

\myparagraph{Type Inference.}
Vasconcelos and Honda~\cite{Vasconcelos1993} 
introduce a principal typing scheme 
of the polyadic pi-calculus, 
where types are simple tuples of types without subtyping.
Our minimum type inference system uses 
constraint derivations inspired by \cite[\Sec 22.3]{Pierce2002}
but needs to handle \emph{subtyping} over selection and branching
choices, together with a combination with recursion. 
These special features of session types 
require graph-based formulations of the minimum types.  
%
Scalas and Yoshida~\cite{Scalas2019} propose a general typing 
system, which we call the \emph{bottom-up} approach.  
The complexity of checking properties is proven \PSPACE-complete.
Notice that in asynchronous semantics, 
checking properties of typing contexts (bottom-up) is generally undecidable. 
\cite{Scalas2019} did not provide 
neither complexity nor a type inference system from processes. 
To our best knowledge, 
we have proposed the first type inference system for MPST, 
and shown that its worst case complexity is exponential; 
and proved the first \PSPACE-hardness results 
for checking typing context properties. 

\myparagraph{Future Work}
includes an accurate comparison with implementability and subtyping relations 
of asynchronous CFSMs \cite{Li2023,Li2024,BocchiKM24,BravettiCLYZ21}.
Our complexity result reveals that  
the coinductive projection with full merging has essentially 
similar complexity to the algorithm in \cite{Li2023},
which produces  
a larger set of local types preserving deadlock-freedom. 
On the other hand, the four projections studied in this paper 
yield a sound strict subset of \emph{live} local types 
under asynchronous semantics as proven in 
\cite[Appendix~N]{Scalas2019}. 
This projected set can be enlarged 
soundly using $\subt$ (since: $\subt\,\subseteq \,\subt_a$)
or decidable subsets of $\subt_a$
\cite{CYV2022,BocchiKM24,BravettiCLYZ21}
preserving liveness under the asynchronous 
semantics \cite[Theorem~6.3]{GPPSY2023}.
More exactly, the top-down methodology 
replacing $\subt$ by the appropriate (asynchronous) subtyping
in Figure~\ref{fig:top_downtwo} will enlarge the typability of
MPST processes under asynchrony.
Complexity analysis of the binary asynchronous sound session
subtyping in \cite{BocchiKM24,BravettiCLYZ21}
and their extensions to multiparty, as well as  
comparison with \cite{Li2023,Li2024} 
in terms of expressiveness and complexity are interesting future work. 
Our type inference system uses general graph formats 
for representing local types, which would be adaptable 
for the type inference from 
a flexible form of the calculus that directly uses CFSMs 
as types \cite[Chapter 6]{Stutz2024} or the calculus
with mixed choice \cite{PY2024}. 
Extensions to calculi with session delegations 
and hiding \cite{Scalas2019,HYC2016} are challenging 
future work. 


\myparagraph{Acknowledgement}
We thank the reviewers for detailed and helpful comments.  
This work is partially supported by EPSRC EP/T006544/2,
EP/N027833/2, EP/T014709/2, EP/Y005244/1, EP/V000462/1,
EP/X015955/1 and Horizon EU TaRDIS 101093006.


\bibliography{SessionTypes.bib}

\appendix 
\section{Appendix for the MPST Calculus}
\label{app:calculus}
\begin{table}{}
\[
\begin{array}[t]{@{}c@{}}
\eval{\fsucc\valn}(\valn +1)
\qquad
\eval{\fsqrt\valr}(-\valr)
\qquad
\eval{\neg\true}\false \quad \eval{\neg\false}\true 
\qquad \eval\val\val\\\\
\inferrule[]{\eval{\e_1}\true}
{\eval{\e_1\vee\e_2}{\true}}
\qquad
\inferrule[]{\eval{\e_2}\true}
{\eval{\e_1\vee\e_2}{\true}}
\qquad
\inferrule[]{\eval{\e_1}\false \quad \eval{\e_2}\false}
{\eval{\e_1\vee\e_2}{\false}}\\\\
\inferrule[]{\eval{\e_1}\val_1 \quad \eval{\e_2}\val_2 \quad \val_1+\val_1=\val}
{\eval{\e_1+\e_2}{\val}}
\qquad
\inferrule[]{\eval{\e_1}\val}
{\eval{\e_1\oplus\e_2}{\val}}
\qquad
\inferrule[]{\eval{\e_2}\val}
{\eval{\e_1\oplus\e_2}{\val}}
\qquad
\inferrule[]{\eval{\e}{\val}\quad\eval{\Econtext(\val)}{\val'}}{\eval{\Econtext(\e)}{\val'}}
\end{array}
\]
\caption{\label{tab:evaluation}  Expression evaluation.}
\vspace{10mm}%
\end{table}

An \emph{evaluation context} $\Econtext$ 
is an expression with exactly one hole. Table~\ref{tab:evaluation} defines the evaluation rules for expressions.

\section{Appendix for Complexity of Multiparty Subtyping Checking}
\label{app:subtyping}

\begin{lemma} \label{thm:type_graph_nodes_are_subformulas} If $\T'$ is reachable from $\T$ in the type graph
$\GG(\T)$ then $\T' \in \Sub(\T)$.
\end{lemma}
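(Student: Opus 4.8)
The plan is to induct on the length of the reachability path from $\T$ to $\T'$ in $\GG(\T)$. The base case (path of length $0$) is immediate, since $\T \in \Sub(\T)$ holds for every syntactic form by inspection of Definition~\ref{def:subformulas_of_local_type} (each clause puts the type itself into its own subformula set). For the inductive step I would isolate two structural facts, proved separately, and then chain them: a \emph{transitivity} property for $\Sub$, and a \emph{single-step} property relating successors in the graph to subformulas of the source. The $\Skip$ target is terminal (it has no outgoing edges and is not a local type), so it carries no subformula obligation and may be excluded.

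First I would prove transitivity: if $\T_2 \in \Sub(\T_1)$ then $\Sub(\T_2) \subseteq \Sub(\T_1)$, by structural induction on $\T_1$. In the non-recursive cases ($\tend$, $\ty$, $\tin\pp\S{\cdot}$, $\tout\pp\S{\cdot}$, selection and branching) this is direct: either $\T_2 = \T_1$, or $\T_2$ lies in the subformulas of a strictly smaller component, where the induction hypothesis applies. Next, for the single-step property, note that if $\T' \trans{\gell} \T''$ with $\T'' \neq \Skip$, then by Definition~\ref{def:typegraph} the target $\T''$ is an immediate syntactic component of $\unfold{\T'}$, hence $\T'' \in \Sub(\unfold{\T'})$. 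I would then establish $\Sub(\unfold{\T'}) \subseteq \Sub(\T')$ as follows: using the $\mu$-clause of Definition~\ref{def:subformulas_of_local_type}, a single unfolding $\T_0\sub{\mu\ty.\T_0}{\ty}$ of $\mu\ty.\T_0$ lies in $\Sub(\mu\ty.\T_0)$ (taking $\T_0 \in \Sub(\T_0)$ in that clause); iterating this over the finitely many top-level $\mu$'s (termination guaranteed by guardedness) and chaining via the transitivity lemma yields $\unfold{\T'} \in \Sub(\T')$, and then transitivity again gives the inclusion. Combining everything: given a reachable $\T'$ with $\T' \in \Sub(\T)$ by the induction hypothesis and a step $\T' \trans{\gell} \T''$, we obtain $\T'' \in \Sub(\unfold{\T'}) \subseteq \Sub(\T') \subseteq \Sub(\T)$, the last inclusion from transitivity applied to the hypothesis.

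The main obstacle is the recursive case $\T_1 = \mu\ty.\T_1'$ of the transitivity lemma, where a subformula of $\T_1$ may have the shape $\T_c\sub{\mu\ty.\T_1'}{\ty}$, so I cannot apply the induction hypothesis to it directly. This reduces to a substitution lemma, which I would prove by induction on $\T_a$: for any $\T_a$ and closed $\T_b$,
\[
\Sub\bigl(\T_a\sub{\T_b}{\ty}\bigr) \;\subseteq\; \bigl\{\,\T_c\sub{\T_b}{\ty} \;\bigm|\; \T_c \in \Sub(\T_a)\,\bigr\} \cup \Sub(\T_b),
\]
where the second set is needed only at positions where $\ty$ occurs free in $\T_a$ (there the subformulas come entirely from $\T_b$). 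Feeding $\T_b = \mu\ty.\T_1'$ and using that $\Sub(\mu\ty.\T_1') = \{\mu\ty.\T_1'\} \cup \{\T_c\sub{\mu\ty.\T_1'}{\ty} \mid \T_c \in \Sub(\T_1')\}$ closes the recursive case. This substitution bookkeeping — keeping track of how $\Sub$ interacts with the capture-avoiding unfolding substitution — is the only genuinely delicate part; the rest is a routine traversal of the cases.
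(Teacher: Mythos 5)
Your proposal is correct and follows essentially the same route as the paper, which proves the lemma by induction on the length of the path and matches each transition against the clauses of Definition~\ref{def:subformulas_of_local_type}. The paper's proof is a one-liner, so your transitivity lemma for $\Sub$, the observation that $\Sub(\unfold{\T'})\subseteq\Sub(\T')$, and the substitution lemma are exactly the details it leaves implicit; your accounting of them (including the closedness of $\T_b$ in the substitution lemma) is sound.
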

\begin{proof}
By induction on the length of the path from $\T'$ to $\T$, and matching this with the appropriate rule in Definition~\ref{def:subformulas_of_local_type}.
\end{proof}

\begin{lemma}
\label{thm:number_of_local_subterms_is_linear} $|\Sub(\T)| =
\bigO{|\T|}$.
\end{lemma}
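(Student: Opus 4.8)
The plan is to prove the slightly stronger quantitative bound $|\Sub(\T)| \le |\T|$ by structural induction on $\T$, following exactly the clauses of Definition~\ref{def:subformulas_of_local_type}; the stated $\bigO{|\T|}$ bound is then immediate. The only real content is to verify that each clause of $\Sub$ increases the cardinality by at most the corresponding increase in $|\T|$ recorded in Definition~\ref{def:size_of_local_type}, and the main thing to watch is the recursion case, where substitution could superficially look as though it blows up the number of subformulas.

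First I would dispatch the two base cases: $\Sub(\tend) = \{\tend\}$ and $\Sub(\ty) = \{\ty\}$ each have cardinality $1 = |\tend| = |\ty|$. Next, for the prefix and choice constructors the defining clause writes $\Sub$ as the union of a singleton with the subformula sets of the immediate subterms, so the submodularity $|A \cup B| \le |A| + |B|$ of cardinality gives, for instance, $|\Sub(\tout\pp{\S}\T)| \le 1 + |\Sub(\T)| \le 1 + |\T| = |\tout\pp{\S}\T|$ using the induction hypothesis, and identically for $\tin\pp{\S}\T$. For the selection and branching constructors one has $|\Sub(\tselsub{\pp}{l_i: \T_i}{i\in \I})| \le 1 + \sum_{i\in \I} |\Sub(\T_i)| \le 1 + \sum_{i\in \I} |\T_i| = |\tselsub{\pp}{l_i: \T_i}{i\in \I}|$, and likewise for $\tbrasub{\pp}{l_i: \T_i}{i\in \I}$.

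The delicate case, and the step I expect to be the main obstacle, is $\mu\ty.\T$. Here the definition sets $\Sub(\mu\ty.\T) = \{\mu\ty.\T\} \cup \{\T'\sub{\mu\ty.\T}{\ty} \mid \T' \in \Sub(\T)\}$, so naively one might fear that the substitution enlarges the set. The key observation is that the second component is precisely the \emph{image} of $\Sub(\T)$ under the map $f : \Sub(\T) \to \Type$ defined by $f(\T') = \T'\sub{\mu\ty.\T}{\ty}$; since the image of a finite set under a function has cardinality no larger than the domain, $|\{\T'\sub{\mu\ty.\T}{\ty} \mid \T' \in \Sub(\T)\}| \le |\Sub(\T)|$. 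Hence $|\Sub(\mu\ty.\T)| \le 1 + |\Sub(\T)| \le 1 + |\T| = |\mu\ty.\T|$ by the induction hypothesis and the size clause for recursion. This closes the induction and establishes $|\Sub(\T)| \le |\T| = \bigO{|\T|}$, as required.
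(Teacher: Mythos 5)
Your proof is correct and follows essentially the same route as the paper's: a structural induction establishing the sharper bound $|\Sub(\T)| \le |\T|$, with the union/sum estimates in each constructor case and the observation that the recursion case only takes the image of $\Sub(\T)$ under substitution, hence does not increase cardinality. The paper's own proof states this image bound implicitly (writing $\le 1 + |\Sub(\T')|$ directly), so your explicit justification of that step is a welcome but inessential refinement.
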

\begin{proof}
By induction on the structure of $\T$. We will show that $|\Sub(\T)| \leq |\T|$.

    \case $\T = \tend$ or $\T = \ty$: trivial.

    \case $\T = \mu \ty. \T'$: $|\Sub(\T)| = |\{\mu \ty. \T'\} \cup \{\T''[\mu \ty.\T' / \ty] \mid \T'' \in \Sub(\T')\}| \leq 1 + |\Sub(\T')| = |\Sub(\T)|$.

    \case $\T = \tout\pp{\S}\T'$: $|\Sub(\T)| = |\{\tout\pp{\S}\T'\} \cup \Sub(\T')| \leq 1 + |\Sub(\T')| = |\Sub(\T)|$.

    \case $\T = \tsel\pp{\T_1, \dots, \T_n}$: $\Sub(\T) = |\{\tsel\pp{\T_1, \dots, \T_n}\} \cup \bigcup_{i=1}^n \Sub(\T_i)| \leq 1 + \sum_{i=1}^n |\Sub(\T_i)| \leq 1 + \sum_{i=1}^n |\T_i| = |\Sub(\T)|$.

Other cases: similar.
\end{proof}

\begin{lemma}\label{thm:number_of_edges_in_subtyping_lts_is_linear} The number of edges in the type graph for $\T$ is $O(|\T|)$.
  \begin{proof}
    Define the following function, which is the out-degree of an unfolded type: $\od(\ty) = 0$; $\od(\tend) = 1$; $\od(\tout\pp{S}\T) = \od(\tin\pp{\S}\T) = 1$; and $\od(\tselsub\pp{l_i: \T_i}{i \in \I}) = \od(\tbrasub\pp{l_i: \T_i}{i \in \I}) = |\I|$.
  
    Then, the number of edges in the LTS is
    $\sum_{U \text{ reachable from } T} \od(\unfold{U}) \leq \sum_{U \in \Sub(T)} \od(U) $.

    Let $f(\T) = \sum_{\T' \in \Sub(\T)} \od(\T')$.

    We prove that $f(T) \leq 2|\T| - 1$, by structural induction on $\T$.

    \case $T = \tend$, or $T = \ty$. Then $\Sub(\T) = \{\T\}$, so $f(\T) \leq 2|\T| - 1$.

    \case $\T = \mu \ty. \T'$. Then $f(\T) = \od(\T) + f(\T')$. By the inductive hypothesis, $f(\T') \leq 2|\T'| - 1$. By definition, $\od(\T) = 0$, so $\sum_{\T'' \in \Sub(\T)} \od(\T'') \leq 2|\T'| - 1 \leq 2|\T| - 1$.

    \case $\T = \tout\pp\S\T'$, $T = \tin\pp\S\T'$. Then $f(\T) = 1 + f(\T')$. By the inductive hypothesis, $f(\T') \leq 2|\T'| - 1$. By definition, $\od(\T) = 1$, so $f(\T) \leq 2|\T| - 1$.
  
    \case $T = \tselsub\pp{l_i: \T_i}{i \in \I}$, $T = \tbrasub\pp{l_i: \T_i}{i \in \I}$. Then $f(\T) = \od(\T) + \sum_{i \in \I} f(\T_i)$. By the inductive hypothesis, $f(\T_i) \leq 2|\T_i| - 1$. By definition, $\od(\T) = |\I|$, so $f(\T) \leq |\I| + \sum_{i \in \I} (2|\T_i| - 1) = 2\sum_{i \in \I} |T_i| - 1 \leq 2|\T| - 1$.
  \end{proof}
\end{lemma}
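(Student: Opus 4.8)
The plan is to count the edges of $\GG(\T)$ by summing the out-degrees of its nodes, and to control that sum by a structural induction on $\T$ rather than by merely counting nodes. First I would record the two facts already available: every node reachable from $\T$ lies in $\Sub(\T)$ (Lemma~\ref{thm:type_graph_nodes_are_subformulas}), and $|\Sub(\T)| = \bigO{|\T|}$ (Lemma~\ref{thm:number_of_local_subterms_is_linear}). This alone does \emph{not} suffice, because a single selection or branching node $\tselsub{\pp}{l_i{:}\T_i}{i\in\I}$ emits $|\I|$ edges, so the edge count could in principle exceed the node count. Hence the quantity to bound is $\sum_{U} d(\unfold{U})$, where $U$ ranges over reachable nodes and $d(\unfold{U})$ is the number of outgoing transitions of $U$ in the LTS of Definition~\ref{def:typegraph}: namely $1$ when $\unfold{U}$ is an end, output, or input type, and $|\I|$ when $\unfold{U}$ is a selection or branching over an index set $\I$.

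The heart of the argument is the inductive claim that, summing the out-degree over \emph{all} subformulas, $\sum_{\T'\in\Sub(\T)} d(\T') \le 2|\T| - 1$, where I set $d(\ty)=d(\mu\ty.\T')=0$ and take $d$ to be the branching width otherwise. The base cases (a terminated type or a variable) and the unary cases (output, input, and recursion) follow immediately from the size definition. In the selection/branching case the node contributes $|\I|$ to the sum, but its size is $1 + \sum_{i\in\I}|\T_i| \ge |\I|$, so the extra out-degree is paid for by the size; the factor $2$ absorbs this together with the recursive contributions. Combining this bound with Lemma~\ref{thm:type_graph_nodes_are_subformulas} yields that the edge count is $\bigO{|\T|}$.

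The step I expect to be the main obstacle is the treatment of recursion, for two reasons. First, the out-degree that actually governs edges is $d(\unfold{U})$, not $d(U)$, and a $\mu$-headed node has $d(U)=0$ while $\unfold{U}$ carries genuine edges; guardedness guarantees that $\unfold{U}$ terminates with a real constructor at its head, so each such node has a well-defined finite out-degree that must be charged correctly to the subformula sum. Second, the subformulas of $\mu\ty.\T'$ are generated by the substitution $\T''[\mu\ty.\T'/\ty]$, so I need to verify that this substitution preserves out-degrees: it replaces occurrences of $\ty$ (out-degree $0$) by $\mu\ty.\T'$ (out-degree $0$) and leaves every top-level constructor untouched, whence $\sum_{\T''\in\Sub(\T')} d(\T''[\mu\ty.\T'/\ty]) = \sum_{\T''\in\Sub(\T')} d(\T'')$. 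This identity is exactly what makes the $\mu$-case of the induction collapse to the inductive hypothesis, and establishing it carefully is the crux of the proof.
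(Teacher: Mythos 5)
Your proposal is correct and follows essentially the same route as the paper's proof: bound the edge count by the sum of out-degrees over subformulas, and show by structural induction that this sum is at most $2|\T|-1$, with the factor $2$ absorbing the $|\I|$ out-degree of choice nodes against their size. The only difference is that you make explicit the step the paper leaves implicit in its $\mu$-case, namely that the substitution $[\mu\ty.\T'/\ty]$ generating $\Sub(\mu\ty.\T')$ preserves out-degrees, which is a worthwhile clarification but not a change of method.
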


\begin{corollary} \label{thm:size_of_type_graph_is_linear}
The number of nodes in the type graph of $\T$ is $\bigO{|\T|}$.
\end{corollary}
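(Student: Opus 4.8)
The plan is to observe that this is an immediate consequence of the two preceding lemmas, so the work has essentially already been done. Recall that by Definition~\ref{def:typegraph} the type graph $\GG(\T)$ consists exactly of the types reachable from $\T$ via the transition relation $\trans{\gell}$, where transitions may also lead to the distinguished node $\Skip$. Thus every node of $\GG(\T)$ is either a type reachable from $\T$ or the node $\Skip$ itself.

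First I would apply Lemma~\ref{thm:type_graph_nodes_are_subformulas}, which tells us that any type $\T'$ reachable from $\T$ satisfies $\T' \in \Sub(\T)$. Hence the set of non-$\Skip$ nodes of $\GG(\T)$ is a subset of $\Sub(\T)$, giving a bound of $|\Sub(\T)|$ on their number. Adding the single $\Skip$ node, the total number of nodes is at most $|\Sub(\T)| + 1$.

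Next I would invoke Lemma~\ref{thm:number_of_local_subterms_is_linear}, which establishes $|\Sub(\T)| = \bigO{|\T|}$. Combining the two bounds yields that the number of nodes of $\GG(\T)$ is at most $|\Sub(\T)| + 1 = \bigO{|\T|}$, as required.

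There is no real obstacle here: the only point requiring a moment's care is to account separately for the auxiliary $\Skip$ node, which is not a subformula of $\T$ but contributes only an additive constant. Everything else is a direct chaining of the already-established inclusion of reachable nodes into $\Sub(\T)$ and the linear bound on $|\Sub(\T)|$.
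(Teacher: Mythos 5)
Your proof is correct and is essentially the derivation the paper intends: the corollary is left unproved in the paper precisely because it follows, as you argue, by combining Lemma~\ref{thm:type_graph_nodes_are_subformulas} (reachable nodes lie in $\Sub(\T)$) with Lemma~\ref{thm:number_of_local_subterms_is_linear} ($|\Sub(\T)| = \bigO{|\T|}$), plus the single extra $\Skip$ node. Your explicit accounting for $\Skip$ is a nice touch of care that the paper glosses over.
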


\subtypingtypesimulation*
\label{app:subtypingtypesimulation}

\begin{proof}
  Note that $\subt$ is the largest type simulation, as all type simulations are consistent with the rules in Definition~\ref{def:subtyping_local_types}. Thus the forward direction is immediate. For the backward direction, if $\T_1 \cR \T_2$ then $\cR$ is consistent with the subtyping rules, so $\T_1 \subt \T_2$.
\end{proof}

\subsection{Proof of Theorem~\ref{thm:complexity_ghilezan_subtyping}}\label{section:proof_complexity_ghilezan_subtyping}

To help with the proofs of the complexity results in this section, we first define \emph{nesting depth}:

\begin{definition}[Nesting depth]
  The nesting depth of a type is defined inductively:
  \begin{align*}
    \nd(\tend) = \nd(\ty) &= 1\\
    \nd(\mu \ty. \T) &= \nd(\T) + 1\\
    \nd(\tin\pp\S\T) = \nd(\tout\pp\S\T) &= \nd(\T) + 1\\
    \nd(\tselsub\pp{l_i: \T_i}{i \in \I}) = \nd(\tbrasub\pp{l_i: \T_i}{i \in \I}) &= \max (\{\nd(\T_i) \mid i \in \I\}) + 1
  \end{align*}
\end{definition}

We first give an algorithm which is worst-case exponential, taken from Ghilezan \etal~\cite{Ghilezan2019}, which is the algorithm in \cite{Gay2005} adapted to the multiparty setting. We adapt the syntax to more closely match ours: in particular, we have separate constructs for sending and selection, as well as receiving and branching. The algorithm involves rules in Figure \ref{fig:ghilezhan_subtyping_rules}. These rules prove judgements of the form $\Theta \vdash \T \leq \T'$, which intuitively means ``assuming the relations in $\Theta$, we may conclude that $\T$ is a subtype of $\T'$''.

\begin{figure}
  \centering
  \small
  \setstretch{3}
  \begin{prooftree}
    \hypo{(\T, \T') \in \Theta}
    \infer1[\RULE{Alg-Assump}]{ \Theta \vdash \T \leq \T' }
  \end{prooftree}\hspace{2em}%
  \begin{prooftree}
    \infer0[\RULE{Alg-End}]{ \Theta \vdash \tend \leq \tend }
  \end{prooftree}

  \begin{prooftree}
    \hypo{\Theta \cup \{(\mu \ty.\T, \T')\} \vdash \T [\mu \ty.\T/\ty] \leq \T'}
    \infer1[\RULE{Alg-RecL}]{\Theta \vdash \mu \ty.\T \leq \T'}
  \end{prooftree}\hspace{2em}%
  \begin{prooftree}
    \hypo{\Theta \cup \{(\T, \mu \ty.\T')\} \vdash \T \leq \T'[\mu \ty. \T'/\ty]}
    \infer1[\RULE{Alg-RecR}]{\Theta \vdash \T \leq \mu \ty.\T'}
  \end{prooftree}

  \begin{prooftree}
    \hypo{\Theta \vdash \T_1 \leq \T'_1}
    \infer1[\RULE{Alg-In}]{\Theta \vdash \tin\pp\S{\T_1} \leq \tin\pp{\S}{\T'_1}}
  \end{prooftree}\hspace{2em}%
  \begin{prooftree}
  \hypo{\Theta \vdash \T_1 \leq \T'_1}
  \infer1[\RULE{Alg-Out}]{\Theta \vdash \tout\pp\S{\T_1} \leq \tout\pp{\S}{\T'_1}}
  \end{prooftree}

  \begin{prooftree}
    \hypo{\I \subseteq \J}
    \hypo{\forall i \in \I.\:\Theta \vdash \T_i \leq \T'_i}
    \infer2[\RULE{Alg-Bra}]{\Theta \vdash \tbrasub\pp{l_j: \T_j}{j \in \J} \leq \tbrasub\pp{l_i: \T_i}{i \in \I}}
  \end{prooftree}\hspace{2em}%
  \begin{prooftree}
    \hypo{\I \subseteq \J}
    \hypo{\forall i \in \I.\:\Theta \vdash \T_i \leq \T'_i}
    \infer2[\RULE{Alg-Sel}]{\Theta \vdash \tselsub\pp{l_i: \T_i}{i \in \I} \leq \tselsub\pp{l_j: \T_j}{j \in \J}}
  \end{prooftree}
  \vspace{5pt}
  \caption{Algorithmic rules for subtyping, from~\cite[Table 6]{Ghilezan2019} (syntax slightly modified).}
  \label{fig:ghilezhan_subtyping_rules}
\end{figure}

Ghilezan \etal\ \cite[Theorem 3.26]{Ghilezan2019} prove soundness and completeness of this type system:

\begin{theorem}
  $\emptyset \vdash \T \leq \T'$ iff $\T \leq \T'$.
\end{theorem}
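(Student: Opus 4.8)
The theorem asserts soundness and completeness of the algorithmic system of Figure~\ref{fig:ghilezhan_subtyping_rules} with respect to the coinductive subtyping of Definition~\ref{def:subtyping_local_types} (the right-hand $\leq$ is the relation written $\subt$ there). The plan is to prove the two implications separately, and to isolate first one elementary fact on which both rest: $\subt$ is \emph{invariant under unfolding}. Indeed every clause of Definition~\ref{def:subtyping_local_types} constrains only $\unfold{\T_1}$ and $\unfold{\T_2}$ (Definition~\ref{def:unfold}), so whenever $\unfold{\T_1}=\unfold{\U_1}$ and $\unfold{\T_2}=\unfold{\U_2}$ we have $\T_1 \subt \T_2$ iff $\U_1 \subt \U_2$. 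In particular $\mu\ty.\T_1 \subt \T'$ iff $\T_1\sub{\mu\ty.\T_1}{\ty} \subt \T'$, which matches exactly the one-step unfolding performed by \RULE{Alg-RecL} and \RULE{Alg-RecR}.

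For soundness ($\emptyset \vdash \T_0 \leq \T_0'$ implies $\T_0 \subt \T_0'$) I would extract a subtyping relation from the finite derivation tree. Let $\cR$ be the set of pairs $(\T,\T')$ labelling some judgement node of the derivation; I then verify $\cR$ satisfies every clause of Definition~\ref{def:subtyping_local_types}, so that coinduction yields $\cR \subseteq {\subt}$ and hence $(\T_0,\T_0') \in {\subt}$. To check a pair, I inspect the rule applied at its node. For the structural rules \RULE{Alg-End}, \RULE{Alg-In}, \RULE{Alg-Out}, \RULE{Alg-Bra}, \RULE{Alg-Sel}, both sides are already head-normal (types are closed, so this means one of the five constructors), the index-set side conditions coincide with the covariant/contravariant conditions of Definition~\ref{def:subtyping_local_types} after aligning labels, and the residuals occur as premises, hence lie in $\cR$. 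For \RULE{Alg-RecL}/\RULE{Alg-RecR} the child has the same unfolding on the unfolded side, so unfolding-invariance reduces the clause for $(\T,\T')$ to the clause for the child, which is in $\cR$. For \RULE{Alg-Assump} the pair lies in the ambient $\Theta$. The invariant I must maintain here is that every pair in $\Theta$ at a node was placed there by an ancestor \RULE{Alg-RecL} or \RULE{Alg-RecR} applied to that same pair, so an \RULE{Alg-Assump} leaf can be discharged by re-using the introducing recursion node as the witness for the simulation clause.

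For completeness ($\T_0 \subt \T_0'$ implies $\emptyset \vdash \T_0 \leq \T_0'$) I would build a derivation top-down by a deterministic strategy driven by the shapes of the two types, maintaining the invariant that at every goal $\Theta \vdash \T \leq \T'$ we have $\T \subt \T'$. If $(\T,\T') \in \Theta$, close with \RULE{Alg-Assump}. Otherwise, if the left (resp.\ right) type is $\mu$-headed, apply \RULE{Alg-RecL} (resp.\ \RULE{Alg-RecR}); unfolding-invariance preserves the invariant on the premise while recording $(\T,\T')$ in $\Theta$. If neither side is recursive, both are head-normal and $\T \subt \T'$ forces the heads to agree as demanded by Definition~\ref{def:subtyping_local_types}, so exactly one structural rule applies and its premises again relate subtypes; thus the construction never gets stuck.

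The delicate point, and the main obstacle, is termination of this construction, i.e.\ showing the generated tree is finite. Every left component that arises is an unfolding of a subformula of $\T_0$ and every right component an unfolding of a subformula of $\T_0'$; by Lemma~\ref{thm:number_of_local_subterms_is_linear} there are finitely many of each, hence only finitely many distinct pairs $(\T,\T')$ can occur. I would then rule out infinite branches: by guardedness a recursion binder is unfolded infinitely often along any infinite branch, so $\mu$-headed goals recur infinitely often, and since there are only finitely many such goals some $\mu$-headed pair repeats. At its second occurrence that pair is already in $\Theta$ (inserted by the \RULE{Alg-RecL}/\RULE{Alg-RecR} rule at its first occurrence), forcing \RULE{Alg-Assump} and closing the branch. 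Hence every branch terminates, the derivation is finite, and completeness follows; combining both directions gives the stated equivalence.
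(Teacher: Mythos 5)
The paper does not actually prove this statement: it quotes it verbatim from Ghilezan \etal~\cite[Theorem~3.26]{Ghilezan2019} as the correctness guarantee for the rules of Figure~\ref{fig:ghilezhan_subtyping_rules}, and only the \emph{complexity} of the resulting algorithm is what is established here (Theorems~\ref{thm:worst_case_exponential_subtyping:upper} and~\ref{thm:worst_case_exponential_subtyping:lower}). So there is no in-paper proof to compare yours against; the relevant benchmark is the cited source, which follows the Gay--Hole scheme for binary session types.

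Measured against that, your reconstruction is the standard argument and is essentially correct. The two load-bearing points are the ones you isolate: for soundness, the set of pairs occurring in a finite derivation is a post-fixed point of the clauses of Definition~\ref{def:subtyping_local_types}, with \RULE{Alg-Assump} leaves discharged via the invariant that $\Theta$ only ever contains conclusions of ancestor \RULE{Alg-RecL}/\RULE{Alg-RecR} nodes on the same pair; for completeness, termination follows because every pair reached lies in $\Sub(\T)\times\Sub(\T')$, which is finite, and an infinite branch would have to apply a recursion rule infinitely often (a recursion-free suffix strictly shrinks the types at each step) and hence revisit a pair already stored in $\Theta$, where \RULE{Alg-Assump} closes the branch. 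The one step worth spelling out more carefully in the soundness half is the well-foundedness of your ``reduce the clause for $(\T,\T')$ to the clause for the child'' move: a recursion node passes to a child with identical unfoldings, and an \RULE{Alg-Assump} leaf redirects to another occurrence of the same pair, so you need every such chain to terminate at a structural node whose premises supply the residuals. It does, because each recursion step strictly decreases the number of unfolding steps remaining to reach $\unfold{\T_1}$ (respectively $\unfold{\T_2}$) while the redirections leave it unchanged; with that observation made explicit the argument is complete.
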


Using this theorem, we can give an algorithm for subtyping of session types. The algorithm tries to build the proof tree bottom-up, using rule \RULE{Alg-Assump} if possible. Note that the rule taken at each step is deterministic (if we give \RULE{Alg-RecL} higher priority that \RULE{Alg-RecR}). By the above theorem, this algorithm is correct.

We adapt the results in \cite{Udomsrirungruang2024a} to prove the following complexity results of this algorithm.
We take $n$ = $|\T| + |\T|'$ where the input is $\T, \T'$:

\subsubsection{Upper bound}

\begin{theorem}\label{thm:worst_case_exponential_subtyping:upper} The upper bound of the worst case complexity of the subtyping algorithm in \cite{Ghilezan2019} is $n^{\bigO{n^3}}$.
\end{theorem}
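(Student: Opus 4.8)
The plan is to bound the size of the proof tree that the algorithm attempts to construct bottom-up, exploiting the fact that \RULE{Alg-Assump} has priority, so that no pair $(\T,\T')$ is ever re-expanded by a recursion rule once it has been recorded in $\Theta$ along the current branch. First I would establish a \emph{closure lemma}: every judgement $\Theta \vdash \U \leq \U'$ occurring in a derivation starting from $\emptyset \vdash \T_1 \leq \T_2$ satisfies $\U \in \Sub(\T_1)$ and $\U' \in \Sub(\T_2)$. This follows by induction on the derivation, inspecting each rule: \RULE{Alg-In}, \RULE{Alg-Out}, \RULE{Alg-Bra} and \RULE{Alg-Sel} replace a type by one of its immediate subterms, hence by a subformula, while \RULE{Alg-RecL} and \RULE{Alg-RecR} replace $\mu\ty.\U$ by $\U[\mu\ty.\U/\ty]$, which is a subformula by Definition~\ref{def:subformulas_of_local_type}; transitivity of the subformula relation closes the induction. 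By Lemma~\ref{thm:number_of_local_subterms_is_linear} there are then only $\bigO{|\T_1|}$ possible left components and $\bigO{|\T_2|}$ possible right components, hence $\bigO{n^2}$ distinct pairs overall.

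Next I would bound the depth of the tree. Along any root-to-leaf path, each application of \RULE{Alg-RecL} or \RULE{Alg-RecR} records its conclusion pair in $\Theta$, and since \RULE{Alg-Assump} fires whenever a pair already present in $\Theta$ recurs, no such pair is recorded twice on a path. By the closure lemma only $\bigO{n^2}$ candidate pairs exist, so each path contains at most $\bigO{n^2}$ recursion steps. Between two consecutive recursion steps (and before the first, and after the last) the path consists solely of the structural rules \RULE{Alg-In}, \RULE{Alg-Out}, \RULE{Alg-Bra}, \RULE{Alg-Sel}, each of which peels off exactly one type constructor without unfolding; by guardedness this run is finite, and its length equals the number of constructors in the unguarded prefix of a subterm of the original types, which is bounded by its nesting depth $\nd$ and hence by $\bigO{n}$. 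Multiplying, the depth of the derivation is $\bigO{n^2}\cdot\bigO{n}=\bigO{n^3}$.

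Finally I would count nodes and per-node cost. The only branching rules are \RULE{Alg-Bra} and \RULE{Alg-Sel}, whose branching factor is at most the number of labels and so at most $\bigO{n}$; every other rule is unary. Hence the number of leaves, and therefore the number of nodes, is at most $n^{\bigO{n^3}}$. The work at each node is a membership test of the current pair against $\Theta$; since all occurring types are subformulas their syntactic size is at most $2^{\bigO{n}}$ and $|\Theta|=\bigO{n^2}$, so each node costs at most $2^{\bigO{n}}$. The total running time is then $n^{\bigO{n^3}}\cdot 2^{\bigO{n}}=n^{\bigO{n^3}}$, as claimed.

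The main obstacle is the depth estimate: making the informal ``subformula'' reasoning precise across repeated substitutions. Concretely, both the closure lemma and the $\bigO{n}$ bound on the length of a maximal run of structural rules hinge on controlling how unfolding composes with the peeling of constructors, i.e.\ on the interaction between the nesting depth $\nd$ and substitution. I expect this bookkeeping, rather than the final counting, to demand the most care; it is also where the multiparty rules differ from the binary case of \cite{Udomsrirungruang2024a}, since \RULE{Alg-Bra} and \RULE{Alg-Sel} admit arbitrarily many labels and so contribute the $\bigO{n}$ branching factor that drives the $n^{\bigO{n^3}}$ bound.
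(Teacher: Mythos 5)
Your proposal is correct and follows essentially the same route as the paper's proof: subformula closure bounding the occurring types to $\bigO{n}$ per side, a depth bound of $\bigO{n^3}$ obtained by combining the at most $\bigO{n^2}$ additions to $\Theta$ along a branch with the $\bigO{n}$-bounded runs of structural rules measured by nesting depth, and an $\bigO{n}$ branching factor yielding $n^{\bigO{n^3}}$ nodes. The only (harmless) divergence is your more pessimistic $2^{\bigO{n}}$ per-node cost, which you correctly absorb into the final bound, where the paper assumes constant-time handling of subformulas.
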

\begin{proof}
  We proceed similarly to \cite[Theorem 3.6]{Udomsrirungruang2024a}. First, note that all types appearing in the proof tree of $\emptyset \vdash \T_1 \leq \T_2$ are members of $\Sub(\T_1) \cup \Sub(\T_2)$.
  Furthermore, if $\Theta'' \vdash \T_1'' \leq \T_2''$ appears above $\Theta' \vdash \T_1' \leq \T_2'$, then either $|\Theta''| > |\Theta'|$ or $\nd(\T_1'') < \nd(\T_1')$; this can be proved by induction on the proof tree. Thus, in any vertical path in the proof tree, the pairs $(\Theta', \T_1')$ are distinct.
  Therefore, in any judgement $\Theta \vdash \T_1' \leq \T_2'$ in the proof tree, there are only $\bigO{n^2}$ possible pairs that can be in $\Theta$. Furthermore, there are only $\bigO{n}$ possible values of $\T_1'$, and thus $\bigO{n}$ possible values of $\nd(\T_1')$. Therefore, the height of the proof tree is bounded by $\bigO{n^3}$.
  As the branching factor is $\bigO{n}$, we conclude that the total number of nodes in the proof tree is $n^{\bigO{n^3}}$. Each node requires $\bigO{1}$ time to process.
\end{proof}

\subsubsection{Lower bound}

We exhibit an input to Ghilezan \etal's subtyping algorithm where the runtime is exponential.

\newcommand{\Ta}{\T^\kf{a}}
\newcommand{\Tb}{\T^\kf{b}}
\newcommand{\Taf}{\T^\kf{af}}
\newcommand{\Tbf}{\T^\kf{bf}}
\newcommand{\Tc}{\T^\kf{c}}
\newcommand{\cC}{\mathcal{C}}
\newcommand{\infs}{\mathcal{S}}

\begin{definition}\label{def:exponential_example}
  Define the (not necessarily closed) type fragments:
  \begin{align*}
    \Taf_0 &= \ty & \\
    \Taf_r &= \tsel\pp{l_1: \Taf_{r-1}, l_2: \mu\ty'_{r-1}. \Tbf_{r-1}} & (r > 0) \\
    \Tbf_0 &= \ty & \\
    \Tbf_r &= \tsel\pp{l_1: \Tbf_{r-1}, l_2: \Tc} & (r > 0) \\
    \Tc &= \mu\ty''. \tsel\pp{l_1: \ty'', l_2: \ty''} &
  \end{align*}%
  Let $\T_k = \mu\ty. \Ta_k$. Define the (closed) types: $\Ta_{r, k} = \Taf_r[\T_k/\ty]$, $\Tb_{r, k} = \Tbf_r[\T_k/\ty]$.
\end{definition}

Note that in the above definition, there are no bound occurrences of $\ty'_k$, even though we bind it in a $\mu$.
This ensures that all nodes in the statement of Lemma \ref{lemma:exponential_nodes_in_proof_tree} are unique.

\begin{lemma}\label{lemma:exponential_nodes_in_proof_tree}
  For every sequence $\alpha_1, \dots, \alpha_l (0 \leq l < k)$ such that $0 \leq \alpha_i < k-i$:
  \begin{align}
\small    \left.
    \begin{aligned}
      &\{\T_k \leq \T_{k+1}, \Ta_{k, k} \leq T_{k+1}\}\\
      \cup& \{\mu \ty_{\alpha_i}. \Tb_{{\alpha_i}, k} \leq \mu \ty_{\alpha_i+i}. \Tb_{{\alpha_i+i},{k+1}} \mid 1 \leq i \leq l\}\\
      \cup& \{\Tb_{{\alpha_i},k} \leq \mu \ty_{\alpha_i+i}. \Tb_{{\alpha_i+i},{k+1}} \mid 1 \leq i \leq l\}\\
      \cup& \{\T_k \leq \Tb_{i,{k+1}} \mid 1 \leq i \leq l\} \\
      \cup& \{\Ta_{{k-i},k} \leq \T_{k+1} \mid 1 \leq i \leq l\}
    \end{aligned}
    \right\}
    &\vdash \Ta_{{k-l},k} \leq \Ta_{{k+1},{k+1}} \label{eq:exp_counterexample}
  \end{align}
  is derivable from $\emptyset \vdash T_k \leq T_{k+1}$ as the root.
\end{lemma}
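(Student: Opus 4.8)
The plan is to prove the statement by induction on $l$, constructing the derivation from the root $\emptyset \vdash \T_k \leq \T_{k+1}$ downwards and tracking precisely which pairs accumulate in the assumption context $\Theta$. For the base case $l = 0$ the context is $\{\T_k \leq \T_{k+1}, \Ta_{k,k} \leq \T_{k+1}\}$ and the target is $\Ta_{k,k} \leq \Ta_{k+1,k+1}$. I would obtain this by applying \RULE{Alg-RecL} to the root (recording $(\T_k,\T_{k+1})$ and exposing the left unfolding $\Ta_{k,k} = \Taf_k[\T_k/\ty]$), followed by \RULE{Alg-RecR} (recording $(\Ta_{k,k},\T_{k+1})$ and exposing the right unfolding $\Ta_{k+1,k+1} = \Taf_{k+1}[\T_{k+1}/\ty]$). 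Since $\T_k = \mu\ty.\Taf_k$ and $\T_{k+1} = \mu\ty.\Taf_{k+1}$, these unfoldings are exactly the two sides required.

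For the inductive step I would start from the level-$l$ judgement~\eqref{eq:exp_counterexample} and observe that, for $k-l>0$, both sides unfold (via $\Taf_r = \tsel\pp{l_1:\Taf_{r-1}, l_2:\mu\ty'_{r-1}.\Tbf_{r-1}}$) to selections over $\{l_1,l_2\}$, so \RULE{Alg-Sel} applies and splits the goal into two subgoals. One branch continues the ``main spine'' with the left index decremented, while the other is a pair of $\mu$-types $\mu\ty'.\Tb_{\cdot,k} \leq \mu\ty'.\Tb_{\cdot,k+1}$; in that branch I would apply \RULE{Alg-RecL}/\RULE{Alg-RecR} to unfold and then descend the $\Tbf$-spine by a chosen number $\alpha_{l+1}$ of \RULE{Alg-Sel} steps. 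The constraint $0 \le \alpha_{l+1} < k-(l+1)$ is exactly the length of the $l_1$-spine of $\Tbf_{k-l-1}$, and because $\Tbf_0 = \ty$ and $\Taf_0 = \ty$ both point back to $\T_k$, after $\alpha_{l+1}$ steps the left component becomes a fresh occurrence of $\T_k$, at which point a further \RULE{Alg-RecL} re-enters the main structure and reaches the level-$(l+1)$ judgement. Each of these moves contributes exactly one new entry to each of the context families in the statement, which is the content of the bookkeeping.

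The step I expect to be the main obstacle is verifying that \RULE{Alg-Assump} never becomes applicable along any of these paths: if it fired, the tree would short-circuit and the blow-up would collapse. This amounts to showing that every pair generated is syntactically distinct from all pairs already in $\Theta$, and it is precisely the otherwise-unused binders $\mu\ty'_{r-1}$ inserted into $\Taf_r$, together with the uniform index shift between the $k$- and $(k+1)$-indexed families, that tag each intermediate type by its depth and guarantee this distinctness. The delicate part is carrying the five context families consistently through the interleaved \RULE{Alg-Sel}, \RULE{Alg-RecL} and \RULE{Alg-RecR} applications and checking that the side conditions $0 \le \alpha_i < k-i$ are exactly what keeps every branch well-formed and every pair fresh. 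Once the lemma is established, the lower bound of Theorem~\ref{thm:complexity_ghilezan_subtyping} follows by counting admissible sequences: there are $\prod_{i=1}^{k-1}(k-i) = (k-1)!$ of them, each yielding a distinct node of the proof tree, and since $\size{\T_k}+\size{\T_{k+1}} = \bigTheta{k^2}$ we have $k = \bigTheta{\sqrt{n}}$, giving $\bigOmega{(\sqrt{n})!}$ nodes.
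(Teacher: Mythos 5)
Your proposal matches the paper's proof essentially step for step: induction on $l$, with the base case given by \RULE{Alg-RecL} followed by \RULE{Alg-RecR}, and the inductive step descending the $\Taf$-spine via \RULE{Alg-Sel}, detouring through the $l_2$-branch into the $\Tbf$-spine (adding the four new context entries $\cC_1$--$\cC_4$ via the interleaved rec-rules) and re-entering the main loop when $\Tbf_0 = \ty$ resolves to $\T_k$. Your explicit attention to why \RULE{Alg-Assump} never fires (the unused $\mu\ty'_{r-1}$ binders tagging each depth) is sound and is in fact only stated as a remark after the type definitions in the paper rather than argued inside the lemma's proof.
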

\begin{proof}
  By induction on $l$. For some $\alpha_1, \dots, \alpha_l$, let $\infs_l$ be the set of inferences on the left side of (\ref{eq:exp_counterexample}).

  The base case $l=0$ is simple: $\infs_0 = \{\T_k \leq \T_{k+1}, \Ta_{k,k} \leq \T_{k+1}\}$. The corresponding proof tree is\\[1mm]
  \makebox[\textwidth]{
    \centering
    \scriptsize
    \begin{prooftree}
      \hypo{\infs_0 \vdash \Ta_{k, k} \leq \Ta_{{k+1},{k+1}}}
      \infer1[\RULE{Alg-RecR}]{\T_k \leq \T_{k+1} \vdash \Ta_{k,k} \leq \T_{k+1}}
      \infer1[\RULE{Alg-RecL}]{\emptyset \vdash \T_k \leq \T_{k+1}}
    \end{prooftree}
  }\\

  For the inductive step, we will build the tree starting from
  $\infs_{l-1} \vdash \Ta_{{k-(l-1)},k} \leq \Ta{{k+1},k}$, for $l > 0$,
  using the inductive hypothesis.
  The following proof tree works, taking\\
  $\cC_1 = \mu \ty_{\alpha_l}. \Tb_{{\alpha_l},k} \leq \mu \ty_{\alpha_l+l}. \Tb_{{\alpha_l+l},{k+1}},\:
  \cC_2 = \Tb_{{\alpha_l},k} \leq \mu \ty_{\alpha_l+l}. \Tb_{{\alpha_l+l},{k+1}},$\\
  $\cC_3 = \T_k \leq \Tb_{l,{k+1}},\:
  \cC_4 = \Ta_{{k-l},k} \leq \T_{k+1}$:\\

\smallskip 

  \makebox[\textwidth]{
    \centering
    \scriptsize
    \begin{prooftree}
      \hypo{\dots}
      \hypo{\infs_{l-1} \cup \{\cC_1, \cC_2, \cC_3, \cC_4\} \vdash \Ta_{{k-l},k} \leq \Ta_{{k+1},{k+1}}}
      \infer1[\RULE{Alg-RecR}]{\infs_{l-1} \cup \{\cC_1, \cC_2, \cC_3\} \vdash \Ta_{{k-l},k} \leq \T_{k+1} = \Tb_{0,{k+1}}}
      \hypo{\dots}
      \infer2[\RULE{Alg-Sel}]{\vdots}
      \infer1[\RULE{Alg-Sel}]{\infs_{l-1} \cup \{\cC_1, \cC_2, \cC_3\} \vdash \Ta_{k,k} \leq \Tb_{l,{k+1}}}
      \infer1[\RULE{Alg-RecL}]{\infs_{l-1} \cup \{\cC_1, \cC_2\} \vdash \Tb_{0,k} = \T_k \leq \Tb_{l,{k+1}}}
      \hypo{\dots}
      \infer2[\RULE{Alg-Sel}]{\vdots}
      \infer1[\RULE{Alg-Sel}]{\infs_{l-1} \cup \{\cC_1, \cC_2\} \vdash \Tb_{{\alpha_l},k} \leq \Tb_{{\alpha_l + l},{k+1}}}
      \infer1[\RULE{Alg-RecR}]{\infs_{l-1} \cup \{\cC_1\} \vdash \Tb_{{\alpha_l},k} \leq \mu \ty_{\alpha_l}. \Tb_{{\alpha_l+1},{k+1}}}
      \infer1[\RULE{Alg-RecL}]{\infs_{l-1} \vdash \mu \ty_{\alpha_l}. \Tb_{{\alpha_l},k} \leq \mu \ty_{\alpha_l+l}. \Tb_{{\alpha_l+l},{k+1}}}
      \infer2[\RULE{Alg-Sel}]{\infs_{l-1} \vdash \Ta_{{\alpha_l+1},k} \leq \Ta_{{\alpha_l+l+1},{k+1}}}
      \hypo{\dots}
      \infer2[\RULE{Alg-Sel}]{\vdots}
      \infer1[\RULE{Alg-Sel}]{\infs_{l-1} \vdash \Ta_{{k-(l-1)},k} \leq \Ta_{{k+1},{k+1}}}
    \end{prooftree}
  }\\

  Observing that $\infs_l = \infs_{l-1} \cup \{\cC_1, \cC_2, \cC_3, \cC_4\}$ finishes the proof.
\end{proof}

\begin{lemma} Using Definition \ref{def:exponential_example}, $\T_k \subteq \Tc$ for all $k \geq 0$.
\end{lemma}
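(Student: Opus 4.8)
The plan is to reduce the claim to exhibiting type simulations, using Lemma~\ref{thm:subtyping_as_simulation}: since $\subteq$ is mutual containment of the subtyping simulation $\subtsim$ (Definition~\ref{def:type_simulation}), it suffices to produce a type simulation witnessing $\T_k \subtsim \Tc$ and another witnessing $\Tc \subtsim \T_k$. The guiding intuition is that both $\T_k$ and $\Tc$ are \emph{regular} types whose every reachable state, after unfolding, is the binary selection $\tsel\pp{l_1\colon\cdot,\ l_2\colon\cdot}$ toward the same participant $\pp$ with exactly the labels $l_1, l_2$; hence their type graphs present the identical behaviour at every node.

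First I would isolate a finite family $\mathcal{F}$ of types that is closed under the transition relation of Definition~\ref{def:typegraph} and contains $\T_k$ and $\Tc$. A convenient choice is
\[
  \mathcal{F} = \{\T_k, \Tc\} \cup \{\Ta_{r,k} \mid 0 \le r \le k\} \cup \{\Tb_{s,k} \mid 0 \le s \le k\} \cup \{\mu\ty'_{s}.\,\Tb_{s,k} \mid 0 \le s \le k\}.
\]
The key invariant to establish is that every $\T' \in \mathcal{F}$ satisfies $\unfold{\T'} = \tsel\pp{l_1\colon \T_1',\ l_2\colon \T_2'}$ for some $\T_1', \T_2' \in \mathcal{F}$. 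This follows by a routine case analysis on the shape of $\T'$ using Definitions~\ref{def:unfold} and \ref{def:exponential_example}: for $r \ge 1$, $\Ta_{r,k} = \tsel\pp{l_1\colon \Ta_{r-1,k},\ l_2\colon \mu\ty'_{r-1}.\,\Tb_{r-1,k}}$ is already in head-normal form; $\Ta_{0,k} = \Tb_{0,k} = \T_k$ and $\unfold{\T_k} = \unfold{\Ta_{k,k}}$; for $s \ge 1$, $\Tb_{s,k} = \tsel\pp{l_1\colon \Tb_{s-1,k},\ l_2\colon \Tc}$; the vacuous binder $\mu\ty'_s$ (whose variable does not occur, as noted after Definition~\ref{def:exponential_example}) unfolds to $\unfold{\Tb_{s,k}}$; and $\unfold{\Tc} = \tsel\pp{l_1\colon\Tc,\ l_2\colon\Tc}$. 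In each case the two continuations again lie in $\mathcal{F}$.

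With this invariant in hand, both simulation checks become immediate. I would take $\cR_1 = \{(\T', \Tc) \mid \T' \in \mathcal{F}\}$ and verify it is a type simulation: for any $(\T', \Tc) \in \cR_1$, the only outgoing transitions of $\T'$ are $\T' \trans{\trsel\pp{l_1}} \T_1'$ and $\T' \trans{\trsel\pp{l_2}} \T_2'$, and $\Tc$ offers exactly the matching selections $\Tc \trans{\trsel\pp{l_j}} \Tc$ with $(\T_j', \Tc) \in \cR_1$, so the first clause of Definition~\ref{def:type_simulation} holds; the second clause (for $\trbra\pp l$ and $\trend$) is vacuous since $\Tc$ has no branching or $\trend$ transitions. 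Symmetrically, $\cR_2 = \{(\Tc, \T') \mid \T' \in \mathcal{F}\}$ is a type simulation, the roles of the two clauses now being interchanged but equally vacuous on the branching side. Since $\T_k \in \mathcal{F}$, this yields $\T_k \subtsim \Tc$ and $\Tc \subtsim \T_k$, hence $\T_k \subteq \Tc$.

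The main obstacle is purely the bookkeeping of the unfolding computations used to prove the closure invariant --- in particular correctly tracking the substitution $[\T_k/\ty]$ through $\Taf_r$ and $\Tbf_r$ and handling the vacuous recursions $\mu\ty'_s$; once the invariant is set up, the simulation arguments are mechanical because both type graphs expose the same selection alphabet $\{l_1, l_2\}$ at every state. (For $k = 0$, $\T_0 = \mu\ty.\ty$ is degenerate and can be treated separately or excluded under the standing guardedness assumption.)
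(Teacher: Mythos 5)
Your proof is correct and follows essentially the same route as the paper's: the paper observes that every subformula of $\T_k$ (it uses $\Sub(\T_k)$ where you use your explicitly constructed closed family $\mathcal{F}$) unfolds to a binary selection $\tsel\pp{l_1\colon\cdot,\ l_2\colon\cdot}$, and then exhibits the same two relations $\{(\T,\Tc)\mid \T\in\Sub(\T_k)\}$ and $\{(\Tc,\T)\mid \T\in\Sub(\T_k)\}$ as type simulations. Your version just spells out the closure invariant and the vacuity of the branching/end clauses more explicitly.
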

\begin{proof}
  All subformulas of $\T_k$ have shape $\tsel\pp{l_1: \T^{(1)}, l_2: \T^{(2)}}$,
  for some $\T^{(1)}, \T^{(2)} \in \Sub(\T_k)$.
  Therefore, $\{(\T, \Tc) \mid \T \in \Sub(\T_k)\}$ and $\{(\Tc, \T) \mid \T \in \Sub(\T_k)\}$ are type simulations.
\end{proof}

\begin{corollary}\label{cor:tk_subtsim} $T_k \subtsim T_{k+1}$.
\end{corollary}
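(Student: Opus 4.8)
The plan is to derive the corollary as an immediate consequence of the preceding lemma, using the fixed type $\Tc$ as a common ``pivot''. Recall that the lemma just established $\T_k \subteq \Tc$ for every $k \geq 0$, and crucially $\Tc = \mu\ty''.\tsel\pp{l_1: \ty'', l_2: \ty''}$ is a single closed type that does \emph{not} depend on $k$. So I would first instantiate the lemma at both $k$ and $k+1$ to obtain $\T_k \subteq \Tc$ and $\T_{k+1} \subteq \Tc$, then chain these through $\Tc$ to relate $\T_k$ and $\T_{k+1}$ directly.

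Concretely, the key steps are: (i) from $\T_{k+1} \subteq \Tc$ and symmetry of the type-graph equivalence $\subteq$, deduce $\Tc \subteq \T_{k+1}$; (ii) compose with $\T_k \subteq \Tc$ by transitivity of $\subteq$ to get $\T_k \subteq \T_{k+1}$; and (iii) observe that $\subteq$ is contained in $\subtsim$, since a symmetric type simulation satisfies in particular the (asymmetric) conditions of Definition~\ref{def:type_simulation}, and hence $\T_k \subtsim \T_{k+1}$. By Lemma~\ref{thm:subtyping_as_simulation} this is exactly the subtyping claim $\T_k \subt \T_{k+1}$ needed to ensure that the algorithm of \cite{Ghilezan2019} must explore the entire (exponentially large) proof tree built in Lemma~\ref{lemma:exponential_nodes_in_proof_tree} rather than terminating early on a failed comparison.

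There is essentially no hard obstacle here: the corollary is a two-line consequence of the lemma. The only points requiring (routine) justification are that $\subteq$ is genuinely an equivalence relation on type graphs, so that symmetry and transitivity are available, and that $\subteq \subseteq \subtsim$, both of which read off directly from Definition~\ref{def:type_simulation}. The conceptual work was already done in the preceding lemma, where one checks that every subformula in $\Sub(\T_k)$ has the shape $\tsel\pp{l_1: \T^{(1)}, l_2: \T^{(2)}}$ and hence collapses to $\Tc$ in the type graph; the corollary merely records that two such collapses give equivalent, and a fortiori subtyping-related, graphs.
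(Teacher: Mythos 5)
Your proof is correct and is essentially the paper's own argument: the paper's proof of this corollary is simply ``Follows from transitivity of $\subteq$,'' relying on the preceding lemma $\T_k \subteq \Tc$ exactly as you do. Your write-up merely makes explicit the routine steps (symmetry of $\subteq$, and the inclusion $\subteq\ \subseteq\ \subtsim$) that the paper leaves implicit.
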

\begin{proof}
  Follows from transitivity of $\subteq$. 
\end{proof}

\begin{theorem}\label{thm:worst_case_exponential_subtyping:lower} The lower bound of the worst case complexity of the subtyping algorithm in \cite{Ghilezan2019} is $\bigOmega{(\sqrt{n})!}$.
\end{theorem}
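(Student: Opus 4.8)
The plan is to run the algorithm of Figure~\ref{fig:ghilezhan_subtyping_rules} on the pair $\T_k, \T_{k+1}$ from Definition~\ref{def:exponential_example} and show that, even though the subtyping \emph{holds}, the proof tree the algorithm is forced to build is super-exponentially large. First I would use Corollary~\ref{cor:tk_subtsim} ($\T_k \subtsim \T_{k+1}$) together with Lemma~\ref{thm:subtyping_as_simulation} and the soundness/completeness of the rules in Figure~\ref{fig:ghilezhan_subtyping_rules} to conclude that $\emptyset \vdash \T_k \leq \T_{k+1}$ is derivable, so the algorithm succeeds and must construct the entire finite proof tree rooted at $\emptyset \vdash \T_k \leq \T_{k+1}$. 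Its running time is therefore bounded below by the number of nodes of that tree, and it suffices to lower-bound this count.

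The core of the argument is Lemma~\ref{lemma:exponential_nodes_in_proof_tree}, which exhibits, for every admissible index sequence $\alpha_1, \dots, \alpha_l$ with $0 \leq \alpha_i < k-i$, a derivable judgement of the form (\ref{eq:exp_counterexample}) occurring in this tree. I would count the sequences of maximal length $l = k-1$: there are exactly $\prod_{i=1}^{k-1}(k-i) = (k-1)!$ of them. The delicate point is that these must correspond to pairwise-distinct nodes that genuinely appear in the tree the algorithm explores, rather than being merged or pruned by \RULE{Alg-Assump}. This holds because the assumption sets $\infs_{k-1}$ differ across sequences, each encoding its sequence through the components $\mu\ty_{\alpha_i}.\Tb_{\alpha_i,k} \leq \mu\ty_{\alpha_i+i}.\Tb_{\alpha_i+i,k+1}$, and because the deliberately fresh binders $\ty'_r$ of Definition~\ref{def:exponential_example} (which have no bound occurrences) keep all intermediate types syntactically distinct, preventing \RULE{Alg-Assump} from firing early and collapsing the tree. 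Hence the proof tree has at least $(k-1)!$ nodes.

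Finally I would translate this into a bound in $n = |\T_k| + |\T_{k+1}|$ by a short size computation. A straightforward induction on the fragments gives $|\Tbf_r| = \bigTheta{r}$ (since $|\Tc|$ is constant), and consequently $|\Taf_r| = |\Taf_{r-1}| + \bigTheta{r} = \bigTheta{r^2}$; thus $|\T_k| = \bigTheta{k^2}$ and $n = \bigTheta{k^2}$, i.e. $k = \bigTheta{\sqrt{n}}$. Combining this with the node count yields a worst-case running time of $\bigOmega{(k-1)!} = \bigOmega{(\sqrt{n})!}$, where the factorial of $\sqrt{n}$ is understood up to the constant absorbed into $k = \bigTheta{\sqrt{n}}$ (equivalently, a bound of $2^{\bigOmega{\sqrt{n}\log n}}$).

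The main obstacle I anticipate is the distinctness/no-early-termination claim of the second step: I must verify that the $(k-1)!$ judgements furnished by Lemma~\ref{lemma:exponential_nodes_in_proof_tree} are all actually realised as separate nodes along the derivation the algorithm performs, and are never short-circuited by the assumption rule. This is precisely what the freshness of the $\ty'_r$ variables in Definition~\ref{def:exponential_example} is engineered to guarantee, and carefully establishing that no collapse occurs is where the bulk of the rigour will be required; the remaining size and counting arguments are routine by comparison.
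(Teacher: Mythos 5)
Your proposal is correct and follows essentially the same route as the paper: it uses the same family $\T_k, \T_{k+1}$ from Definition~\ref{def:exponential_example}, invokes Corollary~\ref{cor:tk_subtsim} so the algorithm must build the full tree, counts the distinct judgements supplied by Lemma~\ref{lemma:exponential_nodes_in_proof_tree} (your $(k-1)!$ for maximal-length sequences versus the paper's count over all lengths makes no asymptotic difference), and closes with the same $n = \bigTheta{k^2}$ size calculation. The distinctness concern you flag is exactly what the paper's inductive construction of the proof trees in Lemma~\ref{lemma:exponential_nodes_in_proof_tree} establishes.
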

\begin{proof}
  Consider the input $(\T_k, \T_{k+1})$. By Corollary \ref{cor:tk_subtsim},
  $\T_k \subt \T_{k+1}$, and so the algorithm in \cite{Ghilezan2019} will build
  a full proof tree of $\emptyset \vdash \T_k \subt \T_{k+1}$.
  By Lemma~\ref{lemma:exponential_nodes_in_proof_tree}, there is at least one (distinct) node in the proof tree for each
  sequence $\alpha_1, \dots, \alpha_l$ ($0 \leq l < k$, $0 \leq \alpha_i < k - i$), of which there are $k!$.
  We have $n = \bigO{k^2}$, so the number of nodes is at least $\bigOmega{(\sqrt n)!}$. 
\end{proof}


Combining the upper and lower bounds yields the stated result.

\complexityghilezansubtyping*

\subsection{Proof of Theorem~\ref{thm:quadratic_subtyping}}\label{section:quadratic_subtyping_proof}

Using this notion of product graphs and inconsistent nodes, we can
state the quadratic algorithm for subtyping as follows:

\begin{figure}[t]
  \begin{algorithmic}[1]
    \Function{Subtype}{$\T_1, \T_2$}
      \State $\mathcal{V} \gets \emptyset$ \Comment{Set of visited nodes}
      \State $\mathcal{Q} \gets \{(\T_1, \T_2)\}$ \Comment{Set of nodes to visit}
      \While{$\mathcal{Q} \neq \emptyset$}
        \State \text{remove} $(\T_1', \T_2')$ from $\mathcal{Q}$
        \If{$(\T_1', \T_2') \in \mathcal{V}$}
          \State \textbf{continue}
        \EndIf
        \State $\mathcal{V} \gets \mathcal{V} \cup \{(\T_1', \T_2')\}$
        \If{$(\T_1', \T_2')$ is inconsistent}\Comment{Definition~\ref{def:inconsistent_node}}
          \State \textbf{return} \textbf{false}
        \EndIf
        \For{each $(\T_1', \T_2') \trans{\gell} (\T_1'', \T_2'')$}
        \Comment{By definition~of~product subtyping graphs}
          \State $\mathcal{Q} \gets \mathcal{Q} \cup \{(\T_1'', \T_2'')\}$
        \EndFor
      \EndWhile
      \State \textbf{return} \textbf{true}
    \EndFunction
  \end{algorithmic}
  \caption{Algorithm for checking the subtyping relation.}
  \label{fig:subtyping_algorithm}
\end{figure}

\section{Appendix for Complexity of Projection Algorithms}
\label{app:projection}
\subsection{Appendix for Complexity of Inductive Projection with Plain Merging}
\label{app:plainprojection}

\begin{remark}[Merging and type safety]
\label{rem:merging}
Scalas and Yoshida \cite{Scalas2019} have discovered that the proofs of
type safety in the literature which use the full merging 
is flawed (while the proof using the plain merging
\cite{Honda2008,HYC2016,CDYP13} is \emph{not} flawed). 
After this paper, researchers wrongly believed
that the end-point projection is unsound. A recent paper 
by Yoshida and Hou \cite{YH2024} 
has corrected this misunderstanding, proposing a new general proof technique
for type soundness of multiparty session $\pi$-calculus, which uses an
association relation between a global type and its end-point 
projection. 
\end{remark}


\begin{lemma}\label{thm:number_of_global_subterms_is_linear} $\size{\Sub(\G)} = \bigO{|\G|}$.
\end{lemma}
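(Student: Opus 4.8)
The plan is to prove the stronger bound $|\Sub(\G)| \leq |\G|$, from which $|\Sub(\G)| = \bigO{|\G|}$ is immediate. This mirrors exactly the argument for local types in Lemma~\ref{thm:number_of_local_subterms_is_linear}, so I would proceed by structural induction on $\G$, following the clauses of Definition~\ref{def:subg} for $\Sub(\G)$ and Definition~\ref{def:size_of_a_global_type} for $|\G|$.

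First I would dispatch the base cases $\G = \tend$ and $\G = \ty$: here $\Sub(\G) = \{\G\}$, so $|\Sub(\G)| = 1 = |\G|$. For the message case $\G = \GMsg{\pp}{\q}{S}\G'$ we have $\Sub(\G) = \{\G\} \cup \Sub(\G')$, so $|\Sub(\G)| \leq 1 + |\Sub(\G')| \leq 1 + |\G'| = |\G|$ by the induction hypothesis; the branching case $\G = \GvtPair{\pp}{\q}{l_i: \G_i}{i \in \I}$ is analogous, using subadditivity of cardinality over the union $\bigcup_{i \in \I} \Sub(\G_i)$ to get $|\Sub(\G)| \leq 1 + \sum_{i \in \I} |\Sub(\G_i)| \leq 1 + \sum_{i \in \I} |\G_i| = |\G|$.

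The only clause needing a little care is recursion, $\G = \mu\ty.\G'$, because $\Sub(\mu\ty.\G')$ is defined via the substituted set $\{\G''[\mu\ty.\G'/\ty] \mid \G'' \in \Sub(\G')\}$. The key observation is that this set is the image of $\Sub(\G')$ under the substitution map $\G'' \mapsto \G''[\mu\ty.\G'/\ty]$, and the image of a finite set under a function has cardinality at most that of the domain. Hence its size is at most $|\Sub(\G')|$, giving $|\Sub(\mu\ty.\G')| \leq 1 + |\Sub(\G')| \leq 1 + |\G'| = |\G|$ by the induction hypothesis. I expect this step to be the only mild obstacle, and it is purely bookkeeping rather than a genuine difficulty: substitution can never \emph{increase} the number of distinct subformulas, so the counting goes through just as in the local-type case.

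<br>

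Taken together, the four cases establish $|\Sub(\G)| \leq |\G|$ for all $\G$, and therefore $|\Sub(\G)| = \bigO{|\G|}$, as required.
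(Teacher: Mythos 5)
Your proof is correct and follows essentially the same route as the paper: the paper's proof is literally ``by structural induction on $\G$, similarly to the local-type case,'' and that local-type argument establishes the same strengthened bound $|\Sub(\cdot)| \leq |\cdot|$ case by case, including the same observation for the $\mu$-case that the substituted set is an image of $\Sub(\G')$ and hence no larger. Nothing further is needed.
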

 \begin{proof} By structural induction on $\G$, similarly to Lemma~\ref{thm:number_of_local_subterms_is_linear}.
 \end{proof}

\begin{lemma}\label{thm:plain_proj_is_smaller} Under plain merging, $|\proj{\G}{\pp}| \leq |\G|$.
\end{lemma}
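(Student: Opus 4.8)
The plan is to prove this by structural induction on the global type $\G$, reading the statement as the implication: whenever $\proj{\G}{\pp}$ is defined (recall projection is a partial function), we have $\size{\proj{\G}{\pp}} \leq \size{\G}$. When $\proj{\G}{\pp}$ is undefined there is nothing to prove, so throughout I assume definedness and appeal to the induction hypothesis on the immediate subterms, which are then automatically defined by inspection of Definition~\ref{def:inductive_projection}.

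First I would dispatch the leaf and single-predecessor cases. For $\G = \tend$ and $\G = \ty$ the projection returns $\tend$ and $\ty$ respectively, each of size $1 = \size{\G}$. For $\G = \Gvt{\pq}{\pr}{S}{\G'}$: if $\pp = \pq$ then $\proj{\G}{\pp} = \tout{\pr}{S}{\proj{\G'}{\pp}}$, so $\size{\proj{\G}{\pp}} = \size{\proj{\G'}{\pp}} + 1 \leq \size{\G'} + 1 = \size{\G}$ by the induction hypothesis; the case $\pp = \pr$ is identical with $\tin$; and if $\pp \notin \{\pq,\pr\}$ then $\proj{\G}{\pp} = \proj{\G'}{\pp}$, whose size is $\leq \size{\G'} < \size{\G}$. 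For $\G = \mu\ty.\G'$, the terminating branch gives $\proj{\G}{\pp} = \tend$ of size $1 \leq \size{\G}$, and the recursive branch gives $\mu\ty.(\proj{\G'}{\pp})$ of size $\size{\proj{\G'}{\pp}} + 1 \leq \size{\G'} + 1 = \size{\G}$. All of these are routine, driven purely by the matching of the size definitions for global and local types.

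The main step is the branching case $\G = \GvtPair{\pq}{\pr}{l_i: \G_i}{i \in \I}$. When $\pp = \pq$ we get $\proj{\G}{\pp} = \tselsub{\pr}{l_i: \proj{\G_i}{\pp}}{i \in \I}$, so $\size{\proj{\G}{\pp}} = \sum_{i \in \I}\size{\proj{\G_i}{\pp}} + 1 \leq \sum_{i \in \I}\size{\G_i} + 1 = \size{\G}$ by the induction hypothesis applied to each branch, and the case $\pp = \pr$ is the same with $\tbrasub$. The interesting subcase is $\pp \notin \{\pq,\pr\}$, where $\proj{\G}{\pp} = \bigmergep_{i \in \I}\proj{\G_i}{\pp}$. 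The crucial observation — and the whole reason the bound is tight for plain merging, in contrast with full merging — is that by Definition~\ref{def:inductive_plain_merging} the plain merge is defined only when all the $\proj{\G_i}{\pp}$ coincide, and it then returns that common type. Hence, fixing any $i_0 \in \I$ (note $\I \neq \emptyset$), we have $\size{\proj{\G}{\pp}} = \size{\proj{\G_{i_0}}{\pp}} \leq \size{\G_{i_0}} \leq \sum_{i \in \I}\size{\G_i} < \size{\G}$, using the induction hypothesis on the single branch $\G_{i_0}$.

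I do not expect any serious obstacle: the merge case is where a naive argument could fail (the merge of many branches could a priori be large), but plain merging never enlarges a type, so the size is controlled by a single branch. The only points requiring care are making the quantifiers over definedness precise, and recording that the branching index set is nonempty so that $i_0$ exists. This lemma is exactly the bound invoked as $\size{\proj{\G}{\pp}} \leq \size{\G}$ in the upper-bound half of Theorem~\ref{thm:inductive_plain_merge}.
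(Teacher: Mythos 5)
Your proof is correct and takes the same route as the paper, which simply states the result follows by structural induction on $\G$; your key observation that the plain merge, when defined, returns the common projection of a single branch is exactly what makes the branching case go through. The only cosmetic point is that Definition~\ref{def:inductive_projection} as written has no explicit clause for $\G = \ty$, but this does not affect the argument either way.
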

\begin{proof}
Straightforward; by induction on the structure of $\G$.
\end{proof}

The following theorem completes the proof of Theorem~\ref{thm:inductive_plain_merge}.
\begin{theorem}
  There exists an algorithm for inductive projection with plain merging
  with worst-case time complexity $\bigO{\size{\G}\log\size{\G}}$. 
\end{theorem}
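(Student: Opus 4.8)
The plan is to analyse the obvious recursive algorithm implementing Definition~\ref{def:inductive_projection} with the plain merge $\mergep$, and to show that its only super-constant local cost --- incurred at a branching where the projected participant is a bystander --- can be charged so that the total is $\bigO{\size{\G}\log\size{\G}}$, via a heavy-path argument.

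First I would preprocess: using $\size{\Sub(\G)}=\bigO{\size{\G}}$ (Lemma~\ref{thm:number_of_global_subterms_is_linear}), compute for every subformula $\G'\in\Sub(\G)$ the set $\pt{\G'}$ and the predicate $\ftv{\G'}=\emptyset$ in total time $\bigO{\size{\G}}$ by one bottom-up pass. After this, every side condition of Definition~\ref{def:inductive_projection} (whether $\pp\in\pt{\G'}$, whether a $\mu$-body is closed and $\pp$-free) is an $\bigO{1}$ lookup. I would also record $\size{\proj{\G'}{\pp}}\le\size{\G'}$ (Lemma~\ref{thm:plain_proj_is_smaller}), which bounds the cost of the equality tests below.

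Next, run the projection exactly as in Definition~\ref{def:inductive_projection}. Every case except the bystander branching $\proj{(\GvtPair{\pq}{\pr}{l_i:\G_i}{i\in\I})}{\pp}$ with $\pp\notin\{\pq,\pr\}$ and $|\I|\ge 2$ performs only $\bigO{1}$ work (message prefixes, recursion, the single-branch case) or $\bigO{|\I|}$ work to assemble a choice node in which $\pp$ is active; summed over the syntax tree this baseline is $\bigO{\size{\G}}$. The interesting case computes $\bigmergep_{i\in\I}\proj{\G_i}{\pp}$: since $\mergep$ is defined only on syntactically equal arguments, this is $|\I|-1$ syntactic equality tests, and as all branches must collapse to one type $\T$ the cost is $\bigO{(|\I|-1)\cdot\size{\T}}$. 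Here $\size{\T}\le\size{\proj{\G_i}{\pp}}\le\size{\G_i}$ for every $i$, so $(|\I|-1)\,\size{\T}\le\sum_{i\ne i^*}\size{\G_i}$, where $i^*$ indexes a largest branch (ties broken arbitrarily); that is, the local merge cost at a node $\G'$ is bounded by the total size of its non-largest (``light'') children.

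The crux is the global accounting of these merge costs. Writing $\size{\G_v}$ for the size of the subformula at a node $v$ and calling a child light if it is not the chosen largest child, I would observe that any light child $u$ satisfies $\size{\G_u}<\tfrac12\size{\G_v}$: the largest child has size at least $\size{\G_u}$, and the two together are at most $\sum_i\size{\G_i}<\size{\G_v}$. Hence along any root-to-leaf path the number of light edges is strictly less than $\log_2\size{\G}$. Since the merge cost at $v$ is at most the sum of the sizes of its light children, the total merge cost is at most $\sum_{u\ \mathrm{light}}\size{\G_u}$; rewriting this by counting, for each node $w$, the number of light ancestors $u$ whose subtree contains $w$ shows it equals $\sum_{w}\#\{\text{light ancestors of }w\}\le\size{\G}\cdot\log_2\size{\G}$. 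Combined with the $\bigO{\size{\G}}$ baseline this yields the stated $\bigO{\size{\G}\log\size{\G}}$ bound. I expect this heavy-path charging to be the main obstacle, because the naive reading ``each projected branch has size at most half of $\size{\G}$'' bounds only the output, not the recursion, and a direct induction on $\size{\G}$ fails to absorb the additive linear merge term; it is the light-child telescoping that converts the per-node linear cost into a single logarithmic factor.
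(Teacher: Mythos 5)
Your proposal is correct for the case where the projection is defined, and it is essentially the same argument as the paper's: the paper's inductive bound $2|\G|(\log|\G|+1)$, with the maximum-size branch $x$ singled out and the merge cost $|\I|\cdot|\T|$ absorbed using $|\T|\le n_i$ and $n_i\le n/2$ for $i\ne x$, is precisely your heavy/light charging carried out locally rather than by global summation over light ancestors; the two accountings buy the same bound.

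The one thing you leave unaddressed is the case where $\proj{\G}{\pp}$ is \emph{undefined}, which the algorithm must also detect within the stated time. Your bound on the merge cost at a bystander branching explicitly uses ``all branches must collapse to one type $\T$'' to get $(|\I|-1)\cdot|\T|\le\sum_{i\ne i^*}|\G_i|$, and this premise fails exactly at the node where the plain merge first fails. The fix is routine --- bound each syntactic equality test by the size of its smaller operand whether or not it succeeds, so the single aborting comparison adds at most $\sum_{i}|\proj{\G_i}{\pp}|\le|\G|$ once --- and the paper handles it with a separate claim giving $2|\G|\log|\G|+|\G|$ for the undefined case; you should add the corresponding sentence.
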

\begin{proof}
First, note that by keeping track of the free variables, we may check every syntactic subterm of $\G$ for closedness in $\bigO{n}$ time. We perform this as a first pass. Then, our algorithm will perform the projection in the same recursive sense as in Definition~\ref{def:inductive_projection}, except for the aforementioned checking of closedness. In particular, computing $\bigmergep_{i \in \I} \proj{\G_i}\pp$ will be a syntactic equality check. First, we prove the result for global types where the projection is defined.

\begin{claim}\label{claim:plaindefined} The algorithm computes $\proj\G\pp$ in $2|\G|\log|\G|$ time, if $\proj\G\pp$ is defined.
\end{claim}
\begin{proof}
We prove that the algorithm computes $\proj\G\pp$ in $2|\G|(\log|\G|+1)$ time, if $\proj\G\pp$ is defined.
We use base-$2$ logarithms for convenience: 

  \case $\G = \Gvt{\pq}{\pr}{S} \G'$. We have $|\G| = |\G'| + 1$; by the inductive hypothesis the time taken, including a constant time overhead, is $2|\G'|(\log|\G'|+1) + 1 \leq 2|\G|(\log|\G|+1)$.

  \case $\G = \GvtPair{\pq}{\pr}{l_i: \G_i}{i \in \I}$, $\pp \notin \{\pq, \pr\}$. If $\I = \{i\}$ then by the same logic as above, we compute $\proj\G\pp$ in time $2|\G'|(\log|\G'|+1) + 1 \leq 2|\G|(\log|\G|+1)$.
  
  Otherwise, $|\I| \geq 2$. Let $n_i = |\G_i|$ and $\T_i = \proj{\G_i}\pp$. Then the projections of $\G_i$ take time $2n_i \log n_i$; the merging is a syntactic equality check which takes time $\sum_{i \in \I} |\T_i|$. We are guaranteed that $\T_i = \T$ because of the plain merge.

  Let $x \in \I$ be such that $n_x$ is maximum. Thus, for $i \neq x$, $n_i \leq \frac{n}{2}$ so $\log n_i + 1 \leq \log n$. 
  The time taken for recursive projection is $\sum_{i \in \I}\left(2 n_i (\log n_i + 1)\right)$ and merging is $\sum_{i \in \I}|\T_i|$. Hence we have:
  \begin{align*}
    &\:\underbrace{\sum_{i \in \I}\left(2 n_i (\log n_i+1)\right)}_{\small\text{recursive projection}} 
    + \underbrace{\sum_{i \in \I}|\T_i|}_{\small\text{merging}}\\
    =&\: \sum_{i \in \I}\left(2 n_i (\log n_i+1)\right) + |\I| \cdot |\T|\\
    \leq&\: 2n_x(\log{n_x}+1) + \sum_{i \in \I \minus \{x\}}\left(2 n_i (\log n_i+1) + 2|\T|\right) && \text{(because $|\I| \geq 2$)}\\
    \leq&\: 2n_x(\log{n_x}+1) + \sum_{i \in \I \minus \{x\}}\left(2 n_i (\log n_i+1) + 2 n_i\right) && \text{(by Lemma~\ref{thm:plain_proj_is_smaller})}\\
    \leq&\: 2n_x(\log n+1) + \sum_{i \in \I \minus \{x\}}\left(2 n_i (\log n+1)\right)\\
    =&\: 2(\log n+1) \sum_{i \in \I} n_i\\
    \leq&\: 2 n (\log n+1)
  \end{align*}

  \case $\G = \GvtPair{\pq}{\pr}{l_i: \G_i}{i \in \I}$, $\pp \in \{\pq, \pr\}$. Then the time taken for the recursive projections (using the same notation as the above case) is $\sum_{i \in \I}\left(2 n_i (\log n_i+1)\right) \leq 2 n (\log n+1)$.

  \case $\G = \mu \ty. \G'$. Let $\T' = \proj{\G'}\pp$. We have $|\T| \leq |\T'| + 1$. Checking for closedness was precomputed, thus the total time is $2|\T'|(\log|\T'|+1) + 1 \leq 2|\T|(\log|\T|+1) \leq 2 |\G| (\log |\G| + 1)$.

  \case $\G=\tend$ and $\G=\ty$. Both are projectable in constant time, and so it is bounded by $\G (\log |\G| + 1) = 1$.
\end{proof}

\begin{claim}
\label{claim:plainundefined}
The algorithm computes 
$\proj\G\pp$ in 
$2|\G|\log|\G| + |\G|$, if $\proj\G\pp$ is undefined.
\end{claim}
\begin{proof}
Most reasoning is identical to the above proof, except for the case where an unprojectability error is found, i.e. $\G = \GvtPair{\pq}{\pr}{l_i: \G_i}{i \in \I}$, $\pp \notin \{\pq, \pr\}$, $|\I| \geq 2$, $\proj{\G_i}\pp$ is defined for all $i$, and $\bigmergep_{i \in \I}\proj{\G_i}\pp$ is undefined.
In this case we cannot rely on the fact that $|\T_j| < n_i$ for all $i$, $j$, which was necessary for the above proof.
Instead we prove the following bound: $\sum_{i \in \I}\left(2 n_i \log n_i\right) + \sum_{i \in \I}|\T_i| \leq 2\log n \cdot \sum_{i \in \I}n_i + \sum_{i \in \I}|\proj{\G_i}\pp| \leq 2n \log n + \sum_{i \in \I}|\proj{\G_i}\pp| \leq 2n \log n + n$.
\end{proof}
This concludes the proofs. 
\end{proof}



\subsection{Appendix for Complexity of Inductive Projection with Full Merging}
\label{app:fullprojection}

\begin{restatable}{lemma}{sizefullmerge}
 \label{thm:merge_is_smaller} 
If $\T_1 \mergef \T_2 = \T$, then $|\T| < |\T_1| + |\T_2|$.
\end{restatable}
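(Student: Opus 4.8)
The plan is to prove the strict inequality by structural induction on $\T_1$ (equivalently, on the derivation of $\T_1 \mergef \T_2 = \T$ via Definition~\ref{def:inductive_full_merging}), exploiting the fact that in every clause of the full merge the two top-level constructor ``heads'' of $\T_1$ and $\T_2$ are collapsed into a single head in $\T$, which alone already saves one unit of size and yields the strict bound. First I would dispose of the two base cases: $\ty \mergef \ty = \ty$ and $\tend \mergef \tend = \tend$, where $|\T| = 1 < 2 = |\T_1| + |\T_2|$.

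For the homomorphic recursive cases I would apply the induction hypothesis directly. For $\tdag\pp{S}{\T_1'} \mergef \tdag\pp{S}{\T_2'} = \tdag\pp{S}(\T_1' \mergef \T_2')$ with $\dagger \in \{!,?\}$, the inductive hypothesis gives $|\T_1' \mergef \T_2'| < |\T_1'| + |\T_2'|$, so $|\T| = |\T_1' \mergef \T_2'| + 1 < |\T_1'| + |\T_2'| + 1 < |\T_1| + |\T_2|$, using that the two heads contribute $+2$ to the inputs but only $+1$ to the output. The selection case $\tselsub\pp{l_i: \T_i}{i \in \I} \mergef \tselsub\pp{l_i: \T_i'}{i \in \I}$ is analogous, summing the branchwise hypotheses: $|\T| = \sum_{i \in \I}|\T_i \mergef \T_i'| + 1 < \sum_{i \in \I}(|\T_i| + |\T_i'|) + 1 < |\T_1| + |\T_2|$, since $\I \neq \emptyset$ and again the two heads collapse.

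The main obstacle is the branching case, where the index sets $\I$ and $\J$ may differ, so I would split the analysis according to whether $\I \cap \J$ is empty. Writing $|\T| = \sum_{k \in \I \cap \J}|\T_k \mergef \T_k'| + \sum_{i \in \I \minus \J}|\T_i| + \sum_{j \in \J \minus \I}|\T_j'| + 1$, I would use the disjoint decompositions $\I = (\I \cap \J) \sqcup (\I \minus \J)$ and $\J = (\I \cap \J) \sqcup (\J \minus \I)$ to regroup. When $\I \cap \J \neq \emptyset$, the induction hypothesis applied to each common branch, together with $|\I \cap \J| \geq 1$, gives $|\T| \leq \sum_{i \in \I}|\T_i| + \sum_{j \in \J}|\T_j'| - |\I \cap \J| + 1 \leq |\T_1| + |\T_2| - 2$. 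When $\I \cap \J = \emptyset$, no recursive merge occurs and the branches are merely copied, so $|\T| = \sum_{i \in \I}|\T_i| + \sum_{j \in \J}|\T_j'| + 1 = |\T_1| + |\T_2| - 1$; the strictness survives purely from the head collapse even though the induction hypothesis is not invoked. In both subcases $|\T| < |\T_1| + |\T_2|$, which completes the induction. The only delicate bookkeeping is ensuring the regrouping of sums over $\I$, $\J$, $\I \cap \J$, $\I \minus \J$, and $\J \minus \I$ is done correctly and that the empty-intersection subcase is not overlooked, since there the bound comes from the constructor heads rather than from the inductive hypothesis.
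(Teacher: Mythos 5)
Your proof is correct and follows essentially the same route as the paper's: structural induction on $\T_1$, with the strict inequality in each case coming from the collapse of the two top-level constructors into one. Your explicit split of the branching case on whether $\I \cap \J$ is empty is slightly more careful than the paper's one-line chain of inequalities (which silently treats the empty-intersection subcase), but the argument is the same.
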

\begin{proof}
  
By structural induction on $\T_1$.

  \case $\T_1 = \ty$ or $\T_1 = \tend$. Then $\T_2 = \T$, so $|\T| = 1 < 2 = |\T_1| + |\T_2|$.

  \case $\T_1 = \tin\pp{S}{\T_1'}$. Then $\T_2 = \tin\pp{S}{\T_2'}$, so $|\T| = 1 + |\T_1' \mergef \T_2'| < 1 + |\T_1'| + |\T_2'| < |\T_1| + |\T_2|$.

  \case $\T_1 = \tout\pp{S}{\T_1'}$. Similar to the above.
  
  \case $\T_1 = \tselsub\pp{l_i: \T_1^{(i)}}{i \in \I}$. Then $\T_2 = \tselsub\pp{l_i: \T_2^{(i)}}{i \in \I}$, so $|\T| = 1 + \sum_{i \in \I}|\T_1^{(i)} \mergef \T_2^{(i)}| < 1 + \sum_{i \in \I}(|\T_1^{(i)}| + |\T_2^{(i)}|) < 1 + |\T_1| + |\T_2|$.

  \case $\T_1 = \tbrasub\pp{l_i: \T_1^{(i)}}{i \in \I}$. Then $\T_2 = \tselsub\pp{l_i: \T_2^{(j)}}{j \in \J}$, so $|\T| = 1 + \sum_{i \in \I \cap \J}\left(\T_1^{(i)} \mergef \T_2^{(i)}\right) + \sum_{i \in \I \minus \J}\left(\T_1^{(i)}\right) + \sum_{i \in \J \minus \I}\left(\T_2^{(i)}\right) < 1 + \sum_{i \in \I}(\T_1^{(i)}) + \sum_{i \in \J}(\T_2^{(i)}) < |\T_1| + |\T_2|$.
\end{proof}

By repeated application of the above lemma, we have:

\begin{corollary}\label{thm:multi_merge_is_smaller} If $\bigmergef_{i \in \I} \T_i = \T$ and $\T$ is defined then $|\T| < \sum_{i \in \I} |\T_i|$.
\end{corollary}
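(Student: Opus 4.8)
The plan is to prove this by induction on the cardinality $|\I|$ of the (finite, non-empty) index set, peeling off one type at a time and invoking the binary bound of Lemma~\ref{thm:merge_is_smaller} at each step. Recall that $\bigmergef_{i \in \I}\T_i$ is by definition the left-nested fold $(\dots(\T_{n_1}\mergef\T_{n_2})\dots)\mergef\T_{n_m}$ for $\I = \{n_1, \dots, n_m\}$. For the base case I take $|\I| = 2$: here $\bigmergef_{i\in\I}\T_i = \T_{n_1}\mergef\T_{n_2} = \T$, and Lemma~\ref{thm:merge_is_smaller} gives $|\T| < |\T_{n_1}| + |\T_{n_2}| = \sum_{i\in\I}|\T_i|$ directly.

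For the inductive step, suppose $|\I| = m \geq 3$ and write $\I' = \{n_1, \dots, n_{m-1}\}$, so that $\bigmergef_{i\in\I}\T_i = (\bigmergef_{i\in\I'}\T_i)\mergef\T_{n_m}$. Setting $\T' = \bigmergef_{i\in\I'}\T_i$, the fact that the full fold $\T$ is defined forces $\T'$ to be defined, since $\T'$ occurs as the left operand of the final binary merge and a merge is defined only when both its operands are. The induction hypothesis then yields $|\T'| < \sum_{i\in\I'}|\T_i|$, and a final application of Lemma~\ref{thm:merge_is_smaller} gives $|\T| = |\T'\mergef\T_{n_m}| < |\T'| + |\T_{n_m}| < \sum_{i\in\I'}|\T_i| + |\T_{n_m}| = \sum_{i\in\I}|\T_i|$, completing the induction.

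The argument is routine, so the only points requiring care are bookkeeping rather than genuine obstacles. First, one must justify that the prefix merge $\T'$ is defined whenever the whole fold is; this is immediate from the left-associative reading of $\bigmergef$ but should be stated explicitly. Second, the strict inequality genuinely needs $|\I| \geq 2$: for a singleton $\I = \{n_1\}$ the fold is just $\T_{n_1}$ and one obtains the equality $|\T| = \sum_{i\in\I}|\T_i|$ rather than strict inequality, so the corollary is to be read for $|\I| \geq 2$, which is the only regime in which it is applied. Finally, since $\mergef$ is associative and commutative (as noted after Definition~\ref{def:inductive_full_merging}), the value of $\bigmergef_{i\in\I}\T_i$ is independent of the chosen enumeration, so splitting off $\T_{n_m}$ in particular loses no generality.
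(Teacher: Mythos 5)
Your induction on $|\I|$, peeling off the last operand of the left-nested fold and applying Lemma~\ref{thm:merge_is_smaller} at each step, is precisely what the paper means by ``repeated application of the above lemma,'' so the approach is the same. Your side remark that the strict inequality degenerates to equality when $|\I|=1$ is a fair caveat about the statement's phrasing, but does not affect the argument where the corollary is actually used.
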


\begin{lemma}
\label{thm:proj_is_smaller} Under full merging, $|\proj{\G}{\pp}| \leq |\G|$.
\end{lemma}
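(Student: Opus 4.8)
The plan is to proceed by structural induction on the global type $\G$, following exactly the recursive structure of the inductive projection in Definition~\ref{def:inductive_projection}, in the same spirit as the plain-merge bound (Lemma~\ref{thm:plain_proj_is_smaller}). Since projection is a partial function, I only treat the cases in which $\proj{\G}{\pp}$ is defined, and the inductive hypothesis asserts $|\proj{\G'}{\pp}| \leq |\G'|$ for every strict subterm $\G'$ on which projection is defined. The base cases $\G = \tend$ and $\G = \ty$ are immediate, as projection acts as the identity and both sides have size $1$.

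For a message $\GMsg{\pq}{\pr}{S}\G'$, each of the three clauses either prepends a single $\tout$ or $\tin$ prefix onto $\proj{\G'}{\pp}$ — raising the projected size by exactly one, matching the increase of $|\G|$ over $|\G'|$ — or, when $\pp \notin \{\pq, \pr\}$, discards the communication, giving $|\proj{\G}{\pp}| = |\proj{\G'}{\pp}| \leq |\G'| < |\G|$. The recursion case $\mu\ty.\G'$ is analogous: when $\pp \notin \pt{\G'}$ and $\G$ is closed the projection collapses to $\tend$ of size $1 \leq |\G|$, and otherwise it is $\mu\ty.(\proj{\G'}{\pp})$ of size $|\proj{\G'}{\pp}| + 1 \leq |\G'| + 1 = |\G|$ by the inductive hypothesis.

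The only genuinely interesting case is a branching $\GvtPair{\pq}{\pr}{l_i: \G_i}{i \in \I}$ with $\pp \notin \{\pq, \pr\}$, where projection yields the full merge $\bigmergef_{i \in \I} \proj{\G_i}{\pp}$. Here I would invoke Corollary~\ref{thm:multi_merge_is_smaller} (itself a consequence of Lemma~\ref{thm:merge_is_smaller}), which bounds the size of a multi-way full merge strictly below the sum of its arguments' sizes. Combined with the inductive hypothesis this gives $|\bigmergef_{i \in \I} \proj{\G_i}{\pp}| < \sum_{i \in \I} |\proj{\G_i}{\pp}| \leq \sum_{i \in \I}|\G_i| < |\G|$. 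The subcases $\pp = \pq$ and $\pp = \pr$ instead add a single selection/branching prefix over $\sum_{i \in \I}|\proj{\G_i}{\pp}|$, again matching the $+1$ in $|\G|$.

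The main (and only mild) obstacle is localised to this merge case, and it is twofold: everything hinges on already having the strict size bound for full merging (Corollary~\ref{thm:multi_merge_is_smaller}), and one must handle the degenerate singleton index set $|\I| = 1$ separately, since the strict bound of the corollary fails there. In that subcase I would argue directly that $\proj{\G}{\pp} = \proj{\G_1}{\pp}$ has size $\leq |\G_1| < |\G|$, which is still consistent with the non-strict conclusion $\leq |\G|$ required by the statement.
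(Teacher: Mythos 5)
Your proof is correct and follows essentially the same route as the paper's: structural induction on $\G$, with all cases routine except the merge case $\pp \notin \{\pq,\pr\}$, which is discharged via Corollary~\ref{thm:multi_merge_is_smaller} and the inductive hypothesis exactly as the paper does. Your extra care about the singleton index set (where the strict bound of the corollary degenerates) is a reasonable refinement the paper glosses over, but it does not change the argument.
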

\begin{proof}
  By structural induction on $\G$.

  \case $\G = \left(\GvtPair{\pq}{\pr}{l_i: \G_i}{i \in \I}\right)$ where $\pp \notin \{\pq, \pr\}$.
  Then, using Corollary \ref{thm:multi_merge_is_smaller} and the inductive hypothesis,
  $|\proj\G\pp| = 1 + \bigmergef_{i \in \I} \proj{\G_i}{\pp} < 1 + \sum_{i \in \I} |\proj{\G_i}{\pp}|
  < 1 + \sum_{i \in \I}|\G_i| = |\G|$.
  
\noindent All other cases are trivial.
\end{proof}

\complexityfullmerge*

\begin{proof}
To allow for the complexity logarithmic in $|\T_1|$, in our algorithm (and other algorithms that use this lemma), we will represent local types as follows:
    \begin{itemize}
      \item $\ty$ and $\tend$ are primitive.
      \item $\tin\pp{\S}{\T}$ and $\tout\pp{\S}{\T}$ contain a pointer to $\T$.
      \item $\tselsub\pp{l_i: \T_i}{i \in \I}$ is represented as a (sorted) list of $\langle l_i, \text{pointer to }\T_i\rangle$ pairs.
      \item $\tbrasub\pp{l_i: \T_i}{i \in \I}$ is represented as a balanced binary search tree (BST) with keys $l_i$ and values as pointers to $\T_i$. This is done by first fixing a total ordering on labels consistently across the algorithm.
    \end{itemize}
    This differs from the na\"ive representation only in the representation of branching: instead of a list of branches, we will use a BST.

    Global types will be represented in the usual way, as syntactic trees.

    By structural induction on $\T_1$. We will show that, up to constant factors, the merge can be computed in time at most $2(|\T_1| + |\T_2| - |\T|) - 1 + (|\T_2| - 1) \cdot \log |\T_1|$. Note that this is positive by Lemma~\ref{thm:merge_is_smaller}.

    \case $\T_1 = \ty$ or $\T_1 = \tend$. Then $\T_2 = \T$. This can be checked in time $1 = 2(|\T_1| + |\T_2| - |\T|) - 1 + (| \T_2| - 1) \cdot \log |\T_1|$.
    
    \case $\T_1 = \tin\pp{S}{\T_1'}$, $\T_2 = \tin\pp{S}{\T_2'}$. Then it suffices to find $\T_1' \mergef \T_2'$. By the inductive hypothesis this takes 
    $2(|\T_1'| + |\T_2'| - |\T_1' \mergef \T_2'|) - 1 + (|\T_2'| - 1) \cdot \log |\T_1'|$ time.
    Adding a constant overhead of $1$, the total time is
    $2(|\T_1'| + |\T_2'| - |\T_1' \mergef \T_2'|) + (|\T_2'| - 1) \cdot \log |\T_1'|
    = 2(|\T_1| + |\T_2| - |\T|) - 2 + (|\T_2'| - 1) \cdot \log |\T_1'| 
    < 2(|\T_1| + |\T_2| - |\T|) - 1 + (|\T_2| - 1) \cdot \log |\T_1|$.

    \case $\T_1 = \tout\pp{S}{\T_1'}$, $\T_2 = \tout\pp{S}{\T_2'}$. Similar to the above.

    \case $\T_1 = \tselsub\pp{l_i: \T_1^{(i)}}{i \in \I}$, $\T_2 = \tselsub\pp{l_i: \T_2^{(i)}}{i \in \I}$. Then it suffices to find $\T_1^{(i)} \mergef \T_2^{(i)}$ for each $i \in \I$. By the inductive hypothesis this takes time:
    \begin{align*}
       &\:\sum_{i \in \I} \left(2(|\T_1^{(i)}| + |\T_2^{(i)}| - |\T_1^{(i)} \mergef \T_2^{(i)}|) - 1 + (|\T_2^{(i)}| - 1) \cdot \log |\T_1^{(i)}|\right)\\
      =&\:2\left(\sum_{i \in \I}|\T_1^{(i)}| + \sum_{i \in \I}|\T_2^{(i)}| - \sum_{i \in \I}|\T_1^{(i)} \mergef \T_2^{(i)}|\right) - |\I| + \sum_{i \in \I}(|\T_2^{(i)}| - 1) \cdot \log |\T_1^{(i)}|\\
      \leq&\:2(|\T_1| + |\T_2| - |\T|) - |\I| - 2 + \sum_{i \in \I}(|\T_2^{(i)}| - 1) \cdot \log |\T_1|\\
      <&\:2(|\T_1| + |\T_2| - |\T|) - |\I| - 1 + (|\T_2| - 1) \cdot \log |\T_1|\\
    \end{align*}
    Adding a linear overhead in the number of branches, $|\I|$, yields the result.

    \case $\T_1 = \tbrasub\pp{l_i: \T_1^{(i)}}{i \in \I}$, $\T_2 = \tbrasub\pp{l_i: \T_2^{(j)}}{j \in \J}$. We do the following:
    \begin{itemize}
      \item Start with the BST for $\T_1$, containing $\{l_i \mapsto \T_1^{(i)} \mid i \in \I\}$. Call this tree $\ts{T}$.
      \item Iterate over the BST for $\T_2$. For each item $l_j \mapsto \T_2^{(j)}$:
      \begin{itemize}
        \item If the key $l_j$ is present in $\ts{T}$, with value $\T_1^{(j)}$, then overwrite the value with $\T_1^{(j)} \mergef \T_2^{(j)}$.
        \item Otherwise, add a new key-value pair $l_j \mapsto \T_2^{(j)}$ to 
$\ts{T}$.
      \end{itemize}
    \end{itemize}
    It follows that the tree $\ts{T}$ at the end of this process is the tree representing the type $\T = \T_1 \mergef \T_2$. The time taken to perform this operation is the sum of the time taken for the merging of branches in $\{l_i \mid i \in \I \cap \J\}$, as well as the overhead of the $|\J|$ tree operations:
{
      \small
      \begin{align*}
        &\:\underbrace{
          \sum_{i \in \I \cap \J}\left(2(|\T_1^{(i)}| + |\T_2^{(i)}| - |\T_1^{(i)} \mergef \T_2^{(i)}|) - 1 + (|\T_2^{(i)}| - 1) \cdot \log|\T_1^{(i)}|\right)
        }_{\text{merging}} + 
        \underbrace{|\J| \log|\T_1|}_{\text{overhead}}\\
        \leq&\:\sum_{i \in \I \cap \J}\left(2(|\T_1^{(i)}| + |\T_2^{(i)}| - |\T_1^{(i)} \mergef \T_2^{(i)}|) - 1 + (|\T_2^{(i)}| - 1) \cdot \log|\T_1|\right) + |\J| \log|\T_1|\\
        =&\:2\left(\sum_{i \in \I \cap \J}|\T_1^{(i)}| + \sum_{i \in \I \cap \J}|\T_2^{(j)}| - \sum_{i \in \I \cap \J}|\T_1^{(i)} \mergef \T_2^{(i)}|\right)\\
& \qquad - |\I \cap \J|
        + \log|\T_1| \cdot \left(\sum_{i \in \I \cap \J} |\T_2^{(i)}| - |\I \cap \J| + |\J|\right)\\
      \end{align*}
    }
    We have
    {
      \small
      \begin{align*}
        &\:\sum_{i \in \I \cap \J}|\T_1^{(i)}| + \sum_{i \in \I \cap \J}|\T_2^{(j)}| - \sum_{i \in \I \cap \J}|\T_1^{(i)} \mergef \T_2^{(i)}|\\
        =&\:\sum_{i \in \I}|\T_1^{(i)}| + \sum_{i \in \J}|\T_2^{(i)}| - \left(\sum_{i \in \I \cap \J}|\T_1^{(i)} \mergef \T_2^{(i)}| + \sum_{i \in \I \minus \J}|\T_1^{(i)}| + \sum_{i \in \J \minus \I}|\T_2^{(i)}|\right)\\
        =&\:|\T_1| + |\T_2| - |\T| - 1
      \end{align*}
    }
    and
    {
      \small
      \begin{align*}
        &\:\sum_{i \in \I \cap \J} |\T_2^{(i)}| - |\I \cap \J| + |\J|\\
        =&\:\sum_{i \in \I \cap \J} |\T_2^{(i)}| + |\J \minus \I|\\
        \leq&\:\sum_{i \in \J} |\T_2^{(i)}| && \text{because } |T_2^{(i)}| \geq 1\\
        \leq&\:|\T_2| - 1
      \end{align*}
    }
    Thus the total time is less than $2(|\T_1| + |\T_2| - |\T|) - 2 - |\I \cap \J|
    + (|\T_2| - 1) \cdot \log|\T_1|$, as desired.
  \end{proof}

\begin{corollary}
  \label{cor:fullmerge}
  If $\bigmergef_{i \in \I} \T_i = \T$ and $\T$ is defined then $\T$ can be computed in time\\
  $\bigO{\sum_{i \in \I} {|\T_i|} - |\T| + \log \left(\sum_{i \in \I} |\T_i|\right) \cdot \sum_{i \in \I \minus \{j\}} |\T_i|}$, for any $j \in \I$.
\end{corollary}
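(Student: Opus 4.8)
The plan is to reduce the iterated merge $\bigmergef_{i \in \I}\T_i$ to a sequence of binary merges and apply Lemma~\ref{thm:complexity_of_full_merge} to each one, accumulating the cost. Since $\mergef$ is associative and commutative, the value $\bigmergef_{i\in\I}\T_i = \T$ does not depend on the merge order, so I would fix the index $j$ and compute $\T$ by maintaining an accumulator that starts at $A_0 = \T_j$ and folds in the remaining types one at a time, $A_k = A_{k-1} \mergef \T_{i_k}$, where $\I \minus \{j\} = \{i_1, \dots, i_{m-1}\}$ with $m = |\I|$ and $A_{m-1} = \T$. Each intermediate merge must be shown to be defined, which is the one delicate point addressed below.

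First I would apply Lemma~\ref{thm:complexity_of_full_merge} with the accumulator always in the first (``large'') slot: the $k$-th step costs $\bigO{|A_{k-1}| + |\T_{i_k}| - |A_k| + |\T_{i_k}|\cdot\log|A_{k-1}|}$. Keeping the accumulator first matters, because only the second argument carries the logarithmic factor, so each $\T_i$ with $i \neq j$ pays the $\log$ exactly once.

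Then I would sum over $k$. The non-logarithmic part telescopes: $\sum_k(|A_{k-1}| - |A_k|) = |\T_j| - |\T|$ and $\sum_k |\T_{i_k}| = \sum_{i \in \I \minus \{j\}}|\T_i|$, so together these give $\sum_{i\in\I}|\T_i| - |\T|$, matching the first two terms of the claimed bound; here Lemma~\ref{thm:merge_is_smaller} guarantees each summand $|A_{k-1}| + |\T_{i_k}| - |A_k|$ is nonnegative, so no hidden cancellation masks work. For the logarithmic part, every accumulator $A_{k-1}$ is itself a merge of a subset of the $\T_i$, so $|A_{k-1}| \leq \sum_{i\in\I}|\T_i|$ and hence $\log|A_{k-1}| = \bigO{\log(\sum_{i\in\I}|\T_i|)}$; factoring this out of the sum leaves $\log(\sum_{i\in\I}|\T_i|)\cdot\sum_{i\in\I\minus\{j\}}|\T_i|$, the remaining term. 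The resulting bound is independent of the order chosen for $i_1, \dots, i_{m-1}$ and of which $j$ is fixed, which matches the ``for any $j$'' in the statement.

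The main obstacle is not the arithmetic — which is the routine telescoping above — but the definedness bookkeeping: justifying that folding into $\T_j$ in an arbitrary order keeps every intermediate merge defined given only that the full multi-way merge $\bigmergef_{i\in\I}\T_i$ is defined. I would handle this by leaning on the stated associativity and commutativity of $\mergef$, reassociating the expression so that each prefix $A_{k-1}$ of the fold equals a sub-merge of $\bigmergef_{i\in\I}\T_i$, and is therefore defined as a component of a defined expression.
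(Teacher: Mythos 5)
Your proposal matches the paper's proof: the paper computes $\T$ by the same left fold starting from $\T_j$, applies Lemma~\ref{thm:complexity_of_full_merge} with the accumulator in the first slot, telescopes the non-logarithmic terms, and bounds each $\log$ of an intermediate result by $\log\bigl(\sum_{i \in \I}|\T_i|\bigr)$. Your extra remark on definedness of the intermediate merges (via associativity and commutativity of $\mergef$) is a point the paper leaves implicit, but it does not change the argument.
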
  
\begin{proof} Our algorithm proceeds as follows:
  \begin{algorithmic}
  \State $\T \gets \T_j$
  \For{$i \in \I \minus \{j\}$}
    \State $\T \gets \T \mergef \T_i$
  \EndFor
  \State \Return $\T$
\end{algorithmic}
Taking $\I = \{j, a_1, \dots, a_k\}$, label the $m$-th intermediate result $\T'_m = \bigmergef_{j \in \{x, a_1, \dots, a_m\}} \T_i$.
    By Lemma~\ref{thm:proj_is_smaller}, $|\T'_m| \leq \left(\sum_{i \in \I} |\T_i|\right)$.
    Then by the above lemma, calculating $\T'_{m-1} \mergef \T_m$ takes time proportional to\\[1mm]
\centerline{
$|\T'_{m-1}| + |\T_m| - |\T'_m| + |\T_m| \log |\T'_{m-1}|
      \leq\:|\T'_{m-1}| + |\T_m| - |\T'_m| + \log \left(\sum_{i \in \I} |\T_i|\right) \cdot |\T_m|$
}\\[1mm]
    Using the facts that $\T_0' = \T_j$ and $\T_k' = \T$, summing the above for $1 \leq m \leq k$ yields the desired result.
  \end{proof}

The following theorem completes the proof of Theorem~\ref{thm:fullmergeopt}.

\begin{theorem}
  There exists an algorithm for inductive projection with full merging
  with worst-case time complexity
  $\bigO{\size{\G}\log^2\size{\G}}$. 
\end{theorem}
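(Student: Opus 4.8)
\subsection*{Proof proposal for the upper bound}

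The plan is to realise the recursive projection of Definition~\ref{def:inductive_projection} using the data-structure conventions of Lemma~\ref{thm:complexity_of_full_merge} (branching types stored as balanced binary search trees keyed on labels, everything else by pointers) and to bound its running time by structural induction on $\G$. As in the plain-merging case (Theorem~\ref{thm:inductive_plain_merge}), I would first make a single linear pass recording, for every syntactic subterm, whether it is closed and which participants it involves, so that the side conditions $\ftv{\cdot}=\emptyset$ and $\pp\notin\pt{\cdot}$ for the $\mu$-case never cost more than $\bigO{n}$ in total, with $n=\size{\G}$. The inductive invariant I would carry is that, for a suitable constant $A$, the algorithm computes $\T=\proj{\G}{\pp}$ in time at most $A\bigl(\size{\G}\log^2\size{\G} + 2\size{\G} - \size{\T}\bigr)$; the correction term $-\size{\T}$ is exactly what makes the linear per-node merge savings telescope. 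The cases where $\pp$ is a sender/receiver, the message case, recursion, $\tend$ and the type variable are routine: each adds only $\bigO{1}$ (or $\bigO{\size{\I}}$ to assemble a choice) on top of the recursive calls, and the invariant is re-established from $\size{\G}=\size{\G'}+1$ (resp.\ $\size{\T}=1+\sum_i\size{\T_i}$) together with Lemma~\ref{thm:proj_is_smaller} ($\size{\proj{\G}{\pp}}\le\size{\G}$), the $-2A\size{\G}$ budget absorbing the constant overheads once $A$ is chosen large enough.

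The only substantial case is $\G=\GvtPair{\pq}{\pr}{l_i:\G_i}{i\in\I}$ with $\pp\notin\{\pq,\pr\}$ and $\size{\I}\ge 2$, where $\T=\bigmergef_{i\in\I}\T_i$. Writing $n_i=\size{\G_i}$, $\T_i=\proj{\G_i}{\pp}$ and $N=\sum_{i\in\I}n_i=\size{\G}-1$, I would compute all $\T_i$ recursively and then invoke Corollary~\ref{cor:fullmerge}, choosing the fold base $j$ to be an index maximising $\size{\G_i}$. Using $\size{\T_i}\le n_i$ (Lemma~\ref{thm:proj_is_smaller}) and $\log(\sum_i\size{\T_i})\le\log N$, the merge then costs at most
\[
  C\Bigl(\textstyle\sum_{i\in\I}\size{\T_i}-\size{\T} \;+\; \log N\cdot\!\!\sum_{i\in\I\minus\{j\}}\!\!n_i\Bigr)
\]
for the constant $C$ of Corollary~\ref{cor:fullmerge}; I would then take $A\ge C$.

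The hard part is the small-to-large bookkeeping that turns the factor $\log N$ into the $\log^2$ bound without losing the telescoping of the linear terms. The key observation is that, since $j$ maximises $n_i$, every other branch satisfies $2n_i\le n_i+n_j\le N$, hence $n_i\le N/2$ and $\log n_i\le\log N-1$ for all $i\ne j$. Feeding this into the recursive estimates, the leading inequality I must verify is
\[
  \textstyle\sum_{i\in\I}n_i\log^2 n_i \;+\; \log N\cdot\!\!\sum_{i\in\I\minus\{j\}}\!\!n_i \;\le\; N\log^2 N ,
\]
which follows by bounding $n_i\log^2 n_i\le n_i(\log N-1)^2=n_i\log^2 N-2n_i\log N+n_i$ for $i\ne j$ and $n_j\log^2 n_j\le n_j\log^2 N$, so the left side is at most $N\log^2 N-(\log N-1)\sum_{i\ne j}n_i\le N\log^2 N$ whenever $N\ge 2$. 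For the linear part, combining the recursive surplus $-A\sum_i\size{\T_i}$ with the merge term $C(\sum_i\size{\T_i}-\size{\T})$ leaves a discrepancy of $(C-A)\bigl(\sum_i\size{\T_i}-\size{\T}\bigr)-2A$ against the target $-A\size{\T}$; this is non-positive because $A\ge C$ and $\sum_i\size{\T_i}-\size{\T}\ge 0$ (merge shrinkage, Corollary~\ref{thm:multi_merge_is_smaller}). Since $N<\size{\G}$, the invariant is re-established. A failed merge (the undefined case) is detected during the BST insertions at no greater cost, so the same bound governs the worst case; as $2\size{\G}-\size{\T}=\bigO{\size{\G}}$ by Lemma~\ref{thm:proj_is_smaller}, the invariant yields total time $\bigO{\size{\G}\log^2\size{\G}}$. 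The matching lower bound is established in the proof of Theorem~\ref{thm:inductive_full_merge}.
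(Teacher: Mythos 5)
Your proposal is correct and follows essentially the same route as the paper's proof: a linear preprocessing pass for closedness, the BST representation with the merge cost of Lemma~\ref{thm:complexity_of_full_merge}/Corollary~\ref{cor:fullmerge}, the inductive invariant $\size{\G}\log^2\size{\G}+2\size{\G}-\size{\T}$ whose $-\size{\T}$ term makes the linear merge costs telescope, and the small-to-large step that exploits $n_i\le N/2$ (hence $\log n_i\le\log N-1$) for every non-maximal branch. The only difference is cosmetic: you verify the key inequality by expanding $(\log N-1)^2$, whereas the paper factors it through $n_i(\log n_i+1)\le n_i\log n$; both are sound.
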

\begin{proof}
  Similarly to the plain merging algorithm, we start by checking every syntactic subterm for closedness in linear time.
  Then, we show inductively that (up to constant factors) we can compute $\proj\G\pp$ in time $|\G| \log^2 |\G| + 2|\G| - |\T|$, where $\T = \proj\G\pp$.

  For convenience, define $|\T| = 0$ if $\T$ is undefined. We use base-$2$ logarithms for ease of explanations.

  \case $\G = \Gvt{\pq}{\pr}{S} \G'$. We have $|\G'| + 1 = |\G|$. Define $\T' = \proj{\G'}\pp$; we have $|\T| \leq |\T'| + 1$. Then the time taken is $1 + |\G'| \log^2 |\G'| + 2|\G'| - |\T'| \leq |\G'| \log^2 |\G'| + 2|\G| - |\T| \leq |\G| \log^2 |\G| + 2|\G| - |\T|$.

  \case $\G = \GvtPair{\pq}{\pr}{l_i: \G_i}{i \in \I}$, $\pp \notin \{\pq, \pr\}$. Denote $n_i = |\G_i|$. Let $m \in \I$ be such that $|G_m|$ is maximised. Let $\T_i = \proj{\G_i}{\pp}$. Then the time taken is
  \begin{align*}
    &\underbrace{\sum_{i \in \I}\left(n_i \log^2 n_i + 2n_i - |\T_i|\right)}_{\small\text{recursive projection}} 
    + \underbrace{\sum_{i \in \I}|\T_i| - |\T| + \log \left(\sum_{i \in \I} |\T_i|\right) \cdot \log |\T_m| \sum_{i \in \I \minus \{m\}} |\T_i|}_{\small\text{merging}}\\
    =&\: \sum_{i \in \I} \left(n_i \log^2 n_i\right) + \log \left(\sum_{i \in \I} |\T_i|\right) \cdot \left(\sum_{i \in \I \minus \{m\}} |\T_i|\right) + 2(|\G| - 1) - |\T|\\
    \leq&\: \log n \cdot \left(\sum_{i \in \I \minus \{m\}} \left(n_i \log n_i + n_i\right) + n_m \log n\right) + 2(|\G| - 1) - |\T|\\
    \leq&\: \log n \cdot \left(\sum_{i \in \I \minus \{m\}} \left(n_i \log n\right) + n_m \log n\right) + 2(|\G| - 1) - |\T|\\
    =&\: \log n \cdot \sum_{i \in \I} \left(n_i \log n\right) + 2(|\G| - 1) - |\T|\\
    =&\: n \log^2 n + 2(|\G| - 1) - |\T|
  \end{align*}
  using the fact that $\frac{n_i}n \leq \frac12$ for $i \neq m$ (therefore $\log n - \log n_i \geq 1$).
  
  \case $\G = \GvtPair{\pq}{\pr}{l_i: \G_i}{i \in \I}$, $\pp \in \{\pq, \pr\}$. Similar to the above case, with no merging. Thus this case can be performed in less time.
  
  \case $\G = \mu \ty. \G'$. Let $\T' = \proj{\G'}\pp$. We have $|\T| \leq |\T'| + 1$. Checking for closedness was precomputed, thus this is a constant-time check and a recursion: $1 + |\G'| \log^2 |\G'| + 2|\G'| - |\T'| \leq |\G| \log^2 |\G| + 2|\G| - |\T|$.

  \case $\G = \ty, \tend$. This takes constant time; $|\G| \log^2 |\G| + 2|\G| - |\T| = 1$ is positive.
\end{proof}




\subsection{Appendix for Complexity of Tirore \etal's Binder-Agnostic Projection \cite{Tirore2023}}
\label{app:coinductiveprojection}

\propifcp*

\begin{proof}
Assume by $\G$ is inductively 
projectable by the full merging but not by plain merging. 
We write $\G'\in \G$ if $\G'$ is syntactically subterm of $\G$.
Then there exits 
$\GvtPair{\pq}{\pr}{l_i: \G_i}{i \in \I} \in \G$ such that 
$\proj{G_k}{\pp} \not\equiv \proj{G_h}{\pp}$ with $k,h\in I$. 
By definition of plain merging (Definition~\ref{def:coinductive_plain_merge}), 
if it is coinductively projectable with the plain merging, $\mergef$ 
for the branching types should be restricted as:
\[
\tbrasub\pp{l_i: \T_i}{i \in \I} \mergef \tbrasub\pp{l_i: \T_i'}{i \in \I}
= \tbraset\pp\{l_k: (\T_k \mergef \T_k')\}_\{k \in \I\}
\]
with the rest of the rules of $\mergef$ in Definition~\ref{def:inductive_full_merging}, which yields $\mergep=\mergef$. 
\end{proof}

We prove Theorem~\ref{thm:coinductiveprojection}, 
recalling all definitions needed from \cite{Tirore2023}. 

Adapted to our syntax, the function $\ptrans$ is defined identically to inductive projection (Definition~\ref{def:inductive_projection}) with the following merging function: $\T_1 \mergep \T_2 = \T_1$. Clearly, $\ptrans$ is not a sound projection function on its own, so we must rely on the checking of $\G \cproj\pp \ptrans(\G)$.


\begin{definition} The graph of $(\G, \T)$, with respect to $\pp$, has nodes $\Sub(\G) \times \Sub(\T)$, with edges defined by the following rules:

  {
    \centering
    \small
    \begin{spacing}{4}
    \newcommand{\treesep}{.5em}
    \begin{prooftree}
      \hypo{\unfold{\G} = \GMsg \pq \pr \S \G'}
      \hypo{\unfold{\T} = \tin \pp \S \T' \vee \unfold{\T} = \tout \pp \S \T'}
      \hypo{\pp \in \{\pq, \pr\}}
      \infer3{(\G, \T) \rightarrow (\G', \T')}
    \end{prooftree}

    \begin{prooftree}
      \stackedhypos1{
        \hypo{\unfold{\T} = \tselsub \pp {l_i: \T_i}{i \in \J} \vee \unfold{\T} = \tbrasub \pp {l_i: \T_i}{i \in \J}}
      }3{
        \hypo{\unfold{\G} = \GvtPair \pq \pr {l_i: \G_i}{i \in \I}}
        \hypo{\pp \in \{\pq, \pr\}}
        \hypo{k \in \I \cap \J}
      }
      \infer1{(\G, \T) \rightarrow (\G_k, \T_k)}
    \end{prooftree}
    
    \begin{prooftree}
      \hypo{\G \trans{} \G'}
      \hypo{\unfold{\G} = \GMsg \pq \pr \S \G' \vee \unfold{\G} = \GvtPair \pq \pr {l_i : \G_i}{i \in \I}}
      \hypo{\pp \notin \{\pq, \pr\}}
      \infer3{(\G, \T) \rightarrow (\G', \T)}
    \end{prooftree}
    \end{spacing}
  }
\end{definition}

\begin{definition} For a graph $G = (V, E)$ and predicate $P$, $\sat_P(\{\}, v)$ is a predicate that holds if $P$ holds for all $w$ such that $w$ is reachable from $v$:

  \begin{equation*}
    \sat_P(V, v) = \begin{cases}
      1 & \text{if } v \in V\\
      P(v) \wedge \bigwedge_{v \rightarrow u \in E} \sat_P(V, u) & \text{otherwise}
    \end{cases}
  \end{equation*}
\end{definition}

\begin{definition} The predicate $\ProjPredp(\G, \T)$ is defined as follows:\\[2mm]
$\ProjPredp(\G, \T) =$ \\
  \begin{equation*}
\begin{cases}
      \kf{PL}_\pp(\kf{LG}(\G)) = \kf{LT}(\T) \wedge d_g(\G) = d_l(\T) & \text{if }\kf{PL}_\pp(\kf{LG}(\G))\text{ is defined}\\
      0 < d_g(\G) &\text{if }\kf{partOf}_\pp(\G)\text{ and }\kf{guarded}_\pp(\G)\\
      \sat_{\kf{UnravelPred}}(\{\}, \G) \wedge \neg\kf{partOf}_\pp(\G) \wedge \unfold{\T} = \tend & \text{otherwise}
    \end{cases}
  \end{equation*}
\end{definition}

All terms on the right-hand side of $\ProjPredp$, except for $\sat_{\kf{UnravelPred}}(\{\}, \G)$, are simple constant-time checks (possibly with some linear-time precomputation), which we omit. Also, $\kf{UnravelPred}$ itself is a constant-time check. The following theorem is proven in \cite{Tirore2023}:

\begin{theorem}\label{thm:projpred_is_proj} $\proj{\G}{\pp}={\T}$ if $\sat_{\ProjPredp}(\{\}, (\G, \T))$.
\end{theorem}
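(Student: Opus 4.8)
The plan is to prove the statement by coinduction against the coinductive projection relation $\fproj{\pp}$ of Definition~\ref{def:coinductiveproj}. By the definition of $\sat$ (with the visited set absorbing cycles), the hypothesis $\sat_{\ProjPredp}(\{\}, (\G, \T))$ amounts to the assertion that $\ProjPredp(\G', \T')$ holds at \emph{every} node $(\G', \T')$ reachable from $(\G, \T)$ in the product graph of $(\G,\T)$ with respect to $\pp$. I would therefore set
\[
  \cR = \{(\G', \T') \mid (\G', \T') \text{ is reachable from } (\G, \T) \text{ in the product graph}\}
\]
and show that $\cR$ is a post-fixed point of the generating operator whose greatest fixed point is $\fproj{\pp}$; since $\fproj{\pp}$ is the largest such relation and $(\G, \T) \in \cR$, this gives $\G \fproj{\pp} \T$, i.e. $\proj{\G}{\pp} = \T$. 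Because the product graph has node set $\Sub(\G) \times \Sub(\T)$, the relation $\cR$ is finite, so the coinduction is essentially a reachability argument and no separate productivity construction is needed beyond tracking the guardedness data recorded by $\ProjPredp$.

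The core is a case analysis on $\unfold{\G'}$ for each $(\G', \T') \in \cR$, matching it against the clauses of Definition~\ref{def:coinductiveproj} and using $\ProjPredp(\G', \T')$ as the local witness. The three branches of $\ProjPredp$ track the three situations of the projection. When the head of $\unfold{\G'}$ involves $\pp$, the equality $\kf{PL}_\pp(\kf{LG}(\G')) = \kf{LT}(\T')$ forces $\unfold{\T'}$ to have the matching top constructor ($\tin$, $\tout$, $\tbra$, or $\tsel$) with the correct participant, sort and label set, while $d_g(\G') = d_l(\T')$ keeps the two synchronised. When $\pp$ is not at the head but $\kf{partOf}_\pp(\G')$ holds, the check $0 < d_g(\G')$ guarantees $\pp$ is reached in finitely many steps, and the clause $\G'' \fproj{\pp} \T'$ is discharged by following the edge $(\G',\T') \to (\G'', \T')$. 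Otherwise $\neg\kf{partOf}_\pp(\G')$ together with $\unfold{\T'} = \tend$ realises the clause $\pp \notin \pt{\G'} \Rightarrow \unfold{\T'} = \tend$. In each case I would verify that the successor pairs demanded by the relevant clause of $\fproj{\pp}$ are exactly the out-edges of $(\G',\T')$ in the product graph, hence members of $\cR$, so the post-fixed-point condition propagates.

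The main obstacle I expect is the branching clause with $\pp$ not among the communicating roles, where $\fproj{\pp}$ requires $\T' = \bigmergec_{i \in \I} \T_i'$ with $\G_i' \fproj{\pp} \T_i'$, even though the candidate $\T$ was built by $\ptrans$ using the degenerate plain merge $\T_1 \mergec \T_2 = \T_1$. The resolution rests on the shape of the product graph: for such a node the edges are $(\G', \T') \to (\G_i', \T')$ for \emph{all} $i \in \I$, keeping the local component $\T'$ fixed. Since $\ProjPredp$ holds at every $(\G_i', \T')$, each branch is consistent with the \emph{same} $\T'$; taking $\T_i' = \T'$ for all $i$ makes $\G_i' \fproj{\pp} \T_i'$ available from $\cR$, and plain merge yields $\bigmergec_{i \in \I} \T' = \T'$ (legal as $\T' \subteq \T'$). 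Thus the cross-branch consistency enforced by $\ProjPredp$ is precisely what upgrades the one-branch choice of $\ptrans$ into a genuine merge. The remaining care is to argue that the depth data $d_g, d_l$ together with $\kf{guarded}_\pp$ and $\sat_{\kf{UnravelPred}}(\{\}, \G')$ exclude the degenerate regular trees in which $\pp$ is reachable yet some path delays it indefinitely; these are exactly what ensure that the finite witness $\cR$ corresponds to a contractive local type and that $\unfold{\cdot}$ is well defined at every node.
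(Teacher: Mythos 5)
You should know at the outset that the paper contains no proof of this statement: Theorem~\ref{thm:projpred_is_proj} is imported verbatim from Tirore \etal~\cite{Tirore2023} (the paper says explicitly ``The following theorem is proven in \cite{Tirore2023}''), where it is part of a Coq-mechanised correctness development; this paper uses it only as a black box for the complexity analysis. So your proposal is not comparable to an in-paper argument — it is an independent reconstruction, and as such it is largely sensible. Your two structural moves are sound relative to this paper's Definition~\ref{def:coinductiveproj}: since $\fproj{\pp}$ is the \emph{largest} relation closed under the stated implications, exhibiting the finite set of pairs reachable in the product graph as a post-fixed point does establish $\G \fproj{\pp} \T$ by plain coinduction, with no separate productivity argument needed; and your handling of the merge clause is exactly the right insight — at a node whose head communication does not involve $\pp$, the product graph keeps the local component fixed across all branches, so you may take $\T_i' = \T'$ for every $i \in \I$ and close the clause with $\T' \mergec \T' = \T'$ (legitimate since $\T' \subteq \T'$). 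This is precisely why the degenerate one-branch candidate $\ptrans(\G)$ is an adequate witness under plain merging, and it is the same observation this paper makes informally in \Sec\ref{section:coinductive_projection}.

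Where your sketch is thinner than the real proof, and where it could not be completed from this paper alone: the auxiliary predicates $\kf{PL}_\pp$, $\kf{LG}$, $\kf{LT}$, $d_g$, $d_l$, $\kf{guarded}_\pp$ and $\kf{UnravelPred}$ are deliberately left undefined here (the appendix says they ``are simple constant-time checks \dots which we omit''), so your claims that the first branch of $\ProjPredp$ \emph{forces} matching head constructors, and that $0 < d_g(\G')$ guarantees $\pp$ is reached in finitely many steps, are assumptions about semantics you were never given rather than checkable steps. More substantively, in \cite{Tirore2023} the statement concerns projection defined on coinductive trees, related to syntactic $\mu$-types by an unravelling relation (their syntax admits non-contractive terms, which is what $\kf{UnravelPred}$ polices), and the bulk of their proof lies in mediating between the syntactic and coinductive levels through an intermediate projection and separate soundness and completeness lemmas — exactly the part you defer in your closing sentence. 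Within this paper's reformulation of $\fproj{\pp}$ via $\unfold{\cdot}$ on guarded types, your post-fixed-point argument is the natural soundness proof and the guardedness machinery does less work; against the cited original, it compresses the hardest portion into a remark.
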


Therefore checking $\sat_\ProjPredp(\{\}, (\G, \ptrans(\G)))$ is enough to decide projection.

\subsubsection{Complexity}

$\ptrans$ is easy to compute: merging takes constant time thus it can be computed in~$\bigO{|\G|}$. Thus the main complexity of the algorithm comes from checking $\sat_{\ProjPredp}(\{\}, (\G, \T))$. The two functions to consider are $\sat_\ProjPredp$ and $\sat_{\kf{UnravelPred}}$. We have the following observations:

\begin{observation}\label{thm:graph_sat_in_linear_time} For any graph $G = (V, E)$, $\sat_P(\{\}, v)$ can be computed in $\bigO{|V| \cdot T + |E|}$ time, where $T$ is the time taken to compute $P(v)$.
\end{observation}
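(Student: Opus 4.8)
The plan is to recognise the recursive definition of $\sat_P(\{\}, v)$ as a standard depth-first traversal that memoises computed nodes, and then to account for the cost of such a traversal. First I would make precise what value $\sat_P(\{\}, v)$ denotes: I claim it equals $\bigwedge_{w \in R(v)} P(w)$, where $R(v)$ is the set of nodes reachable from $v$ in $G$ (including $v$ itself). The first argument of $\sat_P$ is a set of already-visited nodes, and the base case ``$v \in V$ returns $1$'' is precisely what breaks cycles and guarantees termination, since each recursive descent either hits an already-visited node or strictly enlarges the visited set.

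The key implementation idea is to thread a \emph{single accumulating} visited set through the whole traversal, rather than re-passing the same $V$ independently into each branch as the displayed equation might naively suggest. Concretely, I would run an iterative (or recursively memoised) DFS from $v$: on first reaching a node $w$, mark it visited, evaluate $P(w)$ exactly once, and recurse on its out-neighbours; on reaching an already-marked node, return immediately. I would then argue, by induction on the order in which nodes are marked, that this accumulating-set procedure returns the same boolean as the definitional $\sat_P$. The crucial point is that, because conjunction is idempotent and commutative, the value $\bigwedge_{w \in R(v)} P(w)$ is independent of the traversal order and of when a node is added to the visited set, so collapsing repeated visits of a shared subgraph into a single evaluation of $P$ does not change the result.

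For the complexity, observe that each node of $R(v) \subseteq V$ is marked at most once, so $P$ is evaluated at most $|V|$ times, contributing $\bigO{|V| \cdot T}$; each edge is examined at most once, when its source is first expanded, contributing $\bigO{|E|}$; and, representing the visited set as a boolean array indexed by vertex, each mark and lookup is $\bigO{1}$. Summing these gives the claimed $\bigO{|V| \cdot T + |E|}$ bound.

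I expect the main obstacle to be the correctness equivalence in the second paragraph. The literal reading of the recurrence reuses the \emph{same} visited set $V$ in every branch of the conjunction, which, taken verbatim, would re-explore shared subgraphs and blow up to superlinear (indeed exponential) time on graphs with high reconvergence. The substantive content of the proof is therefore justifying that the accumulating-visited-set implementation is faithful to the intended semantics -- that sharing the evaluation of $P$ across branches is sound -- which reduces to the order-independence of the reachable-set conjunction noted above.
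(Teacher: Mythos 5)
Your proposal is correct and matches the paper's approach: the paper simply remarks ``Obvious; any graph search algorithm works,'' and your accumulating-visited-set DFS with one evaluation of $P$ per marked node and one examination per edge is precisely that graph search, spelled out. The point you flag about the literal recurrence re-exploring shared subgraphs is a fair observation about the definition's presentation, but it is exactly the detail the paper's one-line proof leaves implicit rather than a gap in your argument.
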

\begin{proof} Obvious; any graph search algorithm works.
\end{proof}

\begin{observation} The size of the graph of $\G$ is $\bigO{|\G|}$; the size of the graph of $(\G, \T)$ is $\bigO{|\G| \cdot |\T|}$.
\end{observation}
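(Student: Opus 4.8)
The plan is to bound, for each of the two graphs, the number of nodes and the number of edges separately, since here ``size'' must mean $|V|+|E|$ (this is exactly what the graph-search bound in Observation~\ref{thm:graph_sat_in_linear_time} consumes). Both bounds will follow from the subformula-counting lemmas already in hand, combined with an out-degree summation in the style of Lemma~\ref{thm:number_of_edges_in_subtyping_lts_is_linear}. First I would treat the global type graph $\GG(\G)$. Its node set is exactly $\Sub(\G)$, so Lemma~\ref{thm:number_of_global_subterms_is_linear} immediately gives $\bigO{|\G|}$ nodes.

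For the edges of $\GG(\G)$, I would introduce the out-degree $\mathrm{od}(\unfold{\G'})$ of each node, namely $0$ for $\tend$ and $\ty$, $1$ for $\GMsg\pq\pr\S{\G''}$, and $|\I|$ for $\GvtPair\pq\pr{l_i:\G_i}{i\in\I}$, and observe that the total number of edges is $\sum_{\G'\in\Sub(\G)}\mathrm{od}(\unfold{\G'})$. A structural induction on $\G$, directly mirroring the proof of Lemma~\ref{thm:number_of_edges_in_subtyping_lts_is_linear} for local types, shows this sum is $\bigO{|\G|}$: each branching node contributes its branch count, which is absorbed into the size of the global type. Hence $|V|+|E|=\bigO{|\G|}$ for $\GG(\G)$. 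For the product graph of $(\G,\T)$, I would bound the nodes first: they lie in $\Sub(\G)\times\Sub(\T)$, so by Lemma~\ref{thm:number_of_global_subterms_is_linear} and Lemma~\ref{thm:number_of_local_subterms_is_linear} the node count is at most $|\Sub(\G)|\cdot|\Sub(\T)|=\bigO{|\G|\cdot|\T|}$.

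The edge count for the product graph is the one step needing genuine case analysis. The key claim is that, in each of the three defining rules, the out-degree of a node $(\G',\T')$ is controlled entirely by the global component $\G'$: the message rule contributes the single successor of $\G'$; the branching rule contributes at most one edge per branch of $\unfold{\G'}$ (since it restricts to $\I\cap\J$); and the participant-elision rule contributes one edge per successor of $\G'$ in $\GG(\G)$. In every case the out-degree of $(\G',\T')$ is no larger than the out-degree of $\G'$ in $\GG(\G)$. Summing over all $\T'\in\Sub(\T)$ and all $\G'\in\Sub(\G)$ then collapses the total to $|\Sub(\T)|$ copies of the edge count of $\GG(\G)$, giving $\bigO{|\T|}\cdot\bigO{|\G|}=\bigO{|\G|\cdot|\T|}$ edges.

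The argument is almost entirely routine, and the only place demanding care is this last step: I must check that the branching rule's restriction to $\I\cap\J$ and the elision rule's dependence on the successors of $\G'$ both yield an out-degree bounded by that of $\G'$ in $\GG(\G)$, so that the per-row sum reduces to the already-linear edge count of the global type graph rather than, say, to a term depending on the branches of $\T'$. Once this bound on the out-degree is discharged, both size estimates follow immediately from the two subformula lemmas.
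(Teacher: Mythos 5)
Your proposal is correct and takes essentially the same route as the paper's (very terse) proof, which likewise bounds the node counts by $|\Sub(\G)|$ and $|\Sub(\G)| \cdot |\Sub(\T)|$ and appeals to a structural-induction/out-degree argument for the edges. Your key step for the product graph --- that in each of the three edge rules the out-degree of a node $(\G', \T')$ is dominated by the out-degree of $\G'$ in $\GG(\G)$, so the total edge count is at most $|\Sub(\T)|$ times the (linear) edge count of $\GG(\G)$ --- is a sound and explicit discharge of exactly the bound the paper leaves implicit.
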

\begin{proof} The number of nodes is bound by $|\Sub(\G)|$ and $|\Sub(\G)| \cdot |\Sub(\T)|$ respectively; the number of edges for each graph can be bounded using structural induction.
\end{proof}

Thus we can show that checking the projection takes quadratic time:

\begin{lemma}\label{thm:projpred_quadratic} Checking $\sat_{\ProjPredp}(\{\}, (\G, \T))$ can be done in $\bigO{|\G| \cdot |\T|}$ time.
\end{lemma}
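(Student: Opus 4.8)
The plan is to exploit memoisation so that each evaluation of the node predicate $\ProjPredp$ costs only constant time, after which Observation~\ref{thm:graph_sat_in_linear_time} applied to the product graph of $(\G, \T)$ yields the quadratic bound directly. The only ingredient of $\ProjPredp(\G', \T')$ that is not already a constant-time check (possibly after linear-time precomputation) is the nested graph-satisfaction call $\sat_{\kf{UnravelPred}}(\{\}, \G')$. If this nested call were recomputed afresh at every node of the product graph it would cost $\bigO{|\G|}$ per node, giving an unacceptable $\bigO{|\G|^2 \cdot |\T|}$ total; so the crux is to compute it once, for all arguments that can ever arise, and then look it up.

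First I would precompute $\sat_{\kf{UnravelPred}}(\{\}, \G')$ for every node $\G'$ of the graph $\GG(\G)$. Observe that $\sat_{\kf{UnravelPred}}(\{\}, \G')$ fails exactly when some node reachable from $\G'$ violates the constant-time predicate $\kf{UnravelPred}$. Hence the value for all $\G'$ can be obtained in a single linear-time pass over $\GG(\G)$: mark each node $\G''$ with $\neg\kf{UnravelPred}(\G'')$, reverse all edges, and run one graph search to collect every node that can reach a marked node; the complement of this set is precisely where $\sat_{\kf{UnravelPred}}(\{\}, \cdot)$ holds. Since $\GG(\G)$ has $\bigO{|\G|}$ nodes and edges, this runs in $\bigO{|\G|}$ time and stores one bit per node of $\GG(\G)$.

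With these values cached, evaluating $\ProjPredp(\G', \T')$ at any node $(\G', \T')$ of the product graph reduces to a constant number of constant-time checks together with a single table lookup of the precomputed $\sat_{\kf{UnravelPred}}(\{\}, \G')$; the first component $\G'$ ranges over $\Sub(\G)$, which are exactly the nodes of $\GG(\G)$, so the lookup is always available. Thus $\ProjPredp$ behaves as a constant-time per-node predicate. The product graph of $(\G, \T)$ has $\bigO{|\G| \cdot |\T|}$ nodes and edges, so applying Observation~\ref{thm:graph_sat_in_linear_time} with per-node cost $T = \bigO{1}$ computes $\sat_{\ProjPredp}(\{\}, (\G, \T))$ in $\bigO{|\G| \cdot |\T|}$ time. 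Adding the $\bigO{|\G|}$ precomputation (and the linear-time precomputation needed for the remaining constant-time checks) leaves the total at $\bigO{|\G| \cdot |\T|}$.

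I expect the only real obstacle to be the memoisation observation itself: recognising that $\sat_{\kf{UnravelPred}}(\{\}, \G')$ depends solely on the first component $\G'$ and can therefore be amortised across all product-graph nodes rather than recomputed at each of them. The linear-time reverse-reachability computation and the bookkeeping of which subsidiary checks require linear-time precomputation are standard and would need only brief justification.
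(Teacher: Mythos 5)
Your proposal is correct and follows essentially the same route as the paper's proof: precompute $\sat_{\kf{UnravelPred}}(\{\}, \G')$ for all $\G' \in \Sub(\G)$ in $\bigO{|\G|}$ time, after which $\ProjPredp$ is a constant-time predicate and the product graph of size $\bigO{|\G| \cdot |\T|}$ gives the bound via Observation~\ref{thm:graph_sat_in_linear_time}. Your reverse-reachability argument merely spells out the detail, left implicit in the paper, of why the precomputation over \emph{all} nodes (rather than one) remains linear.
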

\begin{proof}
  First, computing $\sat_{\kf{UnravelPred}}(\{\}, \G')$ for all~$\G' \in \Sub(\G)$ takes~$\bigO{|\G|}$ time as $\kf{UnravelPred}$ is a constant-time predicate (then using Observation~\ref{thm:graph_sat_in_linear_time}.)

  After this precomputation, $\ProjPredp$ takes constant time to compute. So, checking $\sat_{\ProjPredp}(\{\}, (\G, \T))$ takes $\bigO{|\Sub(\G) \times \Sub(\T)|} \leq \bigO{|\G| \cdot |\T|}$ time.
\end{proof} 

\begin{restatable}{theorem}{theconductiveprojection}
  \label{thm:coinductiveprojection:upper}
  The projection in~\cite{Tirore2023} takes $\bigO{|\G|^2}$ time.
\end{restatable}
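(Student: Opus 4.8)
The plan is to read the runtime directly off the two-stage structure of the \TBC\ algorithm already laid out: first construct the candidate projection $\ptrans(\G)$, then verify the saturation predicate $\sat_{\ProjPredp}(\{\}, (\G, \ptrans(\G)))$. By Theorem~\ref{thm:projpred_is_proj} this verification is exactly what decides projectability and recovers the correct local type, so correctness is already settled and only the time bound remains. I would therefore treat the statement as a matter of accounting for the cost of each of the two stages and summing them.

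First I would bound the cost of building $\ptrans(\G)$. Since $\ptrans$ is defined identically to inductive projection (Definition~\ref{def:inductive_projection}) except that its merge is the degenerate $\T_1 \mergep \T_2 = \T_1$, every merge is a constant-time operation that simply discards all but one branch, so $\ptrans(\G)$ is computable in $\bigO{|\G|}$ time. Alongside this I would record the size bound $|\ptrans(\G)| \leq |\G|$ by a routine structural induction on $\G$: the only nontrivial case is $\GvtPair{\pq}{\pr}{l_i: \G_i}{i \in \I}$ with $\pp \notin \{\pq, \pr\}$, where $\ptrans$ returns $\ptrans(\G_{i_0})$ for a single index $i_0$, and the inductive hypothesis gives $|\ptrans(\G_{i_0})| \leq |\G_{i_0}| < |\G|$.

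Next I would invoke Lemma~\ref{thm:projpred_quadratic}, which states that $\sat_{\ProjPredp}(\{\}, (\G, \T))$ is checkable in $\bigO{|\G| \cdot |\T|}$ time; this already absorbs the linear-time precomputation of $\sat_{\kf{UnravelPred}}$ over $\Sub(\G)$ and the bound on the product graph of $(\G, \T)$ via Observation~\ref{thm:graph_sat_in_linear_time}. Instantiating $\T = \ptrans(\G)$ and substituting $|\ptrans(\G)| \leq |\G|$ bounds the verification by $\bigO{|\G|^2}$, and adding the $\bigO{|\G|}$ construction cost yields the claimed $\bigO{|\G|^2}$ total. Since the genuine combinatorial content has already been discharged in Lemma~\ref{thm:projpred_quadratic}, the theorem is essentially a corollary, and I expect no real obstacle beyond verifying that the size bound $|\ptrans(\G)| \leq |\G|$ holds and confirming that the one-off precomputations stay within the quadratic budget.
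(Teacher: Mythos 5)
Your proposal matches the paper's own proof: both compute $\ptrans(\G)$ in $\bigO{|\G|}$ time (merging is constant-time), invoke Lemma~\ref{thm:projpred_quadratic} for the $\bigO{|\G|\cdot|\T|}$ verification of $\sat_{\ProjPredp}$, and conclude via the size bound $|\ptrans(\G)|\leq|\G|$ (the analogue of Lemma~\ref{thm:proj_is_smaller} for $\ptrans$). The argument is correct and essentially the same as the paper's.
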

\begin{proof}
    We have shown above that checking $\G\cproj\pp\T$ takes $\bigO{|\G| \cdot |\T|}$ time (Lemma~\ref{thm:projpred_quadratic}, Theorem~\ref{thm:projpred_is_proj}). Additionally, checking $\ptrans$ takes $\bigO{|\G|}$ time. Using the fact that $|\T| \leq |\G|$ (analogue of Lemma~\ref{thm:proj_is_smaller} for $\ptrans$), we have the result.
\end{proof}

\newcommand{\Grep}{\G^{\kf{rep}}}
\newcommand{\Trep}{\T^{\kf{rep}}}
\begin{restatable}{theorem}{theconductiveprojectionlower}
  \label{thm:coinductiveprojection:lower}
  The projection in~\cite{Tirore2023} takes $\bigOmega{|\G|^2}$ time in the worst case.
\end{restatable}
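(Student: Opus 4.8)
The plan is to complement the matching upper bound (Theorem~\ref{thm:coinductiveprojection:upper}) by exhibiting a family of balanced global types $\Grep_n$ of size $\bigTheta{n}$ on which the algorithm's graph search over the product of $\Grep_n$ and its candidate projection $\Trep_n = \ptrans(\Grep_n)$ must visit $\bigTheta{n^2}$ reachable nodes. Since $\sat_{\ProjPredp}(\{\}, (\G, \T))$ is computed by a graph search that inspects every node reachable from $(\G, \T)$ (Observation~\ref{thm:graph_sat_in_linear_time}), a product graph with $\bigOmega{n^2}$ reachable nodes forces an $\bigOmega{n^2}$ running time.

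The key difficulty is that with a single recursion the projection's cycle length is pinned to the number of actions of $\pp$ per loop, so the product graph degenerates to a single deterministic trajectory of only linearly many states. I would break this synchronisation by inserting, after each action of $\pp$, a $\pq\rightarrow\pr$ branching (not involving $\pp$) one of whose branches \emph{resets} the global protocol to the top of the recursion. Concretely, taking all payloads equal to a single sort $S$, define $\Grep_n$ as the $n$-fold nesting
\begin{equation*}
\Grep_n = \mu\ty.\ \GMsg{\pr}{\pp}{S}\ \GvtPairs{\pq}{\pr}{\kf{go}: \GMsg{\pr}{\pp}{S}\ \GvtPairs{\pq}{\pr}{\kf{go}: \cdots \GvtPairs{\pq}{\pr}{\kf{go}: \ty,\ \kf{reset}: \ty}\cdots,\ \kf{reset}: \ty},\ \kf{reset}: \ty}
\end{equation*}
so that $\pp$ performs $n$ receives of $S$ per loop iteration, each followed by a binary choice whose $\kf{reset}$ branch returns to $\ty$. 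Then $\size{\Grep_n} = \bigTheta{n}$, and because every branch projects onto $\pp$ to the same infinite tree (an unbroken stream of $\tin{\pr}{S}$), the type is coinductively projectable with plain merge and $\ptrans$ follows the $\kf{go}$ path, yielding $\Trep_n = \mu\ty.\tin{\pr}{S}\cdots\tin{\pr}{S}\ty$ (with $n$ inputs), a local cycle with $\bigTheta{n}$ distinct subformulas.

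The heart of the argument is a reachability count in the product graph. I would index the local states $\T_0, \dots, \T_{n-1}$ by how many inputs have been consumed modulo $n$, and write the global states as the top $g_0 = \Grep_n$ together with the $n$ branching nodes $b_1, \dots, b_n$. Following Tirore \etal's edge rules, a $\pp$-receive advances \emph{both} components (the message edge), whereas taking a $\kf{reset}$ branch is a global-only edge sending $(b_j, \T_k)$ to $(g_0, \T_k)$ with the local component \emph{frozen}. This freezing desynchronises the local index from the global position: from $(g_0, \T_k)$ one receive reaches $(b_1, \T_{k+1})$ and an immediate $\kf{reset}$ returns to $(g_0, \T_{k+1})$, so every $(g_0, \T_k)$ is reachable; descending $j$ steps from $(g_0, \T_{k-j})$ then reaches $(b_j, \T_k)$. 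Hence all $\bigTheta{n^2}$ pairs $(b_j, \T_k)$ are reachable, and since the projection is valid the search never short-circuits on an inconsistent node and must enumerate them all.

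I expect the main obstacle to be exactly this interplay between validity and blow-up: the $\kf{reset}$ branches must induce genuine two-dimensional reachability in the product graph while keeping $\Grep_n$ projectable, since otherwise $\sat_{\ProjPredp}$ could abort at the first inconsistency after only linearly many steps. Using a single uniform payload sort $S$ resolves this, because it makes all $\kf{go}$/$\kf{reset}$ branches project to identical infinite trees (so plain merging, and thus the coinductive projection, succeeds) while the finite syntactic product graph stays genuinely quadratic. The remaining steps are then routine: verifying $\size{\Grep_n} = \bigTheta{n}$, checking that $\Grep_n$ is balanced (a $\pp$-involving node is reached within a constant number of steps from every node), and formalising the reachable-set computation by induction on the local index.
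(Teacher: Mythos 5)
Your proposal is correct, and it reaches the $\bigOmega{|\G|^2}$ bound by a genuinely different mechanism than the paper. Both arguments share the same frame: since the candidate $\ptrans(\G)$ is a valid projection, $\sat_{\ProjPredp}$ cannot short-circuit and the graph search must visit every node of the product graph reachable from $(\G, \ptrans(\G))$, so it suffices to make that reachable set quadratic while keeping $|\G|$ linear. The paper does this with coprimality: it places two recursive chains of $\pp$-messages, of lengths $n$ and $n+1$, under a single $\pq\rightarrow\pr$ branching; $\ptrans$ keeps the length-$n$ branch as the candidate local type, and checking the discarded length-$(n{+}1)$ branch against it walks the product of two coprime cycles, which is a single cycle through all $n(n+1)$ pairs --- the whole reachability argument is one line. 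You instead desynchronise the two components with $\kf{reset}$ branches whose global-only edges freeze the local index, and then prove by induction that every pair $(b_j, \T_k)$ is reachable. Your route needs slightly more bookkeeping (the two-stage induction over $(g_0,\T_k)$ and then the descent to $(b_j,\T_k)$, plus the check that all branches still project to the same infinite input stream so that plain coinductive merging succeeds), but it avoids the number-theoretic trick and makes the source of the blow-up more transparent: any non-$\pp$ branch that advances the global component while freezing the local one can shear the two cycles apart. Both constructions are valid witnesses for the lower bound; the remaining details you defer (size $\bigTheta{n}$, guardedness, distinctness of the $n^2$ product nodes) are indeed routine.
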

\begin{proof}
  Define the (not necessarily closed) global type fragments:
  \begin{align*}
    \Grep_0 &= \ty \\
    \Grep_k &= \Gvt\pp\pq\tint \Grep_{k-1} & (k > 0) \\
    \Trep_0 &= \ty \\
    \Trep_k &= \tout\pq\tint \Trep_{k-1} & (k > 0)
  \end{align*}
  We can construct a global type $\G$ such that the the graph of $(\G, \ptrans(\G))$ has $\bigTheta{|\G|^2}$ nodes:
  \begin{align*}
    \G &= \mu\ty. \GvtPairs\pq\pr{l_1: \mu\ty. \Grep_n, l_2: \mu\ty. \Grep_{n+1}}
  \end{align*}
  We have that $\ptrans{\G} = \mu\ty. \Trep_n$.
  We have that $(\G, \ptrans(\G)) = (\G, \mu\ty. \Trep_n) \rightarrow (\mu\ty. \Grep_{n+1}, \mu\ty. \Trep_n)$.
  From here, all nodes $(\Grep_{i}\subst{\Grep_{n+1}}{\ty}, \Trep_{j}\subst{\Trep_n}{\ty})$, $0 \leq i \leq n, 0 \leq j \leq n-1$ are reachable.
  Using the fact that $|\G| = \bigTheta{n}$, the graph of $(\G, \ptrans(\G))$ has $\bigTheta{|\G|^2}$ nodes, thus the algorithm takes $\bigOmega{|\G|^2}$ time.
\end{proof}

\subsection{Appendix for Section \ref{subsec:coinductivefull}}
\label{app:coinductivefull}

\thmbalancedcomplexity*
\begin{proof}
Note that $\G$ is \emph{not} balanced if there exists a cycle from which some participant $\pp$ is reachable,
but the cycle itself uses only nodes where $\pp$ is not involved.
For each participant $\pp$, we can find all nodes that don't involve $\pp$ but $\pp$ is reachable in $\bigO{|\G|}$ time. We can then find cycles involving these nodes also in $\bigO{|\G|}$ time. Thus, we can check if $\G$ is balanced in $\bigO{|\G|^2}$ time, as there are $\bigO{|\G|}$ participants.
\end{proof}

\begin{example}[Subset construction]
\label{ex:subsetcontruction}
Consider the global type
\begin{align*}
\small
\G = \GvtPairs{\pp}{\pq}{&l_1: \Gvt{\pq}{\pr}{\tint} \GvtPairs{\pq}{\pr}{l_3: \Gvt{\pr}{\pp}{\tint} \tend},\\
                           &l_2: \Gvt{\pq}{\pr}{\tint} \GvtPairs{\pq}{\pr}{l_4: \Gvt{\pr}{\pp}{\tbool} \tend}}
  \end{align*}
The subset construction for projection onto $\pr$, $\gproj{\pr}(\G)$, has
initial node
$\gcl{\pr}(\{\G\})
 = \{\G, \Gvt{\pq}{\pr}{\tint} \GvtPairs{\pq}{\pr}{l_3: \Gvt{\pr}{\pp}{\tint} \tend},
    \Gvt{\pq}{\pr}{\tint} \GvtPairs{\pq}{\pr}{l_4: \Gvt{\pr}{\pp}{\tbool} \tend}\}$. Then:\\
$\begin{array}{rcl}
\gcl{\pr}(\{\G\})
  & \trans{\trin{\pq}{\tint}} & 
  \{\GvtPairs{\pq}{\pr}{l_3: \Gvt{\pr}{\pp}{\tint} \tend}, \GvtPairs{\pq}{\pr}{l_4: \Gvt{\pr}{\pp}{\tbool} \tend}\}\ = \ \cG\\
 \cG & \trans{\trbra{\pq}{l_3}} & 
  \{\Gvt{\pr}{\pp}{\tint} \tend\}
  \trans{\trout{\pp}{\tint}}
  \{\tend\}
  \trans{\trend}
  \Skip\quad\text{and}\\
\cG 
& \trans{\trbra{\pq}{l_4}}& 
  \{\Gvt{\pr}{\pp}{\tbool} \tend\}
  \trans{\trout{\pp}{\tbool}}
  \{\tend\}
  \trans{\trend}
  \Skip
\end{array}
$\\[1mm]
The corresponding local type is $\tin{\pq}{\tint} \tbra{\pq}{l_3: \tout{\pp}{\tint} \tend, l_4: \tout{\pp}{\tbool} \tend}$.
\end{example}

\begin{lemma}[Monotonicity and closure of $\gcl{\pp}$]
\label{lem:monotonicity_p_closure}
\label{lem:subset_closure_of_p_closure}
{\em (1)} $\cG \subseteq \cG'$ implies 
$\gcl{\pp}(\cG) \subseteq \gcl{\pp}(\cG')$; and 
{\em (2)} $\cG \subseteq \gcl{\pp}(\cG')$ implies 
$\gcl{\pp}{(\cG)} \subseteq \gcl{\pp}{(\cG')}$.
\end{lemma}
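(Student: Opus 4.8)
The statement to prove is Lemma~\ref{lem:monotonicity_p_closure}, asserting two properties of the $\pp$-closure operator: monotonicity (part (1)) and an idempotence-style closure property (part (2)). Both parts follow directly from the inductive definition of $\gcl{\pp}$, which is generated by the three closure rules: the base inclusion rule (1), the message-unfolding rule (2), and the branching-unfolding rule (3).

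The plan is as follows. For part (1), I would proceed by a straightforward rule induction on the derivation witnessing $\G \in \gcl{\pp}(\cG)$. The key observation is that $\gcl{\pp}(\cG)$ is the least set closed under the three generating rules and containing $\cG$. Assuming $\cG \subseteq \cG'$, I would show by induction on the structure of the closure derivation that every $\G \in \gcl{\pp}(\cG)$ also lies in $\gcl{\pp}(\cG')$. The base case uses $\cG \subseteq \cG'$ directly (if $\G \in \cG$ then $\G \in \cG' \subseteq \gcl{\pp}(\cG')$), and the inductive cases (2) and (3) simply reapply the same generating rule to the elements obtained from the induction hypothesis, since the side conditions on unfoldings and participants depend only on $\G$ itself, not on which starting set was used.

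For part (2), I would again use rule induction on the derivation of $\G \in \gcl{\pp}(\cG)$, given the hypothesis $\cG \subseteq \gcl{\pp}(\cG')$. The crux is that $\gcl{\pp}(\cG')$ is itself closed under the three generating rules (it is a fixed point of the closure operator). Thus, in the base case, $\G \in \cG \subseteq \gcl{\pp}(\cG')$ gives the result immediately; and in each inductive case, the induction hypothesis places the predecessor type in $\gcl{\pp}(\cG')$, whereupon closedness of $\gcl{\pp}(\cG')$ under the relevant rule (message-unfolding or branching-unfolding) yields that the successor type is also in $\gcl{\pp}(\cG')$. Formally this is the standard fact that $\gcl{\pp}(\gcl{\pp}(\cG')) = \gcl{\pp}(\cG')$, specialised through the inclusion $\cG \subseteq \gcl{\pp}(\cG')$.

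Neither part presents a genuine obstacle; the entire proof is a routine application of the induction principle associated with an inductively-defined closure. The only point requiring minor care is stating precisely that $\gcl{\pp}$ is defined as a least fixed point (so that rule induction is available) and that $\gcl{\pp}(\cG')$ is closed under the generating rules (so that the inductive steps go through in part (2)). I would make this explicit at the outset to justify both inductions uniformly, after which both statements close in a few lines.
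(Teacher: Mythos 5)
Your proof is correct and matches the paper's argument in substance: part (1) is the standard rule induction implicit in the paper's ``by definition,'' and for part (2) the paper simply composes (1) with the idempotence $\gcl{\pp}(\gcl{\pp}(\cG'))=\gcl{\pp}(\cG')$, which is exactly the fact you identify at the end of your inductive argument. No gaps.
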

\begin{proof}
(1) By definition. (2) By (1) and  $\gcl{\pp}(\gcl{\pp}(\cG'))=\gcl{\pp}(\cG')$.
\end{proof}



\begin{restatable}{lemma}{lemsubsetconstruction}\label{lem:subset_construction_is_merging}
Let $\G$ be a balanced global type. If the projection graph $\gproj{\pp}(\G)$ is defined, then $\GG(\cG) = \bigmergecf_{\G' \in \cG}\GG(\{\G'\})$ for all nodes $\cG$ in $\gproj{\pp}(\G)$.
\end{restatable}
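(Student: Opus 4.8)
The plan is to establish the equality of type graphs by exhibiting a single relation that witnesses the type graph equivalence $\subteq$ of Definition~\ref{def:type_simulation}, reading the matching transitions off the coinductive structure of both $\mergecf$ (Definition~\ref{def:coinductive_full_merge}) and the subset construction (Definition~\ref{def:subsetconstruction}). Concretely I would set
\[
  \cR = \left\{\left(\GG(\cG),\ \bigmergecf_{\G' \in \cG}\GG(\{\G'\})\right) \;\middle|\; \cG \text{ is a node of } \gproj{\pp}(\G)\right\}
\]
and prove that $\cR$ is a bisimulation, so that both inclusions hold and the two graphs coincide. Because $\mergecf$ is defined coinductively and recursive global types produce infinite graphs, coinduction rather than structural induction is the right engine; associativity and commutativity of $\mergecf$ (the coinductive analogue of the remark following Definition~\ref{def:inductive_full_merging}) let me treat $\bigmergecf_{\G'\in\cG}$ as a well-defined operation on the finite set $\cG$, independent of the merge order. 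Each singleton graph $\GG(\{\G'\})$ is, by the soundness argument behind Theorem~\ref{thm:proj:sound}, the coinductive projection of $\G'$ onto $\pp$.

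First I would record a closure lemma: every element added by $\gcl{\pp}$ is reached only through transitions that do not involve $\pp$, so its singleton projection agrees with that of its ancestor on $\pp$'s view, and enlarging a set by $\gcl{\pp}$ leaves the merge unchanged, i.e.
\[
  \bigmergecf_{\G'\in\gcl{\pp}(\cG)}\GG(\{\G'\}) \subteq \bigmergecf_{\G'\in\cG}\GG(\{\G'\}).
\]
This, together with Lemma~\ref{lem:monotonicity_p_closure}, is what makes the targets of the subset-construction rules --- which are themselves $\gcl{\pp}$-closed sets of continuations --- line up with the per-clause targets of $\mergecf$.

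The core is a case analysis on the $\pp$-involving frontier of $\cG$, matching each subset-construction rule to the corresponding clause of Definition~\ref{def:coinductive_full_merge}. Rule \RULE{G-End} fires exactly when no $\G'\in\cG$ involves $\pp$, where every $\GG(\{\G'\})$ unfolds to $\tend$ and the merge is $\tend$. Rules \RULE{G-Out} and \RULE{G-In} fire when every involving $\G'$ exposes the same input or output toward the same $\pq$ with the same sort $\S$; the definedness of $\gproj{\pp}(\G)$ guarantees these agree (a mismatch is precisely where $\mergecf$ is undefined), the I/O clause merges the continuations, and the collected continuation set is $\gcl{\pp}$ of those continuations, so successors stay in $\cR$. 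For \RULE{G-Sel} the involving types must all be selections on the same label set $\J$, matching the same-index selection clause and emitting one edge per $j\in\J$; for \RULE{G-Bra} the branchings may carry distinct label sets $\J_i$, and the rule emits edges over $\bigcup_i \J_i$ with the continuation of label $j$ collected from exactly those branches offering $j$ --- verbatim the union-and-per-label branching clause. In every case the successor on the subset-construction side is $\gcl{\pp}$ of the continuation set, and by the closure lemma its merge equals the merge demanded by $\mergecf$, so the successors are again $\cR$-related.

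The step I expect to be the main obstacle is the bookkeeping around $\gcl{\pp}$: a single node $\cG$ may hold global types at different depths with respect to $\pp$-involvement, and I must argue that the merge is genuinely insensitive to these redundancies and that frontier extraction commutes with taking $\gcl{\pp}$ of the successors. This is where the unfolding discipline of Definition~\ref{def:coinductiveproj}, the idempotence and monotonicity of $\gcl{\pp}$ (Lemma~\ref{lem:monotonicity_p_closure}), and the balancedness of $\G$ (ensuring the coinductive graphs are well-formed and the relation productive) must be combined carefully. Getting the branching case right --- where a label present in only some branches forces the per-label merge to range over a proper subset of $\cG$ --- is the most delicate point.
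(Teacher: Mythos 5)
Your proposal is correct and follows essentially the same route as the paper: the paper likewise shows that the family of equations $\GG(\cG) = \bigmergecf_{\G' \in \cG}\GG(\{\G'\})$ is consistent with the coinductive definition of $\mergecf$ by the same five-way case analysis on the rules \RULE{G-End}, \RULE{G-Out}, \RULE{G-In}, \RULE{G-Sel}, \RULE{G-Bra}, which is just your bisimulation $\cR$ phrased as a post-fixed point, and your closure observation is the paper's Lemma~\ref{lem:projection_and_merging_of_closure}. One small caution: do not justify the singleton graphs via ``the soundness argument behind Theorem~\ref{thm:proj:sound},'' since that theorem's proof depends on this lemma; fortunately your case analysis does not actually need that fact.
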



\begin{proof}
We show that the set of equations $\GG(\cG) = \bigmergecf_{\G' \in \cG}(\GG(\{\G'\}))$ is consistent with the coinductive definition of merging.
  
  \case $\cG \trans{\trend} \Skip$. Then for all $\G' \in \cG$, we have $\pp \notin \pt{\G'}$. Thus, $\GG(\{\G'\}) = \tend$, so the result follows.

  \case $\cG \trans{\trout{\pq}{\S}} \cG'$. Then for all $\G_i \in \cG$ involving $\pp$, we have $\unfold{\G_i} = \Gvt{\pp}{\pq}{\S}\G_i'$,
  and $\cG' = \bigcup_{i \in \I} \gcl{\pp}(\{\G_i'\})$.
  We have $\GG(\cG') = \bigmergecf_{\G' \in \cG'} \GG(\{\G'\})$. Thus, $\GG(\cG) = \bigmergecf_{\G' \in \cG} \GG(\{\G'\})$ is consistent with the definition.

  \case $\cG \trans{\trin{\pq}{\S}} \cG'$. Symmetric.

  \case $\cG \trans{\trsel{\pq}{l_j}} \cG'_j$. Then for all $\G_i \in \cG$ involving $\pp$,
  we have $\unfold{\G_i} = \GvtPair{\pp}{\pq}{l_i: \G_{ij}}{j \in \J}$.
  For all $j \in \J$, $\cG'_j = \bigcup_{i \in \I} \gcl{\pp}(\{\G_{ij}\})$.
  We have $\GG(\cG'_j) = \bigmergecf_{\G' \in \cG'_j} \GG(\{\G'\})$.
  Thus, $\GG(\cG) = \bigmergecf_{\G' \in \cG} \GG(\{\G'\})$ is consistent with the definition.

  \case $\cG \trans{\trbra{\pq}{l_j}} \cG'_j$. Then for all $\G_i \in \cG$ involving $\pp$,
  we have $\unfold{\G_i} = \GvtPair{\pq}{\pp}{l_j: \G_{ij}}{j \in \J_i}$.
  For all $j \in \bigcup_{i \in \I} \J_i$,
  $\cG'_j = \bigcup_{i \in \I, j \in \J_i} \gcl{\pp}(\{\G_{ij}\})$.
  We have $\GG(\cG'_j) = \bigmergecf_{\G' \in \cG'_j} \GG(\{\G'\})$.
  Thus, $\GG(\cG) = \bigmergecf_{\G' \in \cG} \GG(\{\G'\})$ is consistent with the definition.
\end{proof}

\thmcoprojsound* 
\begin{proof}
We show that $\cR = \{(\G', \GG(\gcl{\pp}(\{\G'\})))\ | \ \G' \in \Sub(\G)\}$ is a subset of the projection relation. 

Consider some pair $(\G', \GG(\cG))$, where $\cG = \gcl{\pp}(\{\G'\})$.
  
  \case If $\pp \notin \pt{\G'}$, then we have that $\cG \trans{\trend} \Skip$. Thus by definition, $\G' \fproj{\pp} \GG(\cG)$.
  
  \noindent Otherwise, proceed by case analysis on the shape of $\G'$.

  \case If $\unfold{\G'} = \Gvt{\pp}{\pq}{\S} \G''$, then $\cG = \{\G'\} \trans{\trout{\pq}{\S}} \gcl{\pp}(\{\G''\})$.
  We have that $\G'' \cR \GG(\gcl{\pp}(\{\G''\}))$ so $\G' \cR \GG(\cG)$ is consistent with the coinductive rules.

  \case If $\unfold{\G'} = \Gvt{\pq}{\pp}{\S} \G''$, symmetric.

  \case If $\unfold{\G'} = \Gvt{\pq}{\pr}{\S} \G''$, for $\pp \notin \{\pq, \pr\}$,
  then $\G'' \cR \GG(\gcl{\pp}(\{\G''\}))$. Note that $\gcl{\pp}(\{\G''\}) \cup \G' = \cG$,
  but $\G'$ does not involve $\pp$ so $\GG(\cG) = \GG(\gcl{\pp}(\{\G''\}))$.
  Therefore $\G' \cR \GG(\cG)$ is consistent with the coinductive rules.

  \case If $\unfold{\G'} = \GvtPair{\pp}{\pq}{l_i: \G_i}{i \in \I}$, then,
  for all $i \in \I$,
  $\cG = \{\G'\} \trans{\trsel{\pq}{l_i}} \gcl{\pp}(\{\G_i\})$.
  We have that $\G_i \cR \GG(\gcl{\pp}(\{\G_i\}))$ so $\G' \cR \GG(\cG)$ is consistent with the coinductive rules.

  \case If $\unfold{\G'} = \GvtPair{\pq}{\pp}{l_i: \G_i}{i \in \I}$, then,
  for all $i \in \I$,
  $\cG = \{\G'\} \trans{\trbra{\pq}{l_i}} \gcl{\pp}(\{\G_i\})$.
  We have that $\G_i \cR \GG(\gcl{\pp}(\{\G_i\}))$ so $\G' \cR \GG(\cG)$ is consistent with the coinductive rules.

  \case If $\unfold{\G'} = \GvtPair{\pq}{\pr}{l_i: \G_i}{i \in \I}$ for $\pp \notin \{\pq, \pr\}$,
  then, $\G_i \cR \GG(\gcl{\pp}(\{\G_i\}))$ for all $i \in \I$.
  Note that $\bigcup_{i \in \I} \gcl{\pp}(\{\G_i\}) \cup \G' = \cG$,
  but $\G'$ does not involve $\pp$ so $\GG(\cG) = \GG(\bigcup_{i \in \I} \gcl{\pp}(\{\G_i\})) = \bigmergecf_{i \in \I} \GG(\gcl{\pp})(\{\G_i\})$,
  by Lemma~\ref{lem:subset_construction_is_merging}.
  Therefore $\G' \cR \GG(\cG)$ is consistent with the coinductive rules.

  
  Thus, $\G \fproj{\pp} \GG(\gcl{\pp}(\{\G\}))$, as required.
\end{proof}

\begin{restatable}{lemma}{lemmcoprojcomplete}\label{lem:projection_and_merging_of_closure}
Let $\G$ be a balanced global type such 
that $\G \fproj{\pp} \T$ and $\gcl{\pp}(\G) = \{\G_i\}_{i \in \I}$.
Then there exists $\G_i \fproj{\pp} \T_i$ such that 
(1) $\T = \bigmergecf_{i \in \I} \T_i$; and 
(2) $\T = \bigmergecf_{k \in K} \T_k$ where 
$K=\set{i \ | \ \involve{\G_i}{\pp}, \ i \in I}$.
\end{restatable}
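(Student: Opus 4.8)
The plan is to reason entirely within the coinductive definitions of projection $\fproj{\pp}$ (Definition~\ref{def:coinductiveproj}) and of the full coinductive merge $\mergecf$ (Definition~\ref{def:coinductive_full_merge}), without appealing to the subset construction, which is not yet known to be defined in the completeness direction. First I would check that every closure element is projectable. Since $\G \in \gcl{\pp}(\{\G\})$ and $\G \fproj{\pp} \T$, I propagate $\fproj{\pp}$ along the two skip-clauses used to build $\gcl{\pp}$: if $\G_i \in \gcl{\pp}(\G)$ with $\G_i \fproj{\pp} \T_i$ and $\unfold{\G_i} = \Gvt{\pq}{\pr}{\S}\G'$ with $\pp \notin \{\pq,\pr\}$, the fourth clause of Definition~\ref{def:coinductiveproj} yields $\G' \fproj{\pp} \T_i$; if $\unfold{\G_i} = \GvtPair{\pq}{\pr}{l_j: \G_j}{j \in \J}$ with $\pp \notin \{\pq,\pr\}$, the last clause yields $\G_j \fproj{\pp} \T_j$ for each $j$, together with the \emph{expansion equation} $\T_i = \bigmergecf_{j \in \J} \T_j$. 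As each closure element is reached from $\G$ by finitely many skips, this assigns a projection $\T_i$ to every $\G_i \in \gcl{\pp}(\G)$ and records that every non-$\pp$-involving element's projection is the merge of its closure-successors' projections.

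Next I would prove part (2), $\T = \bigmergecf_{k \in K}\T_k$, by exhibiting mutual type simulations (in the sense of Definition~\ref{def:type_simulation}, yielding $\subteq$) between $\bigmergecf_{k \in K}\T_k$ and $\T$. The key observation is that, because $\G$ is balanced, skipping non-$\pp$-involving communications from $\G$ reaches $\pp$-involving global types within boundedly many steps; these are exactly the elements indexed by $K$, and $K$ is nonempty whenever $\pp \in \pt{\G}$. Moreover, since $\G \fproj{\pp} \T$ is defined, all merges performed along the skips are defined, which forces the $\unfold{\G_k}$ for $k \in K$ to agree on their action at $\pp$ (all outputs to the same $\pq$, all inputs, all selections, or all branchings). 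Merging their projections therefore performs precisely the first productive step that the coinductive projection of $\G$ itself takes, with continuations that are again closures of the same shape, so I would set up the simulation to maintain this matching coinductively, using balancedness to guarantee productivity at each step.

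Part (1) then follows from part (2): the index $\I$ merely adds the non-$\pp$-involving closure elements to the merge over $K$, and each such $\T_i$ is, by the expansion equation, the merge of projections of elements closer to a $\pp$-involving node. Using commutativity, associativity and idempotence of $\mergecf$ (note $\T \mergecf \T = \T$), each added operand is absorbed once its successors are present; ordering the additions by decreasing distance to a $\pp$-involving node, a finite and well-founded process by balancedness, shows $\bigmergecf_{i \in \I}\T_i = \bigmergecf_{k \in K}\T_k = \T$. The main obstacle is the role of balancedness: it is exactly what rules out cycles of non-$\pp$-involving communications, ensuring both that $K$ is reached within bounded depth (so the simulation in part (2) stays productive) and that the absorption argument for part (1) terminates. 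A secondary technical point is establishing the algebraic laws (associativity, commutativity, idempotence) of the \emph{coinductive} full merge, the coinductive analogues of the inductive laws noted after Definition~\ref{def:inductive_full_merging}, which I would prove by the standard coinductive bisimulation method. The degenerate case $\pp \notin \pt{\G}$, where $K = \emptyset$ and $\T = \tend$, is handled separately with the convention that all projections are $\tend$.
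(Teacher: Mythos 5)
Your proposal is correct and follows essentially the same route as the paper's (much terser) proof: propagate $\fproj{\pp}$ through the closure via the two skip clauses of Definition~\ref{def:coinductiveproj}, use balancedness to make the dependency structure of the resulting merge equations acyclic so that the projections of the $\pp$-involving elements determine all the others (giving~(2)), and then derive~(1) by absorbing the non-$\pp$-involving operands using idempotence of $\mergecf$. The extra details you supply --- the explicit simulation for~(2), the algebraic laws of the coinductive merge, and the degenerate case $\pp \notin \pt{\G}$ --- are all points the paper leaves implicit rather than points of divergence.
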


\begin{proof}
By the definition of a  balanced property and the definition of $\G \fproj{\pp} \T$,
  all global types in $\gcl{\pp}(\G)$ that do not involve $\pp$ has a projection which is the merging of
  other projections of $\gcl{\pp}(\G)$. Furthermore, the dependencies of the projections do not have a cycle.
  Thus, we may find the projections of the global types in $\gcl{\pp}(\G)$ that involve $\pp$, and merge
  them together to obtain all other projections. Thus (2) follows.
  (1) follows from the fact that merging is idempotent.
\end{proof}

\thmprojcomplete*

\begin{proof}
For contradiction, assume that $\G \fproj{\pp} \T$ but some local type $\cG \in \gproj{\pp}(\G)$ is invalid
  (i.e. its outgoing edges do not match any of the rules in the subset construction).
  Then, there exists a path in $\gproj{\pp}(\G)$ from $\gcl{\pp} (\G)$ to $\cG$.

  Without loss of generality $\cG$ is the first invalid node along the path.
  We show by induction that for all nodes $\cG' = \{\G_i\}_{i \in \I}$ along the path,
  there exists $\G_i \fproj{\pp} \T_i$ and $\T'$ such that $\T' = \bigmergecf_{i \in \I} \T_i$.

  First, the initial node $\gcl{\pp}(\G)$ contains $\G$, and $\G \fproj{\pp} \T$ by assumption.

  Now, consider some edge $\cG' \trans{\cxell} \cG''$, where $\cG' = \{\G_i'\}_{i \in \I}$,
  and assume there exists $\G_i' \fproj{\pp} \T_i'$ and $\T' \in \Sub(\T)$ such that $\T' = \bigmergecf_{i \in \I} \T_i'$.

  We analyse the possible cases for $\cxell$. Let $\I^* = \{i \in \I \:|\: \G_i' \text{ involves } \pp\}$.

  \case $\cxell = \trend$. Then, $\cG'' = \Skip$. Contradiction, as $\Skip$ is always valid.

  \case $\cxell = \trout{\pq}{\S}$. Then, $\G_i' = \Gvt{\pp}{\pq}{\S}\G_i''$ for all $i \in \I^*$
  For all such $i$, $\T_i' = \tout{\pq}{\S}\T_i''$. Thus, $\G_i'' \fproj{\pp} \T_i''$,
  and $\cG'' = \gcl{\pp}(\{\G_i''\}_{i \in \I^*})$.
  By Lemma~\ref{lem:projection_and_merging_of_closure}, $\T' = \bigmergecf_{i \in \I^*} \T_i'$.
  Thus, $\T' = \tout{\pq}{\S}\T''$ for $\T'' = \bigmergecf_{i \in \I^*} \T_i''$.
  Using Lemma~\ref{lem:projection_and_merging_of_closure} again, we get the desired result.

  \case $\cxell = \trin{\pq}{\S}$. Symmetric.

  \case $\cxell = \trsel{\pq}{l_k}$. Then, $\G_i' = \GvtPair{\pp}{\pq}{l_j: \G_{ij}}{j \in \J}$ for all $i \in \I^*$.
  For all such $i$, $\T_i' = \tsel{\pq}{l_k}\T_i''$,
  and $\cG'' = \gcl{\pp}(\{\G_{ik}\}_{i \in \I^*})$.
  By Lemma~\ref{lem:projection_and_merging_of_closure}, $\T' = \bigmergecf_{i \in \I^*} \T_i'$.
  Thus, $\T' = \tsel{\pq}{l_k}\T''$ for $\T'' = \bigmergecf_{i \in \I^*} \T_i''$.
  Using Lemma~\ref{lem:projection_and_merging_of_closure} again, we get the desired result.

  \case $\cxell = \trbra{\pq}{l_k}$. Then, $\G_i' = \GvtPair{\pp}{\pq}{l_j: \G_{ij}}{j \in \J_i}$ for all $i \in \I^*$.
  For all such $i$, $\T_i' = \tsel{\pq}{l_k}\T_i''$,
  and $\cG'' = \gcl{\pp}(\{\G_{ik}\}_{k \in \J_i, i \in \I^*})$.
  By Lemma~\ref{lem:projection_and_merging_of_closure}, $\T' = \bigmergecf_{k \in \J_i, i \in \I^*} \T_i'$.
  Thus, $\T' = \tsel{\pq}{l_k}\T''$ for $\T'' = \bigmergecf_{k \in \J_i, i \in \I^*} \T_i''$.
  Using Lemma~\ref{lem:projection_and_merging_of_closure} again, we get the desired result.

  Therefore, for $\cG = \{\G_i\}_{i \in \I}$, there exists $\G_i \fproj{\pp} \T_i$ and $\T'$ such that $\T' = \bigmergecf_{i \in \I} \T_i$.
  But there are incompatible outgoing edges from $\cG$, so the projections of $\G_i$ must have unmergeable first messages.
\end{proof}




\section{Appendix for Minimum Typing System}
\label{app:minimum}
\subsection{Typing Multiparty Sessions}
\label{app:typing} 
We recall the typing system of expressions from \cite{Ghilezan2019}. 

\begin{definition} \label{def:typingexpressions}
The typing rules for expressions are given below:
\begin{center}
  \small
\begin{spacing}{3}
\newcommand{\treesep}{.5em}





  \begin{prooftree}
    \hypo {x: \S \in \Gamma}
    \infer1[\RULE{T-SortVar}] {\Gamma \vdash x: \S}
  \end{prooftree}\hspace{\treesep}%
  \begin{prooftree}
    \infer0[\RULE{T-True}] {\Gamma \vdash \true: \tbool}
  \end{prooftree}\hspace{\treesep}%
  \begin{prooftree}
    \infer0[\RULE{T-False}] {\Gamma \vdash \false: \tbool}
  \end{prooftree}\hspace{\treesep}%
  \begin{prooftree}
    \infer0[\RULE{T-Nat}] {\Gamma \vdash \valn: \tnat}
  \end{prooftree}\hspace{\treesep}%

  \begin{prooftree}
    \infer0[\RULE{T-Int}] {\Gamma \vdash \valr: \tint}
  \end{prooftree}\hspace{\treesep}%
  \begin{prooftree}
    \hypo {\Gamma \vdash \e:\tint}
    \infer1[\RULE{T-Neg}] {\Gamma \vdash \fsqrt\e: \tint}
  \end{prooftree}
  \begin{prooftree}
    \hypo {\Gamma \vdash \e: \tbool}
    \infer1[\RULE{T-Not}] {\Gamma \vdash \neg \e: \tbool}
  \end{prooftree}\hspace{\treesep}%

  \begin{prooftree}
    \hypo {\Gamma \vdash \e_1: \tbool}
    \hypo {\Gamma \vdash \e_2: \tbool}
    \infer2[\RULE{T-Or}] {\Gamma \vdash \e_1 \vee \e_2: \tbool}
  \end{prooftree}
  \begin{prooftree}
    \hypo {\Gamma \vdash \e_1: S}
    \hypo {\Gamma \vdash \e_2: S}
    \infer2[\RULE{T-Nondet}] {\Gamma \vdash \e_1 \oplus \e_2: \S}
  \end{prooftree}
  \begin{prooftree}
    \hypo {\Gamma \vdash \e_1: \tint}
    \hypo {\Gamma \vdash \e_2: \tint}
    \infer2[\RULE{T-Add}] {\Gamma \vdash \e_1 + \e_2: \tint}
  \end{prooftree}\hspace{\treesep}%

  \end{spacing}
\end{center}

\end{definition}

\subsection{Constraint Inference Rules for Expressions}
\label{app:constraint_rules}

We explain the rules for the expressions in Figure~\ref{fig:mininfer}. 
\RULE{C-SortVar} states that a sort variable $x$ must have a matching sort with the typing context.
Rules \RULE{C-True}, \RULE{C-False}, \RULE{C-Nat}, and \RULE{C-Int} state that the expressions $\true$, $\false$, $\valn$, and $\valr$ must have the corresponding sorts.
Rule \RULE{C-Neg} states that the expression $\fsqrt\e$ and $\e$ must both have sort $\tint$.
Rule \RULE{C-Not} states that the expression $\neg \e$ and $\e$ must both have sort $\tbool$.
Similarly, \RULE{C-Or} states that all relevant expressions must have sort $\tbool$.
Rule \RULE{C-Nondet} states that the expressions $\e_1$, $\e_2$ and $\e_1 \oplus \e_2$ must have the same sort.
Rule \RULE{C-Add} states that the expressions $\e_1$, $\e_2$ and $\e_1 + \e_2$ must have sort $\tint$.


\subsection{Proofs for Soundness and Completeness of Constraint Derivations}
\label{app:subsec:soundcomplete}

\uniqueconstraintset* 
\begin{proof}
We prove by induction on the proof trees of $\Gamma \vdash \PP: \xi \mid_\cX \C$ and $\Gamma' \vdash \PP: \xi' \mid_{\cX'} \C'$.

  \case \RULE{C-End}. Then $\C = \C' = \emptyset$. Take $\sigma = \emptyset$, $\pi = \emptyset$.


  \case \RULE{C-Bra}. Then $\PP = \procbra \pp{l_i: \PP_i}_{i \in \I}$.

  Then we have proof trees for $\Gamma \vdash \PP_i: \xi_i \mid_{\cX_i} \C_i$, $\Gamma' \vdash \PP_i: \xi_i' \mid_{\cX_i'} \C_i'$, for all $i \in \I$. By the inductive hypothesis, there exist one-to-one functions $\sigma_i, \pi_i$ such that $\C_i\pi_i\sigma_i = \C_i'$. Let $\sigma = \bigcup_{i \in \I} \sigma_i \cup \{\xi \mapsto \xi'\}$ and $\pi = \bigcup_{i \in \I} \pi_i$. Then, as $\cX_i, \xi, \xi'$ are disjoint, $\sigma$ and $\pi$ are one-to-one. Furthermore, 
$\C\pi\sigma = \C'$.

  Other cases: Similar.
\end{proof}

\begin{lemma}\label{thm:exists_derivation_with_equal_domain_gamma}
\label{thm:unique_domain_gamma}
Let $\Gamma \vdash \PP: \T$ be derivable. Then there exists a derivation of $\Gamma' \vdash \PP: \xi \mid_\cX \C$ such that $\dom{\Gamma} = \dom{\Gamma'}$.
\end{lemma}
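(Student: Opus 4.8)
The final statement to prove is Lemma~\ref{thm:unique_domain_gamma} (also labelled \ref{thm:exists_derivation_with_equal_domain_gamma}): if $\Gamma \vdash \PP : \T$ is derivable in the declarative typing system of Figure~\ref{fig:processtyping}, then there is a constraint-derivation $\Gamma' \vdash \PP : \xi \mid_\cX \C$ with $\dom{\Gamma} = \dom{\Gamma'}$.

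\textbf{Approach.} The plan is to proceed by induction on the structure of $\PP$, exploiting the fact that the constraint inference rules of Definition~\ref{def:constraint_rules} are \emph{syntax-directed}: for each process constructor there is exactly one applicable rule, and that rule fires regardless of the \emph{sorts} or \emph{types} recorded in the context --- it only inspects the \emph{domain} structure (which variables $x$ and which recursion variables $X$ are in scope). This is the crucial observation: in the declarative system the context $\Gamma$ carries concrete sorts $x{:}\S$ and concrete types $X{:}\T$, whereas in the constraint system the analogous context carries only fresh variables $x{:}\alpha$ and $X{:}\xi$. Since the constraint rules never get stuck on the \emph{contents} of these bindings, a constraint derivation always exists once the \emph{domain} is fixed to match.

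\textbf{Key steps.} First I would make precise the correspondence between contexts: given declarative $\Gamma$, define the constraint context $\Gamma'$ with the same domain by replacing each $x{:}\S$ with $x{:}\alpha_x$ and each $X{:}\T$ with $X{:}\xi_X$ for fresh variables. Then I would verify, constructor by constructor, that the matching constraint rule applies and preserves the invariant $\dom{\Gamma} = \dom{\Gamma'}$ in every premise:
\begin{itemize}
  \item $\inact$, $X$: handled directly by \RULE{C-End} and \RULE{C-Var}; the latter needs $X \in \dom{\Gamma'}$, which holds since $X \in \dom{\Gamma}$ by inversion of \RULE{T-Var}.
  \item $\procin \pp x \PP'$ and $\mu X.\PP'$: rules \RULE{C-In} and \RULE{C-Rec} extend the context by one fresh binding ($x{:}\alpha$ resp.\ $X{:}\xi$); by inversion the declarative derivation extends $\Gamma$ by a binding with the same bound name, so the domains of the extended contexts still agree, and the inductive hypothesis applies to the subprocess.
  \item $\procout{\pp}{\e}\PP'$, $\procsel{\pp}{l}\PP'$, $\procbra \pp{l_i:\PP_i}_{i\in\I}$, $\cond{\e}{\PP_1}{\PP_2}$: the context is passed unchanged to each premise, so the IH applies to each subprocess (and, where needed, to the subexpressions, for which an entirely analogous statement about $\Gamma \vdash \e : \S$ versus $\Gamma \vdash \e : \alpha \mid_\cX \C$ must be established alongside).
\end{itemize}
A mild subtlety is the presence of \RULE{T-Sub} in the declarative system: it does not change $\PP$ or $\dom{\Gamma}$, so when inverting a declarative derivation I would first strip any trailing \RULE{T-Sub} applications (which are admissible to postpone, exactly as used in the proof of Theorem~\ref{thm:completeness_of_constraints}) to expose the syntax-directed rule matching the head constructor of $\PP$.

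\textbf{Main obstacle.} The proof is essentially routine once the syntax-directedness of the constraint rules is isolated; the only real care is needed in the binder cases, where I must ensure the \emph{same} bound name is added to both contexts so the domain invariant is maintained through the recursive call --- in particular relying on the alpha-conversion convention (all $\mu$-bound variables made unique, as stipulated in Definition~\ref{def:constraint_rules}) so that freshness of $\xi_X$ does not clash with existing domain entries. I expect the bookkeeping of the simultaneous expression statement to be the most tedious part rather than conceptually hard, since expressions never bind variables and their constraint rules are likewise syntax-directed on the context's domain alone.
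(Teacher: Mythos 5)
Your proposal is correct and matches the paper's proof, which is stated in one line as "matching each typing rule with the corresponding constraint rule" — precisely the syntax-directed, constructor-by-constructor correspondence you spell out. Your additional care about stripping \RULE{T-Sub} applications and maintaining the domain invariant in the binder cases is exactly the detail the paper leaves implicit.
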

\begin{proof}
By matching each typing rule with the corresponding constraint rule.  
\end{proof}

\soundminimum*

\begin{proof}
We show that if $\Gamma \vdash \PP: \xi \mid_\cX \C$ is derivable and $(\sigma, \pi)$ is a solution to $\C$,
then there is a typing derivation for $\Gamma\sigma\pi \vdash \PP: \xi\sigma\pi$
(where we define $\Gamma\sigma\pi$ to apply $\sigma\pi$ to each type in $\Gamma$).
Also, we prove this for sorts: if $\Gamma \vdash \e: \alpha \mid_\cX \C$
is derivable then $\Gamma\sigma\pi \vdash \e: \alpha\sigma\pi$.

We prove this by induction on the proof tree of the constraint derivations.

    \case \RULE{C-End}. Then we can use \RULE{T-Inact} followed by \RULE{T-Sub}.

    \case \RULE{C-Var}. Then we can use \RULE{T-Var}.
    
    \case \RULE{C-Rec}. Then by the inductive hypothesis there is a typing derivation for $\Gamma\sigma\pi, \ty: \xi\sigma\pi \vdash \PP: \xi\sigma\pi$. We can use \RULE{T-Rec} to get $\Gamma\sigma\pi \vdash \mu \ty. \PP: \xi\sigma\pi$.

    \case \RULE{C-In}. If $\sigma\pi$ is a solution to $\C'$ then it is also a solution of $\C$. By the inductive hypothesis, there is a typing derivation for $\Gamma\sigma\pi, x: \alpha\pi \vdash \PP: \psi\sigma\pi$. We also have $\tin \pp {\pi \alpha} \psi\sigma\pi \subt \xi\sigma\pi$. We can then do the following to create the required type derivation:\\[2mm]
\centerline{
    \begin{prooftree}
      \hypo {\Gamma\sigma\pi, x: \alpha\pi \vdash \PP: \psi\sigma\pi}
      \infer1[\RULE{T-In}] {\Gamma\sigma\pi \vdash \procin \pp{x} \PP: \tin \pp{S\sigma\pi } \psi\sigma\pi}
      \hypo {\tin \pp{\alpha\pi} \psi\sigma\pi \subt \xi\sigma\pi}
      \infer2[\RULE{T-Sub}] {\Gamma\sigma\pi \vdash \procin \pp{x} \PP: \xi\sigma\pi}
    \end{prooftree}
}\\[2mm]
    \case \RULE{C-Out}. We have that $\sigma\pi$ is a solution of $\C_1$ and $\C_2$. By the inductive hypothesis, there are typing derivations for $\Gamma\sigma\pi \vdash \PP: \psi\sigma\pi$ and $\Gamma\sigma\pi \vdash \e: \alpha\pi$. As $(\sigma, \pi)$ is a solution to $\C'$, $\Gamma\sigma\pi \vdash \procout \pp{\e} \PP: \xi\sigma\pi$. Proceed with the following derivation:\\[2mm]
\centerline{
    \begin{prooftree}
      \hypo {\Gamma\sigma\pi \vdash \PP: \xi\sigma\pi}
      \hypo {\Gamma\sigma\pi \vdash \e: \alpha\pi}
      \infer2[\RULE{T-Out}] {\Gamma\sigma\pi \vdash \procout \pp{\e} \PP: \tout \pp{\alpha\pi} \psi\sigma\pi}
      \hypo {\tout \pp{\pi\alpha} \psi\sigma\pi \subt \xi\sigma\pi}
      \infer2[\RULE{T-Sub}] {\Gamma\sigma\pi \vdash \procout \pp{\e} \PP: \xi\sigma\pi}
    \end{prooftree}
}\\[2mm]

    \case \RULE{C-Bra}, \RULE{C-Sel}. Similar to the above: use rules \RULE{T-Bra} then \RULE{T-Sub}, and \RULE{T-Sel} then \RULE{T-Sub} respectively.

    \case \RULE{C-Cond}. We have that $(\sigma, \pi)$ is a solution of $\C_1$, $\C_2$ and $\C_3$. By the inductive hypothesis, there are typing derivations for $\Gamma\sigma\pi \vdash \PP: \psi_1\sigma\pi$, $\Gamma\sigma\pi \vdash \PP': \psi_2\sigma\pi$ and $\Gamma\sigma\pi \vdash \e: \alpha$, with $\alpha = \tbool \in \C_3$. Thus $\alpha\pi = \tbool$.\\[2mm]
    {\small
      \begin{prooftree}
        \hypo {\Gamma\sigma\pi \vdash \PP: \psi_1\sigma\pi}
        \hypo {\psi_1\sigma\pi \subt \xi\sigma\pi}
        \infer2 {\Gamma\sigma\pi \vdash \PP: \psi\sigma\pi}
        \hypo {\Gamma\sigma\pi \vdash \PP': \psi_2\sigma\pi}
        \hypo {\psi_2\sigma\pi \subt \xi\sigma\pi}
        \infer2 {\Gamma\sigma\pi \vdash \PP': \xi\sigma\pi}
        \hypo {\Gamma\sigma\pi \vdash \e: \tbool}
        \infer3[\RULE{T-Cond}] {\Gamma\sigma\pi \vdash \cond{\e}{\PP}{\PP'}: \sigma\pi\xi}
      \end{prooftree}
    }

    \smallskip

\noindent(where \RULE{T-Sub} is used in the top subtrees).

    \case \RULE{C-SortVar}. We have $x: \alpha \in \Gamma$. Therefore we can use \RULE{T-SortVar}:\\[2mm]
    \begin{prooftree}
      \hypo {x: \alpha\pi \in \Gamma\sigma\pi}
      \infer1[\RULE{T-SortVar}] {\Gamma\sigma\pi \vdash x: \alpha\pi}
    \end{prooftree}

    \smallskip
    
    \case \RULE{C-True}. We have $\tbool = \alpha \in \C$. Then we can use \RULE{T-True}:\\[2mm]
    \begin{prooftree}
      \infer0[\RULE{T-True}] {\Gamma\sigma\pi \vdash \true: \alpha\pi}
    \end{prooftree}

    \smallskip

    \case \RULE{C-Nat}. Then we can use \RULE{T-Nat}.

    \case \RULE{C-Int}. Then we can use \RULE{T-Int}.

    \case \RULE{C-Not}. By the inductive hypothesis, there is a typing derivation for $\Gamma\sigma\pi \vdash \e: \tbool$. We can use \RULE{T-Not}, noting that $\alpha\pi = \tbool$:\\[2mm]
    \begin{prooftree}
      \hypo {\Gamma\sigma\pi \vdash \e: \tbool}
      \infer1[\RULE{T-Not}] {\Gamma\sigma\pi \vdash \neg \e: \tbool}
    \end{prooftree}\\

    \case \RULE{C-Or}, \RULE{C-Add}, \RULE{C-Neg}. Similar to the above.

    \case \RULE{C-Nondet}. We have that $(\sigma, \pi)$ is a solution of $\C$, so $\alpha_1\pi = \alpha_2\pi = \beta\pi$. By the inductive hypothesis, there are typing derivations for $\Gamma\sigma\pi \vdash \e_1: \alpha_1\pi$ and 
$\Gamma\sigma\pi \vdash \e_2: \alpha_2\pi$. We can use the following derivation:

\smallskip
    
    \begin{prooftree}
      \hypo {\Gamma\sigma\pi \vdash \e_1: \alpha_1\pi}
      \hypo {\Gamma\sigma\pi \vdash \e_2: \alpha_2\pi}
      \infer2[\RULE{T-Nondet}]{\Gamma\sigma\pi \vdash \e_1 \oplus \e_2: \beta\pi}
    \end{prooftree}
  \end{proof}

\completeminimum* 

\begin{proof}
We show that if $\Gamma \vdash \PP: \xi \mid_\cX \C$, and $\Gamma\sigma\pi \vdash \PP: \T_0$, such that $\dom{\sigma} \cap \cX = \emptyset$ and $\dom{\pi} \cap \cX = \emptyset$,
then there exists a solution $(\sigma', \pi')$ of $\C$ such that $\xi\sigma'\pi' = \T_0$, $\sigma' \minus \cX = \sigma$, and $\pi' \minus \cX = \pi$.

We also show the analogous statement for sorts,
under the same conditions as the above statement for types: if $\Gamma \vdash \e: \alpha \mid_\cX \C$, and $\Gamma\sigma\pi \vdash \e: \S$,
then there exists a solution $(\sigma', \pi')$ of $\C$ such that $\alpha\pi' = \S$, $\sigma' \minus \cX = \sigma$, and $\pi' \minus \cX = \pi$.

We do this by induction on the proof tree of the type derivation. For the following sort cases, we will consider the first rule in the derivation.

  \case \RULE{T-SortVar}. Then $\e = x$, $x: \alpha \in \Gamma\sigma\pi$. By inversion we have that \RULE{C-SortVar} was used, and so $\C = \emptyset$, $\cX = \emptyset$. Take $\sigma' = \sigma$ and $\pi' = \pi$. Then $(\sigma', \pi')$ is a solution of $\C$ because $\alpha\pi = \S$.

  \case \RULE{T-True}. Then $\e = \true$, $S = \tbool$. By inversion we have that \RULE{C-True} was used, and so $\C = \{\tbool = \alpha\}$, $\cX = \{\alpha\}$. Take $\sigma' = \sigma, \pi' = \pi \cup \{\alpha \mapsto \tbool\}$. Then $(\sigma', \pi')$ is a solution of $\C$ because $\alpha\pi = \tbool$.

  \case \RULE{T-Nondet}. Then $\e = \e_1 \oplus \e_2$, $\Gamma\sigma\pi \vdash \e_1: \S$, $\Gamma \vdash \e_2: \S$. By inversion we have that \RULE{C-Nondet} was used, and so $\Gamma \vdash \e_1: \alpha_1 \mid_{\cX_1} \C_1$, $\Gamma \vdash \e_2: \alpha_2 \mid_{\cX_2} \C_2$, $\cX = \cX_1 \dcup \cX_2 \dcup \{\beta\}$, $\C = \C_1 \cup \C_2 \cup \{\alpha_1 = \beta, \alpha_2 = \beta\}$.

  By the inductive hypothesis, there exist solutions $(\sigma_1, \pi_1)$ of $\C_1$ and $(\sigma_2, \pi_2)$ of $\C_2$ such that $\sigma_1 \minus \cX_1 = \sigma$, $\sigma_2 \minus \cX_2 = \sigma$, $\pi_1 \minus \cX_1 = \pi_1$, $\pi_2 \minus \cX_2 = \pi_2$, $\alpha_1\pi_1  = \S$ and $\pi_2 \alpha_2 = \S$. Take $\sigma' = \sigma_1 \cup \sigma_2, \pi' = \pi_1 \cup \pi_2 \cup \cup \{\beta \mapsto \S\}$. We have that $\pi'\beta = \S$ as desired.

  \case \RULE{T-False}, \RULE{T-Int}, \RULE{T-Nat}, \RULE{T-Not}, \RULE{T-Or}, \RULE{T-Add}, \RULE{T-Neg}. Similar to the above.

\medskip

For the type cases, we will consider the first non-\RULE{T-Sub} rule used in the derivation, so the tree covered is of shape:\\[2mm]
\centerline{
    \begin{prooftree}
      \hypo{...}
      \infer1[\RULE{R}] {\Gamma\sigma\pi \vdash \PP: \T_k}
      \hypo {\T_k \subt \T_{k-1}}
      \infer2 {\Gamma\sigma\pi \vdash \PP: \T_{k-1}}
      \hypo {\T_{k-1} \subt \T_{k-2}}
      \infer2{\vdots}
      \hypo {\T_1 \subt \T_0}
      \infer2{\Gamma\sigma\pi \vdash \PP: \T_0}
    \end{prooftree}
}\\[2mm]
    Thus we have that $\Gamma\sigma\pi \vdash \PP: \T_k$ is derivable and $\T_k \subt \T_0$ by transitivity of subtyping.

    We will now consider the rule $\RULE{R}$ used to derive $\Gamma\sigma\pi \vdash \PP: \T_k$.

    \case \RULE{T-Inact}. Then $\PP = \inact$, $\T_k = \tend$. By inversion we have that \RULE{C-End} was used, and so $\C = \{\tend \subt \xi\}$, $\cX = \{\xi\}$. Take $\sigma' = \sigma \cup \{\xi \mapsto \T_0\}, \pi' = \pi$. Then $(\sigma', \pi')$ is a solution of $\C$ because $\tend = \T_k \subt \T_0$.

    \case \RULE{T-Var}. Then $\PP = X$, $X: \T_k \in \Gamma\sigma\pi$. By inversion we have that \RULE{C-Var} was used, and so $\C = \emptyset$, $X: \xi \in \Gamma$.

    Therefore $\xi\sigma\pi = \T_k$. Take $\sigma' = \sigma \cup \{\xi \mapsto \T_0\}, \pi' = \pi$. Then $(\sigma', \pi')$ is a solution of $\C$ because $\T_k \subt \T_0$. Furthermore, $\pi' \sigma' \xi = \T_0$ as desired.

    \case \RULE{T-Rec}. We can assume that no \RULE{T-Sub} rule was used directly below \RULE{T-Rec}, as we could have done the subtyping directly above instead. So $\T_k = \T_0$.
    
    Then $\PP = \mu X. \PP'$, $\Gamma\sigma\pi, X: \T_k \vdash \PP': \T_k$. By inversion we have that \RULE{C-Rec} was used, and so $\Gamma, \ty: \xi \vdash \PP': \xi \mid_\cX \C$.

    By the inductive hypothesis there exists a solution $(\sigma_1, \pi_1)$ of $\C$ such that $\sigma_1 \minus \cX = \sigma$, $\pi_1 \minus \cX = \pi$, 
$\sigma_1\xi\pi_1 = \T_k$. Taking $\sigma' = \sigma_1$, $\pi' = \pi_1$ works because $\T_k = \T_0$.

    \case \RULE{T-In}. Then $\PP = \procin \pp{x} \PP'$, $\T_k = \tin \pp{S} \T'$, $\Gamma\sigma\pi, x: S \vdash \PP': \T'$. By inversion we have that \RULE{C-In} was used: $\Gamma, x: \alpha \vdash \PP': \psi \mid_{\cX'} \C'$, $\cX = \cX' \dcup \{\xi, \alpha\}$, $\C = \C' \cup \{\tin \pp{\alpha} \psi \subt \xi\}$, $\alpha, \psi$ fresh.

    Let $\sigma'' = \sigma$ and $\pi'' = \pi \dcup \{\alpha \mapsto \S\}$ and $\Gamma' = \Gamma, x: \alpha$. Then $\Gamma'\pi''\sigma'' \vdash \PP': \T'$. By the inductive hypothesis, there exists a solution $(\sigma_1, \pi_1)$ of $\C'$ such that $\sigma_1 \minus \cX' = \sigma''$, $\pi_1 \minus \cX' = \pi''$, 
$\psi\pi_1\sigma_1 = \T'$. Take $\sigma' = \sigma_1 \cup \{\xi \mapsto \T_0\}, \pi' = \pi_1$. Then $\xi\sigma'\pi' = \T_0$ and $\pi' \alpha = S$ and $\tin \pp{S} \T' = \T_k \subt \T_0$. Therefore $(\sigma', \pi')$ is a solution of $\C$ as desired.

    Also, $\sigma' \minus \cX = (\sigma_1 \cup \{\xi \mapsto \T_0\}) \minus (\cX' \dcup \{\xi, \alpha\}) = \sigma_1 \minus \cX = \sigma$ and $\pi' \minus \cX = (\pi_1 \cup \{\alpha \mapsto \S\}) \minus (\cX' \dcup \{\xi, S'\}) = \pi'' \minus \cX = \pi$, as desired.

    \case \RULE{T-Out}. Then $\PP = \procout \pp{\e} \PP'$, $\T_k = \tout \pp{S} \T'$, $\Gamma\sigma\pi \vdash \PP': \T''$ and $\Gamma\sigma\pi \vdash \e: S$. By inversion we have that \RULE{C-Out} was used, and so $\Gamma \vdash \PP': \psi \mid_{\cX_1} \C_1$, $\Gamma \vdash \e: \alpha \mid_{\cX_2} \C_2$, $\cX = \cX_1 \dcup \cX_2 \dcup \{\xi\}$, $\C = \C_1 \cup \C_2 \cup \{\tout{\pp}{\alpha} \psi \subt \xi\}$.

    By the inductive hypothesis, there exist solutions $(\sigma_1, \pi_1)$ of $\C_1$ and $(\sigma_2, \pi_2)$ of $\C_2$ such that $\sigma_1 \minus \cX_1 = \sigma$, $\sigma_2 \minus \cX_2 = \sigma$, $\sigma_1\psi\pi_1 = \T'$, $\pi_2 \alpha = S$. Take $\sigma' = \sigma_1 \cup \sigma_2 \cup \{\xi \mapsto \T_0\}$ and $\pi' = \pi$. $\sigma'$ is a valid mapping because $\sigma_1$ and $\sigma_2$ only differ on $\cX_1$ and $\cX_2$ (which are disjoint due to freshness), and $\xi$ is fresh.

    We have that $\xi\sigma'\pi' = \T_0$, and $\tout \pp{S} \T' = \T_k \subt \T_0$. Therefore $(\sigma', \pi')$ is a solution of $\C$ as desired. Also, $\sigma' \minus \cX = (\sigma_1 \minus \cX_1) \cup (\sigma_2 \minus \cX_2) = \sigma$ and $\pi' \minus \cX = (\pi_1 \minus \cX_1) \cup (\pi_2 \minus \cX_2) = \pi$ by freshness of $\xi, \cX_1, \cX_2$.

    \case \RULE{T-Bra}, \RULE{T-Sel}, \RULE{T-Cond}. Similar to the above.
\end{proof}




\subsection{Appendix for Solving the Minimum Type}
\label{app:solveminimum}



To more succinctly represent the rules for minimum types, we define the following notation, which represents all constraints with type variable $\xi$ on the right hand side:

\begin{definition}\label{def:dependencies}\
Define the \emph{dependencies} of $\set{\xi_i}_{i\in I}$ in $\C$
as $\dep_\C(\set{\xi_i}_{i\in I})=
\{\T_i \mid \T_i \subt \xi_i \in \C, \ i\in I\}$. 
\end{definition}

\begin{definition}\label{def:mapping}\
Let $\cS,\cS_1, \dots$ range over a non-empty set of type variables. Assume  
$\Gmin{\C}$ is a minimum type graph. 
We write $\gnode{\cS}_{\Gmin{\C}}$ for a type representation 
of a type graph $\Gmin{\C}(\cS)$. We often denote 
$\gnode{\cS}$ for $\gnode{\cS}_{\Gmin{\C}}$ if ${\Gmin{\C}}$ is clear from the context. 
\end{definition}

\subsetsubtyping*


  \begin{proof}
We restate the lemma using the notation defined in Definition~\ref{def:mapping}.

\begin{claim}
Let $\Gmin{\C}$ be a minimum type graph. Let $\pi$ be a solution of $\Csort$. If $\cS_1 \subseteq \cS_2 \subseteq \cS_3$ 
where $\cS_3$ is a node in $\Gmin{\C}$, 
then $\gnode{\cS_1}\pi \subt \gnode{\cS_2}\pi$.
\end{claim}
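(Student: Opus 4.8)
The plan is to prove the claim by exhibiting an explicit type simulation relation and invoking Lemma~\ref{thm:subtyping_as_simulation} (subtyping as simulation). The key insight is that set inclusion of type-variable nodes directly tracks the subtyping direction, so I would define the candidate relation
\[
\cR = \{(\gnode{\cS}\pi, \gnode{\cS'}\pi) \mid \cS \subseteq \cS' \subseteq \cS_3,\ \cS_3 \text{ a node in } \Gmin{\C}\}
\]
and show $\cR$ is a type simulation in the sense of Definition~\ref{def:type_simulation}. Since $\subtsim$ is the largest type simulation, membership $(\gnode{\cS_1}\pi, \gnode{\cS_2}\pi) \in \cR$ would immediately give $\gnode{\cS_1}\pi \subtsim \gnode{\cS_2}\pi$, hence $\gnode{\cS_1}\pi \subt \gnode{\cS_2}\pi$.

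To verify $\cR$ is a type simulation, I would proceed by case analysis on which transition rule (\RULE{M-End}, \RULE{M-IO}, \RULE{M-Sel}, \RULE{M-Bra}) applies at the node $\cS$ (equivalently $\cS'$), using Lemma~\ref{lem:possibleconstraints} to enumerate the possible constraint shapes. The crucial observation is monotonicity of the transition rules with respect to set inclusion: because $\cS \subseteq \cS'$, any constraint of the form $\T \subt \psi_k$ required for a transition out of the larger set $\cS'$ is also satisfied by the corresponding (sub)collection in $\cS$. The two directions of Definition~\ref{def:type_simulation} then need:
\begin{itemize}
  \item For $\gell \in \{\trsel\pp l, \trend, \trin{\pp}{S}, \trout{\pp}{S}\}$: every $\gell$-transition from $\gnode{\cS}\pi$ must be matched by $\gnode{\cS'}\pi$. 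For selection (\RULE{M-Sel}) this is where covariance appears: the edge set of $\cS$ is indexed by $\bigcup_{i}J_i$ over the variables in $\cS$, which is a \emph{subset} of the union over the larger $\cS'$; hence the smaller node has \emph{fewer} selection edges, all present in the larger node, and the successors again satisfy the inclusion invariant.
  \item For $\gell \in \{\trbra\pp l, \trend\}$: every such transition from $\gnode{\cS'}\pi$ must be matched by $\gnode{\cS}\pi$. For branching (\RULE{M-Bra}) this is contravariance: the edges are indexed by $\bigcap_{i}J_i$, and intersecting over the larger variable set $\cS'$ gives \emph{fewer} branches than intersecting over the smaller $\cS$; so every branch of the larger node is a branch of the smaller node, with successors preserving the invariant.
\end{itemize}
In each case I must check that the successor nodes $\cS'', \cS'''$ reached by matching transitions again satisfy $\cS'' \subseteq \cS''' \subseteq (\text{some node of } \Gmin{\C})$, so that $\cR$ is closed; this follows because the successor of a node is determined by the same index set on both sides, and the inclusion is preserved branch-by-branch.

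The main obstacle will be handling the input/output case (\RULE{M-IO}) together with the sort substitution $\pi$. Here the transition labels carry sort variables $\alpha$, and the constraint $\Csort$ forces $\alpha = \beta_i$ for all relevant $i$; I must confirm that after applying a solution $\pi$ of $\Csort$, the matched edges on both sides carry the \emph{same} concrete sort, so that the simulation condition (which requires identical labels $\trin{\pp}{S}$, $\trout{\pp}{S}$) genuinely holds rather than merely up to sort-variable renaming. This requires carefully tracking that the fresh sort variables generated for $\cS$ and $\cS'$ are identified by $\Csort$ through the shared underlying constraints $\beta_i$, so that $\pi$ collapses them consistently. Once this sort-coherence is established, the remaining cases are routine monotonicity arguments, and the reformulation back to the lemma statement ($\T_1\pi \subt \T_2\pi$ via the correspondence between $\Gmin{\C}(\cS)$ and its type representation) is immediate.
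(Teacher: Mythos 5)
Your proposal is correct and follows essentially the same route as the paper: the paper's proof also defines the relation $\{(\gnode{\cS}\pi, \gnode{\cS'}\pi) \mid \cS \subseteq \cS'\} \cup \{(\kf{Skip},\kf{Skip})\}$ and verifies it is a type simulation by the same case analysis on the transition rules, using $\dep_\C(\cS_1)\subseteq\dep_\C(\cS_2)$ for the union/intersection monotonicity of selection and branching, and the shared constraints $\alpha=\beta_i$ in $\Csort$ to get label agreement after applying $\pi$ in the input/output case. The sort-coherence point you flag as the main obstacle is resolved in the paper exactly as you anticipate.
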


Each type variable in the set $\cS$ specifies some set of behaviours that must be followed in node $\gnode{\cS}$: thus $\gnode{\cS_1}$ is less constrained.  
    Formally, we show that $\sim\: = \{(\gnode{\cS}\pi, \gnode{\cS'}\pi) \mid \cS \subseteq \cS'\} \cup \{(\kf{Skip}, \kf{Skip})\}$ is a type simulation.

    \begin{itemize}[leftmargin=*]
      \item
      If $\T_i = \tin{\pp}{\beta_i} \xi_i$ for all $\T_i$ in $\dep_\C(\cS_1)$,
      then $\gnode{\cS_1}\pi \trans{\trin{\pp}{\alpha_1\pi}} \gnode{\set{\xi_i \mid \tin{\pp}{\beta_i}\xi_i \in \dep_\C(\cS_1)}}\pi$,
      and $\beta_i = \alpha_1 \in \Csort$.
      Then, as $\dep_\C(\cS_1) \subseteq \dep_\C(\cS_2)$,
      we have that
      $\gnode{\cS_2}\pi \trans{\trin{\pp}{\alpha_2\pi}} \gnode{\set{\xi_j \mid \tin{\pp}{\beta_j}\xi_j \in \dep_\C(\cS_2)}}\pi$,
      and $\beta_j = \alpha_2 \in \Csort$.
      Thus we have $\alpha_1\pi = \alpha_2\pi = \beta_j\pi$ for all $\tin{\pp}{\beta_j}\xi_j \in \dep_\C(\cS_2)$.
      Also, $\gnode{\set{\xi_i \mid \T_i \in \dep_\C(\cS_1)}}\pi \sim \gnode{\set{\xi_j \mid \T_j \in \dep_\C(\cS_2)}}\pi$, so the type simulation rules are followed.
      \item $\T_j = \tout{\pp}{\beta_i} \xi_j$ for all $\T_j$ in $\dep_\C(\cS_1)$ is symmetric.

      \item
      If $\T_k = \tbra{\pp}{l_j: \xi_{kj}}_{j \in \J_k}$ for all 
$\T_k$ in $\dep_\C(\cS_2)$,
      then $\gnode{\cS_2}\pi \trans{\trbra{\pp}{l_j}} 
\gnode{\set{\xi_{kj} \mid  \tbra{\pp}{l_j: \xi_{kj}}_{j \in \J_k} 
       \in \dep_\C(\cS_2)}}\pi$.
      Thus $j \in \bigcap_k \J_k$, so
      for all $\T_i$ in $\dep_\C(\cS_1)$, $j \in \J_i$.
      Hence $\gnode{\cS_1}\pi \trans{\trbra{\pp}{l_j}} 
\gnode{\set{\xi_{ij} \mid \tbra{\pp}{l_j: \xi_{ij}}_{j \in \J_i} \in \dep_\C(\cS_1)}}\pi$.
      Also, $\gnode{\set{\xi_j \mid \T_j \in \dep_\C(\cS_1)}}\pi
 \sim \gnode{\set{\xi_i \mid \T_i \in \dep_\C(\cS_2)}}\pi$.

      \item
      If $\T_i = \tsel{\pp}{l_j: \xi_{ij}}_{j \in \J_i}$ for all $\T_i$ in $\dep_C(\cS_1)$,
      then $\gnode{\cS_1}\pi \trans{\trsel{\pp}{l_j}} \gnode{\set{\xi_{ij} \mid \tsel{\pp}{l_j: \xi_{ij}}_{j \in \J_i} \in \dep_\C(\cS_1)}}\pi$.
      Thus $j \in \bigcup \J_i$, so
      there exists $\T_k$ in $\dep_\C(\cS_2)$ such that $j \in \J_k$.
      Hence $\gnode{\cS_2}\pi \trans{\trsel{\pp}{l_j}} \gnode{\set{\xi_{kj} \mid \T_k \in \dep_\C(\cS_2)}}\pi$.
      Also, $\gnode{\set{\xi_j \mid \T_j \in \dep_\C(\cS_1)}}\pi \sim \gnode{\set{\xi_k \mid \tsel{\pp}{l_j: \xi_{ik}}_{j \in \J_k} \in \dep_\C(\cS_2)}}\pi$.

      \item If $\T_i = \tend$ for all $\T_i$ in $\dep_\C(\cS_1)$, then $\gnode{\cS_1}\pi \trans{\trend} \kf{Skip}$. Then there exists $\T_j = \tend$ in $\dep_\C(\cS_2)$, so $\gnode{\cS_2}\pi \trans{\trend} \kf{Skip}$.
    \end{itemize}

    Thus $\sim$ is a type simulation.
  \end{proof}

\theminsound*

\begin{proof}
    All sort constraints are satisfied because $\pi$ solves $\Csort$ which is a superset of the sort constraints of $\C$. We will show that all type constraints are satisfied.

    \case $\tend \subt \xi$.
    Then $\tend\sigma\pi = \tend$. By Definition~\ref{def:minimum_type_graph_of_process},
    $\gnode{\set{\xi}} \trans{\trend}$ so $\xi\sigma\pi = \gnode{\set{\xi}}\pi = \tend$. Thus $\tend\sigma\pi \subt \xi\sigma\pi$.

    \case $\tin{\pp}{\alpha} \xi \leq \psi$.
    Then $(\tin{\pp}{\alpha} \xi)\sigma\pi = \tin{\pp}{\alpha\pi} \gnode{\set{\xi}}\pi \trans{\trin{\pp}{\alpha\pi}} \gnode{\set{\xi}}\pi$.
    We have $\psi\sigma\pi = \gnode{\set{\psi}}\pi \trans{\trin{\pp}{\beta\pi}} \gnode{\cS}\pi$,
    where $\xi \in \cS$ and $\alpha = \beta \in \Csort$, so $\alpha\pi = \beta\pi$.
    Thus, by Lemma \ref{def:subsets_are_subtypes}, $\gnode{\set{\xi}} \subt \gnode{\cS}$.
    So, by definition of subtyping, $(\tin{\pp}{\alpha} \xi)\sigma\pi \subt \psi\sigma\pi$.

    \case $\tout{\pp}{\alpha} \xi \leq \psi$. Symmetric.

    \case $\tselsub{\pp}{l_i: \xi_i}{i \in \I} \leq \psi$.
    Then $(\tselsub{\pp}{l_i: \xi_i}{i \in \I}) \sigma\pi = \tselsub{\pp}{l_i: \gnode{\set{\xi_i}}\pi}{i \in \I} \trans{\trsel{\pp}{l_j}} \gnode{\set{\xi_j}}\pi$,
    We have $\psi\sigma\pi = \gnode{\set{\psi}}\pi \trans{\trsel{\pp}{l_j}} \gnode{\cS}\pi$,
    where $\xi_j \in \cS$ by Definition~\ref{def:minimum_type_graph_of_process}.
    Thus, by Lemma~\ref{def:subsets_are_subtypes}, $\gnode{\set{\xi}} \subt \gnode{\cS}$.
    So, by definition of subtyping, $(\tselsub{\pp}{l_i: \xi_i}{i \in \I}) \sigma\pi \subt \psi\sigma\pi$.

    \case $\tbrasub{\pp}{l_i: \xi_i}{i \in \I} \leq \psi$.
    We have $\psi\sigma\pi = \gnode{\set{\psi}}\pi \trans{\trsel{\pp}{l_j}} \gnode{\cS}\pi$,
    where $\xi_j \subseteq \cS$ by Definition~\ref{def:minimum_type_graph_of_process}.
    Thus, $(\tbrasub{\pp}{l_i: \xi_i}{i \in \I}) \sigma\pi = \tbrasub{\pp}{l_i: \gnode{\set{\xi_i}}\pi}{i \in \I} \trans{\trbra{\pp}{l_j}} \gnode{\set{\xi_j}}\pi$,
    and by Lemma~\ref{def:subsets_are_subtypes}, $\gnode{\set{\xi}} \subt \gnode{\cS}$.
    So, by definition of subtyping, $(\tbrasub{\pp}{l_i: \xi_i}{i \in \I}) \sigma\pi \subt \psi\sigma\pi$.

\end{proof}

We write $\Gmin{\C}\pi$ to be the graph with $\pi$ applied to all transitions.

\begin{definition}[Matching simulation]\label{def:matching_simulation}
Let $\Gmin{\C}$ be a minimmum type graph and $\T$ be a type.
We define a simulation $\psubtpara{\Gmin{\C}}{\T}$
between a type
represented in $\Gmin{\C}$ and $\Sub(\T)$ as the least relation
satisfying the following rules. \\[1mm]
{\small
\mbox{
      \begin{prooftree}
         \hypo{-}
         \infer1[\RULE{$\psubt$-Init}]{\gnode{\{\xi\}}
            \psubtpara{\Gmin{\C}}{\T}
         \T}
        \end{prooftree}
      \
     \begin{prooftree}
       \hypo{\gnode{\cS_1} \trans{\trdag{\pp}{\beta}} \gnode{\cS_2}
       \quad   
       \T_1 \trans{\trdag{\pp}{S}} \T_2
       \quad   
        \gnode{\cS_1}
        \psubtpara{\Gmin{\C}}{\T}
        \T_1
       \quad \dagger\in \{!,?\}}
     \infer1[\RULE{$\psubt$-IO}]{\gnode{\cS_2}
       \psubt_{\Gmin{\C}, \T}
       \T_2}
     \end{prooftree}
}\\[1mm]
     \mbox{
     \begin{prooftree}
        \hypo{\gnode{\cS} \trans{\trend} \Skip}
        \infer1[\RULE{$\psubt$-End}]{\gnode{\cS}
          \psubtpara{\Gmin{\C}}{\T}
          \tend}
           \end{prooftree}
      \quad         
           \begin{prooftree}
           \hypo{\gnode{\cS_1} \trans{\trbradag{\pp}{l}}
               \gnode{\cS_2}
               \quad 
                 \T_1 \trans{\trbradag{\pp}{l}} \T_2
\quad
                 \gnode{\cS_1}
            \psubtpara{\Gmin{\C}}{\T}
            \T_1
             \quad 
             \dagger\in \{\oplus,\&\}}
             \infer1[\RULE{$\psubt$-BS}]{\gnode{\cS_2}
              \psubtpara{\Gmin{\C}}{\T}
             \T_2}
            \end{prooftree}
      }
}
\end{definition}
Our goal is to prove that $\psubt_{\Gmin{\C}, \T}$ is a type simulation, when a solution for $\Csort$ is substituted to the nodes of the minimum type graph.
Here $\gnode{\cS} \psubt_{\Gmin{\C}, \T} \T'$ means ``$\T'$ must respect the behaviours
of all $\psi \in \cS$''. Hence $\gnode{\set{\xi}} \psubt_{\Gmin{\C}, \T} \T$ means that
$\T$ must respect the behaviour of $\xi$, which is the type variable
corresponding to the type of $P$.

\begin{lemma}\label{lem:simsound:a}
  Assume $\vdash \PP: \xi \mid_\cX \C'$ and $\C = \tr(\C')$.
  Let $(\sigma, \pi)$ be a solution to $\C$.
  Let $\Gmin{\C}$ and
  $\Csort$ be the minimum type graph and sort constraint
  generated
  by Definition \ref{def:minimum_type_graph_of_process}.
  Let $\T = \xi\sigma\pi$.
  For all $\cS \in \Gmin{\C}$ and $\T' \in \Sub(\T)$, if $\gnode{\cS}
  \psubt_{\Gmin{\C}, \T} \T'$ then $\psi\sigma\pi
  \subt \T'$ for all $\psi \in \cS$.
\end{lemma}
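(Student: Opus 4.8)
The statement to prove, Lemma~\ref{lem:simsound:a}, asserts a soundness property of the matching simulation $\psubt_{\Gmin{\C}, \T}$: whenever a node $\gnode{\cS}$ of the minimum type graph matches a subformula $\T'$ of $\T = \xi\sigma\pi$, every type variable $\psi \in \cS$ is instantiated by the solution $(\sigma,\pi)$ to a subtype of $\T'$. The natural approach is rule induction on the derivation of $\gnode{\cS} \psubt_{\Gmin{\C}, \T} \T'$, since $\psubt_{\Gmin{\C},\T}$ is defined as the least relation closed under the four rules $\RULE{$\psubt$-Init}$, $\RULE{$\psubt$-IO}$, $\RULE{$\psubt$-End}$ and $\RULE{$\psubt$-BS}$. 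The plan is to show that the property ``$\forall \psi \in \cS.\ \psi\sigma\pi \subt \T'$'' is preserved by each rule.

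The base case is $\RULE{$\psubt$-Init}$, where $\cS = \{\xi\}$ and $\T' = \T = \xi\sigma\pi$; then $\xi\sigma\pi \subt \xi\sigma\pi$ holds by reflexivity of $\subt$, so the case is immediate. For the inductive cases I would proceed by analysing how an edge $\gnode{\cS_1} \trans{\gell} \gnode{\cS_2}$ arises from the transition rules of Definition~\ref{def:minimum_type_graph_of_process} and unfold the meaning of the induction hypothesis $\gnode{\cS_1} \psubt_{\Gmin{\C},\T} \T_1$, namely $\psi\sigma\pi \subt \T_1$ for all $\psi \in \cS_1$. Consider $\RULE{$\psubt$-IO}$ with $\dagger = {?}$ (input): here $\cS_2 = \{\xi_i \mid \tin{\pp}{\beta_i}\xi_i \in \dep_\C(\cS_1)\}$ and the edge $\gnode{\cS_1} \trans{\trin{\pp}{\beta}} \gnode{\cS_2}$ arises from rule $\RULE{M-IO}$, which requires $\tin{\pp}{\beta_i}\xi_i \subt \psi \in \C$ for each $\psi \in \cS_1$, adding $\alpha = \beta_i$ to $\Csort$. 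For each $\xi_i \in \cS_2$, the corresponding constraint $\tin{\pp}{\beta_i}\xi_i \subt \psi$ lies in $\C$ and is therefore satisfied by $(\sigma,\pi)$; combined with $\psi\sigma\pi \subt \T_1$ from the IH, the matching transition $\T_1 \trans{\trin{\pp}{S}} \T_2$, and the definition of $\subt$ on inputs (Definition~\ref{def:subtyping_local_types}), I would extract $\xi_i\sigma\pi \subt \T_2$. The output, selection and branching cases follow the same pattern, invoking the covariance/contravariance clauses of $\subt$; the sort-matching $\alpha\pi = \beta_i\pi$ guaranteed by $\pi$ solving $\Csort$ ensures the payload sorts agree. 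The $\RULE{$\psubt$-End}$ case uses $\RULE{M-End}$, giving $\tend \subt \psi \in \C$, hence $\tend \subt \psi\sigma\pi = \T' = \tend$.

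The main obstacle I anticipate is handling the \emph{branching} case in $\RULE{$\psubt$-BS}$ correctly, because branching subtyping is contravariant: a node $\cS_1$ implementing several branching behaviours simultaneously transitions only on labels in the \emph{intersection} of the offered label sets (rule $\RULE{M-Bra}$), so I must verify that whenever $\T_1 \trans{\trbra{\pp}{l}} \T_2$ the label $l$ is indeed available from $\gnode{\cS_2}$ for every $\psi \in \cS_1$, and that $\subt$'s branching clause $\J \subseteq \I$ is respected when passing to the matched successors. The subtlety is to track which index set each $\psi$ contributes and to ensure the intersection/union bookkeeping lines up precisely with the direction of the subtyping inequality; this requires carefully relating $\dep_\C(\cS_1)$ to the transitions and invoking Lemma~\ref{lem:possibleconstraints} to know each constraint has the expected shape. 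I expect this lemma to be paired with a dual completeness lemma so that together they establish that $\psubt_{\Gmin{\C},\T}$ restricted to the reachable nodes is a type simulation witnessing $\Tmin\pi'' \subt \T$, completing the proof of Theorem~\ref{thm:minimum_type_of_process}.
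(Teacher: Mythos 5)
Your proposal is correct and follows essentially the same route as the paper's proof: rule induction on the derivation of $\gnode{\cS} \psubt_{\Gmin{\C},\T} \T'$, with the base case discharged by $\RULE{$\psubt$-Init}$ giving $\xi\sigma\pi = \T$, and each inductive case resolved by locating the constraint in $\C$ that justifies the edge in $\Gmin{\C}$ (via the $\RULE{M-\dots}$ rules), applying the solution property of $(\sigma,\pi)$, and then the corresponding clause of Definition~\ref{def:subtyping_local_types}. Your flagged concern about the contravariant branching/covariant selection bookkeeping is exactly where the paper also concentrates its case analysis, so no gap remains.
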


\begin{proof}
  By fixing $\Gmin{\C}$ and $\T$, we write $\psubt$ for
  $\psubtpara{\Gmin{\C}}{\T}$. 
  
  We prove by induction on the proof trees of $\gnode{\cS} \psubt \T'$.

  \case \RULE{Sim-Init}. We have $\xi\sigma\pi = \T$ by definition.

  \case \RULE{Sim-IO}. We prove the case for the input. The output
  is identical. By the inductive hypothesis we have $\psi\sigma\pi \subt \T_1$ for all $\psi$ in $\cS_1$.
  Let $\psi_2 \in \cS_2$. By definition of $\Gmin{\C}$, there exists $\psi_1 \in \cS_1$ such that $\tin{\pp}{\alpha}\psi_2 \subt \psi_1 \in \C$.
  So $\psi_1\sigma\pi \subt \T_1$ and $\psi_1\sigma\pi \trans{\trin{\pp}{\alpha\pi}} \psi_2\sigma\pi$.
  By definition of subtyping, we conclude that $\psi_2\sigma\pi \subt \T_2$.

  Furthermore, for all $\tin\pp{\alpha^i}{\psi_2^i} \in \dep_\C(\cS_1)$, we have that $\tin\pp{\alpha^i\pi} \psi_2^i\sigma\pi \subt \psi_1\sigma\pi \subt \T_1$ for some $\psi_1 \in \cS_1$.
  Also, $\alpha^i = \beta \in \Csort$ for all $i$.


  \case \RULE{Sim-SB}. We prove the case for the selection. By the inductive hypothesis we have $\psi\sigma\pi \subt \T_1$ for all $\psi \in \cS_1$.
  Let $\psi_2 \in \cS_2$. By definition of $\Gmin{\C}$, there exists $\psi_1 \in \cS_1$ such that $\tsel{\pp}{l: \psi_2} \subt \psi_1 \in \C$.
  Therefore, $\psi_1\sigma\pi \subt \T_1$ and $\psi_1\sigma\pi \trans{\trsel{\pp}{l}} \psi_2\sigma\pi$.
  By definition of subtyping, $\psi_2\sigma\pi \subt \T_2$.


  \case \RULE{Sim-End}. By definition of $\Gmin{\C}$, for $\psi \in \cS$, $\psi \subt \tend \in \C$. Thus $\psi\sigma\pi = \tend$.
\end{proof}

\begin{lemma}\label{lem:simsound:b}
  Assume $\vdash \PP: \xi \mid_\cX \C'$ and $\C = \tr(\C')$.
  Let $(\sigma, \pi)$ be a solution to $\C$.
  Let $\Gmin{\C}$ and
  $\Csort$ be the minimum type graph and sort constraint
  generated
  by Definition \ref{def:minimum_type_graph_of_process}.
  Let $\T = \xi\sigma\pi$.
  Assume $\pi'$ is the most general solution of
  $\Csort$. 
  Then there exists $\pi''$ such that $\psubts \:= \{(\gnode{\cS}\pi'\pi'', \T') \mid \gnode{\cS} \psubt_{\Gmin{\C}, \T} \T'\}$ is a type simulation.
\end{lemma}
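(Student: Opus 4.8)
The plan is to exhibit an explicit $\pi''$ and then verify that $\psubts$ satisfies the two clauses of Definition~\ref{def:type_simulation}. Since $\pi'$ is a most general solution of $\Csort$ and $(\sigma,\pi)$ is a particular solution (note $\Csort$ contains all sort equalities of $\C$, so $\pi$ solves $\Csort$), generality gives a $\pi''$ with $\alpha\pi = \alpha\pi'\pi''$ for all $\alpha\in\Sv$. This is the $\pi''$ I would use. The point of this choice is that it reconciles the ``generic'' sort variables appearing on the transitions of $\Gmin{\C}\pi'$ (introduced freshly by rule \RULE{M-IO}) with the concrete sorts that $(\sigma,\pi)$ assigns; after applying $\pi''$ the two agree, so that $\gnode{\cS}\pi'\pi''$ and the types $\psi\sigma\pi$ carry matching payload sorts.

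First I would record the key bridge: by Lemma~\ref{lem:simsound:a}, whenever $\gnode{\cS}\psubt_{\Gmin{\C},\T}\T'$ we have $\psi\sigma\pi\subt\T'$ for every $\psi\in\cS$. Combined with Lemma~\ref{def:subsets_are_subtypes} (subsets as subtyping) this lets me relate the graph node $\gnode{\cS}$ to the behaviour forced on $\T'$. The heart of the argument is then a case analysis on the transitions available from the pair $(\gnode{\cS}\pi'\pi'',\T')$, checking the two simulation clauses. For the ``covariant'' actions $\gell\in\{\trsel\pp l,\trend,\trin\pp S,\trout\pp S\}$, I must show every move of $\gnode{\cS}\pi'\pi''$ is matched by $\T'$ with related residuals; this is exactly where the inference rule \RULE{$\psubt$-IO}, \RULE{$\psubt$-SB}, \RULE{$\psubt$-End} let me build the required derivation $\gnode{\cS'}\psubt_{\Gmin{\C},\T}\T''$ so the residual pair lands back in $\psubts$. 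For the ``contravariant'' actions $\gell\in\{\trbra\pp l,\trend\}$, I must show every move of $\T'$ is matched by $\gnode{\cS}\pi'\pi''$; here I use that $\gnode{\cS}$ is built from the \emph{intersection} of label sets (rule \RULE{M-Bra}) while $\T'$ is a supertype offering fewer branches, and invoke Lemma~\ref{lem:simsound:a} to guarantee the label offered by $\T'$ is present in the graph node. Throughout, the sort-tagged cases require $\alpha\pi'\pi''=\alpha\pi$, which is precisely the defining property of $\pi''$, to see that the payload sorts on the graph transition and on the $\T'$ transition coincide.

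The main obstacle I anticipate is the branching/input interaction in the contravariant clause: I must argue that for each branch label $l$ offered by $\T'$, the node $\gnode{\cS}$ (taken over the intersection of the $J_i$ from the constraints) does offer $l$, and that the residual node again relates by $\psubt_{\Gmin{\C},\T}$ to the residual of $\T'$. This needs Lemma~\ref{lem:simsound:a} applied carefully to extract, for \emph{each} $\psi\in\cS$, a branching constraint whose label set contains $l$, so that $l$ survives the intersection; then rule \RULE{$\psubt$-BS} reconstructs the derivation witnessing membership of the residual pair in $\psubts$. A secondary subtlety is the well-definedness of $\gnode{\cS}\pi'\pi''$ as a genuine type graph (Definition~\ref{def:typegraph}) rather than an invalid graph; this follows because $\C$ is assumed solvable, so the minimum type graph is defined by Theorem~\ref{thm:minimum_type_of_process}'s hypotheses, but I would state it explicitly to avoid a gap.

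Once both clauses are discharged in every case, $\psubts$ is a type simulation by Definition~\ref{def:type_simulation}, completing the proof. In particular, instantiating at the initial pair gives $\gnode{\{\xi\}}\pi'\pi''\subtsim\T=\xi\sigma\pi$, i.e. $\Tmin\pi''\subt\T$, which is the inequality needed back in the proof of Theorem~\ref{thm:minimum_type_of_process}.
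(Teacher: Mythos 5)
Your proposal is correct and follows essentially the same route as the paper's proof: you obtain $\pi''$ from the most-generality of $\pi'$ exactly as the paper does, use Lemma~\ref{lem:simsound:a} as the bridge from $\gnode{\cS}\psubt_{\Gmin{\C},\T}\T'$ to $\psi\sigma\pi\subt\T'$, and then discharge the covariant and contravariant clauses of Definition~\ref{def:type_simulation} by the same case analysis (including the key observation that contravariance of branching together with Lemma~\ref{lem:simsound:a} forces each label offered by $\T'$ to survive the intersection in \RULE{M-Bra}). The invocation of Lemma~\ref{def:subsets_are_subtypes} is not actually needed here, but it does no harm.
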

\begin{proof}
  As $\pi'$ is a most general solution on $\Csort$, there exists $\pi''$ such that $\alpha\pi = \alpha\pi'\pi''$ for all $\alpha \in \Sv$.
  Let $\gnode{\cS}\pi'\pi'' \psubts \T'$, so $\gnode{\cS} \psubt \T'$. We check the conditions for type simulation:
  
  \begin{itemize}
    \item If $\gnode{\cS} \trans{\trend}$, choose an arbitrary $\psi \in \cS$. Then by Definition \ref{def:minimum_type_graph_of_process}, $\tend \subt \psi \in \C$.
    By Lemma \ref{lem:simsound:a}, $\psi\sigma\pi \subt \T'$. Therefore, $\tend \subt \T'$, i.e. $\T' \trans{\trend}$.

    \item If $\gnode{\cS} \trans{\trin{\pp}{\beta}} \gnode{\cS_1}$, then by Definition \ref{def:minimum_type_graph_of_process},
    $\tin{\pp}{\alpha_i}\xi_i \subt \psi_i \in \C$ for all $\psi_i \in \cS_1$,
    with $\beta = \alpha_i \in \C$.
    By Lemma \ref{lem:simsound:a}, $\psi_i\sigma\pi \subt \T'$.
    Therefore, $\T' \trans{\trin{\pp}{\S}} \T_1$ for some $\T_1$, where $\S = \alpha_i\pi = \alpha_i\pi'\pi'' = \beta\pi'\pi''$
    (because $\pi'$ solves $\Csort$, which contains all sort constraints of $\C$).
    By definition of $\psubt$, $\gnode{\cS_1}\pi'\pi'' \psubts \T_1$.

    \item If $\gnode{\cS} \trans{\trout{\pp}{\beta}} \gnode{\cS_1}$: symmetric.

    \item If $\T' \trans{\trbra{\pp}{l_k}} \T_1$, then by Lemma \ref{lem:simsound},
    $\psi\sigma\pi \subt \T'$ for all $\psi \in \cS$. Therefore, by definition of subtyping,
    $\psi\sigma\pi \trans{\trbra{\pp}{l_k}}$, so
    $ \gnode{\set{\psi}}\sigma\pi \trans{\trbra{\pp}{l_k}}$. By definition of $\Gmin{\C}$, $\gnode{\cS} \trans{\trbra{\pp}{l_k}} \gnode{\cS_1}$ for some $\cS_1$, therefore $\gnode{\cS_1}\pi'\pi'' \psubts \T_1$.

    \item If $\gnode{\cS} \trans{\trsel{\pp}{l_k}} \gnode{\cS_1}$, then by Definition \ref{def:minimum_type_graph_of_process}, there exists $\psi \in \cS_1$ such that $\tselsub{\pp}{l_j: \xi_{j}}{j \in \J} \subt \psi \in \C$ such that $k \in \J$ and $\xi_k = \psi'$.
    Then by Lemma \ref{lem:simsound:a}, $\psi\sigma\pi \subt \T'$. Therefore, $\T' \trans{\trsel{\pp}{l_k}} \T_1$ for some $\T_1$.
    By definition of $\psubt$, $\gnode{\cS_1}\pi'\pi'' \psubts \T_1$.
  \end{itemize}

  All conditions are valid and so $\psubts$ is a type simulation.
\end{proof}

The above two lemmas are the two statements of
Lemma~\ref{lem:simsound} below:

\begin{restatable}[Simulation soundness]{lemma}{lemsimsound} \label{lem:simsound}
Assume $\vdash \PP: \xi \mid_\cX \C'$ and $\C = \tr(\C')$.
Let $(\sigma, \pi)$ be a solution to $\C$.
Let $\Gmin{\C}$ and
$\Csort$ be the minimum type graph and sort constraint
generated
by Definition \ref{def:minimum_type_graph_of_process}.
Let $\T = \xi\sigma\pi$.
{\rm (a)} For all $\cS \in \Gmin{\C}$ and $\T' \in \Sub(\T)$, if $\gnode{\cS} \psubt_{\Gmin{\C}, \T} \T'$
then $\psi\sigma\pi \subt \T'$ for all $\psi \in \cS$; and 
{\rm (b)} There exists $\pi''$ such that
$\psubts\ = \{(\gnode{\cS}\pi'\pi'', \T') \mid \gnode{\cS} \psubt_{\Gmin{\C}, \T} \T'\}$
is a type simulation.
\end{restatable}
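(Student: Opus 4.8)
The plan is to prove the two parts of Lemma~\ref{lem:simsound} in sequence, since part (b) consumes part (a). Both statements concern the matching simulation $\psubtpara{\Gmin{\C}}{\T}$ of Definition~\ref{def:matching_simulation}, which is the \emph{least} relation closed under the rules \RULE{$\psubt$-Init}, \RULE{$\psubt$-IO}, \RULE{$\psubt$-End} and \RULE{$\psubt$-BS}. Because it is defined inductively, I would establish part (a) by rule induction on the derivation of $\gnode{\cS} \psubtpara{\Gmin{\C}}{\T} \T'$, and then use (a) as a black box to discharge the two simulation clauses of Definition~\ref{def:type_simulation} for the candidate relation $\psubts$ in part (b). Throughout I would lean on two facts: (i) $(\sigma,\pi)$ solves $\C$, so every constraint $\T_i \subt \psi_i \in \C$ gives $\T_i\sigma\pi \subt \psi_i\sigma\pi$; and (ii) the transition rules of the minimum type graph (Definition~\ref{def:minimum_type_graph_of_process}) expose, for each variable in a target node, a witnessing constraint of $\C$ with a variable of the source node on its right-hand side.

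For part (a), the base case \RULE{$\psubt$-Init} yields $\gnode{\{\xi\}} \psubtpara{\Gmin{\C}}{\T} \T$, and here the required $\xi\sigma\pi \subt \T$ is just reflexivity, since $\T = \xi\sigma\pi$ by hypothesis. In each inductive case I would read off the shape forced on $\cS$ by the corresponding minimum-type-graph transition. For \RULE{$\psubt$-IO}, given $\gnode{\cS_1} \trans{\trin{\pp}{\beta}} \gnode{\cS_2}$ and the inductive hypothesis $\psi_1\sigma\pi \subt \T_1$ for all $\psi_1 \in \cS_1$, rule \RULE{M-IO} tells me each $\psi_2 \in \cS_2$ arises from a constraint $\tin{\pp}{\alpha}\psi_2 \subt \psi_1 \in \C$ with $\psi_1 \in \cS_1$; applying $\sigma\pi$ and combining $\psi_1\sigma\pi \subt \T_1$ with $\T_1 \trans{\trin{\pp}{\S}} \T_2$ through Definition~\ref{def:subtyping_local_types} gives $\psi_2\sigma\pi \subt \T_2$. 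The \RULE{$\psubt$-BS} case (covering both selection, via \RULE{M-Sel}, and branching, via \RULE{M-Bra}) and the \RULE{$\psubt$-End} case (via \RULE{M-End}) are analogous, using the dependency notation $\dep_\C$ of Definition~\ref{def:dependencies} to package the witnessing constraints.

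For part (b), I would first obtain $\pi''$: since $\pi'$ is a \emph{most general} solution of $\Csort$ (Definition~\ref{def:most_general_pi}) and $\pi$, suitably extended on the fresh payload variables (using that solving $\C$ forces the collapsed payload sorts in each node to agree), also solves $\Csort$, there is $\pi''$ with $\alpha\pi = \alpha\pi'\pi''$ for every $\alpha \in \Sv$. Then I would verify the clauses of Definition~\ref{def:type_simulation} for $\psubts$, splitting on the action, where $\gnode{\cS}\pi'\pi''$ is the ``smaller'' side and $\T'$ the ``larger'' side. For selection, input, output and termination, a transition $\gnode{\cS}\pi'\pi'' \trans{\gell}$ must be matched by $\T'$: I use part (a) to get $\psi\sigma\pi \subt \T'$ for $\psi \in \cS$, read the corresponding transition off $\T'$ via the subtyping clauses, and confirm the residual pair lies in $\psubts$ by the defining rules of $\psubtpara{\Gmin{\C}}{\T}$; the payload sort equality is exactly where $\alpha\pi = \alpha\pi'\pi''$ and the $\alpha = \beta$ constraints of $\Csort$ are invoked. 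For branching the direction flips: a transition of $\T'$ must be matched by $\gnode{\cS}\pi'\pi''$, which again follows from (a) together with \RULE{M-Bra}, since $\psi\sigma\pi \subt \T'$ forces the graph to offer the matching branch.

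I expect the main obstacle to be bookkeeping the \emph{direction} of the simulation across the covariant selection case (union of labels in \RULE{M-Sel}) and the contravariant branching case (intersection in \RULE{M-Bra}) simultaneously, while keeping the sort substitutions $\pi$, $\pi'$, $\pi''$ coherent. In the branching clause one starts from a transition of $\T'$ rather than of the graph, so the argument must flow through part (a) in the opposite direction; the delicate point is extracting from Definition~\ref{def:minimum_type_graph_of_process} the existence of the matching \emph{graph transition} (not merely a subtyping fact), which is precisely what the intersection-of-labels shape of \RULE{M-Bra} must be shown to guarantee.
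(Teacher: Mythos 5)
Your proposal is correct and follows essentially the same route as the paper: part (a) by rule induction on the derivation of $\gnode{\cS} \psubtpara{\Gmin{\C}}{\T} \T'$, reading the witnessing constraints off the minimum-type-graph transitions, and part (b) by extracting $\pi''$ from the most-generality of $\pi'$ and then checking the clauses of Definition~\ref{def:type_simulation} with the direction reversed for branching, exactly where the paper invokes (a) to force the existence of the matching graph transition. Your extra care about extending $\pi$ to the fresh payload sort variables in $\Csort$ is a detail the paper elides but does not change the argument.
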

\begin{proof}
  (a) By induction on the proof tree of
  $\gnode{\cS} \psubtpara{\Gmin{\C}}{\T} \T'$. \\
  (b) By verifying that the type simulation rules are satisfied, using (a).
\end{proof}

\themincomplete*

\begin{proof}
 $\gnode{\set{\xi}} \psubtpara{\Gmin{\C}}{\T} \T$ by \RULE{$\psubt$-Init}, so by Lemma~\ref{lem:simsound}(b), $\gnode{\set{\xi}}\pi'\pi'' \psubts \T$.
  We have that $\gnode{\set{\xi}}\pi' = \Tmin$,  and $\psubts$ is a type simulation, thus
  $\Tmin\pi'' \subt \T$.
\end{proof}

\section{Appendix for Checking Safety, Deadlock-Freedom and Liveness}
\label{app:model_checking}
\subsection{Detailed Examples for Remark~\ref{rem:liveness}}
\label{app:rem:liveness}
Let us call 
$\Delta$ \emph{weak live} 
defined by Definition~\ref{def:liveness} \emph{without} fair path assumptions. 
We show the weak-live context can type a \emph{non-live} session $\M$. 
Consider the following context 
with three parties, $\role{a}$lice, $\role{b}$ob and $\role{s}$eller
(adapted from \cite[Example~5.14]{Scalas2019}):\\[1mm]
\centerline{
$
\begin{array}{rcl}
\Delta' =  
\role{a}:
\mu \ty.\tsel{\role{b}}{\mathsf{m}:
\tbra{\role{b}}{\mathsf{yes}:\tselset{\role{s}}{\mathsf{buy}},
\ \mathsf{no}:\ty}}, \ 
\role{b}: 
\mu \ty.\tbra{\role{a}}{\mathsf{m}:\tselset{\role{a}}{\mathsf{no}:\ty},\ 
                        \mathsf{cancel}}, 
\role{s}: 
\tbra{\role{a}}{\mathsf{buy},\ \mathsf{no}},
\end{array}
$
}\\[1mm]
$\Delta'$ is neither live nor weak live. 
This is because that $\role{a}$lice never chooses to send $\mathsf{cancel}$ to $\role{b}$ob, who in turn always answers $\mathsf{no}$
to all $\mathsf{m}$essages. 
The session $\M$ which is typed by $\Delta'$ is not live either. Now consider 
typing environment $\Delta$ such that $\Delta'\subt \Delta$,  
which can be obtained by adding choices 
$\mathsf{cancel}$ at $\role{a}$lice and 
$\mathsf{yes}$ at $\role{b}$ob 
({\color{violet}colored}). We let $\Delta=\Delta'',\Delta'(\role{s}) $
\\[1mm]
\centerline{
$
\begin{array}{rcl}
\Delta''=
\role{a}:
\mu \ty.\tsel{\role{b}}{\mathsf{m}:
\tbra{\role{b}}{\mathsf{yes}:\tselset{\role{s}}{\mathsf{buy}},
\ \mathsf{no}:\ty}, \ {\color{violet}\mathsf{cancel}:
\tselset{\role{s}}{\mathsf{no}}}}, 
\role{b}: 
\mu \ty.\tbra{\role{a}}{\mathsf{m}:\tsel{\role{a}}{{\color{violet}\mathsf{yes}}, \ \mathsf{no}:\ty},\ 
                        \mathsf{cancel}}
\end{array}
$}\\[1mm]
Session $\M$ typed by 
$\Delta'$ is also typable by $\Delta$ 
using the subsumption rule (\RULE{T-Sub}).   
$\Delta$ is weak live but not live since there is a fair path 
$\Delta\redSelBra{\role{a}}{\role{b}}{\mathsf{m}}
\redSelBra{\role{b}}{\role{a}}{\mathsf{no}}
\redSelBra{\role{a}}{\role{b}}{\mathsf{m}}\redSelBra{\role{b}}{\role{a}}{\mathsf{no}}\cdots$ under which $\role{s}$eller  
indefinitely awaits selection from $\role{a}$lice.  
Hence weak live $\Delta$ types non-live $\M$. 

\subsection{Proof of Lemma~\ref{thm:safety_from_reductions}}
\label{app:proof_of_safety_from_reductions}

\safetyfromreductions*
\begin{proof}
($\Longrightarrow$) Fix a safety property $\phi$ such that $\phi(\Delta)$. By definition, $\Delta$ is a safe state.
Let $\Delta \rightarrow^* \Delta'$.
By induction on the length of the reduction sequence, then applying the second condition of safety properties, we have that $\Delta'$ is a safe state.

  ($\Longleftarrow$) Define $\Delta', \mu\ty. \T \trans{\unf} \Delta', \T[\mu\ty. \T/\ty]$ for all $\Delta', \T$.
  Note that, if $\Delta_1 \trans{\unf} \Delta_2$ then $\Delta_1 \trans{\cxell} \Delta' \iff \Delta_2 \trans{\cxell} \Delta'$.
  Let $\Delta'$ be a safe state for all $\Delta \rightarrow^* \Delta'$.
  Let $\phi$ be the minimum relation containing $\Delta$ and closed under $\rightarrow$ and $\trans{\unf}$.

  By definition, the latter two conditions for safety hold. It suffices to show that $\Delta'$ is safe for all $\phi(\Delta')$.
  Consider the shortest sequence of $\rightarrow$ and $\trans{\unf}$ reductions from $\Delta$ to $\Delta'$.
  If $\Delta_1\trans{\unf}^*\rightarrow \Delta_2$ then $\Delta_1 \rightarrow \Delta_2$, so the shortest sequence of reductions has the form $\Delta \rightarrow^* \Delta'' \trans{\unf}^* \Delta'$.
  But $\Delta''$ is a safe state by the claim, and $\Delta'', \Delta'$ have the same outgoing transitions. So $\Delta'$ is a safe state. Therefore $\phi$ is a safety property.
\end{proof}

\subsection{Checking Safety and Deadlock-Freedom is in PSPACE}
\label{app:pspace_completeness_of_safety_and_deadlock_freedom}

\begin{lemma}\label{thm:composition_transition_as_type_graph}
  Let $\Delta = \prod_{i \in \I} \pp_i : \T_i$. Let $\Delta \rightarrow^* \Delta'$, such that $\Delta' = \prod_{i \in \I} \pp_i : \T'_i$. Then $\T'_i \in \Sub(\T_i)$.
\end{lemma}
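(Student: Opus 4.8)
The plan is to reduce the lemma to a claim about single\nobreakdash-participant transitions and then invoke the type\nobreakdash-graph reachability result (Lemma~\ref{thm:type_graph_nodes_are_subformulas}). I proceed by induction on the number of steps in $\Delta \rightarrow^* \Delta'$. The base case is immediate, since $\T_i \in \Sub(\T_i)$ for every $i$. The inductive step needs two ingredients: that a single reduction advances each component by at most one edge of its type graph, and that reachability in a type graph composes along the sequence. I would therefore first isolate what happens to an individual participant.

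The key auxiliary claim is: if $\pp : \T \trans{\cxell} \pp : \T'$ in the typing\nobreakdash-context LTS of Definition~\ref{def:composition_reduction}, where $\cxell$ is a single\nobreakdash-participant label ($\trin{\pp\pq}{\S}$, $\trout{\pp\pq}{\S}$, $\trsel{\pp\pq}{l}$, or $\trbra{\pp\pq}{l}$), then $\T \trans{\gell} \T'$ is an edge of the type graph $\GG(\T)$ for the corresponding action $\gell$, so in particular $\T'$ is reachable from $\T$. To see this, note that the only rules able to derive a transition of a singleton context $\pp:\T$ are \RULE{E-IO}, \RULE{E-BS}, and \RULE{$\mu$}, since \RULE{Comp}, \RULE{Msg}, and \RULE{Bra} all require at least two participants. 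The applications of \RULE{$\mu$} perform exactly the substitutions computed by the $\unfold$ function (Definition~\ref{def:unfold}); by guardedness only finitely many are needed before the head of the type becomes a non\nobreakdash-recursive constructor, at which point \RULE{E-IO} or \RULE{E-BS} fires on $\unfold{\T}$. This precisely matches the premises of the type\nobreakdash-graph rules of Definition~\ref{def:typegraph}, whose successor is the same $\T'$. Hence $\T' \in \Sub(\T)$ by Lemma~\ref{thm:type_graph_nodes_are_subformulas}.

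Finally I lift the claim to full contexts. A step $\Delta \rightarrow \Delta'$ is, by definition of $\rightarrow$, derived by \RULE{Msg} or \RULE{Bra}; each combines two single\nobreakdash-participant transitions on some $\pp,\pq$ through \RULE{Comp}, leaving all other components syntactically unchanged. Thus for each $i$ either $\T'_i = \T_i$ or $\T'_i$ is the target of one type\nobreakdash-graph edge from $\T_i$; in both cases $\T'_i$ is reachable from $\T_i$ in $\GG(\T_i)$. Composing along the induction, using that reachability within $\GG(\T_i)$ is transitive because $\GG(\T_i)$ collects \emph{all} nodes reachable from $\T_i$, yields that the final $\T'_i$ is reachable from $\T_i$, and therefore $\T'_i \in \Sub(\T_i)$. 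The main obstacle is the exact matching inside the single\nobreakdash-participant claim: one must check that the iterated \RULE{$\mu$}\nobreakdash-unfoldings of the context LTS reproduce exactly $\unfold{\T}$ and hence the same successor type as the type\nobreakdash-graph rule. This relies on guardedness for termination of the unfolding and on the substitution structure built into $\Sub$ for $\mu$\nobreakdash-types in Definition~\ref{def:subformulas_of_local_type}; all remaining cases are routine.
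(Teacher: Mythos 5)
Your proposal is correct and takes essentially the same route as the paper: the paper's proof also observes that the reductions of a single component $\pp_i:\T_i$ under Definition~\ref{def:composition_reduction} coincide with the edges of the type graph $\GG(\T_i)$ and then concludes via Lemma~\ref{thm:type_graph_nodes_are_subformulas}. You merely spell out the case analysis on \RULE{E-IO}, \RULE{E-BS}, and \RULE{$\mu$} and the induction on the length of the reduction sequence that the paper leaves implicit.
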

\begin{proof}
  By inspection of the typing rules of Definition~\ref{def:composition_reduction}, the possible reductions of $\pp_i : \T_i$ are exactly the possible reductions of $\GG(\T_i)$. By Lemma~\ref{thm:type_graph_nodes_are_subformulas}, this is a subset of $\Sub(\T_i)$.
\end{proof}

\numberofreachablecompositionstates*
\begin{proof}
  By the above lemma, applied to each $\pp_i : \T_i$. We have that $\{\Delta' \mid \Delta \rightarrow^* \Delta'\} \subseteq \{\prod_{i \in \I} \pp_i : \T'_i \mid \T'_i \in \Sub(\T_i)\}$, so $|\{\Delta' \mid \Delta \rightarrow^* \Delta'\}| \leq \prod_{i \in \I} |\T_i|$.
\end{proof}

\begin{algorithm}[H]
\footnotesize
\caption{\footnotesize A nondeterministic polynomial-space algorithm for checking if a typing context is not safe. The algorithm accepts if at least one combination of nondeterministic choices results in an accepting state. The input is a context $\Delta = \prod_{i \in \I} \pp_i : \T_i$.}
   \label{alg:pspace_safety}
   \begin{algorithmic}[1]
     \Function{NotSafe}{$\Delta$}
       \State {$m \gets \prod_{i \in \I} |\T_i|$}
       \State {$n \gets 0$}
       \State {$\Delta' \gets \Delta$}
       \Loop
         \If {$\Delta'$ is not a safe state} \label{alg:pspace_safety_if_cond}
           \State {\Return accept}
         \EndIf
         \State {$n \gets n + 1$}
         \If {$n > m$}
           \State {\Return reject}
         \EndIf
         \State {nondeterministically choose $\Delta''$ such that $\Delta' \rightarrow \Delta''$}
         \State {$\Delta' \gets \Delta''$}
       \EndLoop
     \EndFunction
   \end{algorithmic}
\end{algorithm}

\begin{theorem}\label{thm:safety_in_pspace}
  Checking for safety is in \PSPACE.
\end{theorem}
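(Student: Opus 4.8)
The plan is to exhibit a nondeterministic algorithm that decides the \emph{complement} of safety in polynomial space, and then invoke Savitch's theorem (more precisely, the fact that $\coNPSPACE = \NPSPACE = \PSPACE$) to conclude that safety itself is in $\PSPACE$. The key enabling fact is Lemma~\ref{thm:number_of_reachable_composition_states}, which bounds the number of reachable contexts by $\prod_{i \in \I}\size{\T_i} = 2^{\bigO{|\Delta|}}$. Combined with Lemma~\ref{thm:safety_from_reductions}, checking non-safety reduces to deciding whether there exists a reachable context $\Delta'$ with $\Delta \rightarrow^* \Delta'$ that is \emph{not} a safe state (in the sense of Definition~\ref{def:safe_states}).

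First I would describe the nondeterministic procedure, which is essentially Algorithm~\ref{alg:pspace_safety}: starting from $\Delta$, maintain only the current context $\Delta'$ and a step counter $n$. At each iteration, test whether $\Delta'$ is a safe state; if not, accept. Otherwise nondeterministically choose a successor $\Delta''$ with $\Delta' \rightarrow \Delta''$, set $\Delta' \gets \Delta''$, and increment $n$. If $n$ exceeds the bound $m = \prod_{i \in \I}\size{\T_i}$, reject this computation path. The correctness argument has two directions: if some reachable context is unsafe, then by Lemma~\ref{thm:number_of_reachable_composition_states} it is reachable within $m$ steps (any reachable state is reachable by a simple path through the reachability graph, whose vertex set has size at most $m$), so an accepting path exists; conversely, every accepting path witnesses a genuine reachable unsafe state. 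Thus the algorithm accepts iff $\Delta$ is not safe.

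Next I would verify the polynomial-space bound. The current context $\Delta'$ consists of one subformula per participant, and by Lemma~\ref{thm:composition_transition_as_type_graph} each component lies in $\Sub(\T_i)$, so $\Delta'$ is storable in $\bigO{|\Delta|}$ space. The counter $n$ ranges up to $m = 2^{\bigO{|\Delta|}}$, so it needs only $\bigO{|\Delta|}$ bits. Testing whether $\Delta'$ is a safe state is a local check over the transitions of $\Delta'$ (Definition~\ref{def:safe_states}), computable in polynomial time and space. Crucially, the path itself is never stored — only the current context is retained — so the total working space is polynomial in $|\Delta|$. Hence the complement of safety is in $\NPSPACE$, and by $\coNPSPACE = \PSPACE$ we conclude that safety is in $\PSPACE$.

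The main obstacle, though a mild one, is the correctness of the step bound: one must argue that it suffices to search paths of length at most $m$ rather than arbitrary reductions. This follows because reachability in a graph with $m$ vertices is always witnessed by a path of length $< m$, and the reachability graph of $\Delta$ has at most $m$ nodes by Lemma~\ref{thm:number_of_reachable_composition_states}; a subtle point to state carefully is that the existence of \emph{some} reduction to an unsafe state guarantees a \emph{short} such reduction, which is exactly what the counter enforces. Everything else is routine once this is in place, and the same template transfers immediately to deadlock-freedom by replacing the ``not a safe state'' test with a test for a stuck non-terminal state (i.e.\ $\Delta' \not\rightarrow$ with some component not unfolding to $\tend$), which completes the statement of Theorem~\ref{thm:safety_liveness_pspace}.
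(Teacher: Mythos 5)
Your proposal is correct and follows essentially the same route as the paper's proof: a nondeterministic polynomial-space search for a reachable unsafe state, bounded in length by Lemma~\ref{thm:number_of_reachable_composition_states}, storing only the current context and a counter, and concluding via $\coNPSPACE = \NPSPACE = \PSPACE$. The justification of the step bound and the appeal to Lemma~\ref{thm:safety_from_reductions} match the paper's argument, so there is nothing to add.
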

\begin{proof}
We use Algorithm~\ref{alg:pspace_safety}, which is a nondeterministic polynomial-space algorithm that accepts if and only if the input is not safe. The algorithm checks all $\Delta'$ such that $\Delta'$ is reachable from $\Delta$ in $M$ transitions or fewer, where $m$ is given in Lemma~\ref{thm:number_of_reachable_composition_states}, accepting if there exists some $\Delta'$ that is not a safe state. As the shortest path to any state has length at most $m$, the algorithm is correct.
Furthermore, the algorithm needs to store only the current context $\Delta'$, which is representable by type graphs in linear space, and the counter $n$, which has at most $\log{N} = \bigO{\sum_{i \in I}|\T_i|}$ bits, so the algorithm is polynomial-space.



The result follows from $\coNPSPACE = \NPSPACE$ and from Savitch's theorem: $\NPSPACE = \PSPACE$ \cite[Theorem 8.5]{Sipser2012}.
\end{proof}

We can use a similar algorithm for deadlock-freedom.

\begin{theorem}
  Checking for deadlock-freedom is in \PSPACE.
\end{theorem}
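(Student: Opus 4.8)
The plan is to adapt the nondeterministic polynomial-space algorithm for safety (Algorithm~\ref{alg:pspace_safety} / Theorem~\ref{thm:safety_in_pspace}) almost verbatim, changing only the acceptance condition so that it detects a \emph{deadlocked} state rather than an unsafe one. Recall that $\Delta$ is deadlock-free iff every reachable $\Delta' \not\rightarrow$ satisfies $\unfold{\T_\pp} = \tend$ for all $\pp : \T_\pp \in \Delta'$ (Definition~\ref{def:deadlock_freedom}). Equivalently, $\Delta$ is \emph{not} deadlock-free iff there is a reachable $\Delta'$ that is \emph{stuck}, meaning $\Delta' \not\rightarrow$ and yet some component $\pp : \T_\pp$ has $\unfold{\T_\pp} \neq \tend$. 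The key observation is that being stuck is a local, linear-time checkable predicate on a single context: we just test whether $\Delta' \rightarrow$ holds (by scanning all pairs of components for a matching \RULE{Msg} or \RULE{Bra} reduction), and if not, whether every component unfolds to $\tend$.

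First I would invoke Lemma~\ref{thm:number_of_reachable_composition_states} to bound the number of reachable contexts by $\prod_{i \in \I} \size{\T_i} = 2^{\bigO{|\Delta|}}$, so that the shortest path from $\Delta$ to any reachable state has length at most $m = \prod_{i \in \I} |\T_i|$. Then I would give the nondeterministic algorithm: starting from $\Delta' = \Delta$, repeatedly (i) check whether $\Delta'$ is stuck, accepting if so; (ii) increment a counter and reject if it exceeds $m$; and (iii) nondeterministically pick some $\Delta''$ with $\Delta' \rightarrow \Delta''$ and set $\Delta' \gets \Delta''$. By the reachability bound, some nondeterministic branch accepts iff some stuck state is reachable within $m$ steps, which by the length bound is iff some stuck state is reachable at all.

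The space analysis is identical to the safety case: the algorithm stores only the current context $\Delta'$ (linear space, via the type-graph representation of each component) and the step counter, which needs $\bigO{\log m} = \bigO{\sum_{i \in \I} |\T_i|}$ bits. Hence the procedure runs in nondeterministic polynomial space, deciding the complement of deadlock-freedom. Finally I would appeal to Savitch's theorem, $\NPSPACE = \PSPACE$, together with $\coNPSPACE = \NPSPACE$ \cite[Theorem~8.5]{Sipser2012}, to conclude that deadlock-freedom itself is decidable in \PSPACE.

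I do not anticipate a genuine obstacle here, since the structure mirrors the safety proof exactly; the only point requiring minor care is verifying that the ``stuck'' predicate is correctly stated and linear-time checkable. Specifically, testing $\Delta' \not\rightarrow$ amounts to confirming that no $\RULE{Msg}$ or $\RULE{Bra}$ synchronisation fires between any two components, which can be done by a single pass over the (at most quadratically many) participant pairs after unfolding each component's head; and testing $\unfold{\T_\pp} = \tend$ for each component is immediate. Thus the verification of correctness of the acceptance test, rather than any complexity argument, is the part to state carefully.
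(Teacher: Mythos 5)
Your proposal matches the paper's own argument essentially verbatim: the paper proves this by reusing the nondeterministic polynomial-space safety algorithm with the acceptance test replaced by a check for a reachable stuck state (i.e., the failure of $\Delta' \not\rightarrow \implies \forall \pp{:}\T_\pp \in \Delta'.\ \unfold{\T_\pp} = \tend$), bounded by Lemma~\ref{thm:number_of_reachable_composition_states}, and then concludes via $\coNPSPACE = \NPSPACE = \PSPACE$. Your additional remark that the stuck-state predicate is locally checkable in polynomial time is exactly the (implicit) point the paper relies on, so the proof is correct and takes the same route.
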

\begin{proof}
The proof is identical to the proof for safety, but with the condition in line~\ref{alg:pspace_safety_if_cond} replaced with $\Delta' \not\rightarrow \implies \forall \pp: \T_\pp \in \Delta'. \unfold{\T_\pp} = \tend$, which is also checkable in polynomial time.

\end{proof}

\subsubsection{A More Practical Algorithm for Safety and Deadlock-Freedom}
\label{subsec:safedf:efficient}
Even though there must exist a deterministic polynomial-space algorithm for checking safety 
and deadlock-freedom by the equivalence of $\coNPSPACE$ and $\PSPACE$, this might not be
practical due to the high time complexity of the reduction.
A more practical deterministic algorithm would do an explicit breadth-first search over all reachable states.
Even though this requires exponential memory in the worst case (so this is not a $\PSPACE$-algorithm),
it takes linear time and memory in the number of reachable states.
In practice, this number is often much smaller than the theoretical bound of $\prod_{i \in \I} |\T_i|$.
This can also be adapted for deadlock-freedom in the same way as the other algorithm.

We instead illustrate a
deterministic algorithm that might be more practical, which instead does an explicit breadth-first search
over all reachable states (Algorithm~\ref{alg:safety_more_practical}).

\begin{algorithm}[H]
\footnotesize
\caption{\footnotesize A deterministic algorithm for checking if a typing context is safe. The input is a context $\Delta = \prod_{i \in \I} \pp_i : \T_i$.}
  \label{alg:safety_more_practical}
  \begin{algorithmic}[1]
    \Function{Safe}{$\Delta$}
      \State {$V \gets \{\Delta\}$} \Comment{Set of found states}
      \State {$Q \gets [\Delta]$} \Comment{Queue of states to visit}
      \While {$Q$ is nonempty}
        \State {$\Delta', Q \gets Q$}
         \If {$\Delta'$ is not a safe state}
           \State {\Return reject}
         \EndIf
         \State {$V \gets V \cup \{\Delta'\}$}
         \ForAll {$\Delta''$ such that $\Delta' \rightarrow \Delta''$}
           \If {$\Delta'' \not\in V$}
             \State {$V \gets V \cup \Delta''$}
             \State {$Q \gets Q, \Delta''$}
           \EndIf
         \EndFor
       \EndWhile
       \State {\Return accept}
     \EndFunction
   \end{algorithmic}
\end{algorithm}


\safedfcomplexityefficient*

\begin{proof}
  Lemma~\ref{thm:number_of_reachable_composition_states} shows that there are at most $\U$ reachable states from $\Delta$.
  The algorithm visits each state at most once, and for each state, there are at most $\Delta$ outgoing transitions,
  so it checks if it is safe (or deadlock-free) in $\bigO{|\Delta|}$ time.
\end{proof}

\subsection{Checking Liveness is in PSPACE}
\label{app:pspace_completness_of_liveness}


\counterwitnessiffnotlive*
\begin{proof}
  In Definition~\ref{def:counterwitness}, the first condition holds when the path is not live Lemma~\ref{lem:live_iff_barbs_subset_obs}. The second condition holds when the path is fair.
\end{proof}

\thmnumberofgammatransitions*
\begin{proof}
  By considering the type graphs of $\T_i$.
  \begin{itemize}[leftmargin=*]
    \item If $\cxell = \pp_i\pp_j$, then $\GG(\T_{\pp_i})$ must contain an edge $\trout{\pp_j}{S}$, and $\GG(\T_{\pp_j})$ must contain an edge $\trin{\pp_i}{S}$ for some $S$.
    \item If $\pp_i\pp_j:l$, then $\GG(\T_{\pp_i})$ must contain an edge $\trsel{\pp_j}{l}$, and $\GG(\T_{\pp_j})$ must contain an edge $\trbra{\pp_i}{l}$.
  \end{itemize}
  As each reduction gives rise to unique edges, and there are less than or equal to $2|\T|$ edges in a type graph of $\T$~\cite[Lemma 4.3]{Udomsrirungruang2024a}, we conclude that there are $2n$ possible unique reductions. 
\end{proof}

The following lemma shows that there must exist finite or periodic counterwitnesses to liveness if the context is not live. We write $\mathcal{P}_1\mathcal{P}_2^\omega$ to denote an infinite path that concatenates $\mathcal{P}_1$ and infinitely many copies of $\mathcal{P}_2$.

\witnesslengthforliveness*
\begin{proof}
  Fix some counterwitness $(\Delta_i)_{i \in N}$ of $\Delta$. By definition, there exists $k \in N$ and $a \in \barbs(\Delta_k)$ such that $a \not\in \bigcup_{i \geq k} \observations(\Delta_i)$.
  
  \case $N$ is finite. Then we can \textit{shortcut} some cycles:
  if $\{\cxell_i \mid i \geq j_1\} = \{\cxell_i \mid i \geq v\}$ such that $j_1 < j_2$ and $k \not\in \{j_1, j_1+1, \dots, j_2\}$
  then we may remove $\Delta_{j_1+1}, \dots, \Delta_{j_2}$ from the path.
  As $\{\cxell_i \mid i \geq w\} \subseteq \{\cxell \mid \exists i \geq w.\:\Delta_i \trans{\cxell}\}$ for all $w$, the subsets stay equal,
  and the first condition is not violated because we are removing elements from $\observations$.
  We can continue this process until there are no cycles with the same value of $|\{\cxell_i \mid i \geq j\}|$ throughout.
  As there are only $2n$ possible values of $\cxell_i$ by Lemma~\ref{thm:number_of_gamma_transitions}, and we cannot shortcut cycles containing $k$,
  there are at most $2n+2$ sections, each with no repeating states,
  so the resulting path is of size $m$.

  \case $N$ is infinite. We first need to make the path periodic.
  This can be done by taking the first $u \geq k$ such that $\{\cxell_i \mid k \leq i \leq u\} = \{\cxell_i \mid k \leq i\}$.
  Then, take the first $u' > u$ such that $\Delta_{u'} = \Delta_v$ for some $v < u'$.
  For any $j$, $\{\cxell_i \mid j \leq i \leq u'\} = \{\cxell_i \mid j \leq i\} = \{\cxell \mid \exists j \leq i.\:\Delta_i \trans{\cxell}\} \supseteq \{\cxell \mid \exists j \leq i \leq u'.\:\Delta_i \trans{\cxell}\}$.
  But $\{\cxell_i \mid j \leq i \leq u'\} \subseteq 
  \{\cxell \mid \exists j \leq i \leq u'.\:\Delta_i \trans{\cxell}\}$ by definition, so the two sets are equal.
  Also, $\bigcup_{k \leq i \leq u'} \observations(\Delta_i) \subseteq \bigcup_{k \leq i} \observations(\Delta_i)$.

  Thus, $\Delta' = \mathcal{P}_1\mathcal{P}_2^\omega$ such that $\mathcal{P}_1 = \Delta_0, \dots, \Delta_{v-1}$
  and $\mathcal{P}_2=\Delta_v, \dots, \Delta_{u'-1}$ is a valid counterwitness.
  Then, we will shortcut cycles similarly to the finite case.
  Before $\Delta_v$, the strategy for finite paths still works.
  For the cycle $(\Delta_v, \dots, \Delta_{u'-1})$, we will shortcut $j_1 < j_2$
  such that $\{\cxell_i \mid j_1 \leq i < u'\} = \{\cxell_i \mid j_2 \leq i < u'\}$,
  by removing $\Delta_{j_1+1}, \dots, \Delta_{j_2}$.
  As all the removed transitions in $\{j_1+1, \dots, j_2\}$ are still found in $\{j_2, \dots, u'-1\}$,
  the second condition is not violated.
  Furthermore, the first condition is not violated by the same reason as above.
  
  Therefore, by the same reasoning as in the finite case, the resulting path is of size $m$.
\end{proof}

We can use this bound to create a polynomial-space search for exponential-length counterwitnesses.

\begin{algorithm}
\footnotesize
  \caption{\footnotesize A nondeterministic polynomial-space algorithm for checking if a typing context is not live. The algorithm accepts if at least one combination of nondeterministic choices results in an accepting state. The input is a context $\Delta = \prod_{i \in \I} \pp_i : \T_i$.}
   \label{alg:pspace_liveness}
   \begin{algorithmic}[1]
     \Function{NotLive}{$\Delta, n$}
       \State {$M \gets (2n+2) \cdot \prod_{i \in \I}|\T_i|$}
       \State {$U \gets \emptyset$} \Comment{unfair transitions}
       \State {$T \gets \emptyset$} \Comment{observed transitions in cycle}
       \State {$B \gets \emptyset$} \Comment{barbs of $\Delta_k$}
       \State {$O \gets \emptyset$} \Comment{observed barbs}
       \State $\Delta_0 \gets \Delta$
       \For {$\Delta', \kf{prev}(\Delta')$ in some nondeterministically chosen path $\mathcal{P}_1$ of length $\leq M$ starting at $\Delta_0$}
         \State $U \gets U \cup \{\cxell \mid \Delta' \trans{\cxell}\}$
         \State $U \gets U \minus \{\cxell'\}$ where $\kf{prev}(\Delta') \trans{\cxell'} \Delta'$
         \If {$\Delta_k$ has not been set}
           \State{nondeterministically choose to set $\Delta_k \gets \Delta'$ and $B \gets \barbs(\Delta')$}
         \Else
           \State $O \gets O \cup \observations(\kf{prev}(\Delta'), \Delta')$
         \EndIf
       \EndFor
       \If {$\Delta_k$ is set and $U = \emptyset$ and $B \minus O \neq \emptyset$} 
         \Return accept \Comment{finite case}
       \EndIf
       \For {$\kf{prev}(\Delta'), \Delta'$ in some nondeterministically chosen nonempty path $\mathcal{P}_2$ of length $\leq M$ starting at the last element of $\mathcal{P}_1$}
         \State $T \gets T \cup \{\cxell'\}$ where $\kf{prev}(\Delta') \trans{\cxell'} \Delta'$
         \If {$\Delta_k$ has not been set}
           \State{nondeterministically choose to set $\Delta_k \gets \Delta'$ and $B \gets \barbs(\Delta')$}
         \Else
           \State $O \gets O \cup \observations(\kf{prev}(\Delta'), \Delta')$
         \EndIf
         \State $U \gets U \cup \{\cxell \mid \Delta' \trans{\cxell}\}$
       \EndFor
       \If {$\mathcal{P}_2$ is not a cycle}
         \Return reject
       \EndIf
       \If {$\Delta_k$ is set and $U \minus T = \emptyset$ and $B \minus O \neq \emptyset$}
         \Return accept \Comment{cycle case}
       \Else\@
         \Return reject
       \EndIf
     \EndFunction
   \end{algorithmic}
 \end{algorithm}

\livenesspspace*

  


\begin{proof}
   We use Algorithm~\ref{alg:pspace_liveness}, which is a nondeterministic polynomial-space algorithm that accepts if and only if the input is not live. At a high level, the algorithm nondeterministically generates periodic paths of bounded length (Theorem~\ref{thm:witness_length_for_liveness}), and checks if they are counterwitnesses. $\Delta_k$ is nondeterministically chosen as the paths are generated.

   At the end of the first loop, $U$ is the set of reductions $\cxell$ such that $\Delta_i \trans{\cxell}$ but there is no $\Delta_j \in \mathcal{P}_1$, $j > i$ such that $\Delta_j \trans{\cxell} \Delta_{j+1}$. $B = \barbs(\Delta_k)$ and $O = \observations((\Delta_i)_{i \geq k})$ in $\mathcal{P}_1$.
   Thus we are checking if $\mathcal{P}_1$ is a counterwitness with the given value of $k$.

   At the end of the second loop, we add all reductions $\Delta_i \trans{\cxell}$ to $U$ and all observed transitions to $T$. After the loop, we check that all transitions were observed and not all barbs were matched, so we are checking for a counterwitness for $\mathcal{P}_1\mathcal{P}_2^\omega$. Thus the algorithm is correct.

   Furthermore, the algorithm needs to store only four polynomial-size sets $U, T, B, O$, as well as $\Delta_k$, and pointers to the start and end of each path, which are polynomial in size. In particular, the paths $\mathcal{P}_1$ and $\mathcal{P}_2$ do not need to be stored explicitly.
Thus the algorithm is polynomial-space.
\end{proof}

\subsubsection{A Deterministic Algorithm for Checking Liveness}
\label{subsec:live:efficient}

We can use a similar construction to the proof of Savitch's theorem \cite[Theorem 8.5]{Sipser2012}
to create a deterministic algorithm for checking liveness.

\livecomplexityefficient*

\begin{proof}
  Define the predicate
  $\Reach_t^a(\Delta', \Delta'', T, U)$,
  for barb $a$, $t \in \mathbb{N}$, $\Delta', \Delta''$ contexts,
  to be true iff there exists a path of length at most $2^t$ from $\Delta'$ to $\Delta''$ such that barb $a$ is never observed,
  the set of observed transitions is $T$,
  the set of "unfair transitions" (Algorithm~\ref{alg:pspace_liveness}) is $U$.

  We can compute $\Reach_t^a(\Delta', \Delta'', T, U)$ for all possible $\Delta', \Delta'', T$ and $U$ as follows:

  \begin{itemize}
    \item $\Reach_0^a$ can be computed for each possible transition. If $\Delta' \trans{\cxell} \Delta''$,
    then $\Reach_t^a(\Delta', \Delta'', T, U)$ iff $a \notin \barbs(\Delta)$ and $T = \{\cxell\}$, and $U = \{\cxell' | \Delta' \trans{\cxell'} \text{ and } \cxell' \neq \cxell\}$,
    and $\Reach_t^a(\Delta', \Delta', \emptyset, \emptyset)$.
    This can be done in time $\bigO{\U^2 \cdot 2^{|\Delta|}}$.
    \item $\Reach_{t+1}(\Delta', \Delta'', T_1 \cup T_2, U')$ is true iff there exists some $\Delta^*$ such that $\Reach_t(\Delta', \Delta^*, T_1, U_1)$ and $\Reach_0(\Delta^*, \Delta'', T_2, U_2)$, such that $U' = U_2 \cup (U_1 \minus T_2)$.
    This can be done in time $\bigO{\U^3 \cdot 2^{2|\Delta|}}$.
  \end{itemize}
  Let $m = (2 |\Delta|+2) \cdot \U$ be the longest possible length of a counterwitness, defined in Lemma~\ref{thm:witness_length_for_liveness}.
  Thus, we can compute $\Reach_m(\Delta', \Delta'', T, U)$ in time $\bigO{\U^3 \log \U \cdot 4^{|\Delta|}}$, where $t = \lceil \log m \rceil$.

  Similarly, we can define $\CReach_t^a(\Delta', \Delta'', T, U)$ to be true iff
  there exists a path $\Delta' = \Delta_0 \trans{\cxell_1} \Delta_1 \dots \trans{\cxell_n} \Delta_n = \Delta''$
  of length $n \leq 2^t$ such that barb $a$ is never observed, and
  $T = \{\cxell_1, \dots, \cxell_n\}$, and $U = \{\cxell \mid \Delta_i \trans{\cxell}\}$.
  We can compute $\CReach_m(\Delta', \Delta'', T, U)$ in time $\bigO{\U^3 \log \U \cdot 2^{2 |\Delta|}}$.

  Then, a counterwitness exists if there exists some $a, \Delta_1, \Delta_2, \Delta_3$
  such that $\Delta_1$ is reachable from $\Delta$, $a \in \barbs(\Delta_1)$,
  and $\Reach_t^a(\Delta_1, \Delta_2, T_1, U_1)$ and $\CReach_t^a(\Delta_2, \Delta_3, T_2, U_2)$ such that $U_1 \cup U_2 \subseteq T_1$,
  or $\Reach_t^a(\Delta_1, \Delta_2, T_1, \emptyset)$.

  This can be done in time $\bigO{\U^3 \cdot 2^{3 |\Delta|}}$ time.
\end{proof}

\subsection{Reduction from the Quantified Boolean Formula Problem}
\label{app:reduction_from_qbf}

Our construction is a typing context $\Deltainit$ that effectively computes the truth value of $\bqbf$. It contains one participant for each variable ($\pp_i$) and clause ($\pr_i$), as well as the ``controller'' ($\ps$).
Due to the constraint that selection and branching must involve a
single other participant, we build the context in a way that
allows communication in a linear topology:\\[1mm]
\centerline{
$\role{s} \ \longleftrightarrow \ \pp_1 \ \longleftrightarrow \ \pp_2
  \ \longleftrightarrow \cdots \longleftrightarrow \ \pp_n
  \ \longleftrightarrow \ \role{r}_1
    \ \longleftrightarrow \cdots \longleftrightarrow \ \role{r}_{m+1} 
  $ 
}

\begin{itemize}[leftmargin=*]
\item $\ps$ is the ``controller'' of the session: it queries the other participants to check if the QBF is true; if it is not true, it enters the undesirable state $\Tbad$, which we will define separately for each property we check. Otherwise, the QBF is queried in an infinite loop.
  \item $\pp_i$ handles the truth value of $v_i$, and finds the truth value of $\cQ_i v_i \dots \cQ_n v_n \sqbf$.
  It does this by trying both values of $v_i$:
  first, it lets $v_i = \tfalse$ (represented by being in
  state $F$ in Figure~\ref{fig:pspace_safety_graphs})
  and asks $\pp_{i+1}$ for the truth value of $\cQ_{i+1} v_{i+1} \dots \cQ_n v_n$.
  During this, $\pp_i$ is able to respond to requests with label $\kfquery_{\rolem{t}}$,
  which are querying the truth value of the variable represented by participant $\rolem{t}$. 
  If needed, it then lets $v_i = \ttrue$ (represented by being in state $T$ in Figure~\ref{fig:pspace_safety_graphs})
  and repeats the process.
  The responses from $\pp_{i+1}$ are then combined (depending on $\cQ_i$) into a response sent to $\pp_{i-1}$: either $\kfdoneyes$ or $\kfdoneno$.
  \item Once all $\pp_i$ have their truth values initialised,
  $\pr_i$ handles the truth value of the $i$-th clause of $\sqbf$,
  namely $L_{i1} \vee L_{i2} \vee L_{i3}$.
  First, the participant queries the truth value from $\pr_{i+1}$.
  If this is false, then it immediately
  returns $\kfdoneno$ to the previous participant.
  Otherwise, it queries (using $\kfquery_{\rolem{t}}$) the truth value of
  each of its literals in turn, and returns the appropriate value.
  An exception is $\pr_{m+1}$, which starts the process by returning $\kfdoneno$.
\end{itemize}

\paragraph*{\bf Overview.}
We reduce from the quantified Boolean formula (QBF) problem: let 
\[\bqbf = \cQ_1 v_1 \dots \cQ_n v_n \sqbf,\  \cQ_i \in \{\exists, \forall\}
\]
be a QBF, with $\sqbf$ in conjunctive normal form: $\sqbf := \bigwedge_{i=1}^{m} C_i,\ C_i = (L_{i1} \vee L_{i2} \vee L_{i3})$. Each $L_{ij}$ is a literal, which is either a variable $v_i$ or its negation $\neg v_i$. The problem of checking whether QBFs are true is well-known to be \PSPACE-complete \cite[Theorem 8.9]{Sipser2012}.

Let the set of participants be~$\mathbb{P} = \{\pp_i \mid i \in \{1, \dots, n\}\} \cup \{\pr_i \mid i \in \{1, \dots, m+1\}\} \cup \{\ps\}$. For convenience, define $\pp_0 = \ps$, $\pp_{n+1} = \pr_1$, $\pr_0 = \pp_n$.
Construct the session types in Figure~\ref{fig:pspace_safety}: let their context be~$\Deltainit = \prod_{\pp \in \mathbb{P}} \pp : \T_\pp$. $\Tbad$ is an arbitrary type, which we will set to different types depending on the property we are checking.

We draw the type graphs of each participant in the context in Figure \ref{fig:pspace_safety_graphs}. We label the nodes of the type graphs to more clearly prove the reduction, and we will use the names of graph nodes and their corresponding types interchangeably.

\input{qbf_reduction_fig}

\paragraph*{\bf Notation.}
We use the following notation to refer to the reachable contexts of the participants. Let $(\ps : \T^*_{\ps}, \pp_1 : \T^*_{\pp_1}, \dots, \pp_n : \T^*_{\pp_n}, \pr_1 : \T^*_{\pr_1}, \dots, \pr_{m+1} : \T^*_{\pr_{m+1}})$ refer to the context $\prod_{\pp\in\mathbb{P}} \pp : \T^*_{\pp}$ (this is the same order as the chain of communications: each type communicates with only the types directly left and right of it). We may omit some or all of the prefixes $\pp : $ if it is clear from the context.

For an assignment of truth values $\cA: \{v_1, \dots, v_k\} \rightarrow \{\tfalse, \ttrue\}$ with $k \leq n$, define $V^i = T$ if $\cA(v_i) = \ttrue$, and $V^i = F$ otherwise. Define $(\T^*_{\ps}, \cA, \T_{\pp_{k+1}}, \dots, \T_{\pr_n}) = (\T^*_{\ps}, V^1, \dots, V^k, \T_{\pp_{k+1}}, \dots, \T_{\pr_n})$. Define $[\cA] = (W, \cA, I_1, I, \dots, I)$. $\Deltainit = (I, \dots, I)$ is the starting context.

\subsubsection{Reduction Behaviour of the Composition}

\begin{definition}[Deterministic, safe reduction paths]\label{def:unique_safe_reduction_paths}
The reduction  $\Delta \rightarrow \Delta'$ is \emph{deterministic} 
if $\Delta \rightarrow \Delta_1$ and $\Delta \rightarrow \Delta_2$ 
imply $\Delta_1 = \Delta_2 = \Delta'$. We write 
$\Delta \rightarrowunique  \Delta'$ if 
$\Delta \rightarrow  \Delta'$ is deterministic. 
For safe states $\Delta$, define $\Delta \rightarrowsafeunique \Delta'$ if $\Delta \rightarrowunique \Delta'$.

  We write $\rightarrowuniques$ for the reflexive and transitive closure of $\rightarrowunique$. We write $\Delta \rightarrowsafeuniques \Delta'$ if $\Delta$ is a safe state and $\Delta'$ can be reached from $\Delta$ by a sequence of $\rightarrowsafeunique$ reductions.
\end{definition}

We will now describe the behaviour of the context in Figure~\ref{fig:pspace_safety}, primarily to show that it has a unique and safe reduction path that calculates the truth value of the QBF.

The following lemma states that processes handling clauses can ``query'' the truth value of variables by selecting the label $\kfquery_{\pp_i}$ and will get the truth value in the form of a selection of $\kfyes$ or $\kfno$.

\begin{lemma}[Propagation of $\kfquery_{\pp_i}$]\label{thm:pspace_query_reduction}
  Let $\Delta = (W, \cA, \pr_1 : R, \dots, \pr_{k-1} : R, \pr_k : \T, I, \dots, I)$ for some type $\T$. Let $\T = \tsel{\pr_{k-1}}{\kfquery_{\pp_i}: \T'}$, such that $\T' = \tbra{\pr_{k-1}}{\kfyes: \T_\kfyes, \kfno: \T_\kfno}$. Then:
  \begin{itemize}
    \item If $\cA(\pp_i) = \ttrue$, then $\Delta \rightarrowsafeuniques (W, \cA, \pr_1 : R, \dots, \pr_{k-1} : R, \T_\kfyes, I, \dots, I)$.
    \item If $\cA(\pp_i) = \tfalse$, then $\Delta \rightarrowsafeuniques (W, \cA, \pr_1 : R, \dots, \pr_{k-1} : R, \T_\kfno, I, \dots, I)$.
  \end{itemize}
\end{lemma}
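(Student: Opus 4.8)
\textbf{Proof plan for Lemma~\ref{thm:pspace_query_reduction} (Propagation of $\kfquery_{\pp_i}$).}

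The plan is to track how a single $\kfquery_{\pp_i}$ label initiated by participant $\pr_k$ travels \emph{leftward} down the chain $\pr_{k-1}, \dots, \pr_1, \pp_n, \dots, \pp_i$, reaches the variable participant $\pp_i$, and then how the answer ($\kfyes$ or $\kfno$) travels back \emph{rightward} to $\pr_k$. The key observation, which I would establish first, is that in the context $(W, \cA, \pr_1{:}R, \dots, \pr_{k-1}{:}R, \pr_k{:}\T, I, \dots, I)$ every variable participant $\pp_j$ is in state $V^j \in \{T, F\}$ and every intermediate clause participant $\pr_1, \dots, \pr_{k-1}$ sits in state $R$. By inspecting Figure~\ref{fig:pspace_safety_graphs}, each such participant in state $R$ (for $\pr_j$) offers the branch $\kfquery_{\rolem{t}}$, which first re-selects $\kfquery_{\rolem{t}}$ to its left neighbour (edge $R \trans{\trbra{\pr_{j+1}}\kfquery_\rolem{t}} R_1^t \trans{\trsel{\pr_{j-1}}\kfquery_\rolem{t}} R_2$), waits in state $R_2$ for a $\kfyes/\kfno$ from the left, and then forwards that answer to the right, returning to $R$. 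The variable participants $\pp_j$ in states $T$/$F$ behave analogously (edges through $T_{11}^t, \dots$ or $F_{11}^t, \dots$), except that $\pp_i$ itself, when queried \emph{for its own truth value} via $\kfquery_{\pp_i}$, does not relay leftward but immediately answers: in state $T$ it selects $\kfyes$ (edge $T \trans{\trsel{\pp_{i+1}}\kfquery_{\pp_i}} T_{21} \trans{\trbra{\pp_{i+1}}\kfyes} T$) and in state $F$ it selects $\kfno$.

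The proof would then proceed by induction on the number of intermediate relay participants between $\pr_k$ and $\pp_i$, i.e.\ on $(k-1) + (n - i)$. First I would state and prove a \emph{relay sublemma}: for any participant $\pq$ in the chain sitting in a ``relay-ready'' state ($R$ for a $\pr_j$, or $T/F$ for a $\pp_j$ with $j \neq i$), if its left neighbour ultimately returns answer $b \in \{\kfyes, \kfno\}$, then $\pq$ deterministically and safely selects $\kfquery_{\rolem{t}}$ leftward, receives $b$, forwards $b$ rightward, and returns to its original state. Composing these relay steps via $\rightarrowsafeuniques$ moves the query all the way to $\pp_i$. At $\pp_i$ the base case applies: the answer produced is $\kfyes$ exactly when $V^i = T$, i.e.\ $\cA(\pp_i) = \ttrue$, and $\kfno$ exactly when $\cA(\pp_i) = \tfalse$. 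Unwinding the relay chain rightward then delivers that same answer to $\pr_k$, leaving $\pr_k$ in state $\T_\kfyes$ or $\T_\kfno$ respectively, and restoring every intermediate $\pr_j$ to $R$ and every $\pp_j$ to $V^j$, so the resulting context is exactly $(W, \cA, \pr_1{:}R, \dots, \pr_{k-1}{:}R, \T_b, I, \dots, I)$ as claimed.

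The two facts I must carry through every step are \emph{determinism} and \emph{safety} of the reductions, so that I can write $\rightarrowsafeuniques$ rather than merely $\rightarrow^*$. Determinism follows because at each relay node only one participant has an enabled selection matching an enabled branching of its neighbour — the payloads and labels are engineered so that no two reductions are simultaneously enabled; I would verify this by checking, at each node's graph state, that the outgoing $\trsel$/$\trbra$ edges used in the query protocol are the unique ones that can fire given the states of both neighbours. Safety follows because each selection $\kfquery_{\rolem{t}}$, $\kfyes$, $\kfno$ is always matched by a corresponding branch in the adjacent participant's current state (by construction of Figure~\ref{fig:pspace_safety}), so condition~(2) of Definition~\ref{def:safe_states} is never violated along the path.

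The main obstacle I anticipate is the bookkeeping of \emph{which} participant is currently relaying and verifying that the states of \emph{both} neighbours line up at each micro-step, since a single logical ``query'' expands into a long sequence of $\rightarrowunique$ steps ($\RULE{Bra}$ applications interleaved with $\RULE{Comp}$ framing) that thread through the linear topology twice (once leftward carrying the query, once rightward carrying the answer). The delicate point is the boundary case at $\pp_i$ versus a generic relay node: I must confirm that $\pp_i$ recognizes the query as addressed to \emph{itself} (label $\kfquery_{\pp_i}$ hits the dedicated branch rather than the generic $\kfquery_{\rolem{t}}$, $t \neq \pp_i$ branch) and short-circuits the leftward propagation, whereas every other node faithfully forwards. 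Making the induction hypothesis precise enough to capture the restored intermediate states — so that the returned context matches the claimed form exactly, not just up to the answer label at $\pr_k$ — is where the argument requires the most care.
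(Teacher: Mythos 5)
Your proposal is correct and matches the paper's proof in essence: both arguments trace the single deterministic, safe reduction path that carries $\kfquery_{\pp_i}$ leftward through the relay states ($R$ for the $\pr_j$, $V^j$ for the $\pp_j$), short-circuits at $\pp_i$ via the dedicated $\kfquery_{\pp_i}$ branch ($T_{21}$/$F_{21}$), and threads the $\kfyes$/$\kfno$ answer back rightward while restoring every intermediate participant to its original state. The paper simply writes out the full chain of $\rightarrowsafeunique$ steps explicitly for the $\ttrue$ case (the other being symmetric), whereas you package the same verification as an induction with a relay sublemma — a presentational difference only.
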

\begin{proof}
 We focus on the case where $\cA(\pp_i) = \ttrue$; the other case is symmetric. Defining $(V^j)^{\participant{t}}_{11} = T^{\participant{t}}_{11}$ if $V^j = T$, and $(V^j)^{\participant{t}}_{11} = F^{\participant{t}}_{11}$ if $V^j = F$, \textit{etc.}, observe that:\\

$
\begin{array}{rll}
\small
\Delta\rightarrowsafeunique &(W, \cA, \pr_1 : R, \pr_2 : R, \dots, \pr_{k-3} : R, \pr_{k-2} : R, \pr_{k-1} : R_1^{\pp_i}, \pr_k : \T', I, \dots, I)\\
    \rightarrowsafeunique &(W, \cA, \pr_1 : R, \pr_2 : R, \dots, \pr_{k-3} : R, \pr_{k-2} : R_1^{\pp_i}, \pr_{k-1} : R_2, \pr_k : \T', I, \dots, I)\\
                          &\dots \\
    \rightarrowsafeunique &(W, \cA, \pr_1 : R_1^{\pp_i}, \pr_2 : R_2, \dots, \pr_{k-3} : R_2, \pr_{k-2} : R_2, \pr_{k-1} : R_2, \pr_k : \T', I, \dots, I)\\
    = &(W, V^1, \dots, V^{i-1}, \pp_i: T, V^{i+1}, V^{i+2}, \dots, V^{n-1}, V^n, R_1^{\pp_i}, R_2, \dots, R_2, \T', I, \dots, I)\\
   \rightarrowsafeunique &(W, V^1, \dots, V^{i-1}, \pp_i : T, V^{i+1}, V^{i+2}, \dots, V^{n-1}, (V^n)^{\pp_i}_{11}, R_2, R_2, \dots, R_2, \T', I, \dots, I)\\
    \rightarrowsafeunique &(W, V^1, \dots, V^{i-1}, \pp_i : T, V^{i+1}, V^{i+2}, \dots, (V^{n-1})^{\pp_i}_{11}, V^n_{12}, R_2, R_2, \dots, R_2, \T', I, \dots, I)\\
                          &\dots \\
\end{array}
$\\

$
\begin{array}{lll}
\rightarrowsafeunique &(W, V^1, \dots, V^{i-1}, \pp_i : T, (V^{i+1})^{\pp_i}_{11}, V^{i+2}_{12}, \dots, V^{n-1}_{12}, V^n_{12}, R_2, R_2, \dots, R_2, \T', I, \dots, I)\\
\rightarrowsafeunique &(W, V^1, \dots, V^{i-1}, \pp_i : T_{21}, V^{i+1}_{12}, V^{i+2}_{12}, \dots, V^{n-1}_{12}, V^n_{12}, R_2, R_2, \dots, R_2, \T', I, \dots, I)\\
    \rightarrowsafeunique &(W, V^1, \dots, V^{i-1}, \pp_i : T, V^{i+1}_{13}, V^{i+2}_{12}, \dots, V^{n-1}_{12}, V^n_{12}, R_2, R_2, \dots, R_2, \T', I, \dots, I)\\
    \rightarrowsafeunique &(W, V^1, \dots, V^{i-1}, \pp_i : T, V^{i+1}, V^{i+2}_{13}, \dots, V^{n-1}_{12}, V^n_{12}, R_2, R_2, \dots, R_2, \T', I, \dots, I)\\
                          &\dots \\
    \rightarrowsafeunique &(W, V^1, \dots, V^{i-1}, \pp_i : T, V^{i+1}, V^{i+2}, \dots, V^{n-1}, V^n_{13}, R_2, R_2, \dots, R_2, \T', I, \dots, I)\\
    \rightarrowsafeunique &(W, V^1, \dots, V^{i-1}, \pp_i : T, V^{i+1}, V^{i+2}, \dots, V^{n-1}, V^n, R_3, R_2, \dots, R_2, \T', I, \dots, I)\\
    = &(W, \cA, \pr_1 : R_3, \pr_2 : R_2, \dots, \pr_{k-2} : R_2, \pr_{k-1} : R_2, \pr_k : \T', I, \dots, I)\\
    \rightarrowsafeunique &(W, \cA, \pr_1 : R, \pr_2 : R_3, \dots, \pr_{k-2} : R_2, \pr_{k-1} : R_2, \pr_k : \T', I, \dots, I)\\
                          &\dots \\
    \rightarrowsafeunique &(W, \cA, \pr_1 : R, \pr_2 : R, \dots, \pr_{k-2} : R, \pr_{k-1} : R_3, \pr_k : \T', I, \dots, I)\\
    \rightarrowsafeunique &(W, \cA, \pr_1 : R, \pr_2 : R, \dots, \pr_{k-2} : R, \pr_{k-1} : R, \pr_k : \T_\kfyes, I, \dots, I)
\end{array}
$\\

Intuitively, the message $\kfquery_{\pp_i}$ is being propagated through the chain of participants, starting at $\pr_k$ and ending at $\pp_i$, and then the ``return value'' of $\kfyes$ is propagated backwards through the same chain. The states of the participants store the necessary information to uniquely specify this reduction path.
\end{proof}

The next lemma details how $\pr_i$ finds the truth value of the conjunction of all clauses starting from $\C_i$.
As the participants $\pr_i$ use the $\kfquery_{\rolem{t}}$ labels to find the truth values of their variables,
we use Lemma~\ref{thm:pspace_query_reduction} to specify the results of this behaviour.
As one might expect, $\pr_i$ enters state $T_r$, ready to send a $\kfdoneyes$ to $\pr_{i-1}$
if the conjunction is true, otherwise it will enter state $F_r$ and instead send $\kfdoneno$.

\begin{lemma}[Reduction of $\pr_i$]\label{thm:pspace_clause_reduction}
  For $\cA: \{v_1, \dots, v_n\} \rightarrow \{\tfalse, \ttrue\}$:
  \begin{itemize}
    \item $(W, \cA, R, \dots, R, \pr_k : I_1, I, \dots, I) \rightarrowsafeuniques (W, \cA, R, \dots, R, \pr_k : T_r, I, \dots, I)$ if $\cA \models \bigwedge_{i=k}^m C_i$.
    \item $(W, \cA, R, \dots, R, \pr_k : I_1, I, \dots, I) \rightarrowsafeuniques (W, \cA, R, \dots, R, \pr_k : F_r, I, \dots, I)$ if $\cA \not\models \bigwedge_{i=k}^m C_i$.
  \end{itemize}
\end{lemma}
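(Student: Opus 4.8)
The plan is to argue by downward induction on $k$, following the recursive reading of the construction: participant $\pr_k$ first forwards the initialisation integer to $\pr_{k+1}$, lets $\pr_{k+1}$ compute the truth value of the tail conjunction $\bigwedge_{i=k+1}^{m} C_i$, and then either short-circuits or evaluates its own clause $C_k$. Concretely, from $\pr_k : I_1$ the only enabled synchronisation is the output $\trout{\pr_{k+1}}{\tint}$ matched with $\pr_{k+1}:I \trans{\trin{\pr_k}{\tint}} I_1$, which moves $\pr_k$ to $R$ and $\pr_{k+1}$ to $I_1$. The resulting context is $(W,\cA,R,\dots,R,\pr_{k+1}:I_1,I,\dots,I)$, which is exactly the hypothesis of the lemma for index $k+1$ when $k<m$; for $k=m$ the role of $\pr_{k+1}$ is played by the fixed base participant $\pr_{m+1}$, which supplies the verdict for the empty tail and so forms the base case.

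Next I would apply the induction hypothesis (resp.\ the fixed behaviour of $\pr_{m+1}$) to reduce this context by $\rightarrowsafeuniques$ to $(W,\cA,R,\dots,R,\pr_{k+1}:T_r,I,\dots,I)$ when $\cA \models \bigwedge_{i=k+1}^{m} C_i$, and to the analogous $F_r$ context otherwise, noting that the conclusion leaves $\cA$ and the $R$-states of $\pr_1,\dots,\pr_k$ untouched. Participant $\pr_{k+1}$ then reports its verdict to $\pr_k$ and returns to $I$: a $\kfdoneno$ synchronisation drives $\pr_k$ straight into $F_r$, whereas $\kfdoneyes$ moves $\pr_k$ into the clause-checking region beginning at $Q_1'$. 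In the first case the tail is false, hence so is $\bigwedge_{i=k}^{m} C_i$, and $\pr_k$ correctly ends in $F_r$. In the second case the tail is true, so $\bigwedge_{i=k}^{m} C_i$ collapses to the value of $C_k$. To evaluate $C_k = L_{k1}\vee L_{k2}\vee L_{k3}$, $\pr_k$ queries the variable of each literal in turn; each query has exactly the shape required by Lemma~\ref{thm:pspace_query_reduction}, so I would invoke that lemma to resolve it according to $\cA$ while restoring the $R$-states and preserving $\cA$. Tracing the resulting branches shows that $\pr_k$ reaches $T_r$ as soon as one literal is satisfied and reaches $F_r$ only when all three fail, i.e.\ exactly when $C_k$ is false; combined with the (true) tail this yields $T_r$ iff $\cA\models\bigwedge_{i=k}^{m} C_i$.

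The hard part will be the bookkeeping rather than the logic. At every micro-step one must check that the current context is a safe state and that the enabled synchronisation is unique, so that the transition is a genuine $\rightarrowsafeunique$ step in the sense of Definition~\ref{def:unique_safe_reduction_paths}. I expect this to follow from the linear communication topology: at each stage of the trace there is a single ``active frontier'' of dual actions (an init forwarding, a query propagation treated wholesale by Lemma~\ref{thm:pspace_query_reduction}, or a verdict report), and no other participant can fire. The only point where several continuations are syntactically available is the three-way clause check, and there the branch actually taken is pinned down by $\cA$ through Lemma~\ref{thm:pspace_query_reduction}. Assembling these segments into one deterministic, safe path from $\pr_k:I_1$ to $\pr_k:T_r$ or $\pr_k:F_r$, with every other component returned to its original state, is the technical core of the argument.
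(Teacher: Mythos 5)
Your plan coincides with the paper's own proof: the paper argues by reverse induction on $k$ (base case at the top of the chain, inductive step for $k<m$), forwards the initialisation to $\pr_{k+1}$, invokes the inductive hypothesis to obtain $\pr_{k+1}$'s verdict, short-circuits $\pr_k$ to $F_r$ on $\kfdoneno$, and otherwise resolves the literals of $C_k$ one at a time via Lemma~\ref{thm:pspace_query_reduction}, with exactly your case split on which literal (if any) is satisfied, the determinism and safety of each step coming from the linear topology. The only cosmetic difference is the base case, which the paper anchors at $k=m$ via the identification of $I_1$ with the verdict-reporting state in $\pr_{m+1}$'s type graph rather than as a separately stated claim about $\pr_{m+1}$.
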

\begin{proof}
  Let $\sqbf_k = \bigwedge_{i=k}^m C_i$. The proof is by induction on $k$. For $k = m$, $F$ and $I_1$ are the same state, so the result follows immediately.
  For $k < m$, we have two cases to consider.

  \noindent\case $\cA \models \sqbf_k$. We have that $\cA \models \sqbf_{k+1}$ and at least one of $\cA(L_{k1})$, $\cA(L_{k2})$ and $\cA(L_{k3})$ is $\ttrue$.
  
If $\cA(L_{k1}) = \ttrue$:
\[
  \begin{array}{lll}
    &(W, \cA, R, \dots, R, \pr_k : I_1, I, \dots, I)\\
    \rightarrowsafeunique &(W, \cA, R, \dots, R, \pr_k : R, \pr_{k+1} : I_1, I \dots, I)\\
    \rightarrowsafeuniques &(W, \cA, R, \dots, R, \pr_k : R, \pr_{k+1} : T_r, I, \dots, I) & \text{by inductive hypothesis}\\
    \rightarrowsafeunique &(W, \cA, R, \dots, R, \pr_k : Q_1', I, \dots, I)\\
    \rightarrowsafeuniques &(W, \cA, R, \dots, R, \pr_k : T_r, I, \dots, I) & \text{by Lemma~\ref{thm:pspace_query_reduction}}
  \end{array}
\]
  If $\cA(L_{k1}) = \tfalse$ and $\cA(L_{k2}) = \ttrue$:
\[
  \begin{array}{lll}
    &(W, \cA, R, \dots, R, \pr_k : I_1, I, \dots, I)\\
    \rightarrowsafeunique &(W, \cA, R, \dots, R, \pr_k : R, \pr_{k+1} : I_1, I \dots, I)\\
    \rightarrowsafeuniques &(W, \cA, R, \dots, R, \pr_k : R, \pr_{k+1} : T_r, I, \dots, I) & \text{by inductive hypothesis}\\
    \rightarrowsafeunique &(W, \cA, R, \dots, R, \pr_k : Q_1', I, \dots, I)\\
    \rightarrowsafeuniques &(W, \cA, R, \dots, R, \pr_k : Q_2', I, \dots, I) & \text{by Lemma~\ref{thm:pspace_query_reduction}}\\
    \rightarrowsafeuniques &(W, \cA, R, \dots, R, \pr_k : T_r, I, \dots, I) & \text{by Lemma~\ref{thm:pspace_query_reduction}}\\
\end{array}
\]

  If $\cA(L_{k1}) = \tfalse$ and $\cA(L_{k2}) = \tfalse$ and $\cA(L_{k3}) = \ttrue$:
\[
  \begin{array}{lll}
    &(W, \cA, R, \dots, R, \pr_k : I_1, I, \dots, I)\\
    \rightarrowsafeunique &(W, \cA, R, \dots, R, \pr_k : R, \pr_{k+1} : I_1, I \dots, I)\\
    \rightarrowsafeuniques &(W, \cA, R, \dots, R, \pr_k : R, \pr_{k+1} : T_r, I, \dots, I) & \text{by inductive hypothesis}\\
    \rightarrowsafeunique &(W, \cA, R, \dots, R, \pr_k : Q_1', I, \dots, I)\\
    \rightarrowsafeuniques &(W, \cA, R, \dots, R, \pr_k : Q_2', I, \dots, I) & \text{by Lemma~\ref{thm:pspace_query_reduction}}\\
    \rightarrowsafeuniques &(W, \cA, R, \dots, R, \pr_k : Q_3', I, \dots, I) & \text{by Lemma~\ref{thm:pspace_query_reduction}}\\
    \rightarrowsafeuniques &(W, \cA, R, \dots, R, \pr_k : T_r, I, \dots, I) & \text{by Lemma~\ref{thm:pspace_query_reduction}}\\
\end{array}
\]

  \noindent\case $\cA \not\models \sqbf_k$. We have that $\cA \not\models \sqbf_{k+1}$ or $\cA(L_{k1}) = \cA(L_{k2}) = \cA(L_{k3}) = \tfalse$.
  
  If $\cA \not\models \sqbf_{k+1}$:
\[
  \begin{array}{lll}
    &(W, \cA, R, \dots, R, \pr_k : I_1, I, \dots, I)\\
    \rightarrowsafeunique &(W, \cA, R, \dots, R, \pr_k : R, \pr_{k+1} : I_1, I \dots, I)\\
    \rightarrowsafeuniques &(W, \cA, R, \dots, R, \pr_k : R, \pr_{k+1} : F_r, I, \dots, I) & \text{by inductive hypothesis}\\
    \rightarrowsafeunique &(W, \cA, R, \dots, R, \pr_k : F_r, I, \dots, I)\\
\end{array}
\]
  If $\cA \models \sqbf_{k+1}$ and $\cA(L_{k1}) = \cA(L_{k2}) = \cA(L_{k3}) = \tfalse$:
\[
  \begin{array}{lll}
    &(W, \cA, R, \dots, R, \pr_k : I_1, I, \dots, I)\\
    \rightarrowsafeunique &(W, \cA, R, \dots, R, \pr_k : R, \pr_{k+1} : I_1, I \dots, I)\\
    \rightarrowsafeuniques &(W, \cA, R, \dots, R, \pr_k : R, \pr_{k+1} : T_r, I, \dots, I) & \text{by inductive hypothesis}\\
    \rightarrowsafeunique &(W, \cA, R, \dots, R, \pr_k : Q_1', I, \dots, I)\\
    \rightarrowsafeuniques &(W, \cA, R, \dots, R, \pr_k : Q_2', I, \dots, I) & \text{by Lemma~\ref{thm:pspace_query_reduction}}\\
    \rightarrowsafeuniques &(W, \cA, R, \dots, R, \pr_k : Q_3', I, \dots, I) & \text{by Lemma~\ref{thm:pspace_query_reduction}}\\
    \rightarrowsafeuniques &(W, \cA, R, \dots, R, \pr_k : F_r, I, \dots, I) & \text{by Lemma~\ref{thm:pspace_query_reduction}}\\
\end{array}
\]

  Each case yields the desired result, so the lemma holds.
\end{proof}

Next, we show the behaviour of $\pp_i$, which behaves similarly to $\pr_i$, but instead handles the quantified variables.

\begin{lemma}[Reduction of $\pp_i$]\label{thm:pspace_var_reduction}
  For $\cA: \{v_1, \dots, v_k\} \rightarrow \{\tfalse, \ttrue\}$:
  \begin{itemize}
    \item $[\cA] \rightarrowsafeuniques (W, \cA, T_r, I, \dots, I)$ if $\cA \models \cQ_{k+1} v_{k+1} \dots \cQ_n v_n \sqbf$.
    \item $[\cA] \rightarrowsafeuniques (W, \cA, F_r, I, \dots, I)$ if $\cA \not\models \cQ_{k+1} v_{k+1} \dots \cQ_n v_n \sqbf$.
  \end{itemize}
\end{lemma}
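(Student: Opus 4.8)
The plan is to prove Lemma~\ref{thm:pspace_var_reduction} by \emph{downward induction} on $k$, mirroring the structure of the proof of Lemma~\ref{thm:pspace_clause_reduction}, but now handling the quantifier alternation rather than the clause conjunction. The base case $k = n$ is where all variables are assigned: here the assignment $\cA$ is total, so $\cQ_{n+1}\dots\sqbf$ is just $\sqbf$, and the state $[\cA]$ must hand control to the chain of clause participants $\pr_1,\dots,\pr_{m+1}$. I would invoke Lemma~\ref{thm:pspace_clause_reduction} to evaluate $\bigwedge_{i=1}^m C_i$ under $\cA$, and then trace how the resulting $T_r$ or $F_r$ of $\pr_1$ is propagated leftward through $\pp_n, \pp_{n-1}, \dots, \pp_1$ back to $\ps$ in state $W$, so that $\pp_1$ ends in $T_r$ or $F_r$ as appropriate. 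This is exactly the ``return value'' propagation already used in Lemma~\ref{thm:pspace_query_reduction}, so it should be routine bookkeeping.

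For the inductive step (assuming the result for $k+1$), the key subtlety is the quantifier $\cQ_{k+1}$. Participant $\pp_{k+1}$ tries \emph{both} truth values: it first sets $v_{k+1} = \tfalse$ (entering state $F$), recursively evaluates $\cQ_{k+2}v_{k+2}\dots\sqbf$ under the extended assignment $\cA[v_{k+1}\mapsto\tfalse]$ via the inductive hypothesis, and then depending on the returned value and on whether $\cQ_{k+1}$ is $\exists$ or $\forall$, either short-circuits or proceeds to try $v_{k+1}=\ttrue$ (entering state $T$). The short-circuit labels are captured by $\kfresolve(\cQ_i)$ and $\kfresolved(\cQ_i)$: for $\exists$, a $\kfdoneyes$ from the right resolves immediately to $\kfdoneyes$ (state $T_r$), while a $\kfdoneno$ forces the second attempt; for $\forall$ the roles swap. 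I would carry out a case analysis on $\cQ_{k+1} \in \{\exists,\forall\}$ and, within each, on the truth values returned by the two recursive evaluations, in each case exhibiting the explicit $\rightarrowsafeuniques$ reduction sequence (applying the inductive hypothesis once or twice and Lemma~\ref{thm:pspace_query_reduction} for the intermediate $\kfquery_{\rolem{t}}$ traffic) that lands $\pp_{k+1}$ in $T_r$ or $F_r$ precisely when $\cA \models \cQ_{k+1}v_{k+1}\dots\sqbf$ holds or fails. The semantics of the quantifier must match: $\cA\models\exists v_{k+1}.\,\cdots$ iff the formula holds for \emph{some} value of $v_{k+1}$, and $\forall$ iff it holds for \emph{both}, which is exactly what the two-attempt-with-short-circuit behaviour computes.

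The main obstacle I anticipate is not the logical content but the \textbf{verification of determinism and safety at every intermediate state}, i.e.\ that each step is genuinely a $\rightarrowsafeunique$ (unique safe) reduction rather than merely some $\rightarrow$ reduction. Because the types in Figure~\ref{fig:pspace_safety} are carefully designed so that at any reachable configuration exactly one synchronising pair of dual actions is enabled, I would need to confirm---using the graph structure in Figure~\ref{fig:pspace_safety_graphs} and the fact that the intermediate states encode which participant is currently ``active''---that no spurious alternative reduction is available and that the matched selection/branching and output/input pairs never violate the safe-state conditions of Definition~\ref{def:safe_states}. In practice this reduces to observing that at each configuration all participants except the two involved in the current communication are in a quiescent state (either $I$, or $R$, or a $V^j$ node with no enabled dual partner), which I would state once as an invariant and appeal to throughout, rather than re-checking it at every line. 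The chief bookkeeping burden is simply tracking the long propagation chains (as in the display in Lemma~\ref{thm:pspace_query_reduction}), but these follow a fixed pattern and can be abbreviated after the first instance.
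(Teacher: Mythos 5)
Your overall strategy is the paper's: downward induction on $k$ (base case $k=n$ discharged by Lemma~\ref{thm:pspace_clause_reduction}; inductive step by case analysis on $\cQ_{k+1}$ and on the outcomes of the one or two applications of the inductive hypothesis, with the short-circuit labels $\kfresolve(\cQ_i)$/$\kfresolved(\cQ_i)$ implementing the quantifier semantics), and your plan to package the determinism/safety verification as a single quiescence invariant is a reasonable reorganisation of what the paper checks line by line.

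The one concrete problem is your base case. For $k=n$ the target state $(W,\cA,T_r,I,\dots,I)$ places $T_r$ at position $k+1$, i.e.\ at $\pr_1$, \emph{not} at $\pp_1$; the base case is therefore exactly the $k=1$ instance of Lemma~\ref{thm:pspace_clause_reduction} and needs no further work. The ``leftward propagation of $T_r$/$F_r$ through $\pp_n,\dots,\pp_1$ back to $\ps$'' that you append there is not only unnecessary but could not be carried out as mere bookkeeping: moving the result past each $\pp_i$ requires $\pp_i$ to decide, depending on $\cQ_i$ and the label it receives, whether to short-circuit or to enter state $T$ and trigger a full re-evaluation of everything to its right --- which is precisely the content of the inductive step. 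As written, your base case is attempting to prove the $k=0$ instance outright, and the induction collapses into a single unmanageable case. Drop the propagation from the base case (it belongs, one participant at a time, to the inductive steps), and the rest of your plan goes through as in the paper.
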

\begin{proof}
  Let $\sqbf_k = \cQ_{k+1} v_{k+1} \dots \cQ_n v_n \sqbf$.
  The proof is by induction on $k$. For $k = 0$, use Lemma~\ref{thm:pspace_clause_reduction}.
  For $k > 0$, we consider the case $\cQ_{k+1} = \forall$; the $\exists$ case is similar.
  
  If $\cA \models \sqbf_k$,
    we have $\cA_{[v_{k+1} \mapsto \tfalse]} \models \sqbf_k$
    and $\cA_{[v_{k+1} \mapsto \ttrue]} \models \sqbf_k$.
\[
  \begin{array}{lll}
    [\cA]\\
    \rightarrowsafeunique &(W, \cA, F, I_1, \dots, I)
      = [\cA_{[v_{k+1} \mapsto \tfalse]}]\\
    \rightarrowsafeuniques &(W, \cA_{[v_{k+1} \mapsto \tfalse]}, T_r, I, \dots, I)
      = (W, \cA, F, T_r, I, \dots, I)\\
    \rightarrowsafeunique &(W, \cA, I_2, I, \dots, I)\\
    \rightarrowsafeunique &(W, \cA, T, I_1, \dots, I)
      = [\cA_{[v_{k+1} \mapsto \ttrue]}]\\
    \rightarrowsafeuniques &(W, \cA_{[v_{k+1} \mapsto \ttrue]}, T_r, I, \dots, I)
      = (W, \cA, T, T_r, I, \dots, I)\\
    \rightarrowsafeunique &(W, \cA, T_r, I, \dots, I)\\
\end{array}
\]
  If $\cA \not\models \sqbf_k$, then there are two cases: first, (i) $\cA_{[v_{k+1} \mapsto \tfalse]} \not\models \sqbf_k$:
\[
  \begin{array}{lll}
    [\cA]\\
    \rightarrowsafeunique &(W, \cA, F, I_1, \dots, I)
      = [\cA_{[v_{k+1} \mapsto \tfalse]}]\\
    \rightarrowsafeuniques &(W, \cA_{[v_{k+1} \mapsto \tfalse]}, F_r, I, \dots, I)
      = (W, \cA, F, F_r, I, \dots, I)\\
    \rightarrowsafeunique &(W, \cA, F_r, I, \dots, I)\\
\end{array}
\]
  Second, (ii) $\cA_{[v_{k+1} \mapsto \tfalse]} \models \sqbf_k$ and $\cA_{[v_{k+1} \mapsto \ttrue]} \not\models \sqbf_k$:
\[
  \begin{array}{lll}
    [\cA]\\
    \rightarrowsafeunique &(W, \cA, F, I_1, \dots, I)
      = [\cA_{[v_{k+1} \mapsto \tfalse]}]\\
    \rightarrowsafeuniques &(W, \cA_{[v_{k+1} \mapsto \tfalse]}, T_r, I, \dots, I)
      = (W, \cA, F, T_r, I, \dots, I)\\
    \rightarrowsafeunique &(W, \cA, I_2, I, \dots, I)\\
    \rightarrowsafeunique &(W, \cA, T, I_1, \dots, I)
      = [\cA_{[v_{k+1} \mapsto \ttrue]}]\\
    \rightarrowsafeuniques &(W, \cA_{[v_{k+1} \mapsto \ttrue]}, F_r, I, \dots, I)
      = (W, \cA, T, F_r, I, \dots, I)\\
    \rightarrowsafeunique &(W, \cA, F_r, I, \dots, I)\\
\end{array}
\]
  Each $\rightarrowsafeunique$ can be checked directly, and each $\rightarrowsafeuniques$ is from the inductive hypothesis.
  Therefore, the lemma follows by induction.
\end{proof}

\begin{theorem}\label{thm:reduction_from_empty_assignment}
  Recall that
  $[\emptyset] = (\ps{:}W, \pp_1{:}I_1, \pp_2{:}I, \dots,
  \pr_{m+1}{:}I)$. 
  Then: {\rm (1)} 
  $\Deltainit \rightarrowsafeunique [\emptyset]$;
  {\rm (2)} 
  If $\models \bqbf$, then $[\emptyset] \rightarrowsafeuniques
  \Deltainit$; 
  and {\rm (3)} If $\not\models \bqbf$, then $[\emptyset] \rightarrowsafeuniques (\ps{:}\Tbad, \pp_1{:}I, \dots, \pr_{m+1}{:}I)$.
\end{theorem}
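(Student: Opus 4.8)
The plan is to prove each of the three claims by combining a single directly-verified transition with the already-established reduction lemma for the variable participants, Lemma~\ref{thm:pspace_var_reduction}. The crucial observation is that $[\emptyset]$ is precisely the instance of the hypothesis of that lemma with $k = 0$ and empty assignment $\cA = \emptyset$, so that $\cQ_{k+1} v_{k+1} \dots \cQ_n v_n \sqbf$ is literally $\bqbf$. Thus the substance of claims (2) and (3) is already carried by the lemma, and all that remains is to bracket it with the opening and closing moves of the controller $\ps$.

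For claim (1), I would read off the unique enabled reduction at $\Deltainit = (I, \dots, I)$. The controller starts in $I = \tout{\pp_1}{\tint}\,W$ and $\pp_1$ starts in $I = \tin{\ps}{\tint}\,I_1$ (recall $\pp_0 = \ps$), while every other participant is blocked waiting to receive from its left neighbour, which offers no matching send. Hence the only available reduction is the message $\OutIn{\ps}{\pp_1}$, which is therefore deterministic and carries $\Deltainit$ to $(W, I_1, I, \dots, I) = [\emptyset]$. Since this matched send/receive pair is the sole candidate and every other offered input lacks a matching send, $\Deltainit$ is a safe state, so the step is a $\rightarrowsafeunique$.

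For claim (2), I would invoke Lemma~\ref{thm:pspace_var_reduction} with $k = 0$, $\cA = \emptyset$, and $\models \bqbf$, obtaining $[\emptyset] \rightarrowsafeuniques (W, T_r, I, \dots, I)$. It then remains to take one final transition: here $\ps$ sits at $W = \tbra{\pp_1}{\kfdoneyes: \ty, \kfdoneno: \Tbad}$ and $\pp_1$ sits at $T_r$, whose only outgoing edge is $\trsel{\ps}{\kfdoneyes}$. The matching branch reduction $\SelBra{\pp_1}{\ps}{\kfdoneyes}$ sends $\ps$ to the unfolding of $\ty$, namely $I$, and $\pp_1$ to $I$, giving $(I, I, \dots, I) = \Deltainit$; again this is the unique enabled, safe step, so $[\emptyset] \rightarrowsafeuniques \Deltainit$. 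Claim (3) is symmetric: Lemma~\ref{thm:pspace_var_reduction} with $\not\models \bqbf$ yields $[\emptyset] \rightarrowsafeuniques (W, F_r, I, \dots, I)$, and since $F_r$'s only outgoing edge is $\trsel{\ps}{\kfdoneno}$, the reduction $\SelBra{\pp_1}{\ps}{\kfdoneno}$ takes $\ps$ to $\Tbad$ and $\pp_1$ to $I$, producing $(\Tbad, I, \dots, I)$.

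The main obstacle is verifying the determinism and safety assertions for the two directly-handled steps (the opening message and the closing branch), since the lemma supplies these only for the intermediate stretch. Concretely, this amounts to confirming that in each of $\Deltainit$, $(W, T_r, I, \dots, I)$ and $(W, F_r, I, \dots, I)$, every participant outside the active pair is genuinely blocked, so that no competing reduction is enabled, and that the single enabled action is matched on both sides, so that the state is safe. This follows by inspecting the type graphs of Figure~\ref{fig:pspace_safety_graphs}, but some care is required to check that the branching structure of $W$ introduces no spurious alternative transitions at the controller.
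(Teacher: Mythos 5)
Your proof is correct and follows essentially the same route as the paper's: statement (1) by direct inspection of the unique enabled reduction at $\Deltainit$, and statements (2) and (3) by instantiating Lemma~\ref{thm:pspace_var_reduction} with $k=0$ and $\cA=\emptyset$. You additionally spell out the final closing transition from $(W, T_r, I, \dots, I)$ back to $\Deltainit$ (resp.\ from $(W, F_r, I, \dots, I)$ to $(\Tbad, I, \dots, I)$), which the paper's two-sentence proof leaves implicit; this is a useful bit of extra rigour but not a different argument.
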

\begin{proof}
  The first statement is true by inspection of the transition system. The latter two statements are true by Lemma~\ref{thm:pspace_var_reduction} with $k = 0$.
\end{proof}


We can now use the above lemmas to prove \PSPACE-hardness for checking safety, liveness and deadlock-freedom.

\subsection{\PSPACE-Hardness of Safety}
\label{sec:safety_pspace_hard}

\begin{lemma}\label{thm:unique_reduction_unsafety}
  If $\Delta \rightarrow \Delta_1 \rightarrowsafeuniques \Delta$ 
    and $\Delta \rightarrow^* \Delta'$
    and $\Delta$ is a safe state
    and $\Delta'$ is not a safe state
    then $\Delta \rightarrow \Delta_2 \rightarrow^* \Delta'$
    for some $\Delta_2 \neq \Delta_1$.
\end{lemma}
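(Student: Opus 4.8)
The statement says: if $\Delta$ is a safe state with a deterministic (hence unique) safe reduction cycle $\Delta \rightarrow \Delta_1 \rightarrowsafeuniques \Delta$ back to itself, and some \emph{unsafe} state $\Delta'$ is reachable from $\Delta$, then the reduction witnessing this must \emph{diverge} from the cyclic path at the very first step. The plan is to argue by contradiction: suppose the first step of the reduction $\Delta \rightarrow^* \Delta'$ agrees with the cyclic path, i.e.\ $\Delta \rightarrow \Delta_1 \rightarrow^* \Delta'$. I would then show that every state reachable from $\Delta_1$ along the cycle is safe, which forces $\Delta'$ to be safe, contradicting the hypothesis.

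\textbf{Key steps.} First I would invoke the hypothesis $\Delta_1 \rightarrowsafeuniques \Delta$: by Definition~\ref{def:unique_safe_reduction_paths}, this means $\Delta_1$ is a safe state and \emph{each} intermediate state along this deterministic path back to $\Delta$ is safe. The crucial observation is that $\rightarrowsafeunique$ is deterministic, so from $\Delta_1$ there is a \emph{unique} successor at every step until $\Delta$ is reached. Next, I would combine this with the assumption $\Delta \rightarrow \Delta_1 \rightarrowsafeuniques \Delta$ to conclude that the entire orbit $\Delta, \Delta_1, \dots, \Delta$ consists of safe states forming a deterministic cycle. Now if $\Delta \rightarrow \Delta_1 \rightarrow^* \Delta'$ held, then because the reductions out of $\Delta_1$ are forced (deterministic along the safe cycle), the path from $\Delta_1$ must follow the cycle: every state it passes through is one of the safe states on the orbit, and upon returning to $\Delta$ the process repeats. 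Hence every state reachable from $\Delta_1$ in this manner is safe, so in particular $\Delta'$ would be safe. This contradicts the assumption that $\Delta'$ is not a safe state. Therefore the first step cannot be $\Delta \rightarrow \Delta_1$, so there exists $\Delta_2 \neq \Delta_1$ with $\Delta \rightarrow \Delta_2 \rightarrow^* \Delta'$.

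\textbf{Main obstacle.} The delicate point is justifying that ``following the cycle from $\Delta_1$ only ever visits safe states,'' because determinism of $\rightarrowsafeunique$ is defined only for safe states, and I must ensure the argument does not silently assume safety of a state in order to prove its safety. The clean way to handle this is via Lemma~\ref{thm:safety_from_reductions}, which gives the characterisation that a state is safe iff all its reachable states are safe states. Concretely, I would argue: since $\Delta \rightarrow \Delta_1$ is part of a cycle in which $\Delta_1 \rightarrowsafeuniques \Delta$, the set of states reachable from $\Delta_1$ while staying on the deterministic cycle is exactly the finite orbit $\{\Delta, \Delta_1, \dots\}$, and by hypothesis all of these are safe states. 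Any $\Delta'$ with $\Delta_1 \rightarrow^* \Delta'$ must, by determinism, lie on this orbit (the successor relation is single-valued at each safe state, so the reduction cannot branch off), hence $\Delta'$ is safe. This forces the contradiction and completes the proof. The subtlety lies entirely in correctly threading the determinism and the safe-state hypotheses so that no circularity arises; once the orbit is shown to be a closed set of safe states, the conclusion is immediate.
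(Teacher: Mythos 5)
There is a genuine gap in your argument, and it sits exactly where you flagged the ``delicate point''. Your justification that the path from $\Delta_1$ cannot leave the orbit is ``the successor relation is single-valued at each safe state, so the reduction cannot branch off''. This misreads Definition~\ref{def:unique_safe_reduction_paths}: $\rightarrowsafeunique$ is a property of particular reductions, not of safe states, and the hypothesis $\Delta_1 \rightarrowsafeuniques \Delta$ only gives you determinism at the \emph{intermediate} states of that chain. It gives you nothing about the outgoing reductions of $\Delta$ itself, which is a safe state on the orbit but is deliberately \emph{not} assumed deterministic --- if it were, the lemma would be vacuous, since the whole point is that $\Delta$ may have a second successor $\Delta_2$. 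Consequently your claim ``any $\Delta'$ with $\Delta_1 \rightarrow^* \Delta'$ must lie on this orbit'' is simply false: $\Delta_1 \rightarrow^* \Delta \rightarrow^* \Delta'$, so the unsafe $\Delta'$ is itself reachable from $\Delta_1$, via a path that follows the cycle back to $\Delta$ and then branches off there.

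What actually prevents a path to $\Delta'$ from branching off at $\Delta$ is not determinism but the contradiction hypothesis (every path from $\Delta$ to $\Delta'$ must begin with $\Delta \rightarrow \Delta_1$), and this must be re-invoked each time the path revisits $\Delta$; you then still need an argument that this forced looping terminates in a contradiction. The paper closes this loop with a well-founded descent: it fixes the \emph{shortest} reduction sequence $\Delta \rightarrow^* \Delta'$, uses determinism to show it must traverse the chain back to $\Delta^m = \Delta$, and observes that the remaining suffix is a strictly shorter path from $\Delta$ to $\Delta'$, contradicting minimality. (Alternatively, an induction along the finite path showing every visited state is an orbit state --- using determinism at the $\Delta^{\dag}_i$ and the contradiction hypothesis at $\Delta$ --- would also work, and is closer in spirit to what you sketch.) Either way, the missing ingredient is the descent or induction that replaces your appeal to determinism at $\Delta$.
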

\begin{proof}
  By contradiction. Assume that $\Delta$, $\Delta_1$, $\Delta'$ are as described,
  but there is no $\Delta_2 \neq \Delta_1$ such that $\Delta \rightarrow \Delta_2 \rightarrow^* \Delta'$.
  Fix the shortest reduction sequence $\Delta = \Delta^0 \rightarrow \Delta^1 \dots \rightarrow \Delta^n = \Delta'$.
  As $\Delta$ is a safe state and $\Delta'$ is not a safe state, $\Delta \neq \Delta'$ and so $n > 0$.
  From our contradiction assumption, $\Delta^1 = \Delta_1$.
  As $\Delta_1 \rightarrowsafeuniques \Delta$, fix $\Delta^{\dag}_{i}$ such that $\Delta^{\dag}_{1} \rightarrowsafeunique \Delta^{\dag}_{2} \rightarrowsafeunique \dots \rightarrowsafeunique \Delta^{\dag}_{m} = \Delta$.
  By the definition of $\rightarrowsafeunique$ (Definition~\ref{def:unique_safe_reduction_paths}), $\Delta^{\dag}_{i} = \Delta^i$ and $\Delta^i$ is a safe state for all $i \in \{1, \dots, m\}$,
  so $\Delta^m = \Delta$.
  But then $\Delta = \Delta^m \rightarrow \dots \rightarrow \Delta^n = \Delta'$ is a shorter reduction sequence, contradicting our assumption of minimality.
\end{proof}

\begin{theorem}\label{thm:safety_pspace_hard}
  Checking for safety is \PSPACE-hard.
\end{theorem}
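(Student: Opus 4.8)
The plan is to give a polynomial-time reduction from the QBF problem, reusing the typing context $\Deltainit$ of Figure~\ref{fig:pspace_safety} and instantiating the placeholder $\Tbad$ with a type that makes the ``bad'' configuration violate safety (the paper's choice $\Tbad = \tbra{\pp_1}{\kf{unsafe}: \tend}$). First I would verify that the reduction is polynomial: the context has $n+m+2$ participants, and each of $\T_{\ps}$, $\T_{\pp_i}$, $\T_{\pr_i}$ has size polynomial in $n+m$ (each branching ranges over at most the $O(n)$ participants $\rolem{t}$ and has constant nesting depth), so $\Deltainit$ is computable in polynomial time. The correctness statement to establish is the equivalence $\safe(\Deltainit)\iff\;\models\bqbf$, after which $\PSPACE$-hardness follows from the $\PSPACE$-completeness of QBF.

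For the direction $\not\models\bqbf\Rightarrow\neg\safe(\Deltainit)$ I would simply chain Theorem~\ref{thm:reduction_from_empty_assignment}(1) and~(3) to obtain $\Deltainit \rightarrowsafeunique [\emptyset]\rightarrowsafeuniques (\ps{:}\Tbad,\pp_1{:}I,\dots,\pr_{m+1}{:}I)$. Since this target is reachable and is not a safe state by the choice of $\Tbad$, Lemma~\ref{thm:safety_from_reductions} immediately yields $\neg\safe(\Deltainit)$.

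For the converse $\models\bqbf\Rightarrow\safe(\Deltainit)$ the goal is to show that \emph{every} reachable context is a safe state and then invoke Lemma~\ref{thm:safety_from_reductions}. By Theorem~\ref{thm:reduction_from_empty_assignment}(1,2), when $\models\bqbf$ we have $\Deltainit\rightarrow[\emptyset]\rightarrowsafeuniques\Deltainit$, so $\Deltainit$ is a safe state whose deterministic reduction step $\Deltainit\rightarrow[\emptyset]$ loops back to $\Deltainit$ entirely through safe states. Suppose for contradiction that some $\Delta'$ with $\Deltainit\rightarrow^*\Delta'$ is not a safe state. Applying Lemma~\ref{thm:unique_reduction_unsafety} with $\Delta=\Deltainit$ and $\Delta_1=[\emptyset]$ produces a reduction $\Deltainit\rightarrow\Delta_2$ with $\Delta_2\neq[\emptyset]$; but Theorem~\ref{thm:reduction_from_empty_assignment}(1) asserts $\Deltainit\rightarrowsafeunique[\emptyset]$, making $[\emptyset]$ the unique successor of $\Deltainit$, so $\Delta_2=[\emptyset]$, a contradiction. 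Hence no reachable context is unsafe, and $\safe(\Deltainit)$ holds.

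The main obstacle is precisely this forward direction: it does not suffice that the single canonical (deterministic) path stays safe, because safety quantifies over \emph{all} $\Delta'$ reachable from $\Deltainit$. The crux is Lemma~\ref{thm:unique_reduction_unsafety}, which converts the hypothetical existence of a reachable unsafe state into the existence of a first reduction step off the canonical loop, so that the determinism built into the construction (encoded by $\rightarrowsafeunique$ in Theorem~\ref{thm:reduction_from_empty_assignment}(1)) closes the argument. A secondary, construction-specific point that must be checked by hand when instantiating $\Tbad$ is that the reached configuration genuinely breaks one of the two clauses of the safe-state definition (Definition~\ref{def:safety}): concretely, that at the $\Tbad$-configuration $\ps$ offers a branching towards $\pp_1$ whose labels cannot be matched by $\pp_1$'s offered selection, so that clause~(2) fails and the state is not safe.
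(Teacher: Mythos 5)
Your proposal is correct and follows essentially the same route as the paper: the same instantiation $\Tbad = \tbra{\pp_1}{\kf{unsafe}: \tend}$, the same chaining of Theorem~\ref{thm:reduction_from_empty_assignment} with Lemma~\ref{thm:safety_from_reductions} for the $\not\models\bqbf$ direction, and the same contradiction via Lemma~\ref{thm:unique_reduction_unsafety} against the uniqueness of the successor $[\emptyset]$ for the $\models\bqbf$ direction. Your explicit check that $\Deltainit$ has polynomial size is a small addition the paper leaves implicit, but otherwise the two arguments coincide.
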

\begin{proof} Set $\Tbad = \tbra{\pp_1}{\kf{unsafe}: \tend}$.
  \begin{itemize}
    \item If $\models \bqbf$, then by Theorem~\ref{thm:reduction_from_empty_assignment}, $[\emptyset] \rightarrowsafeuniques \Deltainit$. Assume for contradiction that $\Deltainit \rightarrow \Delta'$ for some $\Delta'$ that is not a safe state. We have $\Deltainit \rightarrow [\emptyset]$, so by Lemma \ref{thm:unique_reduction_unsafety}, $\Deltainit \rightarrow \Delta_2$ for some $\Delta_2 \neq [\emptyset]$. But no such $\Delta_2$ exists, so $\Deltainit$ is safe by Lemma \ref{thm:safety_from_reductions}.
    \item If $\not\models \bqbf$, then by Theorem~\ref{thm:reduction_from_empty_assignment}, $\Deltainit \rightarrow [\emptyset] \rightarrowsafeuniques (\Tbad, I, \dots, I)$. But $(\Tbad, I, \dots, I) \redBra{\ps}{\pp_1}{\kf{unsafe}}$ and $(\Tbad, I, \dots, I) \not\redSelBra{\pp_1}{\ps}{\kf{unsafe}}$, so it is not a safe state. Thus $\Deltainit$ is not safe.
  \end{itemize}
  Therefore, $\Deltainit$ is safe if and only if $\models \bqbf$. As QBF is \PSPACE-complete, we conclude that checking for safety is \PSPACE-hard.
\end{proof}

\subsection{\PSPACE-Hardness of Deadlock-Freedom}
\label{sec:deadlock_freedom_pspace_hard}

The proof that deadlock-freedom is \PSPACE-hard is similar to the above proof for safety, as unique reduction paths also guarantee deadlock-freedom. First, we prove the analogue of Lemma~\ref{thm:unique_reduction_unsafety} for deadlock-freedom.

\begin{lemma}\label{thm:unique_reduction_deadlocks}
  If $\Delta \rightarrow \Delta_1 \rightarrowuniques \Delta$ 
    and $\Delta \rightarrow^* \Delta' \not\rightarrow$
    then $\Delta \rightarrow \Delta_2 \rightarrow^* \Delta'$
    for some $\Delta_2 \neq \Delta_1$.
\end{lemma}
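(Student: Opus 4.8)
The plan is to adapt the proof of Lemma~\ref{thm:unique_reduction_unsafety} essentially verbatim, replacing every appeal to safe states by the corresponding fact about deadlocks: a deadlocked context $\Delta' \not\rightarrow$ has no outgoing reduction, whereas $\Delta$ does, since $\Delta \rightarrow \Delta_1$. I would argue by contradiction. Suppose $\Delta \rightarrow \Delta_1 \rightarrowuniques \Delta$ and $\Delta \rightarrow^* \Delta' \not\rightarrow$, but there is no $\Delta_2 \neq \Delta_1$ with $\Delta \rightarrow \Delta_2 \rightarrow^* \Delta'$. Fix a shortest reduction sequence $\Delta = \Delta^0 \rightarrow \Delta^1 \rightarrow \cdots \rightarrow \Delta^n = \Delta'$. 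Because $\Delta \rightarrow$ but $\Delta' \not\rightarrow$, we have $\Delta \neq \Delta'$, so $n > 0$; and by the contradiction hypothesis the first step cannot branch away from $\Delta_1$, hence $\Delta^1 = \Delta_1$.

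Next I would exploit the determinism of $\rightarrowunique$ along the cycle $\Delta_1 \rightarrowuniques \Delta$. Fix the deterministic chain $\Delta_1 = \Delta^{\dag}_{1} \rightarrowunique \Delta^{\dag}_{2} \rightarrowunique \cdots \rightarrowunique \Delta^{\dag}_{m} = \Delta$ witnessing $\Delta_1 \rightarrowuniques \Delta$ (Definition~\ref{def:unique_safe_reduction_paths}). Since each $\Delta^{\dag}_{i}$ with $i < m$ has a \emph{unique} successor under $\rightarrow$, an easy induction on $i$ shows $\Delta^i = \Delta^{\dag}_{i}$ for all $1 \le i \le m$: the base case $\Delta^1 = \Delta_1 = \Delta^{\dag}_{1}$ is already established, and if $\Delta^i = \Delta^{\dag}_{i}$ then $\Delta^i \rightarrow \Delta^{i+1}$ together with determinism forces $\Delta^{i+1} = \Delta^{\dag}_{i+1}$. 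In particular $\Delta^m = \Delta^{\dag}_{m} = \Delta = \Delta^0$, so the shortest path revisits $\Delta$ at position $m$.

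The only genuine point to verify carefully is that this revisit yields a \emph{strictly} shorter path, i.e.\ that $m < n$, and this is exactly where the deadlock hypothesis enters rather than a safety argument. Since $\Delta^m = \Delta$ reduces (to $\Delta_1$) while $\Delta^n = \Delta'$ is irreducible, we have $m \neq n$; as the chain lies inside the length-$n$ path, this gives $m < n$. Then $\Delta = \Delta^m \rightarrow \Delta^{m+1} \rightarrow \cdots \rightarrow \Delta^n = \Delta'$ is a reduction sequence of length $n - m < n$ ending at $\Delta'$, contradicting the minimality of the original sequence, which closes the argument. I do not expect any real obstacle here: the structure is identical to Lemma~\ref{thm:unique_reduction_unsafety}, and the substitution of ``reducible versus deadlocked'' for ``safe versus unsafe'' is precisely what supplies $\Delta \neq \Delta'$ and $m \neq n$. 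Unlike the safety case, no analogue of Lemma~\ref{thm:safety_from_reductions} is needed at this stage, since deadlock-freedom (Definition~\ref{def:deadlock_freedom}) is already phrased directly in terms of reachable irreducible contexts; that characterisation will instead be invoked afterwards in the corresponding hardness theorem.
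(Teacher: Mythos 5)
Your proof is correct and follows essentially the same route as the paper's: contradiction, shortest path, determinism of the cycle forcing the path back to $\Delta$, and a length contradiction. You even tighten the one spot where the paper's own write-up carelessly carries over the phrase ``safe state'' from Lemma~\ref{thm:unique_reduction_unsafety} instead of appealing to reducibility of $\Delta$ versus irreducibility of $\Delta'$.
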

\begin{proof}
  By contradiction. Assume that $\Delta$, $\Delta_1$, $\Delta'$ are as described,
  but there is no $\Delta_2 \neq \Delta_1$ such that $\Delta \rightarrow \Delta_2 \rightarrow^* \Delta'$.
  Fix the shortest reduction sequence $\Delta = \Delta^0 \rightarrow \Delta^1 \dots \rightarrow \Delta^n = \Delta'$.
  As $\Delta$ is a safe state and $\Delta'$ is not a safe state, $\Delta \neq \Delta'$ and so $n > 0$.
  From our contradiction assumption, $\Delta^1 = \Delta_1$.
  As $\Delta_1 \rightarrowuniques \Delta$, fix $\Delta^{\dag}_{i}$ such that $\Delta^{\dag}_{1} \rightarrowunique \Delta^{\dag}_{2} \rightarrowunique \dots \rightarrowunique \Delta^{\dag}_{m} = \Delta$.
  By the definition of $\rightarrowunique$ (Definition~\ref{def:unique_safe_reduction_paths}), $\Delta^{\dag}_{i} = \Delta^i$ and $\Delta^i \rightarrow$ for all $i \in \{1, \dots, m\}$,
  so $\Delta^m = \Delta$.
  But then $\Delta = \Delta^m \rightarrow \dots \rightarrow \Delta^n = \Delta'$ is a shorter reduction sequence, contradicting our assumption of minimality.
\end{proof}

\begin{theorem}\label{thm:deadlock_freedom_pspace_hard}
  Checking for deadlock-freedom is \PSPACE-hard.
\end{theorem}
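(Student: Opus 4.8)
The plan is to recycle the QBF reduction context $\Deltainit$ already constructed for safety, now instantiating the undesirable type as $\Tbad = \tend$, and to show that $\Deltainit$ is deadlock-free if and only if $\models \bqbf$. Since the QBF truth problem is \PSPACE-complete and $\Deltainit$ is built in polynomial time (its size is linear in $n+m$), this establishes \PSPACE-hardness. The entire correctness of the simulation---that the context deterministically and safely evaluates $\bqbf$---is already packaged in Theorem~\ref{thm:reduction_from_empty_assignment}, and the deterministic reduction machinery needed for deadlock analysis is supplied by Lemma~\ref{thm:unique_reduction_deadlocks}, so the remaining work is to glue these together and to inspect a single stuck state.

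For the forward direction, suppose $\models \bqbf$. By Theorem~\ref{thm:reduction_from_empty_assignment}(1)--(2) we have $\Deltainit \rightarrow [\emptyset]$ and $[\emptyset] \rightarrowsafeuniques \Deltainit$; since every $\rightarrowsafeunique$ step is by definition a $\rightarrowunique$ step, this gives $[\emptyset] \rightarrowuniques \Deltainit$. I would then assume towards a contradiction that some stuck state is reachable, i.e.\ $\Deltainit \rightarrow^* \Delta' \not\rightarrow$, and apply Lemma~\ref{thm:unique_reduction_deadlocks} with $\Delta = \Deltainit$ and $\Delta_1 = [\emptyset]$ to obtain a state $\Delta_2 \neq [\emptyset]$ with $\Deltainit \rightarrow \Delta_2 \rightarrow^* \Delta'$. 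But the first move out of $\Deltainit$ is deterministic, with $[\emptyset]$ as its unique successor, so no such $\Delta_2$ exists. Hence no reachable state is stuck, and $\Deltainit$ is vacuously deadlock-free in the sense of Definition~\ref{def:deadlock_freedom}.

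For the reverse direction, suppose $\not\models \bqbf$. By Theorem~\ref{thm:reduction_from_empty_assignment}(1),(3) we obtain $\Deltainit \rightarrow [\emptyset] \rightarrowsafeuniques (\ps{:}\tend, \pp_1{:}I, \dots, \pr_{m+1}{:}I)$. I would verify that this terminal context is stuck while not every type is $\tend$: the controller $\ps$ now has type $\tend$, but $\pp_1$ is still in state $I$, whose unfolding begins with an input $\tin{\ps}{\tint}$ from $\ps$ (recall $\pp_0 = \ps$); since $\ps$ offers no matching output, no rule of the labelled transition relation in Definition~\ref{def:composition_reduction} fires, so the state satisfies $\not\rightarrow$, yet $\unfold{I}$ is an input and not $\tend$. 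This reachable stuck state with a non-terminated participant directly witnesses the failure of deadlock-freedom, so $\Deltainit$ is not deadlock-free.

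The main obstacle is not conceptual but lies in this last inspection: one must confirm that after the controller terminates there is genuinely no enabled communication anywhere along the linear topology, which requires checking that all other participants $\pp_i$ and $\pr_i$ are blocked in their initial input states $I$ waiting on their left neighbour, so that the single termination of $\ps$ freezes the entire chain. All the heavy lifting---the exponential-depth simulation of the quantifier alternation and its safe, deterministic evaluation---has already been carried out in the lemmas culminating in Theorem~\ref{thm:reduction_from_empty_assignment}; consequently the deadlock-freedom result is essentially a re-instantiation of the safety argument (Theorem~\ref{thm:safety_pspace_hard}), with $\Tbad$ replaced by $\tend$ and Lemma~\ref{thm:unique_reduction_unsafety} replaced by its deadlock counterpart Lemma~\ref{thm:unique_reduction_deadlocks}.
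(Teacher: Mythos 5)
Your proposal is correct and follows essentially the same route as the paper's own proof: instantiate $\Tbad = \tend$, use Lemma~\ref{thm:unique_reduction_deadlocks} together with the deterministic cycle $\Deltainit \rightarrow [\emptyset] \rightarrowuniques \Deltainit$ to rule out reachable stuck states when $\models \bqbf$, and exhibit the reachable stuck state $(\ps{:}\tend, I, \dots, I)$ with $\unfold{I} \neq \tend$ when $\not\models \bqbf$. The extra detail you supply on why the whole chain freezes once $\ps$ terminates is a harmless elaboration of what the paper states more tersely.
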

\begin{proof}
  Set $\Tbad = \tend$.
  \begin{itemize}
    \item If $\models \bqbf$, then by Theorem~\ref{thm:reduction_from_empty_assignment}, $[\emptyset] \rightarrowsafeuniques \Deltainit$. Assume for contradiction that $\Deltainit \rightarrow \Delta'$ for some $\Delta'\not\rightarrow$. We have $\Deltainit \rightarrow [\emptyset]$, so by Lemma \ref{thm:unique_reduction_unsafety}, $\Deltainit \rightarrow \Delta_2$ for some $\Delta_2 \neq [\emptyset]$. But no such $\Delta_2$ exists, so $\Deltainit$ is deadlock-free by Lemma~\ref{thm:unique_reduction_deadlocks}.
    \item If $\not\models \bqbf$, then by Theorem~\ref{thm:reduction_from_empty_assignment}, $\Deltainit \rightarrow [\emptyset] \rightarrowsafeuniques (\Tbad, I, \dots, I)$. But $(\Tbad, I, \dots, I) \not\rightarrow$, so $\Deltainit$ is not deadlock-free (because $\unfold{I} \neq \tend$).
  \end{itemize}
  Therefore, $\Deltainit$ is deadlock-free if and only if $\models \bqbf$. As QBF is \PSPACE-complete, we conclude that checking for deadlock-freedom is \PSPACE-hard.
\end{proof}

\subsection{\PSPACE-Hardness of Liveness}
\label{sec:liveness_pspace_hard}

First, we prove some properties about liveness that will be useful in our proof.

\begin{lemma}\label{thm:liveness_implies_df}
  If $\Delta$ is live then $\Delta$ is deadlock-free.
\end{lemma}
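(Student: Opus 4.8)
The plan is to prove the contrapositive in spirit, via a direct argument: assume $\live(\Delta)$ and suppose $\Delta \rightarrow^* \Delta' \not\rightarrow$; the goal is to show $\unfold{\T_\pp} = \tend$ for every $\pp : \T_\pp \in \Delta'$. The key device is the \emph{degenerate single-state path} $\mathcal{P} = (\Delta_i)_{i \in \{0\}}$ with $\Delta_0 = \Delta'$. This is a legal path in the sense of Definition~\ref{def:fair_live_paths}, since $\{0\}$ is a (finite) set of consecutive naturals and the requirement $\Delta_i \rightarrow \Delta_{i+1}$ holds vacuously (there is no index $i$ with $i, i+1 \in \{0\}$). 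Moreover $\mathcal{P}$ is vacuously \emph{fair}: because $\Delta' \not\rightarrow$, neither $\Delta' \redOutIn{\pp}{\pq}$ nor $\Delta' \redSelBra{\pp}{\pq}{l}$ is ever witnessed by an actual reduction step out of $\Delta_0$, so both fairness clauses hold with empty hypotheses. Since $\Delta \rightarrow^* \Delta'$ and $\live(\Delta)$, Definition~\ref{def:liveness} forces every fair path starting at $\Delta'$ to be live; in particular $\mathcal{P}$ is a live path.

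First I would then show that a live single-state path at a stuck context can only consist of a terminated type. Suppose toward a contradiction that some $\pp : \T_\pp \in \Delta'$ has $\unfold{\T_\pp} \neq \tend$. Since the types are closed and guarded, $\unfold{\T_\pp}$ must have one of the shapes $\tout{\pq}{\S}{\T'}$, $\tin{\pq}{\S}{\T'}$, $\tsel{\pq}{l_i : \T_i}$, or $\tbra{\pq}{l_i : \T_i}$. Using rules \RULE{E-IO}, \RULE{E-BS}, \RULE{$\mu$} and \RULE{Comp} of Definition~\ref{def:composition_reduction}, each of these yields a corresponding barb of $\Delta'$: respectively $\Delta' \redOut{\pp}{\pq}{\S}$, $\Delta' \redIn{\pp}{\pq}{\S}$, $\Delta' \redSel{\pp}{\pq}{l}$, or $\Delta' \redBra{\pp}{\pq}{l}$. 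I would then run the corresponding clause of the live-path definition for $i = 0$: in the output case clause (1) demands some $j \geq 0$ with $\Delta_j \redOutIn{\pp}{\pq} \Delta_{j+1}$, which is impossible since the only index is $0$ and $\Delta_0 = \Delta' \not\rightarrow$; the input, selection and branching cases contradict clauses (2), (3) and (4) identically. Thus $\mathcal{P}$ would fail to be live, contradicting the previous paragraph.

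Hence no such $\pp$ exists, so $\unfold{\T_\pp} = \tend$ for all $\pp : \T_\pp \in \Delta'$, establishing deadlock-freedom. The argument is short, but the step that requires the most care is the case analysis matching the shape of $\unfold{\T_\pp}$ to the precise liveness clause it violates, together with the two easily-overlooked structural observations that a length-one sequence is a bona fide path and that it is fair purely for lack of any outgoing reduction. (Alternatively, one could obtain the same contradiction by invoking Lemma~\ref{lem:live_iff_barbs_subset_obs}: for $\mathcal{P}$ the observation set is empty while $\barbs(\Delta') \neq \emptyset$ whenever some type does not unfold to $\tend$, so $\barbs(\Delta') \not\subseteq \observations(\mathcal{P})$ and $\mathcal{P}$ is not live; I would keep the direct clause-by-clause version as the primary proof since it is self-contained.)
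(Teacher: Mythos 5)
Your proof is correct, but it is a genuinely different route from the paper's: the paper dispenses with this lemma by citing an external result (\cite[Theorem~4]{YH2024}) and gives no argument at all, whereas you supply a direct, self-contained derivation from Definitions~\ref{def:deadlock_freedom}, \ref{def:fair_live_paths} and \ref{def:liveness}. Your key device --- the degenerate one-element path $(\Delta')$ at a stuck context, which is a legal path because the condition $\Delta_i \rightarrow \Delta_{i+1}$ is quantified over pairs $i, i+1 \in N$ and is vacuously fair because $\Delta' \not\rightarrow$ kills both fairness hypotheses --- is exactly the right observation, and the case analysis on $\unfold{\T_\pp}$ (which, for closed guarded types, must be an output, input, selection or branching if it is not $\tend$, each producing a barb via \RULE{E-IO}/\RULE{E-BS} together with \RULE{$\mu$} and \RULE{Comp} that no clause of the live-path definition can discharge on a single-state path) is complete. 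What your approach buys is independence from the cited paper and an argument that works verbatim for the typing-context-level definitions actually used here; what the citation buys is brevity and consistency with the provenance of the surrounding liveness machinery. Either is acceptable; if the lemma is meant to be self-contained in this paper, your version is the one to keep.
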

\begin{proof}
By \cite[Theorem~4]{YH2024}.
\end{proof}



\begin{definition}[Participants of Reductions]\label{def:reduction_participants}
  Define the participants of a reduction: $\pt{\redOutIn{\pp}{\pq}} = \{\pp, \pq\}$ and $\pt{\redSelBra{\pp}{\pq}{l}} = \{\pp, \pq\}$.
\end{definition}

\begin{restatable}{lemma}{livemovement}
   \label{thm:single_selection_live_paths}
Let $(\Delta_i)_{i \in \mathbb{N}}$ be an infinite path such that $\Delta_i \trans{\cxell_i} \Delta_{i+1}$ for all $i$.
If $\bigcup_{i \geq j} \pt{\cxell_i} = \pt{\Delta}$ for all $j \in \mathbb{N}$, then $(\Delta_i)_{i \in \mathbb{N}}$ is a live path.
\end{restatable}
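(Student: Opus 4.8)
The plan is to verify directly the four clauses of the live-path definition (Definition~\ref{def:fair_live_paths}) for the path $(\Delta_i)_{i \in \mathbb{N}}$. First I would reformulate the hypothesis: since $\pt{\cxell_i} \subseteq \pt{\Delta}$ always holds, the equality $\bigcup_{i \geq j} \pt{\cxell_i} = \pt{\Delta}$ for every $j$ says exactly that each participant $\pp \in \pt{\Delta}$ occurs in $\pt{\cxell_i}$ for some $i \geq j$; quantifying over all $j$, this means every participant is involved in the reductions $\cxell_i$ \emph{infinitely often}.

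The technical core is a \emph{persistence} observation about the LTS of Definition~\ref{def:composition_reduction}: a reduction $\Delta_k \trans{\cxell_k} \Delta_{k+1}$ is derived by \RULE{Msg} or \RULE{Bra} together with \RULE{Comp}, and therefore changes only the types of the (at most two) participants in $\pt{\cxell_k}$. Hence if $\pp \notin \pt{\cxell_k}$ then $\Delta_{k+1}(\pp) = \Delta_k(\pp)$, and reductions preserve the domain, so $\pt{\Delta_i} = \pt{\Delta}$ for all $i$. Consequently, whenever $\Delta_i$ offers a barb owned by $\pp$ (for instance $\Delta_i \redOut{\pp}{\pq}{\S}$), the type $\Delta_k(\pp)$, and thus the offered barb, remains unchanged for all $k$ up to the first index $j \geq i$ at which $\pp \in \pt{\cxell_j}$; such a $j$ exists by the reformulated hypothesis, and I would take its least value.

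It then remains to argue that this first reduction involving $\pp$ is exactly the one demanded by liveness. Here I would use the determinacy of the per-participant transitions \RULE{E-IO} and \RULE{E-BS} (modulo \RULE{$\mu$}): the unfolded type $\unfold{\Delta_j(\pp)} = \unfold{\Delta_i(\pp)}$ fixes both $\pp$'s unique communication partner $\pq$ and its polarity, so every action $\pp$ can take goes to or from $\pq$ with the appropriate direction. Since $\cxell_j \in \{\OutIn{\pp'}{\pq'}, \SelBra{\pp'}{\pq'}{l'}\}$ involves $\pp$, it must be the matching synchronisation: $\redOutIn{\pp}{\pq}$ when $\pp$ offers an output, $\redOutIn{\pq}{\pp}$ when $\pp$ offers an input, and $\redSelBra{\pp}{\pq}{l'}$ (resp.\ $\redSelBra{\pq}{\pp}{l'}$) for some label $l'$ when $\pp$ offers a selection (resp.\ a branch). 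This furnishes the witness $j$ (and label $l'$) required by clauses (1)--(4) respectively, so all four cases are discharged uniformly.

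I expect the main obstacle to be this last step: making precise that any reduction a participant takes from a fixed offered state \emph{resolves} that very barb. The point is that in a typing context a participant interacts with a single partner in a single direction at each position of its type, so there is no reduction involving $\pp$ other than the matching one; I would isolate this as a short determinacy lemma on the shape of $\unfold{\Delta_i(\pp)}$ and the reduction rules, after which the four liveness clauses follow by a routine case split.
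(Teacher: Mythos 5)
Your proposal is correct and follows essentially the same route as the paper's proof: fix a pending action of $\pp$ at index $i$, take the least $j \geq i$ with $\pp \in \pt{\cxell_j}$ (which exists by the hypothesis), observe that $\pp$'s type is unchanged up to $j$, and then case-split on the shape of $\unfold{\Delta_i(\pp)}$ to see that $\cxell_j$ must be the matching synchronisation demanded by each clause of the live-path definition. The ``determinacy'' point you isolate is exactly the case analysis the paper performs on $\tend$, $\tin{\pq}{\S}\T'$, $\tout{\pq}{\S}\T'$, $\tselsub{\pq}{l_i:\T_i}{i\in\I}$ and $\tbrasub{\pq}{l_i:\T_i}{i\in\I}$.
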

\begin{proof}
  Let $i \in \mathbb{N}$, and let $\pp \in \pt{\Delta}$. Let $\Delta = \pp: \T, \Delta'$.
  By assumption there exists $j \geq i$ such that $\pp \in \pt{\cxell_j}$.
  Choose the smallest such $j$.
  Then, no transition before $\cxell_j$ affects $\pp$, so $\Delta_j(\pp) = \T$.
  We consider the syntax of $\unfold{\T}$:

  \case $\T = \tend$. Then there are no transitions involving $\pp$.
  
  \case $\tin{\pq}{S} \T'$. By considering the rules of Definition~\ref{def:composition_reduction}, $\cxell_j = \OutIn{\pq}{\pp}$, as desired.

  \case $\tout{\pq}{S} \T'$. Similar to the above: $\cxell_j = \OutIn{\pp}{\pq}$.

  \case $\tselsub{\pq}{l_i: \T_i}{i \in \I}$. We conclude that $\cxell_j = \SelBra{\pp}{\pq}{l_k}$ for some $k \in \I$.

  \case $\tbrasub{\pq}{l_i: \T_i}{i \in \I}$. We conclude that $\cxell_j = \SelBra{\pq}{\pp}{l_k}$ for some $k \in \I$.

All transitions in the definition of live path fall into one of the above cases, so the path is live.
\end{proof}

\begin{theorem}\label{thm:liveness_pspace_hard}
  Checking for liveness is \PSPACE-hard.
\end{theorem}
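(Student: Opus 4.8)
The plan is to reuse the QBF construction $\Deltainit$ exactly as in the proofs of Theorems~\ref{thm:safety_pspace_hard} and \ref{thm:deadlock_freedom_pspace_hard}, taking the \emph{same} bad type as for deadlock-freedom, namely $\Tbad = \tend$, and to show that $\live(\Deltainit)$ holds if and only if $\models \bqbf$. The two ingredients I would lean on are Theorem~\ref{thm:reduction_from_empty_assignment}, which records the deterministic, safe reduction behaviour of $\Deltainit$, and the fact (baked into the construction, Definition~\ref{def:unique_safe_reduction_paths}) that every reduction from $\Deltainit$ is deterministic. Lemma~\ref{thm:single_selection_live_paths} will supply the ``liveness from full participation'' direction, and Lemma~\ref{thm:liveness_implies_df} ($\live \Rightarrow \df$) will supply the negative direction essentially for free, which is why the same $\Tbad$ works.

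For the case $\models \bqbf$, I would argue as follows. By Theorem~\ref{thm:reduction_from_empty_assignment} we have the cycle $\Deltainit \rightarrowsafeunique [\emptyset] \rightarrowsafeuniques \Deltainit$, so the unique maximal reduction path out of $\Deltainit$ is infinite and periodic, and every state reachable from $\Deltainit$ lies on this cycle. Because reductions are deterministic, each reachable state has exactly one successor; hence any finite path that stops at a reachable (non-terminal) state leaves an enabled reduction forever unfired and is therefore not fair, so every fair path from any reachable $\Delta'$ is precisely this infinite deterministic cycle (up to its starting point). It then remains to check that one full traversal of the cycle exercises \emph{every} participant at least once: $\ps$ queries $\pp_1$, which propagates through $\pp_1,\dots,\pp_n$ and then through $\pr_1,\dots,\pr_{m+1}$ (the recursion in each $\pr_i$ forces a query down to $\pr_{m+1}$ and back), so each of $\ps,\pp_1,\dots,\pp_n,\pr_1,\dots,\pr_{m+1}$ occurs in some $\cxell_i$. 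Over infinitely many cycles this gives $\bigcup_{i \geq j} \pt{\cxell_i} = \pt{\Deltainit}$ for all $j$, and Lemma~\ref{thm:single_selection_live_paths} yields that the path is live. Applying this to every reachable $\Delta'$ gives $\live(\Deltainit)$ via Definition~\ref{def:liveness}.

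For the case $\not\models \bqbf$, Theorem~\ref{thm:reduction_from_empty_assignment}(3) gives $\Deltainit \rightarrow [\emptyset] \rightarrowsafeuniques (\ps{:}\tend, \pp_1{:}I, \dots, \pr_{m+1}{:}I)$; this state is stuck (only $\ps$ has type $\tend$, while $\pp_1$ is blocked waiting to receive from $\ps$, and so on down the chain), so it is a deadlock with $\unfold{I} \neq \tend$. Thus $\Deltainit$ is not deadlock-free, and by Lemma~\ref{thm:liveness_implies_df} it is not live. Combining the two cases, $\live(\Deltainit) \iff \models \bqbf$; since the reduction is polynomial in the size of $\bqbf$ and QBF is $\PSPACE$-complete, liveness checking is $\PSPACE$-hard. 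I expect the main obstacle to be the positive direction: carefully discharging the universal quantification over all reachable states and all \emph{fair} paths in Definition~\ref{def:liveness}, i.e.\ justifying that determinism collapses the space of fair paths to the single live cycle and that each cycle genuinely touches every participant, rather than the bookkeeping of the negative (deadlock) case, which is immediate from the earlier lemmas.
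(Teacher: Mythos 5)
Your proposal is correct and follows essentially the same route as the paper's proof: same choice of $\Tbad = \tend$, the positive direction via the deterministic cycle $\Deltainit \rightarrowunique [\emptyset] \rightarrowuniques \Deltainit$ combined with Lemma~\ref{thm:single_selection_live_paths} (all participants appear in each traversal, and determinism rules out any other fair path), and the negative direction reduced to deadlock-freedom via Lemma~\ref{thm:liveness_implies_df}. No substantive differences.
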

\begin{proof}
  We use the same construction as the deadlock-freedom reduction (Theorem~\ref{thm:deadlock_freedom_pspace_hard}): set $\Tbad = \tend$.
  \begin{itemize}
    \item If $\models\bqbf$, then by the same argument as Theorem~\ref{thm:deadlock_freedom_pspace_hard}, $\Delta \rightarrowunique [\emptyset] \rightarrowuniques \Delta$.
    Thus there is a unique infinite path $\Delta \rightarrowunique^\omega$.
    We will show that this path is live.

    Furthermore, the reduction sequence $\Delta \rightarrow (W, I_1, I, \dots, I) \rightarrowunique \dots \rightarrowunique (W, F, \dots, F, R, \dots, R)$ involves all participants in at least one transition.
    As $\Delta_j \rightarrowuniques \Delta$ for all $\Delta_j$, this reduction sequence occurs after $\Delta_j$, therefore $\bigcup_{i \geq j} \pt{\cxell_i} = \pt{\Delta}$ for all $j \in \mathbb{N}$.
    We can then apply Lemma~\ref{thm:single_selection_live_paths} to conclude that $\Delta \rightarrowunique^\omega$ is a live path.
    As all reachable contexts have outgoing transitions (due to the path being deterministic), there are no finite fair paths.
    Furthermore, we have shown that the only infinite path is live, so all fair paths are live.
    We conclude that $\Delta$ is live.
    \item If $\not\models \bqbf$, then by the proof of Theorem~\ref{thm:deadlock_freedom_pspace_hard}, $\Delta$ is not deadlock-free. By Lemma~\ref{thm:liveness_implies_df}, $\Delta$ is not live.
  \end{itemize}

Therefore, $\Delta$ is live if and only if $\models \bqbf$. As QBF is \PSPACE-complete, we conclude that checking for liveness is \PSPACE-hard.
\end{proof}

\end{document}